\newcommand\CausalEC{\texttt{CausalEC}}
\newcommand\receive{\texttt{receive}}
\newcommand\remove[1]{}
\newtheorem{theorem}{Theorem}[section]
\newtheorem{definition}{Definition}
\newtheorem{example}{Example}
\newtheorem{lemma}{Lemma}[section]
\newtheorem{claim}{Claim}[lemma]
\title{CausalEC: A Causally Consistent Data Storage Algorithm based on Cross-Object Erasure Coding}
\author{
  Viveck R. Cadambe, Shihang Lyu\footnote{Shihang Lyu is currently a Software Engineer at Google.} \\
  Electrical Engineering Department, School of EECS\\
  Pennsylvania State University, USA\\
  \texttt{viveck@psu.edu, shihang.lyu@gmail.com}
}
\date{}
\begin{document}
\maketitle

\begin{abstract}
%Causally consistent distributed storage systems have received significant recent  attention  due to the potential for providing concurrent data access with lower latency as compared with linearizability. Current 
Current causally consistent data storage algorithms use  partial or full replication to ensure data access to clients over a distributed setting. We develop, for the first time, an \textit{erasure coding} based algorithm called \CausalEC~ that ensures causal consistency for a collection of read-write objects stored in a distributed set of nodes over an asynchronous message passing system. \CausalEC~ can use an arbitrary linear erasure code for data storage, and ensures liveness, fault-tolerance and storage properties prescribed by the erasure code. 

\CausalEC~~ retains a key benefit of previous replication-based algorithms - every write operation is ``local'', that is, a server performs only local actions before returning to a client that issued a write operation.  For servers that store certain objects in an uncoded manner, read operations to those objects also return locally. In general, a read operation to an object can be returned by a server on contacting a small subset of other servers so long as the underlying erasure code allows for the object to be decoded from that subset.  Unlike previous consistent erasure coding based algorithms, \CausalEC~ is compatible with \textit{cross-object} erasure coding, where nodes encode values across multiple objects. \CausalEC~ navigates the technical challenges of cross-object erasure coding, in particular, pertaining to re-encoding when writes update the values and ensuring that concurrent reads are served in a non-blocking manner during the transition to storing codeword symbols corresponding to the updated values.
\end{abstract}
\maketitle
\newpage

\section{Introduction}
\normalem

% title commands are \section, \subsection, \subsubsection and \paragraph

\label{sec:Introduction}

Consistent data storage services such as Amazon DynamoDB and Apache Cassandra are important components of modern cloud computing infrastructure. %Such services replicate the data to provide fault-tolerance and availability of data cross a wide geographical area. 
%Given the rising cost of cloud computing services \cite{rising_cost}, the design of cost effective storage techniques is crucial to their scalability and utilization. 
The focus of this paper is the design of low-latency cost-effective \textit{causally consistent read/write data stores}, that have received significant attention in recent research \cite{du2014gentlerain, du2013orbe, lloyd2011don, mahajan2011consistency, attiya2017limitations}. Since causal consistency can be maintained via protocols where server nodes respond to client operations after performing only \textit{local} read and write operations (i.e., without requiring responses from other servers), it incurs much lower latency as compared to stronger consistency criteria (such as linearizability \cite{Herlihy_Wing}).
Classically, causally consistent data stores require every object to be replicated at every server node \cite{ahamad1995causal,lloyd2011don,du2013orbe,du2014gentlerain}. The requirement of every server having a replica of every object can be prohibitively expensive for large data stores. To address this, there is much recent interest in causally consistent data stores based on \textit{partial replication} \cite{shen2015causal, xiang2019partially, xiang2020global, helary2006efficiency,bravo2017saturn, mehdi2017can, akkoorath2016cure}. In partially replicated data stores, a given server node stores only a subset of the objects. Partial replication enables a system designer to trade-off latency for storage cost, since it allows low latency local read/write operations for the clients that access the data objects stored at a nearby server, and provides a lower level of service for objects that are not stored at that server. 

Erasure coding is a generalization of  replication that offers much lower storage costs as compared to (partial) replication for the same degree of fault-tolerance. In erasure coding, each server stores a \emph{codeword symbol} that is a general function of the data, unlike replication where servers are restricted to store copies of partitions of the data. %The function is generally referred to as an \textit{encoding} function, and the value stored as a \textit{codeword symbol}. 
Motivated by the promise of reduced costs, several consistent data storage algorithms that use erasure coding have been developed \cite{Konwar_PODC2017, goodson2004efficient, dobre_powerstore, Hendricks, Dutta, GWGR, CT, cadambe2017coded, FAB,Uluyol_Pando2020, chen2017giza, wang2020craft,zare2021legostore}. However, the use of erasure coding in the context of \emph{causal} consistency is nearly non-existent in literature\footnote{The only exception we are aware of is \cite{lyu2018erasure}, which studies a very limited class of codes and data access patterns, and has very weak liveness properties (see Sec. \ref{sec:related})}. In fact, at the outset, erasure coding appears to be incompatible with the key benefit of causal consistency - the ability to keep latency low by serving several operations locally. Specifically, previous works\footnote{References \cite{eikel2014robust,lyu2018erasure,AJX} that apply cross-object erasure coding are discussed in Sec. \ref{sec:related}.}  (for other consistency criteria) apply erasure coding by partitioning an object value into fragments (called \textit{data} fragments), then encoding the fragments to  redundant \textit{parity} fragments, and storing each fragment on one server. Since no server stores an object value in its entirety, it is impossible to serve reads locally at any server. For this reason, erasure coding is well known to be a technique that can lower costs, but generally incurs a higher latency than replication schemes. 

In this paper, we overcome the apparent latency penalty of erasure coding via a new approach. Unlike previous approaches that partition and encode a single object value, we develop a novel causally consistent distributed algorithm that is compatible with  \textit{cross-object erasure coding}. As we show next, cross-object erasure coding can achieve  significantly lower latencies than even the best partial replication schemes for a given storage cost. Thereby, our work opens new, desirable, operating points on the cost-latency trade-offs for data store design.    

\subsection{Motivation for Cross-Object Erasure Coding}
\label{subsec:motivation}
\begin{figure*}[!tbp]
\centering
\resizebox{0.9 \textwidth}{!}{
\begin{subfigure}[!t]{0.44\linewidth}
\includegraphics[width=2.3in]{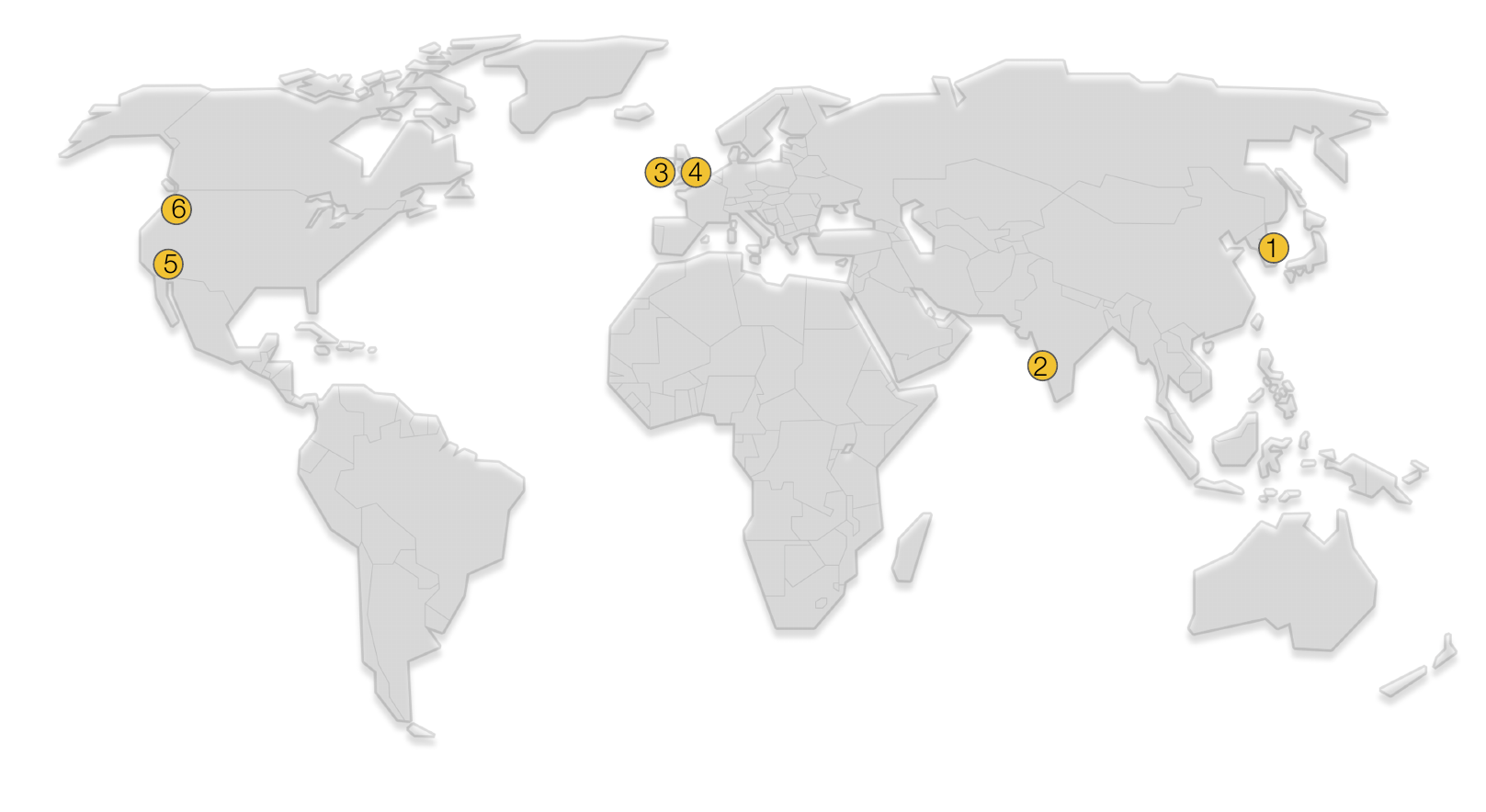}
\end{subfigure}
\begin{subfigure}[!t]{0.44\textwidth}\resizebox{ \textwidth}{!}{\begin{tabular}{|c|cccccc|}
  \hline
 Regions & Seoul & Mumbai & Ireland & London & N.  California & Oregon\\ 
  \hline
Seoul & 0 & 120 & 230 &  240  & 138 & 126\\ 
Mumbai &  120 & 0  & 121 & 113 & 228 & 220\\ 
Ireland & 230 & 121 & 0 & 13 & 138 & 126\\ 
London & 240 & 113 & 13& 0 & 146 & 137\\
N. California& 138 & 228 & 138 & 146 & 0 & 22 \\ 
Oregon& 146 & 220 & 126 & 137 & 22 & 0 \\
   \hline
\end{tabular}}
\end{subfigure}}
\caption{\footnotesize{Six DCs and their inter-DC round-trip-times (in ms) over AWS public cloud (obtained via \cite{cloudping} on Oct 2021).}}
\label{fig:AWS}
\vspace{-12pt}
\end{figure*}
 \begin{figure*}[!tbp]
\centering
%\begin{minipage}[b]{0.6\textwidth}
\resizebox{0.9 \textwidth}{!}
{\begin{tabular}{|c|cccc|}
  \hline
 &  Worst-Case Latency & Average Latency & Communication Cost per read  & Communication cost per write\\ 
  \hline
Partial Replication & 228 ms & 88 ms & 3B/4 &  6B \\ Intra-Object Coding & 138 ms & 132 ms & 3B/4 &  6B/4   \\
Cross-Object Coding & 138 ms & 88 ms & 3B/4 &  12B   \\
   \hline
\end{tabular}}
\vspace{-3pt}
\caption{\footnotesize{Cost and latency comparisons between replication, intra-object coding and cross-object coding.}}
\label{table:cost_comparison}
\vspace{-10pt}
\end{figure*}

In cross-object erasure coding, the object value is not necessarily partitioned, but rather, each server stores a codeword symbol that is a function of the values of different objects. For example, the values of objects $X_1,X_2,X_3$ can be stored at $4$ servers as $x_1,x_2,x_3,x_1+x_2+x_3$\footnote{In the examples used to motivate our work in Section \ref{sec:Introduction}, we assume that object values come from a finite field, and $+$ denotes the addition operator over the field.}, where $x_i$ is the value of object $X_i,i\in\{1,2,3\}$. In this case, the first three servers can be utilized to provide local read operations respectively to $X_1,X_2,X_3$; such local reads would not be possible if $x_i$ was fragmented and encoded. To appreciate the promise of cross-object erasure coding, consider a hypothetical geo-distributed data store over $6$ data centers (DCs).  See Fig. \ref{fig:AWS} for a pictorial depiction of such a data store, along with inter-DC latencies measured as per Amazon AWS public cloud. Imagine that the data store contains $4M$ objects of $B$ bits each, and the storage capacity of each data center is $MB$ bits. Assume that the read requests to each of the $4M$ objects is spatially distributed in a uniform manner across the $6$ data centers (DCs) - that is, if the total average arrival rate to all the objects in the data store is $\lambda$ reads per second, then an average arrival rate at any one of the DCs to a specific object is $\frac{\lambda}{6 \times 4M}$. The performance of such data stores are commonly evaluated among three metrics (i) Tail/worst-case latencies, (ii) throughput, and (iii) communication costs. As no DC has the capacity to store all the objects, an inter-DC latency for some objects is inevitable. %A key question is: what is the minimum \emph{worst-case} inter-DC latency achievable via partial replication? 

We assume that the latency is predictable and determined by the table in Fig. \ref{fig:AWS} - such assumptions are, in fact, commonly used for adaptive data placement in practical systems research \cite{wu2013spanstore,du2016pando,zare2021legostore}). Through a brute force search, we found that the worst-case latency for the {best} partial replication scheme where each DC stores at most $MB$ bits is $228 ms$. This optimal worst-case latency is achieved by partitioning the $4M$ objects into $4$ groups each, which we label as $\mathcal{X}_1, \mathcal{X}_2,\mathcal{X}_3,\mathcal{X}_4,$ and allocating the objects in $\mathcal{X}_{1}$ to Seoul and Ireland, $\mathcal{X}_{2}$ to Mumbai and London, and  $\mathcal{X}_{3},\mathcal{X}_{4}$ respectively to North California and Oregon. Now, consider the conventional ``intra-object'' approach to erasure coding utilized in \cite{cadambe2017coded, Konwar_ipdps2016, CT, Hendricks, dobre_powerstore, Dutta,androulaki2014erasure}, where each object value is partitioned into fragments and encoded using, say a Reed Solomon\footnote{Any maximum distance separable code can be used (see definition in \cite{roth2006introduction}).} code of length $6$ and dimension $4$. Each node stores $B/4$ bits per object for every object, and the codeword symbols in any $4$ nodes suffices to recover every object. Thus the latency for every request is the round-trip time to the third nearest neighbor. Consequently, the worst-case latency is $138ms,$ a whopping $90ms$ shaved off the replication scheme with the same storage cost! 

A drawback of the Reed Solomon coding approach is that every read request inevitably incurs a non-zero latency unlike the partial replication-based scheme; in fact, a minimum latency of $121$ms is incurred for all requests. In practice, this manifests itself as lower throughput for the erasure coding based data store. Due to Little's law  \cite{bertsekas2021data}\footnote{We ignore queuing delays due to excessive loading, so our analysis applies for well-provisioned systems.}, we use the \emph{average}  latency as a proportional estimate for the average throughput of our data store assuming  uniformly distributed loads. The average latency for the replication-based scheme is  $88.25ms,$ whereas the erasure coding scheme has an average latency of $132.5ms$. As per Little's law, the erasure coding based data store is likely to have a much lower throughput ($66\%$) of the replication-based scheme. 
Further, the replication-based scheme can adapt to spatially non-uniform workloads by placing objects at DCs with higher demand. %For example, if the system is aware that requests are non-uniform and requests from London are more skewed towards objects in $\mathcal{X}_{2},$ the replication based scheme can use this knowledge to further increase the overall throughput.

The motivation for \emph{cross-object erasure coding} is that it can enjoy the worst-case performance of intra-object erasure coding, as well as the throughput and flexibility of partial replication. Consider a case where Seoul, Mumbai, Ireland, London, N. California and Oregon respectively store $\mathcal{X}_{1}+\mathcal{X}_{3}, \mathcal{X}_{2}+\mathcal{X}_{4},\mathcal{X}_{1},\mathcal{X}_{2},\mathcal{X}_{4},\mathcal{X}_{3}$. This scheme has a worst-case latency of $138ms$ and average latency of $87.5ms.$ From a communication viewpoint, partial replication, on average, incurs $3B/4$ bits of communication per read request on average, as $3$ in $4$ requests require communication with an external DC. The communication complexity of intra-object erasure coding is also $3B/4,$ as the three nearest neighbors send $B/4$ bits each for a request. However, the cross-object erasure coding scheme incurs an overhead of $3.33B/4 $ bits per request due to Mumbai and Seoul having to access $B$ bits from a remote DC for every request. These are summarized in Table \ref{table:cost_comparison} (see Sec. \ref{sec:communication} for a more refined characterization of communication costs, and Appendix \ref{app:write_cost} for a comparison with partial replication schemes of \cite{helary2006efficiency, xiang2019partially}). Evidently, a limitation of cross-object erasure coding is that the latency improvements incur an increased write communication cost. Notably, the latency benefits of cross-object coding in the context of Fig. \ref{fig:AWS} appears because the erasure code and data placement were carefully tuned\footnote{This code is not maximum distance separable, so it cannot be an instance of the Reed-Solomon code.} to the network round-trip times. The design of {cross-object} erasure codes that minimize average/worst-case latency for general topologies is an open problem. In this paper, we develop an algorithm that is compatible with an \emph{arbitrary} linear erasure code.

\subsection{Our Contribution: The \CausalEC ~ algorithm}

\label{subsec:challenges}

Our main technical contribution is \CausalEC, a causally consistent distributed algorithm for read/write data objects that uses erasure coding. \CausalEC~ is developed so that it can use \textit{any} linear erasure code including  cross-object coding techniques. \CausalEC~ satisfies the following properties:
%\begin{tcolorbox}{
{\begin{enumerate}[(I)]

\item A write to any object must return locally at every server node.

\item If the erasure code allows an object $X$ to be decodable from a subset $S$ of servers, then a read to object $X$ terminates with at most one round trip to the servers in $S$.

 \item In a fair execution with a finite number of writes, the amount of data stored at the servers is eventually equal to that prescribed by the erasure code.

\item In a fair execution with a finite number of writes, eventually, every read to a given object responds with the same value.
\end{enumerate}}
%\end{tcolorbox}

Property (I) preserves an essential aspect of causal consistency: that writes to any object at any server are local. Properties (II),(III) ensure that the algorithm inherits both the liveness properties and the storage cost of the underlying erasure code. In particular Property (II) implies that \CausalEC~inherits the fault-tolerance property of the erasure code\footnote{For example, if a maximum distance separable code such as Reed-Solomon code is used with length $N$ and dimension $k$, Property (II) readily implies that the algorithm tolerates $N-k$ failed nodes. For the same code, Property (III) ensures that each node stores a fraction of $1/k$ of the data.}.  Property (IV) is a well-known, desirable liveness property called \textit{eventual consistency} \cite{vogels2008eventually,attiya2017limitations}\footnote{Some references \cite{burckhardt2014principles,burckhardt2014replicated,attiya2017limitations} refer to requirement (IV) as quiescent consistency. In our formal statements, we show a stronger variant of this property called ``eventual visibility'' \cite{burckhardt2014replicated, burckhardt2014principles,attiya2017limitations}.}. The technical novelty of \CausalEC~lies in navigating the challenges of supporting cross-object erasure coding in its algorithm design, and in ensuring properties (I)-(IV) in an asynchronous environment with concurrent operations.   We emphasize that these challenges and our solutions to \CausalEC~ due to the requirement of supporting cross-object coding. In previous consistent erasure coding based algorithms - that do not support cross-object coding - it suffices to treat a codeword symbol as a ``blackbox'', and when an object is updated, a codeword symbol corresponding to the new object version simply replaces the older codeword symbol. Similarly, for these algorithms, reads are simply served by obtaining codeword symbols and using the decoding function that is concomitant with the error correcting code. On the other hand, in \CausalEC, a codeword symbol combines multiple object values - for example, the codeword symbol $x_1+x_2+x_3$ at a server combines objects $X_1,X_2,X_3$. So when one object, say $X_1$ is updated to a newer value, say $x_1'$, \CausalEC~uses certain properties of the encoding functions to transform the codeword symbol to $x_1'+x_2+x_3$. Because multiple objects are combined in a single codeword symbol and a read is only interested in a single object, ensuring successful decoding for serving read operations also requires new, different approaches in \CausalEC~ as compared with existing algorithms.

%On the other hand, in \CausalEC, a codeword symbol combines multiple object values - for example, the codeword symbol $x_1+x_2+x_3$ at a server combines objects $X_1,X_2,X_3$. So when one object, say $X_1$ is updated to a newer value, say $x_1'$, \CausalEC~uses certain properties of the encoding functions to transform the codeword symbol to $x_1'+x_2+x_3$.  The few previous cross-object coding approaches we found in literature \cite{AJX,lyu2018erasure} fall significantly short of our requirements (I)-(IV) (See Sec. \ref{sec:related} for details).

To obtain a deeper understanding of the main technical challenges solved by \CausalEC, it helps consider the following erasure code over $5$ nodes and $3$ objects:
$Y_1=X_1, Y_2=X_2, Y_3=X_3, Y_4=X_1+X_2+X_3, Y_5 = X_1+2X_2+X_3$, where $Y_i$ is stored in server $i, i \in \{1,2,3,4,5\}$.  Here, object values $X_1,X_2,X_3$ are elements over a finite field with odd characteristic. Note that the code can serve a read to object $X_1$ locally at node $1$ and a read to object $X_2$ locally at node $2$. A read request to server $5$ for object $X_2$ can be served by accessing the codeword symbol $Y_4$ at server $4$, and performing $Y_5-Y_4$ -  this read is not local and incurs a latency a round trip time between nodes $4$ and $5.$ We denote by $\mathcal{R}_{i},$ the \textit{recovery sets for object $X_i$}: $\mathcal{R}_{i}$ is a set of subsets of servers that suffice for reading $X_i.$ The minimal\footnote{If set $S$ is a recovery set for an object, so is any superset of $S$. We list the {minimal} recovery sets under the subset ordering.} recovery sets are:
$\mathcal{R}_{1}=\{\{1\},\{3,4,5\},\{2,3,4\},\{2,3,5\}\},$ $\mathcal{R}_{2}=\{\{2\},\{4,5\},\{1,3,4\},\{1,3,5\}\},$ $\mathcal{R}_{3}=\{\{3\},\{1,2,4\},\{1,2,5\},\{1,4,5\}\}\}.$ 

        Consider an execution $\beta$ where there are three writes to ${X}_{1}$, two writes to ${X}_{2},$ and two writes to $X_3.$ Let the nodes store $Y_1=X_1(2), Y_2=X_2(2), Y_3=X_3(1), Y_4=X_1(3)+X_2(1)+X_3(2), Y_5=X_1(2)+2X_2(1)+X_3(1)$, where $X_j(i)$ denotes the value of the $i$th write to object $X_j$\footnote{Here, we denote every version of an object by an integer index. The actual protocol we develop in Sec. \ref{chap:algorithm} implements an indexing of writes via vector timestamps.}. Due to asynchrony and the distributed nature, each node is not aware of the object versions at the other nodes. Suppose a read request for object $X_2$ arrives at node $5$, the server then sends queries to other servers since it cannot be served locally. Since $\{4,5\}$ is a recovery set for $X_2$, requirement (II) in Sec. \ref{sec:Introduction} dictates that the read must be served on obtaining a response from node $4$. Observe however that server $5$ cannot merely obtain $X_2$ as $Y_5-Y_4$ due to the mismatch in the object versions. Our protocol is designed to store some \emph{history lists}\footnote{The storage of history is common in erasure coding based algorithms, see \cite{spiegelman2016space,Cadambe_Wang_Lynch2016} and references therein.} where nodes store values of potentially several object versions (in uncoded form). This enables node $4$ to re-encode $Y_4$ to a version $Y_{4}'$ using information in its local history list and send it. The version $Y_{4}'$ is designed in a manner that enables node $5$ to further re-encode it to: $Y_{4}'' = X_1(2)+X_2(1)+X_3(1)$ using information in its local history list. On performing these transformations, the node can decode $X_{2}(1)$ as $Y_{5}-Y_{4}''.$ As an example, the version $Y_{4}'$ can be $Y_{4}-X_{1}(3)+(X_3(1)-X_{3}(2)) = X_2(1)+X_3(1),$ assuming that $X_1(3),X_3(1),X_3(2)$ are present in the local history of node $4$ at the time of receipt of the request from node $5$. Then node $5$ can obtain $Y_4'' = Y_4'+X_1(2),$ if $X_1(2)$ are present in the local history at node $5$ at the point of receipt of the message from node $4$. 
    
    It is reasonable to ask what happens if nodes $4,5$ do not have the stated object versions in their local histories?
Indeed, to avoid the long-term transient storage overheads and to satisfy requirement (III) of Sec. \ref{sec:Introduction}, severs purge their local histories. For instance, $X_1(2)$ eventually gets deleted in all the history lists as it has propagated to all the servers and may not be present in node $5$ to enable re-encoding for a later read. The key property of \CausalEC~ is that the local histories have the property that node $4$ can always re-encode and send a value $Y_4'$ that can be used by node $5$ to serve the read - for every read at node $5$ at every point of the execution. Crucially, the algorithm is wait-free - that is, every node (e.g. node $4$), on receiving a query, responds to it  immediately.  The  ``garbage collection'' conditions for deleting items from history lists are a key  technical aspect that enables \CausalEC~ to simultaneously provide  (i) wait-free termination of reads and (ii) eventual removal of old versions in every possible execution. 
The termination of reads also enables us to effectively re-encode the locally stored codeword symbols upon arrival of new data. For example, once $X_3(2)$ arrives at node $5$, it conducts an ``internal'' read to obtain $X_3(1),$ and then recompute the codeword as $Y_5-X_3(1)+X_3(2).$ The version $X_3(1)$ is added to the local history list and deleted based on the garbage collection conditions. 

%For instance, property (II) ensures that if an error correcting code stores an object in uncoded form at a server, then a read operation to that object at that server terminates without having to access any other server. The benefits of cross-object coding from the viewpoint of latency are further explored in Sec. \ref{subsec:motivation}, by exploiting (II). Similarly (III) ensures that the storage cost in stable state in the system is as dictated by the underlying erasure code.
 
 %From a conceptual viewpoint, it solves three challenges in its algorithm design.

\remove{\textbf{(i)~} When an object is updated via a write, the servers in the system must transform their stored codeword symbol to reflect the erasure code for the new version of the object. This is non-trivial because a codeword symbol is a function of values of multiple objects, and servers need to update the output of a multi-valued encoding function without necessarily having access to all the object values that are input to the function. %For example, node $3$ 

\textbf{(ii)~}Because we allow multiple concurrent writes that may update object values, there are potentially multiple versions of each object.  As a consequence of the distributed asynchronous nature, different servers may store codeword symbols corresponding that correspond to different versions of the objects. Yet, at every point of time, the system must give a clients an illusion of storing an erasure code of a single, causally consistent, version of each  object. For instance, in Fig. \ref{fig:CausalEC_example}, if $X_2$ is updated at server $2$ to a new value $x_2'$ but the codeword symbol at server $4$ is not yet updated, then $X_1$ cannot be read at server $4$ by merely accessing $y_2'=x_2'$ and conducting $y_4-y_2'$. However, property (II) insists that a read at server $4$ to object $2$ will be able to return on accessing the contents of server $2$. 

%Similarly, node $4$ may store $x_1'+x_2;+x_3$ and node $3$ may store $x_1+x_2.$ That is, a read to object $X_3$ at node $4$ must return on being able to get a response from node $3$.  

{\textbf{(iii)~}} When updates on the data objects are performed, server nodes have to store older versions of the data (possibly in raw or replicated form) until the codeword symbols are transformed to reflect the latest versions of each object. For instance, when a write updates $X_1$ at, server $4$ to $x_1',$ this value will need to be stored until the local server, as well as a sufficient number of remote notes have updated their codeword symbols. Property (III) requires a garbage collection procedure to delete older versions of the objects without affecting object availability to external clients.}

\remove{We describe the algorithmic challenges for developing an erasure-coding based algorithm satisfying properties (I)-(IV) in Section \ref{sec:Introduction}, and  the central ideas embedded in \CausalEC~ towards addressing them. We use the system of Fig. \ref{fig:graph_latency} as a running example in the discussion that follows. In \CausalEC, every server in the system can receive a read or a write operation to any object. For instance, clients close to server $1$ can access every object (not just $X_1$) from server $1$ in Fig. \ref{fig:CausalEC_example}. This simplifies the client protocol in \CausalEC. We simply restrict a client to access the same server for all objects; so long as the server applies all the updates in a causally consistent order, the execution is causally consistent.

For expository purposes, we consider an execution $\beta$ where there are $3$ writes to object ${X}_{1}$ and $4$ writes to ${X}_{2},$ and two writes to $X_3.$ These writes are distributed amongst clients associated with server nodes $1$ and $2$, as shown in Fig. \ref{fig:CausalEC_example}.
We denote every version of an object by an integer index, therefore $X_1(3)$ denotes the value of the third write to object $X_1.$ The actual protocol we develop in Section \ref{chap:algorithm} implements an indexing of writes via vector timestamps; the integer indexing keeps our notation simpler here. We describe our ideas through a series of possible read/encoding scenarios that can be encountered in $\beta$.

% In fact, in  partial replication based protocols that do not satisfy (I),  complicated predicates are used to ensure causal ordering \cite{}. 

%With erasure coding - similar to partial replication - it is inevitable that remote nodes must be contacted for certain read operations. We require our algorithm to ensure termination of a read operation to an object so long as it is able to access nodes corresponding to at least one recoverable set for that object. For instance, a read to $X_1$ issued at node $5$ must be able to return so long as it is able to access node $4$. Furthermore, the read must also be able to return in an alternate scenario/execution, so long as it is able to access nodes $\{2,3\},$ as these nodes also form an alternate recoverable set for the node. 

%Notably, properties (I)-(IV) are simple to satisfy with partial replication, and therefore much of partial replication based literature \cite{shen2015causal, xiang2017lower, helary2006efficiency,bravo2017saturn, mehdi2017can, akkoorath2016cure} has focused on reducing the overheads of maintaining vector timestamps. In contrast, developing a protocol that satisfies (I)-(IV) for erasure coding is quite challenging. We illustrate these challenges and our ideas to overcome these challenges through some plausible scenarios next.

%\begin{figure}[!htb]
%\begin{center}
%\includegraphics[width=3.0in]{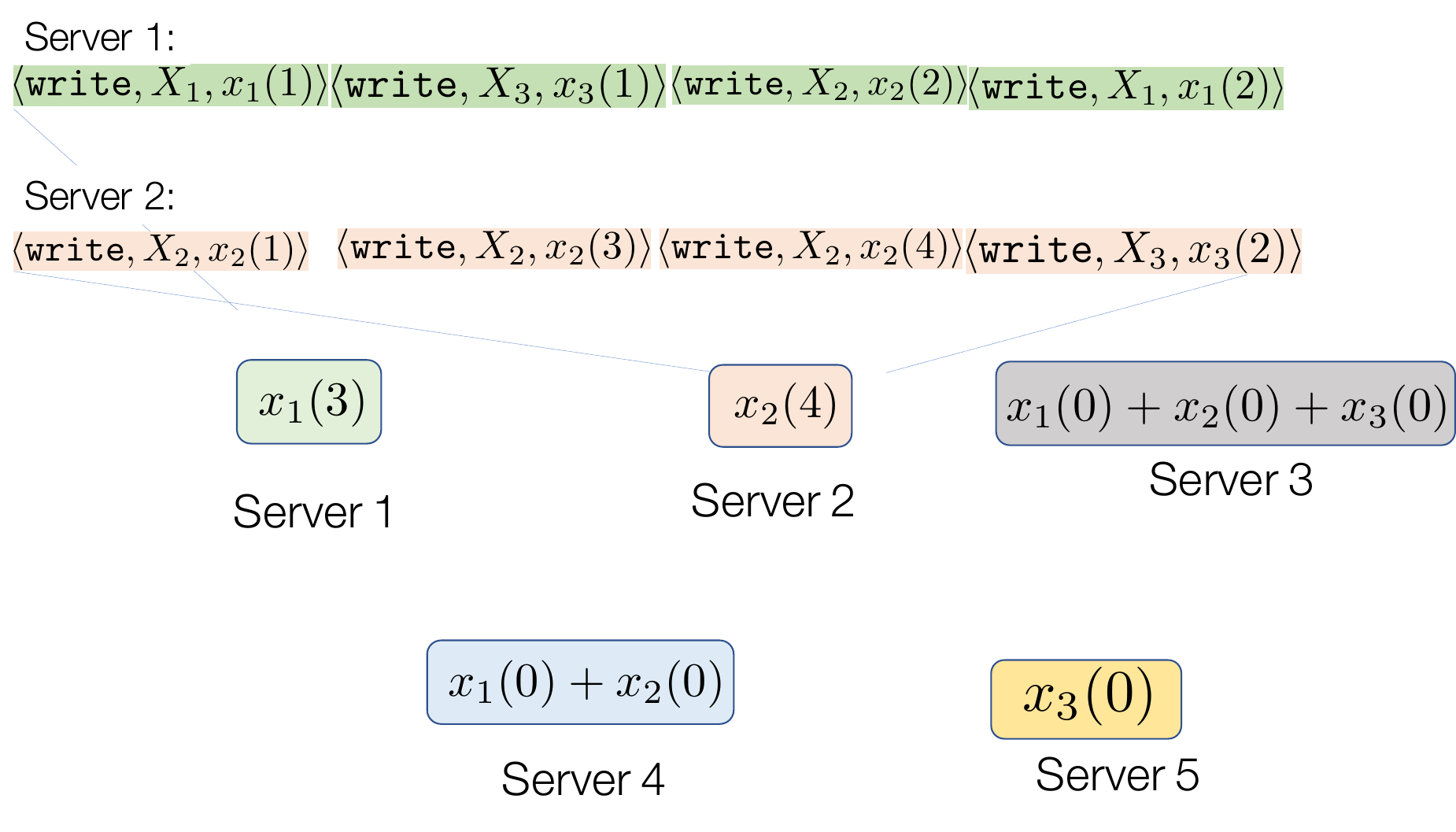}
%\caption{Initial state and $3$ writes to $X_1$, $4$ writes to $X_2$ and $2$ writes to $X_3$ distributed amongst nodes $1$ and $2$.}
%\label{fig:CausalEC_example}
%\end{center}
%\end{figure}

 \begin{figure}[!htb]
\centering
\begin{subfigure}{0.47\textwidth}
\includegraphics[width = 3.0in] {figures/CausalEC_initial.pdf}
\caption{Initial state and $3$ writes to $X_1$, $4$ writes to $X_2$ and $2$ writes to $X_3$.}
\label{fig:CausalEC_example}
\end{subfigure}
\hspace{0.3in}
\begin{subfigure}{0.45\textwidth}
\includegraphics[width=3.1in]{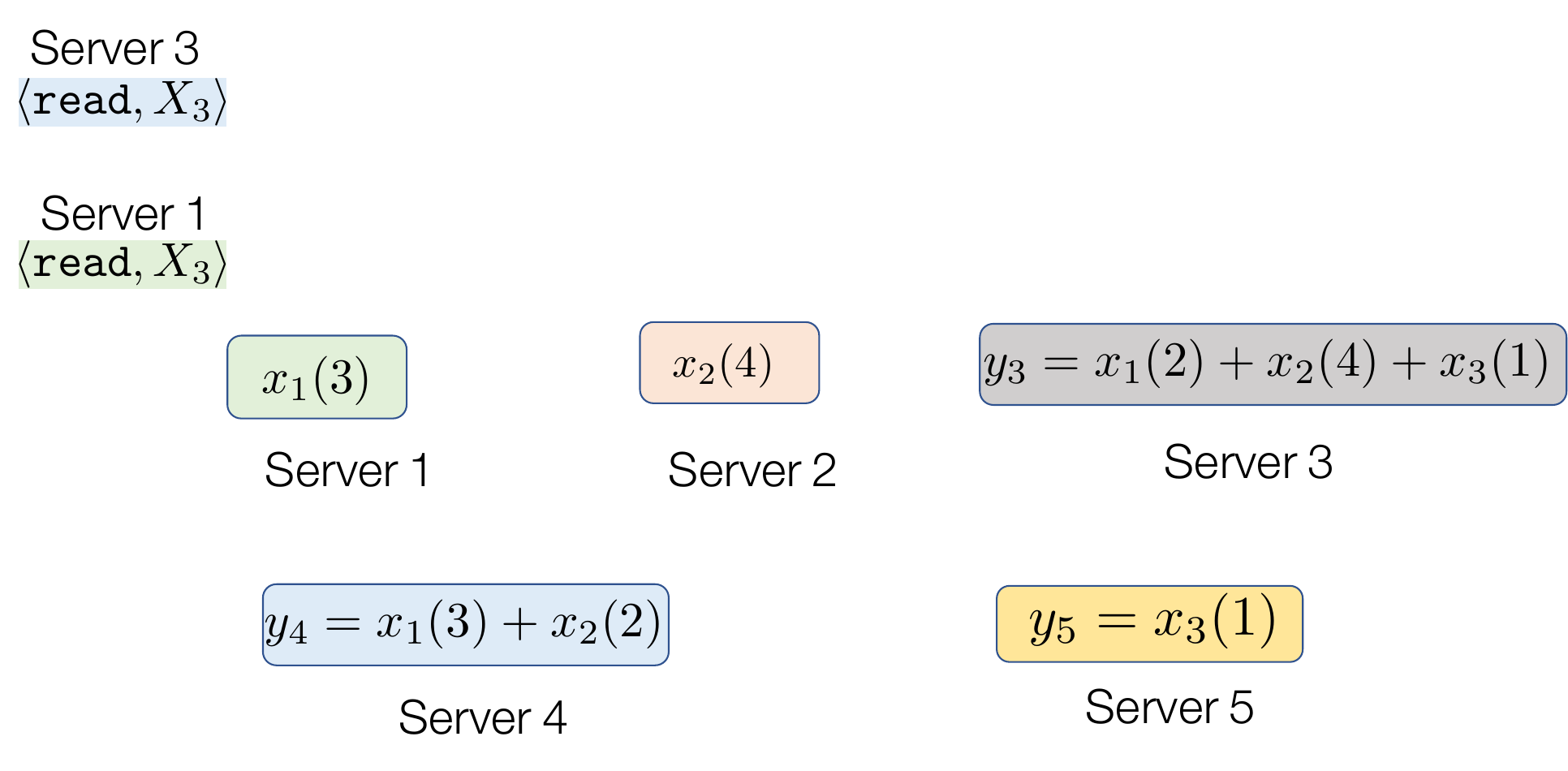}
\caption{Two read operations corresponding to Read Scenarios 1,2.}
\label{fig:CausalEC_readscenario3}
\end{subfigure}
\label{fig:CausalEC}
\caption{Example execution demonstrating the algorithmic design challenges and idea of \CausalEC}
\end{figure}

\textbf{Encoding Scenario 1:}  Consider a scenario depicted in Fig. \ref{fig:CausalEC_example}. Because of requirement (I), note that the clients writing $x_1(1),x_1(2),x_1(3), x_2(1), \ldots, x_2(4), x_3(1),x_3(2)$ have returned. On receiving a write, server $1$ propagates the value to the other servers for them to apply the update. However, the messages from server $1$ to the other servers have been delayed. Since these write (or at least the latest write for each object) have to be encoded at servers $2,3,4,5$ eventually, server $1$ needs to store a history of the values, until they have been propagated to a sufficient extent. 
 
\textbf{Our Solution:} In our protocol, every server stores a \textit{list} consisting of the history of values it has seen from writes and from other servers. A garbage collection procedure deletes values from the list when it satisfies certain conditions. Because of property (III) in Section \ref{sec:Introduction}, the garbage collection is required to satisfy the following property: in a fair execution with a finite number of operations,  every object from the list must be eventually deleted.  The precise conditions for object deletion from the history lists in \CausalEC~are very delicately constructed, and are perhaps the most technically interesting aspect of the algorithm. If the conditions for deletion are too aggressive, then they can block certain reads from being served. If the conditions for deletion are too mild, then the execution can run into corner cases where certain objects are never deleted. These aspects are highlighted in the  subsequent scenarios. For these scenarios, we assume that the writes corresponding to $x_1(1),x_2(1),x_3(1)$ are propagated everywhere, and these versions are deleted in the list history at servers $1$ and $2$. 

\textbf{Encoding Scenario 2:} Consider again the scenario depicted in Fig. \ref{fig:CausalEC_example}. Suppose the update $x_2(4)$ arrives at server $3$ at some point; suppose that at that point, the server stores $x_1(2)+x_2(3)+x_3(1)$. The server needs to replace its local codeword symbol by $\hat{y}_{3} = x_1(2)+x_2(4)+x_3(1).$ %A natural approach is for node $4$ to read to $X_1(3)$ and $X_3(1)$ thereby enabling the node to compute $\hat{y}_{5}.$ Note that for re-encoding at node $5$, we might to contact recovery sets for objects $X_1, X_3$ to be accessible. In general, an encoding operation would require a recovery set for every object at the node to be accessible, which might lead to complicated and expensive algorithm design.

\textbf{Our Solution:}. We perform an internal read on $x_2(3)$ if this value is not already present in the history list at server $4.$ On successful completion of this read, the server recomputes its elements as $\hat{y}_{4}=y_4- x_2(3) +  x_2(4).$ Next, we discuss the ideas of \CausalEC~ that enable the liveness for reads stated in property (II), for both internal reads issued on behalf of encoding operations and those those issued by the clients.

    %\textbf{Read Scenario 1:}  Consider the read to object $X_4$ at server $3$ as shown in Fig. \ref{fig:CausalEC_readscenario3}. On receiving such a read, it checks if it can serve the read from the local history. If the server cannot serve the read locally using its list, then it it contacts other servers via \textit{inquiry} messages to receive their encoded values.  
% Because of requirement (II) and because $\{1,3,4\}$ is a recoverable set for $X_2$, we have to ensure that a read to $X_2$ at server $3$ returns if receives responses from servers $1,4.$ Assume that servers $1,4$ have not yet received the updates $x_2(3),x_2(4)$. Therefore, the $X_2$ is not automatically decodable from $y_1,y_3,y_5$ as $y_3-y_1-y_5.$ 
 
 %\textbf{Our Solution:} 
 
 \textbf{Read Scenario 1:}  Consider a read to $X_3$ at server $3$ in Fig. \ref{fig:CausalEC_readscenario3}. On receiving a read request from the client, server $3$ contacts other servers via inquiry messages, including server $4$. Observe that server $3$ has not yet received $x_1(3)$ and applied  it to its encoding, whereas server $4$ has not yet received updates $x_2(3),x_2(4).$However, the read must be able to return from the response of server $4$ since $\{3,4\}$ is a recovery set for $X_3$. For successful decoding, we require that server $4$ must be able to re-encode the values to enable server $3$ to decode $X_3$ from its response (combined with the locally stored value). In effect, our goal is to synchronize the disparities in the servers corresponding to the versions of objects $X_1,X_2$. Specifically, server 4 responds to the inquiry with a codeword symbol $z_4$ which is obtained by making appropriate modifications to $y_4.$ The codeword symbol $z_4$ has the property that, on its receipt, server $3$ can compute $\hat{y}_{4} = x_1(2)+x_2(4)$ via possibly further modifying $z_4$ using the versions presents in its history list. On achieving this, $x_3(1)$ can be readily obtained as $y_3-\hat{y}_{4}.$

\textbf{Our solution:} We  discuss how we synchronize the differences in versions corresponding to $X_2;$ the ideas are similar for reconciling the differences corresponding to $X_1.$ In our algorithm, % we keep the updated values at the nodes, in addition to the codewords, until all the nodes store codewords corresponded to the updated values, or a later updates. For instance,
 server $3$ stores $x_2(4)$ (in addition to the codeword symbol $y_3$) until every server including server $4$ has stores an encoded value corresponding to $x_2(n), n \geq 4$. Specifically, each server on updating a value sends a ``delete'' message to all the other servers. This information is used in the garbage collection protocol at the servers. Among the conditions for object deletion, one of them requires that  $x_2(4)$ is deleted only every other server has sent a corresponding delete message. Therefore, $x_2(4)$ is present at server $3$ at the point of sending the inquiry due to our solution to read scenario 1. In our proofs, we argue that in a system where the communication channels have the first-in-first out property, server $3$ continues to have $X_2(4)$ at the point of receipt of the response from server $4$. The next requirement therefore is to remove the effect of $x_2(2)$ from $y_4$. 

We develop a garbage collection protocol that ensures that, either (a) $x_2(2)$ is present at server $4$ at the time it receives the inquiry message, or (b) that it is present at server $3$ at the time it receives the response to its inquiry from server $4.$ In case (a), server $4$ sends $z_4$ that is formed by computing $y_4-x_2(2)$ and in case (b), the effect of $x_2(2)$ is removed from the received message $z_4$ by computing $z_4-x_2(2).$ Crucially, we ensure that at least one of (a) or (b) is satisfied despite each of  servers $3,4$ not knowing the set of versions that are received by or stored at the other servers.

\textbf{Read Scenario 2:} Consider the read operation to $X_3$ at server $1$, as shown in in Fig. \ref{fig:CausalEC_readscenario3}. Assume that server $1$ does not have any version of $X_3$ in its history list, so the read cannot return locally. Since $\{3,4\}$ is recovery set for $X_3,$ server $1$ must be able to complete with the responses from servers $3,4$. Note that $X_2,X_3$ are not an object that are present at server $5$; its codeword symbol depends only on object $X_1$. Therefore, an inquiry from server $5$ to servers $3,4$ might get codeword symbols corresponding to different versions of $X_2$ leading to incorrect decoding.

\textbf{Our Solution:} In our protocol, we handle this by storing metadata for a desired version indices of  $X_2$ (and, in fact, $X_3$) even at server $5$. These these version indices are advanced by propagation of corresponding metadata messages. Specifically, our protocol ensures that (i)  The desired version index of $X_2$ at server $1$ is at most $x_2(2)$, and this version index is sent as a part of its inquiry message; and  (ii) either $x_2(2)$ is stored in the list at node $1$ at the point of receipt of the response from node $3$, or it is stored at node $3$ at the point of receipt of the inquiry. 
Thus node $1$ can receive or compute ${y}_{3}-x_2(4)+x_2(2)$ to obtain a codeword symbol that reflects an encoding of $x_2(2).$ A similar set of operations reconcile the differences corresponding to $X_1,$ to ensure that $X_3$ can indeed be decoded from responses from nodes $3,4.$

}

%Similarly, s $3$ sends an codeword symbol that reflects an encoding of $x_2(2).$

% and send this, possibly after additional processing to reconcile versions corresponding to $X_1,X_2$ to node $3$.  
%Indeed, the garbage collection procedure in our protocol, node $4$ does not remove $x_3(1),x_3(2)$ in item (b) above, and yet both these values will be deleted eventually without hindering future read operations.  

\vspace{-5pt}
\section{System Model}
\label{sec:model}
 
\subsection{Deployment Setting}
We assume an asynchronous network where all the servers and connections are known as a priori and the only sources of asynchrony are processing and communication delays. We consider a set of server nodes $\mathcal{N}=\{s_1, s_2, \cdots, s_N\},$ where $N=|\mathcal{N}|$.
Let $\mathcal{X}=\{X_1, X_2,\ldots,X_K\}$ represent a finite set of read-write objects, whose values come from a finite set $\mathcal{V}.$  We assume that $\mathcal{V}$ is a vector space over a finite field $\mathcal{F}$. The system consists of a countable set of possibly infinite client nodes $\mathcal{C}$. We consider a message passing setting where all server nodes are connected to each other through reliable, asynchronous, point-to-point first-in-first-out (FIFO) channels. The set $\mathcal{C}$ is partitioned into non-intesecting sets $\mathcal{C}_{1},\mathcal{C}_{2},\ldots,\mathcal{C}_{N}$, where $\mathcal{C}_{i}$ represents the clients associated with server node $i$ for $i=1,2,\ldots,N$. For $i \in \{1,2,\ldots,N\},$ any client in $\mathcal{C}_{i}$ sends messages only to server $i$. We assume that every client has a unique natural number as an identifier.  In the sequel, when we use the term \emph{node}, it is understood to be a  \emph{server} node; client nodes are identified explicitly.
 
Clients can perform a \textit{write} operation $\texttt{write}(X,v)$ where $X \in \mathcal{X}$ is referred to as the object of the write operation and $v \in \mathcal{V}$ as the value of the write operation. Clients also perform \textit{read} operations $\texttt{read}(X)$ where $X \in \mathcal{X}$. The response to a read operation is a value in $\mathcal{V}$. The system is an I/O automaton which is a composition of the clients, servers, and channels. Clients and servers can \textit{halt} during executions; a halted node does not take any further steps in the execution. The definitions of an execution and fairness follow the standard terminology~\cite{Lynch1996}, with tasks defined to be the set of all message send and deliver actions for non-halting nodes and external actions of non-halting clients. In all the executions, we assume the following well formedness condition: for every client $c \in \mathcal{C}$, there is at most one pending invocation at any point of an execution.

\subsection{Background on Erasure Coding} \label{chap:background}
%We present a  background for linear erasure codes.
%Consider a $k$-length vector $\mathbf x=(x_1, x_2, \cdots, x_K) \in \mathcal{X}^K$, where $\mathcal{X}$ is a finite set. An linear code $\mathfrak{C}$ for $K$ objects over $N$ servers comprises of $N$ linear maps from the data, one map respectively for each server; each linear map comprises of linear combinations of the values of the $K$ objects. Formally.
\begin{definition}[A linear code $\mathfrak{C}(N,K,\mathcal{F})$, where $N,K$ are positive integers and $\mathcal{F}$ is a finite field] 
A linear code $\mathfrak{C}(N,K,\mathcal{F})$ is specified by $N+1$ vector spaces $\mathcal{V},\mathcal{W}_{1},\mathcal{W}_{2},\ldots,\mathcal{W}_{N}$ over field $\mathcal{F}$ and $N$ linear maps $\Phi_i,i=1,2,\ldots,N$:
$\Phi_i: \mathcal{V}^{K} \rightarrow \mathcal{W}_{i}$. 
\end{definition}
In the above definition $\Phi_{i}$ is the encoding function for the $i$th server. The definition implicitly assumes that all $K$ objects are of the same size. However, this assumption is not necessary for the results of the paper and is only made for simplicity of description/notation. The subsequent definitions in this section apply for apply for any $(N,K,\mathcal{F})$ linear code $\mathfrak{C}$ specified by $N+1$ vector spaces $\mathcal{V},\mathcal{W}_{1},\mathcal{W}_{2},\ldots,\mathcal{W}_{N}$ and encoding function $\vec{\Phi}= (\Phi_1,\Phi_2,\ldots,\Phi_N)$.
%We denote the Hamming distance between two vector $\mathbf c_1$ and $\mathbf c_2$ by $d_H(\mathbf c_1, \mathbf c_2)$. The minimum distance of a code $C$ is the minimum Hamming distance between any distinct codewords of $C$. We next define the minimum distance of a code $C$ formally. 
%\begin{definition}[Minimum distance of a code $C$]
%The minimum distance of a code $C$ is given by
%\begin{ceqn}
%\begin{align*}
%d(C)=\min_{\substack{ \mathbf c_1, \mathbf c_2 \in C
	              %\mathbf c_1 \neq \mathbf c_2}} 
               %d_H(\mathbf c_1, \mathbf c_2)
%\end{align*}
%%\end{ceqn}
%\end{definition}
%A code $C$ with a minimum distance $d$ is referred to as an $(N, K, d)$ code. Consider a set $S \subseteq \{1,2,\ldots,N\}$. For a code $C$ with encoding function $\Phi$, let $\Phi|_{S}$ denote a function over domain $\mathcal{X}^{K}$  whose output is the projection of $\Phi$ on to the co-ordinates of $S$. If the minimum distance of the code is $d$, then the projection $\Phi|_{S}$ is one-to-one for every set $S$ whose cardinality is at least $n-(d-1)$.The implication of this property is that if a $(N,K)$ code of minimum distance $d$ is used to store data over $N$ nodes, code with minimum distance $d,$ the data $\mathcal{X}$ can be recovered using any $N-(D-1)$ servers and therefore, in absence of consistency considerations, this code can tolerate up to $D-1$ server failures. 

 A set $S \subseteq \{1,2,\ldots,N\}$ is said to be a recovery set for the $i$th object if the value of the $i$-th object can be recovered from the encoded data stored in the server nodes in $S$. 
 %Consider a code $C$ and suppose we are only interested in recovering the $i$-th symbol of the message $x_i$, where $i \in \{1, 2, \cdots, K\}$, by accessing subset of the codeword symbols.  
\begin{definition} [Recovery sets of the $i$-th object, $1 \leq i \leq K$]
  A set $S \subset \{1,2,\ldots,N\}$ is said to be a recovery set of the $i$-object if there exists a function $\Psi_{S}^{(i)} : \displaystyle\prod_{j \in S} \mathcal{W}_j  \rightarrow \mathcal{V}$ such that: $\Psi_{S}^{(i)}(\Pi _{S}(\vec{\Phi} (\mathbf x))) = x_i$ for every $\mathbf{x}=(x_1, x_2, \ldots, x_K) \in \mathcal{V}^{K},$ where $\Pi_S$ denotes the standard projection mapping from $\displaystyle\prod_{i=1}^{N}\mathcal{W}_{i}$ to $\displaystyle\prod_{i \in S} \mathcal{W}_{i}$  \label{def:ecrecovery}
\end{definition}	

For a \emph{maximum distance separable code} (such as the well-known Reed Solomon code) with parameters $N,K$, every set of $K$ servers contains a recovery set for every object (see \cite{roth2006introduction, macwilliams1977theory}). For a node $s \in \{1,2,\ldots,N\},$ $\mathcal{X}_{s}$ denotes the set of objects that the encoding function $\Phi_s$ depends on. More formally:

\begin{definition}[Objects at server $s$ denoted by $\mathcal{X}_{s}$] $X_k \in \mathcal{X}_{s}$ for $k \in \{1,2,\ldots,K\}$ if and only if there exist $\mathbf{x}=(x_1, x_2,\ldots,x_K), \mathbf{x}'=(x'_1, x'_2,\ldots,x'_K)$ such that $x_{m}=x'_m, m\in\{1,2,\ldots,K\}-\{k\}$ and $\Phi_s(\mathbf{x}) \neq \Phi_{s}(\mathbf{x}').$
\label{def:ecobjectstored}
\end{definition}

%Let $\mathbf{x},\mathbf{x}' \in \mathcal{V}^{K}$ be two vectors that differ only in the $k$-th co-ordinate. Since $\Phi_i$ is a linear map applied to $\mathcal{V}^{K},$ there exists a re-encoding function $\Gamma_{i,k}$ that can re-encode the $i$th server's output from $\Phi_{i}(\mathbf{x})$ to  $\Phi_{i}(\mathbf{x}')$ given $x_k,x'_k.$ More formally:
\begin{definition}[Re-encoding function $\Gamma_{i,k}$] 
The function $\Gamma_{i,k}: \mathcal{W}_{i} \times \mathcal{V} \times \mathcal{V} \rightarrow \mathcal{W}_{i},i\in\{1,\ldots,N\},k \in \{1,2,\ldots,K\}$ is said to be a re-encoding function if it satisfies the following:

$\Gamma_{i,k}(\mathbf{\Phi}_i(\mathbf{x}), x_k,x_k') = \Gamma_{i,k}(\mathbf{\Phi}_i(\mathbf{x}),\mathbf{0},x_k'-x_k) = \Gamma_{i,k}(\mathbf{\Phi}_i(\mathbf{x}),x_k-x_k',\mathbf{0}) = \ \Phi_{i}(\mathbf{x}')$ for every pair of vectors $\mathbf{x},\mathbf{x}' \in \mathcal{V}^{K}$ such $\mathbf{x}=\mathbf{x}'$ or $\mathbf{x},\mathbf{x}'$ differ only in the $k$th co-ordinate.
\label{def:reencoding}
\end{definition}

It is a simple linear algebraic fact that, for any $N,K$ linear code, there exists re-encoding functions $\Gamma_{i,k}$ for $i=1,2,\ldots,N$ and $k=1,2,\ldots,K$. This fact will be used in our algorithms.

 \begin{example}
 Let $\mathcal{F}$ be any finite field with odd characteristic. Consider three objects $X_1,X_2,X_3$, their values $[x_1, x_2,x_3] \in \mathcal{F}^3$. The example of Sec. \ref{subsec:challenges} encodes these objects using a $(5, 3)$ code as $[x_1, x_2, x_1+x_2, x_1+x_2+x_3,x_1+2x_2+x_3].$ The sets of recovery sets are discussed in Sec. \ref{subsec:motivation}. An example of a re-encoding function is $\Gamma_{5,2}(y_5,x_2, \hat{x}_{2}) = y_{5}-2x_2+2\hat{x}_2$. 
 \end{example}

\subsection{Requirements} 

For an execution $\beta$, for any client $c \in \mathcal{C},$ let $\beta|_{c}$ represent the subsequence of $\beta$ with respect to the invocations and responses of $c$. If operation $\pi_1$ precedes $\pi_2$ in $\beta|_{c}$, we write $\pi_1 \rightarrow \pi_2$. 
\begin{definition}[Causal and eventual consistency]\label{def:original_causal}
An execution $\beta$ is called causally consistent, if there exists an irreflexive partial order $\leadsto$ among all the operations in $\beta$ and a total order\footnote{The partial order $\leadsto$ is sometimes referred to as the \emph{visibility} order, and the total order $\prec$ as the \emph{arbitration order} \cite{burckhardt2014replicated, burckhardt2014principles, attiya2017limitations}. Propeties (a) and (b) in Definition \ref{def:original_causal} are referred to respectively as \emph{Causal Visibility} and \emph{Causal Arbitration} in \cite{burckhardt2014principles,burckhardt2014replicated}. Property (c) in Definition \ref{def:original_causal} corresponds to a key-value store which is a collection of last-write-wins (LWW) registers as per the terminology of \cite{burckhardt2014principles,burckhardt2014replicated}.} $\prec$ for all write operations in $\beta$ such that
{\bf (a)}$\pi_1 \rightarrow \pi_2 \Rightarrow \pi_1 \leadsto \pi_2$, {\bf (b)} if $\pi_1,\pi_2$ are write operations such that $\pi_1 \leadsto \pi_2$, then $\pi_1 \prec \pi_2$   and  {\bf (c)} any read $\phi$ to object $X$ returns the value of the write $\pi^{*}$ - which is the largest write (as per $\prec$) to object $X$ in $S_{\phi}$ where $S_{\phi}$ is the set of write operations $\pi$ such that $\pi \leadsto \phi$. 

A causally consistent execution $\beta$ is called eventually consistent if there exists a partial order $\leadsto$ and total order $\prec$ that, in addition to satisfying the above properties, also satisfies the following: for any write operation $\pi$ in $\beta$, the number of operations in $\beta$ that do not belong in the set $S_{\pi}=\{\phi: \pi \leadsto \phi\}$ is finite.
\end{definition}

An algorithm is causally consistent if its every execution is causally consistent. Given an arbitrary linear code $\mathfrak{C}(N,K,\mathcal{F}$) with encoding functions over input alphabet $\mathcal{V}$, our goal is to design a causally consistent read-write memory emulation algorithm to store $K$ objects over a distributed asynchronous message passing system with $N$ servers. We require our algorithm to satisfy the following liveness properties:\\
\underline{(1) Eventual Consistency:} Every execution  $\beta$ is eventually consistent.\\
\underline{(2) Storage Cost:} In ``stable state'', the storage cost for every server is equal to that guaranteed by the underlying erasure code. More formally, let $\mathcal{W}_{1},\mathcal{W}_{2},\ldots,\mathcal{W}_{N}$ denote the outputs of the encoding functions associated with code $\mathfrak{C}$. For any server node $i \in \{1,2,\ldots,N\},$ let $\mathcal{S}_{i}$ denotes the set of possible states of server $i$ in the algorithm. For any positive integer $w > 0$, there is a fixed subset $\mathcal{T}_{i,w} \subset \mathcal{S}_{i}$ of cardinality $c_w |\mathcal{W}_{i}|$ for some positive constant $c_w$ that does not depend on the code $\mathfrak{C}$ such that, in any fair execution with a finite number $w$ of write operations, the state of the algorithm eventually lies in $\mathcal{T}_{i,w}$\footnote{Since it takes $\log_{2}|\mathcal{T}_{i}|$ bits to store an element in $\mathcal{T}_{i}$, this constraint readily implies that the number of bits stored is $\log_{2}|\mathcal{W}_{i}|+\log_{2}c_W$ - at most a constant number of bits more than  that prescribed by the code. The dependence of $c_w$ on the number of writes to allows the use of vector clocks.}.\\ \underline{(3) Local Writes:} A write issued at by a client in $\mathcal{C}_{s}$ terminate in any fair execution where server $s$ and the client are non-halting.\\ \underline{(4) Liveness of Reads:} Every read operation issued by a client in $\mathcal{C}_{s}$ to an object $X_{i}$ returns in any execution where the following components take infinitely many steps in a fair manner: (a) the client, (b) at least one set of servers that form a recovery set for $X_i$ as per code $\mathfrak{C}$,  (c) server $s$, (d) the channels connecting the components in (a)-(c).

%Next, explain the challenges in implementing causal consistency, and our solutions in the context of the code of the above example, when used to store two objects over $4$ servers.

\section{The \CausalEC~ Algorithm}
\label{chap:algorithm}

We present the \CausalEC~ algorithm that is parametrized by an error correcting code $\mathcal{C}(N,K,\mathcal{F})$ to store $K$ objects $X_1, X_2,\ldots, X_{K}$ on $N$ servers. For server $s,$ we represent $\mathcal{X}_{s}$ as the set of objects to be stored by server $s$ as per code $\mathfrak{{C}}.$  %We introduce the server state variables in Sec. \ref{state_variables}. We present the \CausalEC~ algorithm in Sec. \ref{server_protocol}.

\subsubsection*{State variables}\label{state_variables}
 Each server maintains a vector clock $vc$, i.e., one natural number corresponding to every server in the system. We denote by $\mathcal{VC}=\mathbb{N}^{\mathcal{N}}$ the partially ordered set that represents the alphabet of the vector clock. Two vector clocks with (vector) values $vc_1,vc_2$ can be compared by their components: vector $vc_1$ \emph{is less than or equal to} $vc_2$ ($vc_1 \leq vc_2$) if each of $vc_1$'s components is less than or equal to $vc_2$'s corresponding component; $vc_1$ \emph{is less than} $vc_2$ ($vc_1<vc_2$) if $vc_1 \leq vc_2$ and $vc_1 \neq vc_2.$ A \emph{tag} is a tuple of the form $(ts,id)$, where timestamp $ts$ is the value of a vector clock, and $id$ is a natural number that is the unique identifier of a client. We denote by $\mathcal{T}=\mathbb{N}^{\mathcal{N}}\times\mathbb{N}$ the set which represents the alphabet of the tag. A total order among the tags can be formed as follows: tag $t_1$ \emph{is less than}  $t_2$ ($t_1<t_2$) if $vc_1<vc_2$ or $vc_1\not > vc_2, id_1<id_2$. For tag $t$, we denote by $t.ts$ its vector clock component.
 %tag $t_1$ \emph{is equal to } $t_2$ ($t_1=t_2$) if $vc_1=vc_2, id_1=id_2$. 
%In the sequel,
 
For a set $\mathcal{A}$, the power set of $\mathcal{A}$ is denoted by $2^{\mathcal{A}}$. We use $\bot \notin \mathcal{V}\cup \mathcal{W}_{1}\cup \mathcal{W}_{2}\ldots \cup \mathcal{W}_{N}$ to represent a null value. We denote $\mathcal{V}^{\bot} = \mathcal{V}\cup \{\bot\}$ and $\mathcal{W}_{i}^{\bot} = \mathcal{W}_i\cup \{\bot\}$ for $i \in \mathcal{N}$. We assume that $\texttt{localhost} \notin \mathcal{C}$ and we denote by $\overline{\mathcal{C}} = \mathcal{C} \cup \{\texttt{localhost}\}$  For convenience, we will denote $\overline{\mathcal{W}} = \prod_{i \in \mathcal{N}}\mathcal{W}_i^{\bot}$. We represent $\Pi_i, i \in \mathcal{N}$ to be the projection operator acting on $\overline{\mathcal{W}};$ that is, $\overline{w} = (w_1, w_2, \ldots, w_N) \in \overline{\mathcal{W}},$ we have $\Pi_i(\overline{w}) = w_i.$ We assume that every operation has a unique \emph{operation identifier} that comes from a set $\mathcal{I}.$ 

The state variables at server $s$ are the following (see, Fig. \ref{alg:state}. for formal presentation and initial states):  
\begin{itemize}
  \setlength\itemsep{1pt}
\item vector clock $vc_s\in \mathcal{VC}$;

\item a priority queue $InQueue_s \subset {\mathcal{N}\times \mathcal{X} \times\mathcal{V} \times \mathcal{T}}$, i.e., an ordered list of server-object-value-tag tuples;

\item list $L_s: \mathcal{X} \rightarrow 2^{\mathcal{T} \times \mathcal{V}},$  i.e., a set of value-tag tuples, one set per object;
\item a deletion list $DelL_s: \mathcal{X}\rightarrow 2^{\mathcal{T} \times \mathcal{N}}$, i.e., a set of tag-server index tuples, one set per object;

\item a codeword symbol value and corresponding tags $M_s \in \mathcal{W}_{s} \times \mathcal{T}^{\mathcal{X}}.$  
\item a pending read list $ReadL_s \subset {\overline{\mathcal{C}} \times \mathcal{I} \times  \mathcal{X}}  \times \mathcal{T}^{\mathcal{X}} \times  \overline{\mathcal{W}}$;
\item Error indicator arrays $Error1, Error2:\mathcal{X}\rightarrow \{0,1\};$ Initially,  $Error1[X]=Error2[X]=0$ for all $X \in \mathcal{X}.$
\item An array of tags $tmax:\mathcal{X}\rightarrow \mathcal{T};$ initially $tmax[X]=\vec{0}$ for all $X \in \mathcal{X}.$ 
\end{itemize}
The server states of \CausalEC~ is shown in Fig. \ref{alg:state}.  In Fig. \ref{alg:state}, we denote channel from server $i$ to server $j$ represented as $c_{i \rightarrow j}, i \neq j.$ Each channel may be thought of as a queue of messages, where each message belongs to a finite set $\mathcal{M}$. Server $i$ can invoke a $\texttt{send}_{i\rightarrow j}(m), m \in \mathcal{M}$ input to channel $c_{i \rightarrow j}$  and the channel output corresponds to a $\texttt{receive}_{i \rightarrow j}(m)$ input to server $j$.

The server protocols associated with \CausalEC~ are described in Algorithms \ref{alg:inputactions_clients}, \ref{alg:input_actions} and \ref{alg:internal_actions}.

\begin{figure}[h] 
\begin{framed}
\footnotesize
\textbf{server $s$:}

\underline{state and initial values:}

$vc_s \in \mathbb{N}^{N}$, initial state $vc=\vec{0}$

$InQueue_s \subset {\mathcal{N}\times \mathcal{X} \times\mathcal{V} \times \mathcal{T}}$, initially empty

Lists $L_s: \mathcal{X} \rightarrow 2^{\mathcal{T} \times \mathcal{V}}$,   initially $L[X]=<(\vec{0},\vec{0})>$ for every $X\in \mathcal{X}$

$DelL_s: \mathcal{X} \rightarrow 2^{\mathcal{T} \times \mathcal{N}}$ initially empty

$M_s \in \mathcal{W}_s \times \mathcal{T}^{\mathcal{X}}$, initial state $M=(\vec{0}, \vec{0}^{\mathcal{X}})$

$ReadL_s \subset {\mathcal{C} \times \mathcal{I} \times \mathcal{X} \times \mathcal{T}^{\mathcal{X}} \times \overline{\mathcal{W}}}$,  initially empty

$Error1, Error2: \mathcal{X} \rightarrow \{0,1\}$ Initially set to $Error1[X]=Error2[X]=0$ for all $X \in \mathcal{X}.$

$t{max}:\mathcal{X} \rightarrow \mathcal{T},$ initially $t{max}[X] = \vec{0}$ for all $X \in \mathcal{X}.$

\underline{actions:}

Input: 
 $\receive(m), m \in \mathcal{M}$ along channels $c_{i\rightarrow s}$ for $i=\{1,2,\ldots,N\}-\{s\}$, and from clients in $\mathcal{C}_{s}$

Output: send($m$) for $m \in \mathcal{M}$ along channels $c_{s\rightarrow i}$ for $i=\{1,2,\ldots,N\}-\{s\}$, and to clients in $\mathcal{C}_{s}$

Internal: 
$\texttt{Apply}\_\texttt{Inqueue},$ $\texttt{Garbage}\_\texttt{Collection},$ $\texttt{Encode}$

\end{framed}
\caption{Server states for server $s$ of \CausalEC}
\label{alg:state}
\end{figure}

We suppress the subscript when the identity of the server can be obtained from context, e.g., we denote the vector clock as $vc.$ The data structure $M$ stores the codeword symbol value and a vector of tags. At server $s$, $M.val$ denotes the value from $\mathcal{W}_s^{\bot}$ in tuple $M$.   For object $X \in \mathcal{X}_{i}$,  $M.tagvec[X]$ denotes the tag corresponding to object $X$. $InQueue$ is ordered by timestamp component of tags, with items with smaller timestamps appearing closer to the head. A new tuple added to the list is placed after all existing items with a smaller or incomparable tag. The head of the queue is denoted as $InQueue.Head$ with ties amongst incomparable timestamps broken arbitrarily. 

The list $L$ at nodes serve as a place to store \emph{history}, i.e, multiple versions of one object. For an object $X$, we denote the tuple with the highest tag in $L[X]$ as $L[X].HighestTagged.$ If tuple $(t,v)$ is $L[X].HighestTagged$, then we denote by  $L[X].HighestTagged.tag=t$, $L[X].HighestTagged.val=v,$ and  $L[X].HighestTagged.tag.ts = t.ts$.  If the list is empty, then we simply use the convention $L[X].Highesttagged.tag = \vec{0}$ and  $L[X].Highesttagged.val = \vec{0}$.
Unnecessary values in $L$ will be deleted by checking the tuples in $DelL$. The data structure $ReadL$ is used to store the pending read operations. For an entry $(clientid,opid,X,tagvector,\overline{w})$ in $ReadL,$ if the non-null entries in $\overline{w}$ form a recovery set for object $X,$ then the value is recovered and sent to the client $clientid,$ and the entry is removed from $ReadL.$

%Note that each server $s_i$ is an I/X automaton with 
%(i) one input port for every client in %$\mathcal{C}_{i}$, (ii) one output port for every client in  %$\mathcal{C}_{i}$ (iii) one input port for every incoming channel from servers $s_{j},j\in\{1,2,\ldots,N\}-\{i\}$, and (iv) one output port each to channels to servers $s_{j},j\in\{1,2,\ldots,N\}-\{i\}.$ }

\subsubsection*{Algorithm description: Client protocol}
\label{server_protocol}

The protocol of the clients are described here: since the protocol is simple, we omit its formal description. For each operation, the client generates a unique operation identifier $opid$ from set $\mathcal{I}$. On receiving an invocation for write operation on object $X$ with value $v$, the client from $\mathcal{C}_{s}$ simply sends a $\langle\texttt{write}, opid, X, v\rangle$ message to server $s.$ The client waits for an $\langle \texttt{write-return}, opid, \texttt{ack} \rangle$ = message from server $s$ and terminates the operation on receiving the message. On receiving an invocation to a read operation to object $X$, the client simply sends a $\langle\texttt{read}, clientid, opid, X \rangle$ message to server $s$. On receiving a $\langle \texttt{read-return}, opid, val\rangle$ message for some $val \in \mathcal{V}$, it returns $val$.

\subsubsection*{Algorithm description: Server protocol (High level Description)}

\begin{figure}[!tbh]
\centering
\includegraphics[width=4.5in]{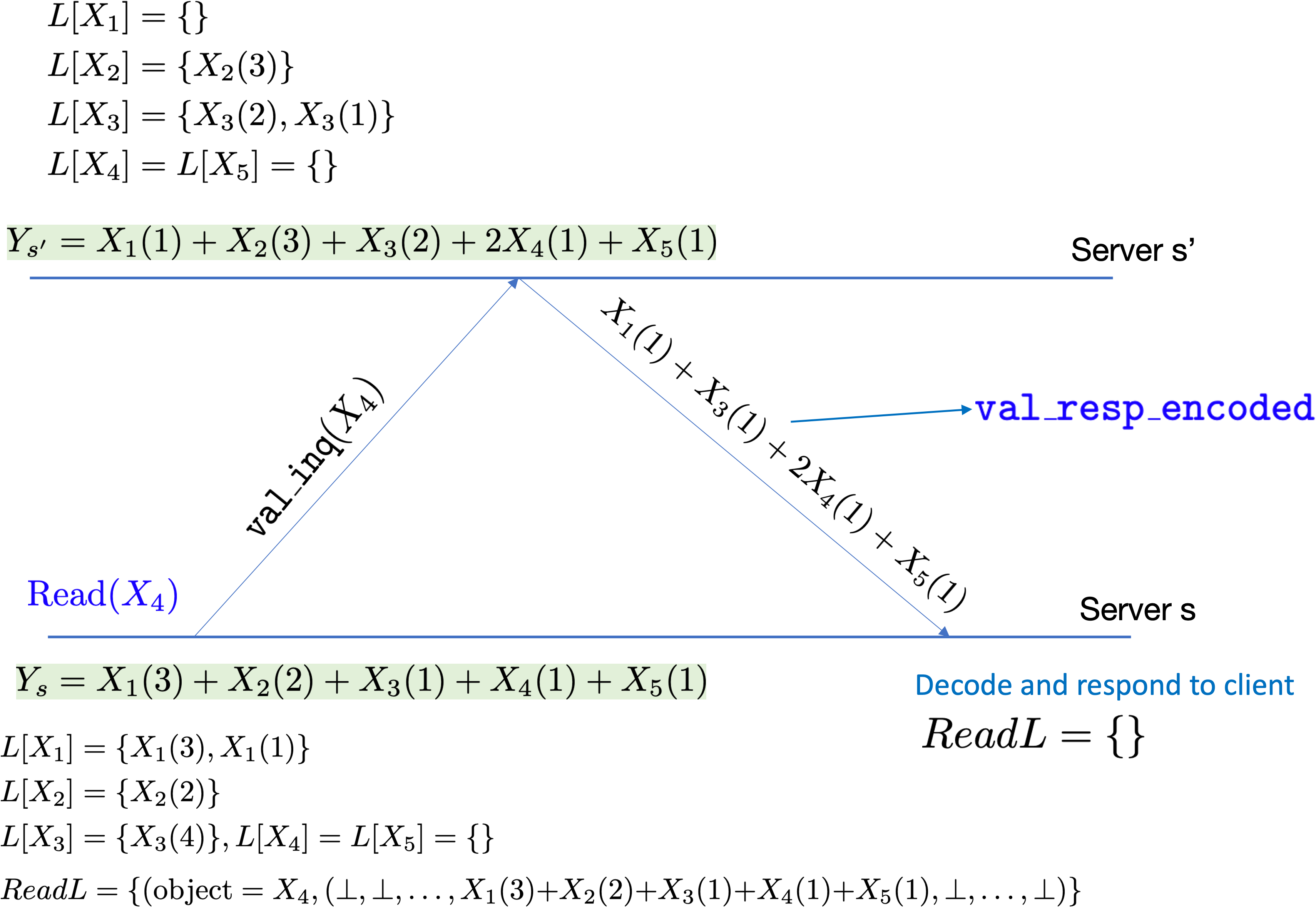}
\caption{\small{An example of a read operation on object $X_4$ at node $s$ for the depicted erasure code; clearly $\{s,s'\}$ forms a recovery set for object $X_4$. In the figure, we use integer indices rather than tags for simplicity of description. On receiving the read request, node $s$ registers the read in $\texttt{ReadL}$ and sends a $\texttt{val\_inq}$ message. Node $s'$ is able to use its local history to re-encode its stored variable to $X_1(1)+X_3(1)+2X_4(1)+X_5(1)$; the effect of the mismatched version $X_2$ is cancelled, and $X_3$ is re-encoded to the desired version. Node $s$ on receiving $\texttt{val\_resp\_encoded}$ message uses $X_1(3),X_1(1),X_2(2)$ stored locally to obtain $X_1(3)+X_2(2)+X_3(1)+2X_4(1)+X_5(1).$ Then, recognizing that $\{s,s'\}$ can serve the read on object $X_4,$ it decodes $X_4(1)$, deletes the corresponding entry in $\texttt{ReadL}$ and responds to the read. Caveat: the notation in this figure has been simplified in comparison with the formal algorithm specification of Algorithms \ref{alg:input_actions} \ref{alg:inputactions_clients} \ref{alg:internal_actions} to enable easier understanding of the underlying ideas.}}
\label{fig:respond}
\vspace{-5pt}
\end{figure}

We describe next the  protocol of server $s$. We first provide a brief, high-level description of the ideas. In the protocol, all the nodes maintain vector timestamps to identify versions and maintain causal consistency in a relatively standard manner. Our description here focuses on aspects related to erasure coding, ensuring liveness, and garbage collection. On receiving a write operation to object $X$, the server stores the object value along with the timestamp in the history list, sends it to every other server and responds to the client. 
On receiving a read operation to an object $X$, server $s$ responds to the client immediately with a value if its local history list $L_s[X]$ has a causally consistent version of the object. Otherwise, it logs the read along with the local codeword value $M_s$ in a pending read list ($ReadL_s$) and sends $\texttt{val}\_\texttt{inq}$ inquiry messages to all the other servers (See Fig. \ref{fig:respond}). The $\texttt{val}\_\texttt{inq}$ message consists the desired timestamps for every object that would enable decoding at server $s$. Every server that receives a $\texttt{val}\_\texttt{inq}$  immediately responds to the message with a $\texttt{val}\_\texttt{resp}$ or a $\texttt{val}\_\texttt{resp}\_\texttt{encoded}$ message. A $\texttt{val}\_\texttt{resp}$ message is sent by the server if it has a causally consistent version of object $X$ in its list. Otherwise, a codeword symbol is sent via a $\texttt{val}\_\texttt{resp}\_\texttt{encoded}$  message.
Server $s$ returns a value to the read on receiving a $\texttt{val}\_\texttt{resp}$ message, or on receiving  $\texttt{val}\_\texttt{resp}\_\texttt{encoded}$   responses from at least one recovery set for object $X$. In the latter case, the server performs a decoding operation on the codeword symbols corresponding to that recovery set.

We explain server actions on receiving a $\texttt{val}\_\texttt{resp}$ or a $\texttt{val}\_\texttt{resp}\_\texttt{encoded}$  through the example scenario depicted in Fig. \ref{fig:respond}. In our explanations, $X_j(i)$ denotes the value of the $i$th write to object $X_j$. As in Sec. \ref{subsec:challenges}, we denote every version of an object by an integer index, whereas the actual protocol implements an indexing of writes via vector timestamps. On receiving the $\texttt{val}\_\texttt{inq}$ message from server $s$, server $s'$ compares the requested tags for each object with the tags of the locally stored codeword symbol. For every object $X$ that the tags do not match, the attempts to perform re-encoding using the information in the corresponding list $L[X].$ In the scenario of Fig. \ref{fig:respond}, server $s'$ attempts to perform re-encoding for objects $X_1,X_2,X_3$.  Re-encoding can be understood as a sequence of two steps. The first step is to cancel the effect of the locally stored version (if possible). If the first step is successful, the second step is to apply the effect of the requested object version.
Because server $s'$  has $X_2(3),X_3(2)$ in the local history list, it first cancels the effect of $X_2(3),X_3(2)$ to obtain $X_1(1)+2X_4(1)+X_5(1) $. The second step is only performed for $X_3$ since the desired object version $X_3(2)$ is present in the local history list of server $s'$, but $X_2(2)$ is not.

On receiving the $\texttt{val}\_\texttt{resp}\_\texttt{encoded}$  response, server $s$ further uses the re-encoding function using the received codeword symbol and the local history lists to ensure that the final stored codeword is an encoding of the desired object versions (See Fig. \ref{fig:respond}). If the entry in $ReadL$ contains sufficient information to decode the desired object (that is, the server receives responses from a recovery set), the object is decoded and returned to the client, and the entry is removed from the pending read list.  The conditions for garbage collection of \CausalEC~ are designed to ensure that the history lists at servers $s,s'$ have sufficient information to enable the desired re-encoding. That is, for every read operation at server $s$, it can, through re-encoding of the codeword symbol contained in the $\texttt{val}\_\texttt{resp}\_\texttt{encoded}$ message server $s'$, obtain a codeword symbol corresponding to an encoding of the versions that match the local codeword version.

The server protocol has three internal actions that we describe next.
(I) \texttt{Apply}\_\texttt{Inqueue}: This action is similar to standard causally consistent protocols, which expose object values to readers based on timestamps in a causally consistent manner. The main difference for us is that, when a version is ready to be exposed, it is added to the history list. 

(II) \texttt{Encoding}: For any object, if a list consists of a later version than the one applied to the stored codeword symbol, then, the encoding action triggers a "local read". For instance, in Fig. \ref{fig:respond}, server $s$ triggers a local read for $X_3(1)$. On receiving this object version, the server uses re-encoding function to update the locally stored codeword symbol: $Y_s-X_3(1)+X_3(4).$

On performing the re-encoding, a delete message from server $s$ is sent to every server node for that object version. On receiving the delete message, server $s'$ server logs the message in the deletion list along with the server index $s$ of the sender.

(III) \texttt{Garbage Collection}: The garbage collection action removes elements from history lists in a manner that ensures that liveness of reads is not affected. Consider a point of the execution where server $s$ stores a codeword symbol that is an encoding of object version $X_j(k).$ Let $k^*$ denote the largest integer (version index) such that the deletion list at server $s$ has logged a delete message from every server for object $X_j$ with a version at least as large as $k^*.$ From the encoding action, we can infer that every server stores an encoding of a version of $X_j$ that is equal to, or later than $X_j[k*].$
An object version $X_j(i)$ in the list $L[X_j]$ at server $s$ is considered for deletion only if $k \geq i$ and $k^* \geq i$. If $k > i$ and $k^* > i,$ then the object can be deleted from the history list $L[X_j]$ so long as there is no pending read that requires $X_j(i)$ to be decoded. Otherwise, version $X_j(i)$ is deleted only if $k=k^*=i$ and server $s$ has logged in its deletion list, a deletion message for object $X_j$ with a version index for exactly equal to $i$. These careful conditions avoid unnecessary deletion of data objects, for instance, if $k=i > k^{*}$ or $k > i = k^*,$  object version $X_j(i)$ may be present in the history list to enable a local re-encoding operation to a higher object version, or to help other read operations that may not have yet even received $X_j(i).$  The formal description next contains some improvements over the above, informal description. For instance, it  avoids certain re-encoding actions for objects that are not stored on servers.

\subsubsection*{Algorithm description: Server protocol (Detailed Description)}
 The formal specification of actions and corresponding state transitions are described in Figures \ref{alg:input_actions}\ref{alg:inputactions_clients} and \ref{alg:internal_actions}; here we provide a textual description. The server protocol description is categorized based on the type of action - reception of messages from clients, reception of messages from other servers, and internal actions. %These are formally described respectively in in Algorithms \ref{alg:inputactions_clients}, \ref{alg:input_actions}, \ref{alg:internal_actions}, which are placed in Appendix \ref{app:alg} . 
In our description, we use function $ObjectIndex:\mathcal{X}\rightarrow \{1,2,\ldots,K\}$ with $ObjectIndex(X_i) = i.$ We also denote $\Pi:\mathcal{N}\rightarrow 2^{\mathcal{N}}$ to be the standard projection function.

\begin{figure}[htb] 
\begin{algorithm*}[H]
\caption{Server protocol in \CausalEC: Transitions for Clients Messages at node $s$} 
\label{alg:inputactions_clients}

\footnotesize
 \underline{On  {$receive\langle\texttt{write}, opid, X,v\rangle$}  from client $id$}:

$vc[s]\leftarrow vc[s]+1$
\label{line:vcincrement}

$t\leftarrow (vc,id)$

$L[X] \leftarrow L[X] \cup \{(t,v)\}$
\label{line:writeaddtolist}

$send\langle\texttt{write-return-ack}, opid\rangle$ to client $id$
\label{line:writeresponse}

$send\langle\texttt{app},X,v,t\rangle$ to all other nodes, $j\neq i$ 
\label{line:writesendapply}

for every $opid$ such that there exists $clientid \neq \texttt{localhost}, \bar{v},tags$ such that $(clientid, opid, X, tags, \bar{v})\in ReadL$:

%\hspace{1cm}if $clientid \neq \texttt{localhost}$:

\hspace{1cm}$send\langle \texttt{read-return}, opid,v \rangle$ to client $clientid$
\label{line:sendtoread1}

\hspace{1cm}$ReadL\leftarrow ReadL - \{clientid, opid, X, tvec, \bar{v})\}$
\label{line:read-remove1}

\underline{On $receive\langle\texttt{read},opid, X\rangle$ from client $id$}:

if $L[X] \neq \{\}$ and $L[X].Highesttagged.tag \geq M.tagvec[X]$: \label{line:readimmediatereturncheck1}

\hspace{1cm}$send\langle \texttt{read-return}, opid, L[X].HighestTagged.val \rangle$ to client $id$    \hspace{10pt} \# return locally
\label{line:readimmediatereturn}

else if  $\{s\} \in \mathcal{R}_X$ \label{line:readreturnimmediatecheck2}

\hspace{1.5cm}$v\leftarrow \Psi_{\{s\}}^{ObjectIndex(X)}( M.val)$

\hspace{1.5cm} send $\langle \texttt{read-return}, opid, v \rangle$ to client $id$ 

\label{line:send-to-read1}
 
else:  \hspace{10pt} \# contact other nodes

\hspace{1cm}$ReadL \leftarrow ReadL\cup (id, opid, X, M.tagvec ,w_1, w_2, \ldots, w_N)$, where $w_{i} = \begin{cases} M.val & \textrm{if }i = s\\ \bot & \textrm{otherwise} \end{cases}$ 
\label{line:addtoreadl}

\hspace{1cm}send$\langle\texttt{val\_inq},id, opid, X, M.tagvec\rangle$ to every node $j$ where $j\neq i$
\label{line:valinq_send}
\end{algorithm*}

%\caption{Server protocol of \CausalEC: Transitions for input actions corresponding to messages from clients at node $s$}
\end{figure}

\begin{figure}[htp!] 
\begin{algorithm}[H]
\footnotesize

\underline{On $receive\langle\texttt{del},X, t\rangle$ from node $j$}:
 
 $DelL[X] \leftarrow DelL[X] \cup \{(t,j)\}$ 
\label{line:delmessage}

\underline{On $receive\langle\texttt{val\_inq},clientid, opid,\overline{X}, wantedtagvec\rangle$ from node $j$}:
 
if there exists $v$ such that $(wantedtagvec[\overline{X}],v) \in L[\bar{X}]$:

\hspace{0.5cm}$send\langle\texttt{val\_resp},clientid, opid \overline{X},v,wantedtagvec\rangle$ to node $j$

else

\hspace{0.5cm} $ResponsetoValInq \leftarrow M$
\label{line:Responsetovalinqinit}

\hspace{0.5 cm} For all $X \in \mathcal{X}_{s}$
\label{line:forloop1}

\hspace{1 cm} If $M.tagvec[X] \neq wantedtagvec[X]$: 

\hspace{1.5 cm} If there exists unique $v$ such that $(M.tagvec[X],v) \in L[X]$
\label{line:responsetovalinqcondition1}

 \hspace{2 cm}$ResponsetoValInq.val \leftarrow \Gamma_{s,ObjectIndex(X)}
 (ResponsetoValInq.val,v,\mathbf{0})$
\label{line:ResponsetoValInqvalupdate1} 

  \hspace{2 cm}$ResponsetoValInq.tagvec[X] \leftarrow \mathbf{0}$.
\label{line:ResponsetoValInqtagupdate1} 

\hspace{2 cm}If there exists unique $w$ such that $(wantedtagvec[X],w) \in L[X]$
\label{line:responsetovalinqcondition2}

\hspace{2.5 cm}$ResponsetoValInq.val \leftarrow \Gamma_{s,ObjectIndex(X)} (ResponsetoValInq.val,\mathbf{0},w)$
\label{line:ResponsetoValInqvalupdate2} 
  \hspace{2.5 cm}$ResponsetoValInq.tagvec[X] \leftarrow wantedtagvec[X]$.
\label{line:ResponsetoValInqtagupdate2} 
\hspace{0.5cm} $send \langle \texttt{val\_resp\_encoded}, ResponsetoValInq,clientid,opid, \overline{X}, wantedtagvec \rangle$ to node $j$.

\underline{On $receive\langle \texttt{val\_resp\_encoded},\overline{M},clientid,opid, \overline{X}, requestedtags \rangle$ from node $j$}:

For $X \in \mathcal{X}$

\hspace{0.5 cm}
$Error1[X],Error2[X] \leftarrow 0,$

$Modified\_codeword \leftarrow \overline{M}.val$

\label{line:modifiedcodeword_init}

If there exists a tuple $(clientid,opid, \overline{X},requestedtags,\overline{w}) \in ReadL$ for some $\bar{w} \in \overline{\mathcal{W}}$ \label{line:valrespencoded_readcheck}

For all $X \in \mathcal{X}_{j}$,   \label{line:valrespencoded_objectforloop}

\hspace{0.5 cm} If $requestedtags[X] \neq \overline{M}.tagvec[X]$
\label{line:metaerrorcondition}

\hspace{1.0 cm} If $\overline{M}.tagvec[X] \neq \mathbf{0}$ and there exists unique $(\overline{M}.tagvec[X],w)\in L[X]$
\label{line:metaerrorcondition1}

\hspace{1.5 cm}   $Modified\_codeword \leftarrow \Gamma_{s,ObjectIndex(X)}
 (Modified\_codeword,w, \mathbf{0})$
 \label{line:Modifiedcodeword1}

\hspace{1.0 cm} else if $\overline{M}.tagvec[X] \neq 0$:  $Error1[X] \leftarrow 1.$
\label{line:seterror1}

 \hspace{1.0 cm} If $Error1[X] \neq 1$ and there exists unique $v$ such that $(requestedtags[X],v)\in L[X]$
 \label{line:metaerrorcondition2}

  \hspace{1.5 cm}  $Modified\_codeword \leftarrow \Gamma_{s,ObjectIndex(X)}
 (Modified\_codeword,\mathbf{0},v)$
 \label{line:Modifiedcodeword2}

 \hspace{1.0 cm} else: $Error2[X] \leftarrow 1$\label{line:seterror2}

    If $\bigwedge_{X \in \mathcal{X}} \bigg(Error1[X] \wedge Error2[X]=0\bigg)$
  \label{line:errorcondition}  
  
\hspace{0.5cm}  $ReadL\leftarrow ReadL-\{(clientid,opid,X,requestedtags,\overline{w})\} \cup \{(clientid,opid,X,requestedtags,\overline{v})\}$, where $\overline{v} \in \overline{W}$ is generated so that $\Pi_{i}(\overline{v}) \leftarrow  \begin{cases} \Pi_i(\overline{w}) & \textrm{if  }i \neq j \\ Modified\_codeword & \textrm{if } i = j \end{cases}$ 
\label{line:valrespencoded_addtuple}

\hspace{1cm}$S\leftarrow \{i  ~|~\Pi_i(\overline{v}) \neq \bot\}$

\hspace{1cm}$\ell \leftarrow ObjectIndex[X]$

\hspace{1cm}if there exists $T \in \mathcal{R}_\ell$ such that $T \subseteq S$:
\label{line:readreturncheck}

\hspace{1.5cm}$v\leftarrow \Psi_{T}^{(\ell)} \left(\Pi_{T}(\overline{v})\right)$
\label{line:valresp_handle_decoding}

\hspace{1.5cm}if $clientid =\texttt{localhost}$:

\hspace{2cm}$L[X]\leftarrow L[X] \cup \{( M.tagvec[X], v)\}$
\label{line:readaddtolist}

\hspace{1.5cm}else: $send\langle \texttt{read-return}, opid, v \rangle $ to client $clientid$
\label{line:read-send-val_resp_encoded}

\hspace{1.5cm}$ReadL\leftarrow ReadL- \{(clientid,opid, {X}, requestedtags, \bar{v}): \exists \bar{v}  \textrm{ s. t. }(clientid,opid,{X}, requestedtags, \bar{v}) \in ReadL \}$
\label{line:read-remove2}

\underline{On $receive\langle \texttt{val\_resp},X,v,clientid, opid, requestedtags\rangle$ from node $j$}:

if there exists $\overline{v}$ such that $(clientid, opid, X, requestedtags,\overline{v})\in ReadL$: 
\label{line:valresp_handle_condition}

\hspace{0.5 cm}if $clientid = \texttt{localhost}$ 

\hspace{1cm}$L[X] \leftarrow L[X] \cup  (requestedtags[X],v)$
\label{line:valrespaddtolist}

\hspace{0.5cm}else: $send \langle \texttt{read-return}, opid, v \rangle$ to client $client-id$ 
\label{line:valresp_handle}

\hspace{0.5 cm}$ReadL\leftarrow ReadL - \{ (clientid, opid, X, requestedtags, \overline{v})\}$
\label{line:read-remove3}

\underline{On receipt of \emph{app$(X,v, t)$} from node $j$}:

$InQueue \leftarrow InQueue \cup \{(j,X,v,t)\}$
\caption{Server protocol in \CausalEC: Transitions for input actions at node $s$}
\label{alg:input_actions}
\end{algorithm}

\end{figure}

\begin{figure}[htp!] 

\begin{algorithm}[H]
\footnotesize

$\texttt{Apply}\_\texttt{InQueue}$: precondition: $InQueue \neq \{\},$ 

effect:

$(j,X,v,t) \leftarrow InQueue.Head$

If $t.ts[p]\leq vc[p]$ for all $p\neq j$, and $t.ts[j]=vc[j]+1$:
\label{line:applycondition}

\hspace{1cm}$InQueue \leftarrow InQueue - \{ (j,X,v,t)\}$ 
\label{line:inqueue_remove}

\hspace{1cm}$vc[j]\leftarrow t.ts[j]$
\label{line:vcincrement_apply}

\hspace{1cm} $L[X] \leftarrow L[X] \cup (t,v)$ 
\label{line:list_applyaction}

\hspace{1cm}for every $opid$ such that $(clientid,opid, X, tvec, \bar{v})\in ReadL, clientid \neq \texttt{localhost}, tvec[X] \leq t$: 
\label{line:readreturn_apply_condition}

\hspace{1.5cm}$send \langle \texttt{read-return}, opid, v \rangle$ to client with id $clientid$
\label{line:readreturn_apply}

\hspace{1.5cm}$ReadL\leftarrow ReadL - \{ (clientid, opid, X, tvec, \bar{v})\}$
\label{line:read-remove4}

\hspace{1cm}for every $opid$ such that $(clientid,opid, X, tvec, \bar{v})\in ReadL, clientid = \texttt{localhost}, tvec[X] = t$: 
\label{line:readreturn_apply_condition2}

\hspace{1.5cm}$ReadL\leftarrow ReadL - \{ (clientid, opid, X, tvec, \bar{v})\}$
\label{line:read-remove5}

$\texttt{Encoding}$: precondition: none

effect:

For every $X \in \mathcal{X}_{s}$ such that $L[X] \neq \{\}$ and $L[X].Highesttagged.tag > M.tagvec[X]$
\label{line:Forloopupdatetag1}

\hspace{0.5 cm} If there exists a tuple $(M.tagvec[X],val) \in L[X]$
\label{line:conditionforupdate}

\hspace{1cm} Let $v$ be a value such that $(L[X].Highesttagged, v) \in L[X].$

\hspace{1cm} $M.val \leftarrow \Gamma_{s,k}(M.val, val,v)$,~ $M.tagvec[X] \leftarrow  L[X].Highesttagged.tag$
\label{line:updatetag1}

\hspace{1 cm} $R \leftarrow \{i \in \mathcal{N}: X \in \mathcal{X}_{i}\}$

\hspace{1 cm}send $\langle \texttt{del},X,L[X].Highesttagged.tag \rangle$ to all nodes in $R$
\label{line:Delete_send1}

\hspace{1 cm} $DelL[X] \leftarrow DelL[X] \cup \{L[X].Highesttagged.tag, s)\}$
\label{line:Delete_local1}

\hspace{0.5 cm} else if there exists no tuple $(\texttt{localhost}, \overline{opid}, X, tvec, \overline{w})$ in $ReadL$ with $tvec[X]=M.tagvec[X]$

\hspace{1 cm} Generate a unique operation identifier $opid.$

\hspace{1 cm} $ReadL \leftarrow ReadL \cup \{ (\texttt{localhost}, opid, X, M.tagvec, w_1, w_2, \ldots, w_N, ) \}$, where $w_i = \begin{cases} M.val & \textrm{if }i=s \\ \bot & \textrm{otherwise}\end{cases}$
\label{line:readlentryencoding}

\hspace{1 cm}$send \langle \texttt{val\_inq},\texttt{localhost}, opid, X, M.tagvec\rangle$ to every node $j\neq s$
\label{line:valinq_send2}

For every $X \notin \mathcal{X}_{s}$ such that $L[X] \neq \{\}$ and $L[X].Highesttagged.tag > M.tagvec[X]$
\label{line:Forloopupdatetag2}

\hspace{0.5 cm} $R \leftarrow \{i \in \mathcal{N}: X \in \mathcal{X}_{i}\}$
\label{line:Delete2_Rset}

\hspace{0.5cm}  Let $U$ be the set of all tags $t$ such that $ R \subseteq \{i \in \mathcal{N}: \exists \hat{t}, (\hat{t},i) \in DelL[X], \hat{t} \geq t\}$
\label{line:Delete2_Uset}

\hspace{0.5cm}  Let $\overline{U} = \{t \in \mathcal{T}: \exists  val, (t,val) \in L[X], t > M.tagvec[X]\}$
\label{line:Delete2_Ubarset}

\hspace{0.5 cm} {If $U \cap \overline{U} \neq \{\}$}
\label{line:Delete2_condition}

\hspace{1 cm} $M.tagvec[X] \leftarrow \max(U \cap \overline{U})$, $DelL[X] \leftarrow DelL[X] \cup \{(\max(U\cap \overline{U}), s)\}$
\label{line:updatetag2}
\label{line:Delete_local2}

\hspace{1 cm} send $\langle \texttt{del},X,\max(U\cap \overline{U}) \rangle$ to all nodes
\label{line:Delete_send2}

$\texttt{Garbage}\_\texttt{Collection}$: precondition: None

effect: 

For $X \in \mathcal{X}$

\hspace{0.5cm} {Let $S$ be the set of all tags $t$ such that $\{i: \exists \hat{t}, (\hat{t},i) \in DelL[X], \hat{t} \geq t\} = \mathcal{N}$}
\label{line:sset}

\hspace{.5 cm} $t{max}[X] \leftarrow \max (S)$
\label{line:tmax}

\hspace{0.5 cm} Let $\overline{S}$ be the set of all tags $t$ such that $\{i: \exists (t,i) \in DelL[X]\} = \mathcal{N}$
\label{line:s1set}

\hspace{0.5cm} $T \leftarrow \{tvec[X]: \exists clientid, opid, i, \overline{w},\bar{X} \textrm{ s.t. } (clientid, opid, \bar{X}, i, tvec, \overline{w}) \in ReadL { \textrm{ and } tvec[X] < M.tagvec[X]}\}$ 
\label{line:pendingreads}

\hspace{0.5cm} If $ \big(t{max}[X]=M.tagvec[X]\big) \textrm{AND~} \big(M.tagvec[X] \in \overline{S}) \big) {\textrm{ AND~ } \big(L[X] = \{\} \textrm{ OR } L[X].Highesttagged.tag \leq M.tagvec[X]\big)}$
\label{line:GCcondition1}

\hspace{1 cm}$L[X] \leftarrow L[X] - \{(tag,v): \exists (tag,v) \in L[X] \textrm{ s.t. } tag \leq t{max}[X], tag \notin T\}$
\label{line:GC1}

\hspace{0.5cm} else if $\big(t{max}[X] < M.tagvec[X]\big) \textrm{AND~} \big(X \notin \mathcal{X}_{s}\big)$
\label{line:GCcondition3}

\hspace{1 cm}$L[X] \leftarrow L[X] - \{(tag,v): \exists (tag,v) \in L[X] \textrm{ s.t. } tag \leq t{max}[X], tag \notin T\}$
\label{line:GC3}

\hspace{0.5cm} else: $L[X] \leftarrow L[X] - \{(tag,v): \exists (tag,v) \in L[X] \textrm{ s.t. } tag  < t{max}[X], tag \notin T\}$
\label{line:GC2}

\hspace{0.5 cm} $R \leftarrow \{i \in \mathcal{N}: X \in \mathcal{X}_{i}\}$
\label{line:Delete3_Rset}

\hspace{0.5 cm}  Let $U$ be the set of all tags $t$ such that $ R \subseteq \{i \in \mathcal{N}: \exists \hat{t}, (\hat{t},i) \in DelL[X], \hat{t} \geq t\}$
\label{line:Delete3_Uset}

\hspace{0.5 cm} If $U \neq \{\}$ and $X \in \mathcal{X}_{s}$
\label{line:Delete_cond3}

\hspace{1 cm}send $\langle \texttt{del},X,max(U) \rangle$ to all nodes
\label{line:Delete_send3}

\caption{Server protocol in \CausalEC: Transitions for internal actions at node $s$}
\label{alg:internal_actions}

\end{algorithm}

\end{figure}
\large{\textbf{Message receipt from clients at server $s$.}} See Algorithm \ref{alg:inputactions_clients}.

\underline{On receipt of a write command} from client with identifier $clientid$ to object $X$ with value $v$, it increments the component of vector clock $vc$ corresponding to node $s$ and adds $((vc,clientid),v)$ to the list $L[X]$. The node then sends an acknowledgement to the client, and sends $\langle\texttt{app},X,v,t\rangle$ message to all the other nodes.
Furthermore, the node clears any pending non-local reads to object $X$  (line \ref{line:sendtoread1} in Algorithm \ref{alg:inputactions_clients} and clears the corresponding entry from ReadL (line \ref{line:read-remove1} in Algorithm \ref{alg:inputactions_clients}).

\underline{On receipt of a read for object $X$} with operation identifier $opid$  from client with identifier $clientid$: If the list $L[X]$ is non-empty and $L[X].Highesttagged.tag \geq M.tagvec[X]$, the highest tagged value in the list is returned to the client. If $\{s\}$ is a recovery set for object $X$, then the value is decoded from $M$ and returned to the client (checked in line \ref{line:readimmediatereturn} in Algorithm \ref{alg:inputactions_clients}). Otherwise, an entry \newline $\vec{e} = (clientid, opid, X, M.tagvec, w_1, w_2, \ldots, w_N)$ is made into the read list $ReadL$ for the read, with $w_{s} = M.val$ and $w_j = \bot, j \neq s$. A $\langle \texttt{val\_inq}, clientid, opid, X, M.tagvec\rangle $ message is sent to all\footnote{An optimization is to first send the message to the nearest recovery set for $X$, followed by a broadcast in case of a time-out. We assume such an optimization in Sec. \ref{sec:communication}.} other nodes. The goal of the $\texttt{val\_inq}$ message to server $j \neq s$ is to obtain response to update the entry $\vec{e}$ in $ReadL$ to set $w_{j}$ to a non-null codeword symbol value. After codeword symbols received from a set of nodes that forms a recovery set for $X$, the value corresponding to object $X$ can be decoded.

%\begin{figure}[!tbp]
%\centering
%\includegraphics[width=4.1in]{}
%\caption{\small{An example of a read operation on object $X_2$ at node $5$ for the erasure code considered in Sec. \ref{subsec:challenges}. In the figure, we use integer indices rather than tags for simplicity of description. On receiving the read request, node $5$ registers the read in $\texttt{ReadL}$ and sends a $\texttt{val\_inq}$ message. Node $4$ is able to use its local history to re-encode its stored variable to $X_2(1)+X_3(1)$. Node $5$ on receiving $\texttt{val\_resp\_encoded}$ uses $X_1(2)$ stored locally to obtain $X_1(2)+X_2(1)+X_3(1).$ Then, recognizing that $\{4,5\}$ can serve the read on object $X_2$ (registered in $\texttt{ReadL}),$ it decodes $X_2(1)$, deletes the corresponding entry in $\texttt{ReadL}$ and responds to the read.}}
%\label{fig:respond}
%\vspace{-5pt}
%\end{figure}

{\textbf{Message receipt from other servers at server $s$}}
See Algorithm \ref{alg:input_actions}.

\underline{On receipt of a $\texttt{val\_inq}$ message of the form $\langle \texttt{val\_inq},opid, clientid,X,wantedtagvec\rangle$}: See Fig. \ref{fig:respond}. 
$wantedtagvec$ denotes the desired object versions required for decoding the read that sent the message. First, the node checks if $L[X]$ has a value corresponding to the requested tag $wantedtagvec[X]$ for that object. If so, the node sends a $\texttt{val\_resp}$ message with the value of that object. The $\texttt{val\_resp}$ message has the desired object $(X)$ in uncoded form and the node that sent the $\texttt{val\_inq}$ message can use it to respond to the read operation without any special decoding. Otherwise, the node aims to send $M$, possibly with some re-encoding in a $\texttt{val\_resp\_encoded}$ message. In particular, it aims to re-encode $M$ to $wantedtagvec$ if possible. The updated value-tag tuple that is sent in this message is denoted as $ResponsetoValInq$. The $ResponsetoValInq$ variable has the same data type as $M$, that is, it belongs to $\mathcal{W}_{s}\times \mathcal{T}^{\mathcal{X}}.$ The $ResponsetoValInq$ variable is initially populated with $M$ and then the re-encoding (if any) is performed by a for loop cycling through every object $\overline{X} \in \mathcal{X}_{s}$ (line \ref{line:forloop1} in Algorithm \ref{alg:input_actions}.). In  a given iteration of this for loop, with object $\overline{X} \in \mathcal{X}_{s}$ if  $M.tagvec[\overline{X}] = wantedtagvec[\overline{X}],$ then no re-encoding is performed on behalf of $\overline{X}$ - an example is objects $X_4,X_5$ in Fig. \ref{fig:respond} at node $s'$. On the other hand, if $M.tagvec[\overline{X}] \neq wantedtagvec[\overline{X}],$ we consider three cases:
(i) First, the node checks to see  the node checks if $L[\overline{X}]$ contains values corresponding to both $M.tagvec[\overline{X}]$ and $wantedtagvec[\overline{X}].$ If these values exist in $L[\overline{X}],$ then the re-encode function is used to ensure $ResponsetoValInq$ stores an encoded value corresponding to $wantedtagvec[\overline{X}].$ - an example is objects $X_3$ in Fig. \ref{fig:respond} at node $s'$
(ii) Second, if an entry corresponding to $wantedtagvec[\overline{X}]$ does not exist in $L[\overline{X}],$ but an entry corresponding to $M.tagvec[\overline{X}]$ exists, then the re-encode function is used to ensure that $ResponsetoValInq$ stores a value corresponding to $\mathbf{0} \in \mathcal{V}$ corresponding to object $\overline{X}$. That is, the effect of object $\overline{X}$ is removed from $ResponsetoValInq.$ - an example is object $X_2$ in Fig. \ref{fig:respond} at node $s'$. 
(iii) Finally, if no object corresponding to $M.tagvec[\overline{X}]$ is present in $L[\overline{X}],$ then, no re-encoding is performed, and $ResponsetoValInq$ has an encoded value that corresponds to $M.tagvec[\overline{X}]$.  The node  responds with a$\langle \texttt{val\_resp\_encoded}, ResponsetoValInq, clientid, opid, \overline{X},wantedtagvec\rangle$ 
message. In case (i), note that $ResponsetoValInq$ stores a codeword symbol that corresponds to the encoding of the object version denoted in $wantedtagvec[X]$ for object ${X}$. In cases (ii) and (iii), our proof shows that the node that sent the $\texttt{val\_inq}$ message will be able to modify $ResponsetoValInq$ on receiving it to ensure that the modified value corresponds to the encoding of object $X$ for version $wantedtagvec[X].$

\underline{On receiving a $\langle \texttt{val\_resp\_encoded}, \overline{M}, clientid, opid, \overline{X},requestedtags\rangle$ message:} from node $j,$ the node first checks whether $ReadL$ consists of an entry corresponding to $$(clientid,opid, \overline{X},requestedtags, w_1, w_2, \ldots, w_N).$$  If such an entry exists,
the node attempts to re-encode $\overline{M}$ to $Modified\_codeword$ to ensure that it corresponds to an encoding of object versions indicated in $requestedtags.$ The node initializes $Modified\_codeword$ to $\overline{M}.val$. The $Error1, Error2$ flags denote whether such a modification is successful. Specifically, for every object $X$ in $\mathcal{X}_{j},$ node $s$ considers three cases and performs the appropriate steps:
(i) if $requestedtags[X] = \overline{M}.tagvec[X],$ no modification corresponding to object $X$ is needed - an example is object $X_3,X_4,X_5$ in Fig. \ref{fig:respond} at node $5$.
(ii) Otherwise if $\overline{M}.tagvec[X] \neq 0$ and there is an entry $(M.tagvec[X],v1)$ in list $L[X],$ then the node re-encodes to ensure that $Modified\_codeword$ corresponds to an encoding of the value corresponding to the value of $X$ being $\mathbf{0} \in \mathcal{V}.$ An example is $X_1$ at server $s$ in Fig. \ref{fig:respond}. If $\overline{M}.tagvec[X] \neq 0$ and there is no entry $(M.tagvec[X],val)$, then $Error1[X]$ is set to 1. 

(iii) If $\overline{M}.tagvec[X] = 0$, or the re-encoding in step (ii) is successful (that is, $Error1[X] = 0,$) then the node checks if an entry  $(requestedtags[X],v2)$ exists in $L[X].$ If the entry exists, then the node re-encodes $Modified\_codeword$ to store an encoded value from $\mathcal{W}_{j}$ corresponding to value $v_2$ for object $X$ - an example is objects $X_1,X_2$ in Fig. \ref{fig:respond} at node $s$. If the entry does not exist in the list, then $Error2[X]$ is set to $1$.

In summary, if $Error1[X] = Error2[X]=0$ for all objects $X \in \mathcal{X}_{j},$ then \newline $Modified\_codeword$ stores an encoding corresponding to node $j$ with objects taking values corresponding to $requestedtags.$ In fact, we will show in our proofs that in every execution of \CausalEC,  we have  $Error1[X] = Error2[X]=0.$ In this case, the entry \newline $(client,opid, \overline{X},requestedtags, w_1, w_2, \ldots, w_N)$ in $ReadL$ is changed to set \newline $w_j=Modified\_codeword$.  The updated tuple is examined to check if $\bar{X}$ can be decoded, that is, if the non-null entries of vector $(w_1,w_2,\ldots,w_N)$ form a recovery set for $\overline{X}.$ If they form a recovery set, then object $\overline{X}$ is decoded, and the corresponding entry is removed from $ReadL$. If $clientid$ is an external client, the decoded value is sent to the client. If $clientid$ is $\texttt{localhost}$, then the value is added to list $L[\overline{X}]$.

If the node \underline{receives a $\texttt{val\_resp}$ message} with object $X,$ value $v$, client identifier $clientid$, operation $opid$ and tags $requestedtags$, then the node checks if there is a corresponding pending read in $ReadL$. If such a read exists, then the node responds to the client with the value $v$, or if the client is $\texttt{localhost},$ it adds $(requestedtags[X],v)$ to $L[X].$ Then it removes the corresponding entry from $ReadL.$

\underline{On receiving a $\langle\texttt{del}, X, t\rangle$ message} from node $j$, node $s$ adds $(t,j)$ to $Del[X].$

\underline{On receiving a $\langle\texttt{app}, X, v, t\rangle$ message} from node $j$, tuple  $(j,X,v,t)$ is added to $\texttt{Inqueue}.$

%We next describe the node's internal actions.
{\large {\textbf{Internal Actions of server $s$}}:}
See Algorithm \ref{alg:internal_actions}.

\underline{The \texttt{Apply\_Inqueue} action} - similar to the replication based protocol of \cite{ahamad1995causal} - applies the enqueued tuples in $Inqueue$  on checking a particular predicate to ensure causality. Specifically, if $(j,X,v,t)$ is at the head of $\texttt{Inqueue}$, the predicate checks if the local vector clock $vc_s$ is not behind the $t.ts$ in all components, except the $j$th component; for the $j$th component, the predicate requires that $t.ts = vc_{s}[j] + 1.$ If the predicate is true, then $(t,v)$ is added to the list $L[X]$ and the packet is removed from $Inqueue$. Unlike the replication-based protocol \cite{ahamad1995causal}, the apply action  also clears pending reads to object $X$ in $ReadL$ by responding to the client (line \ref{line:readreturn_apply} in Algorithm \ref{alg:internal_actions}).

\underline{The $\texttt{Encoding}$ internal action} aims to update $M.val$ by encoding later versions of objects (i.e., object versions with higher tags) that are present in the lists $L[X].$ First consider objects contained in server $s$, that is $X \in \mathcal{X}_s$, which are to be encoded into $M.val$. If there exists an object $X \in \mathcal{X}_{s}$ such that $L[X].Highesttagged.tag > M.tagvec[X],$ then the $\texttt{Encoding}$ action checks if there exists a tuple $(M.tagvec[X],val)$ in $L[X]$ as well (line \ref{line:conditionforupdate} in Algorithm \ref{alg:internal_actions}). Depending on the truth-value of this check, the further state changes are:
\begin{itemize}
    \item If such a tuple exists , then the node re-encodes $M$ to use the value corresponding to $L[X].Highesttagged.tag$ and sets $M.tagvec[X] :=  L[X].Highesttagged.tag$ (Line \ref{line:updatetag1} in Algorithm \ref{alg:internal_actions}).  Further, a tuple $(L[X].Highesttagged.tag,s)$ is added to the delete list $DelL$ and a $\langle \texttt{del},X,L[X].Highesttagged.tag\rangle $ message is sent to all nodes containing object $X$.
\item If such a tuple does not exist, the node checks if a tuple $(\texttt{localhost}, \overline{opid} , X, M.tags, \overline{w})$ exists in $ReadL.$ If no such tuple exists in $ReadL,$ it is created by generating unique operation identifier $opid$, and vector $\overline{w}: \Pi_s(\overline{w}) = M.val$,  $\Pi_i(\overline{w}) = \bot, i \neq s$. The generated tuple is added to $ReadL$ (line \ref{line:readlentryencoding} in Algorithm \ref{alg:internal_actions})
, and a $\texttt{val\_inq}$ message is sent to all nodes.
\end{itemize}
The $\texttt{Encoding}$ action also involves ``bookkeeping''' actions pertaining to objects $X \notin \mathcal{X}_{s},$ with respect to keeping the tags $M.tagvec[X]$ updated. For such an object $X \notin \mathcal{X}_{s}$, if there is an entry in $L[X]$ with $L[X].Highesttagged.tag > M.tagvec[X],$ the action sets $U$ to be the set of all tags $t$ such that the delete list $Del[X]$ has at least one entry $(\hat{t},i)$ with $\hat{t} \geq t$ for every server $i$ that contains object $X$. Put differently, for every element $t$ in $U$, every server that contains object $X$ as per the error correcting code $\mathfrak{C}$ has sent at least one $\texttt{del}$ message with a tag at least as large as $t$. The set $\overline{U}$ is the set of all tags larger than $M.tagvec[X]$ where there is at least one element with that tag in the list $L[X].$
If the set $U \cap \overline{U}$ is non-empty then $M.tagvec[X]$ is updated to $max(U\cap \overline{U}).$ A entry $(max(\overline{U}\cap U),s)$ is made in the local delete list and a $\texttt{del}$ message is sent to every other node with tag $max(U\cap \overline{U}).$ These steps enable other nodes to delete the object value corresponding to the updated $M.tagvec[X]$ from their history lists.

\underline{The $\texttt{Garbage\_collection}$ internal action} aims to delete objects from lists. Consider an object $X \in \mathcal{X}$. Let $S$ denote the set of all tags $\overline{t}$ such that server $s$ has element $(\hat{t},j)$ in $DelL[X]$ with $\hat{t} \geq \overline{t}$ for every $j \in \mathcal{N}.$ Note that an element $\overline{t}$ is in $S$ only if it has received a  $\texttt{del}$ message for object $X$ with a tag at least as large as $t$ from every (other) node $j \in \mathcal{N}$. 
Let $tmax[X]$ denote the largest tag in $S$. An element $(t,v)$ in $L[X]$ is considered for deletion by the~$\texttt{Garbage\_collection}$ action if $t \leq tmax[X].$ In the proofs, we show the following invariant for every server $s$: $tmax[X]\leq M.tagvec[X].$  The conditions for deleting the element differ based on whether $X \in \mathcal{X}_{s}$ or whether $X \notin \mathcal{X}_{s}.$  For an object $X \notin \mathcal{X}_{s}$, the element $(t,v)$ is deleted from a list $L[X]$ if:
\begin{enumerate}[(a)]
    \item  $t \leq tmax[X]$ and $t < M.tagvec[X],$ and there is no tuple  $(clientid, opid, \overline{X}, tvec, \overline{w})$ in $ReadL$ with $tvec[X]=t$ \item or,  $t = tmax[X]= M.tagvec[X],$ and node $s$ has an entry $(M.tagvec[X],i)$ in $\texttt{DelL}[X]$ for every $i \in \mathcal{N},$ and $L[X]$ does not contain an element with tag strictly larger than $M.tagvec[X].$ 
\end{enumerate} 
Note that the condition in $(b)$ helps ensure that the lists are eventually empty in executions with finite number of write operations. In  Algorithm \ref{alg:internal_actions}, the above conditions are implied by the statements in line \ref{line:GCcondition1} \ref{line:pendingreads}, and \ref{line:s1set}. If $X \in \mathcal{X}_{s}$, then the conditions of deletion of an element $(t,v)$ are similar to $(a),(b)$ above, with one subtle difference: in $(a),$ we require strict inequality: $t < tmax[X].$  For objects $X \in \mathcal{X}_{s}$, the $\texttt{Garbage\_Collection}$ action performs some additional steps. Let $U$ to be the set of all tags $t$ such that there is an entry $(\hat{t},i)$ in $DelL[X]$ with tag $\hat{t} \geq t$ for every server $i$ that contains object $X$. Node $s$ sends $\texttt{del}$ messages to all other nodes with tag $max(U).$

\section{Correctness and Performance}
\label{chap:proofs}
We state the correctness properties of $\CausalEC$ in Section \ref{sec:safety}. Sec. \ref{sec:communication} has a discussion on performance. Proofs are in  Appendices \ref{app:prelim}, \ref{app:safety}, \ref{app:liveness}, \ref{app:eventual_proof}, \ref{app:storage}.

\subsection{Correctness Properties of \CausalEC~}\label{sec:safety}
We start with safety in Theorem \ref{thm:causallyconsistent}. \CausalEC~ satisfies several liveness properties. Properties related to operation termination are described in Theorems \ref{lem:write_terminates} and \ref{lem:read_terminates_f=0}. Eventual Consistency and storage cost properties are stated in Theorems \ref{eventual consistent} and \ref{thm:storagecost}.

\begin{theorem}\label{thm:causallyconsistent}
$CausalEC$ satisfies causal consistency.
\end{theorem}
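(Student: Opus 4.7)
The plan is to exhibit an explicit partial order $\leadsto$ built from tags, and then verify the two clauses of Definition~\ref{def:original_causal} against it. Declare $w_1 \leadsto w_2$ between write operations whenever $tag(w_1).ts < tag(w_2).ts$ in the vector-clock partial order, and extend $\leadsto$ to reads as follows: for each terminating read $r$ on object $X$ returning value $v$, locate the unique write $w(r)$ on $X$ whose tag matches the tag used internally to serve $r$ (from $L[X].HighestTagged.tag$, from $M.tagvec[X]$, or from $requestedtags[X]$ depending on the code path taken); then insert $r$ immediately after $w(r)$ and after every operation that precedes $r$ in the issuing client's history. Lemma~\ref{lem:uniquetag} ensures that once the tag is pinned down, $w(r)$ is unambiguous.

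The proof then proceeds in three stages. Stage one establishes tag monotonicity along process order: whenever $\pi_1 \rightarrow \pi_2$ at server $s$, the component $vc_s[s]$ strictly grows between the points at which the responses for $\pi_1$ and $\pi_2$ are sent, because a write unconditionally performs $vc[s]\leftarrow vc[s]+1$ on line~\ref{line:vcincrement} and a read preserves the vector clock. Combined with Lemma~\ref{lem:vconlyincreases}, this gives $tag(\pi_1).ts < tag(\pi_2).ts$ when both are writes and, for reads, that the timestamp of $\pi_2$ dominates that of the write~$w(\pi_1)$. Stage two verifies that every returned read value really corresponds to an actual write on $X$ with the claimed tag. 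The three return paths must be handled separately: (i) the local return on line~\ref{line:readimmediatereturn} is immediate since $(tag,v)\in L[X]$ was inserted by a local write (line~\ref{line:writeaddtolist}) or a drained Apply\_InQueue (line~\ref{line:list_applyaction}); (ii) the single-server decode on line~\ref{line:send-to-read1} invokes $\Psi_{\{s\}}^{(i)}$ on the current $M.val$, so one must invoke an invariant saying that $M_s.val$ always equals $\Phi_s$ applied to the values named by $M_s.tagvec$; and (iii) the val\_resp\_encoded path requires the central invariant that when $Error1[X]=Error2[X]=0$ for all $X$ on line~\ref{line:errorcondition}, the quantity $Modified\_codeword$ equals $\Phi_j(\mathbf{x})$ for the tuple $\mathbf{x}$ of values indexed by $requestedtags$. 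This last fact is proved by induction along the for-loop of lines~\ref{line:valrespencoded_objectforloop}--\ref{line:seterror2}, using Definition~\ref{def:reencoding} twice per iteration to cancel the stale contribution and add the wanted contribution. Then $\Psi_T^{(\ell)}$ decodes exactly the value $v$ of the unique write with tag $requestedtags[X]$, identifying $w(r)$. Stage three verifies clauses~(1) and~(2) of Definition~\ref{def:original_causal}: clause~(1) is immediate from stage one; clauses~2(a)--(c) follow from the identification of $w(r)$; clause~2(d) (``no intervening write on $X$'') is where the bulk of the work sits.

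The hard step will be clause~2(d) along the val\_resp\_encoded path. The danger is that between the moment server~$s$ snapshots $M.tagvec$ into $requestedtags$ (line~\ref{line:valinq_send}) and the moment the read returns (line~\ref{line:read-send-val_resp_encoded}), some write $w_3$ on $X$ with $w(r) \leadsto w_3 \leadsto r$ might have been causally observed by $s$, undermining the claim that $w(r)$ is the latest preceding write on $X$ in $\leadsto$. Ruling this out requires two invariants. The first is monotonicity of $M_s.tagvec[X]$: it only increases, by inspection of lines~\ref{line:updatetag1} and~\ref{line:updatetag2}. The second is a causal-closure invariant: whenever $M_s.tagvec[X]$ equals $tag(w)$ for a write $w$, every write $w'$ on $X$ with $tag(w') < tag(w)$ that is causally visible at $s$ has already been reflected in $L_s[X]$ and superseded in $M_s.tagvec[X]$. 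This invariant rests on the Apply\_InQueue predicate of line~\ref{line:applycondition}, which processes updates in causal (vector-clock) order, together with the fact that $M.tagvec[X]$ advances on line~\ref{line:updatetag1} only when the corresponding tag already lies in $L[X]$. Once these invariants are in place, clause~2(d) follows by contradiction: an intervening $w_3$ with $tag(w_3)$ between $tag(w(r))$ and the read's vector clock would force $tag(w_3) \leq M_s.tagvec[X]$ at the moment of return, contradicting the choice $requestedtags[X] = tag(w(r))$. Completing this argument across all three return paths, and in particular establishing the causal-closure invariant under the re-encoding machinery of Algorithm~\ref{alg:internal_actions}, constitutes the main technical effort of the proof.
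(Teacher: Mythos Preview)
Your plan takes a harder route than the paper's and, in doing so, manufactures the difficulty you then identify as ``the bulk of the work.'' The paper's $\leadsto$ (Definition~\ref{def:ourcausalordering}) is defined purely from operation \emph{timestamps}: roughly $\pi_1 \leadsto \pi_2$ when $ts(\pi_1) < ts(\pi_2)$, with tie-breaking rules for equal timestamps. It makes no reference whatsoever to which write's tag was mechanically used to serve a read. The only substantive lemma needed is Lemma~\ref{lem:causal2}: every terminating read $\pi$ on $X$ returns the value of \emph{some} write $\phi$ on $X$ with $ts(\phi)\le ts(\pi)$. For clause~2(d) the paper does \emph{not} take the witness write to be the one whose tag sits in $requestedtags[X]$; instead it forms the set $S$ of all writes on $X$ with the returned value $v$ and timestamp $\le ts(\pi)$, and picks the witness to be the \emph{maximal-timestamp} element of $S$. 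Any intermediate $\pi_3$ then has strictly larger timestamp, hence $\pi_3\notin S$, hence $\pi_3$ is not a write on $X$ with value $v$---precisely what Definition~\ref{def:original_causal}(2d) requires. No causal-closure invariant, no argument about $M.tagvec[X]$ dominating all causally-visible writes, is needed.

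By contrast, you fix $w(r)$ to be the specific write named by the serving tag and then try to rule out any later causally-visible write on $X$. That is a strictly stronger statement than the theorem demands, and it is what forces you into the causal-closure invariant. The invariant is plausible (the pending-read clearing on lines~\ref{line:sendtoread1} and~\ref{line:readreturn_apply} exists to maintain something like it), and your sketch of why it holds is reasonable, but proving it is the core of a different, stronger result. What you gain is a sharper characterization of what the read actually returns; what you lose is the one-line dispatch of 2(d). A smaller issue: your definition of $\leadsto$ on reads---``insert $r$ immediately after $w(r)$ and after every operation that precedes $r$ in the issuing client's history''---is not yet a well-defined binary relation, and you would still owe a proof that the result is transitive and antisymmetric; the paper sidesteps this entirely by making $\leadsto$ a function of timestamps alone, so that the partial-order check (Lemma~\ref{lem:causal1}) is routine.
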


%\subsection{Liveness: Termination of operations}
%\label{sec:liveness}
\begin{theorem}\label{lem:write_terminates}
Let $\beta$ be a fair execution of \CausalEC. Suppose $s$ is a non-halting node. Every write operation $\pi$ issued by a non-halting client  $c \in \mathcal{C}_s$ eventually terminates.
\end{theorem}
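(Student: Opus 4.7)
\textbf{Proof Proposal for Theorem \ref{lem:write_terminates}.}

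The plan is to trace the lifecycle of a single write operation $\pi$ invoked by the non-halting client $c \in \mathcal{C}_{s}$ and argue that each step completes in finitely many actions of a fair execution, appealing only to the non-halting assumptions on $c$ and $s$, the reliability and FIFO property of the channels, and the purely local nature of the write-handler in Algorithm \ref{alg:inputactions_clients}. The key observation is that the handler for $\langle \texttt{write}, opid, X, v \rangle$ at server $s$ neither waits on any message from another server nor blocks on any precondition involving remote state: it just increments $vc[s]$, appends to $L[X]$, emits $\langle\texttt{write-return-ack}, opid\rangle$ back to $c$ in line \ref{line:writeresponse}, and only afterward spawns \texttt{app} messages (whose delivery is not required for the response).

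First, I would invoke well-formedness to note that at the moment $\pi$ is invoked, $c$ has no pending operation, so by the client protocol $c$ issues exactly one $\langle \texttt{write}, opid, X, v \rangle$ message to server $s$ and then blocks awaiting a matching $\texttt{write-return-ack}$. Since $c$ is non-halting and fair, and the channel from $c$ to $s$ is reliable, this message is eventually delivered to $s$. Next, since $s$ is non-halting and by fairness the input action $\receive\langle \texttt{write}, opid, X, v\rangle$ is enabled at $s$, it is eventually taken. The corresponding transition in Algorithm \ref{alg:inputactions_clients} executes a bounded, straight-line sequence of local steps with no preconditions on external state, and so finishes, at which point $s$ has enqueued the $\langle \texttt{write-return-ack}, opid \rangle$ on the channel $c_{s \rightarrow c}$. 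By reliability of this channel and the non-halting, fair behavior of $c$, the $\texttt{write-return-ack}$ is eventually delivered to and processed by $c$, at which point $\pi$ terminates.

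The only mild subtlety I foresee is verifying that no action in the $\texttt{write}$-handler can silently fail or block; in particular, the loop that clears pending reads to $X$ from $ReadL$ (lines \ref{line:sendtoread1}--\ref{line:read-remove1}) operates over a finite set and only performs sends on local channels, so it terminates. The \texttt{send} of the \texttt{app} messages in line \ref{line:writesendapply} is also non-blocking because channels are modeled as message queues and send is an output action taken unilaterally by $s$. Hence the entire handler completes in finitely many steps, which is the only nontrivial thing to check. The remainder of the argument is a direct appeal to channel reliability plus fairness, so I expect no real obstacle; the proof can therefore be kept short.
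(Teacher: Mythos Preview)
Your proposal is correct and follows essentially the same approach as the paper's proof: trace the write message from client to server, note that the handler in Algorithm~\ref{alg:inputactions_clients} sends the \texttt{write-return-ack} at line~\ref{line:writeresponse} as a purely local step, and appeal to fairness and channel reliability for delivery back to the client. The paper's version is considerably terser (three sentences) and omits the discussion of the $ReadL$-clearing loop and the non-blocking nature of the \texttt{app} sends, but your added detail is harmless and the core argument is identical.
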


\begin{theorem}[Termination of Reads]\label{lem:read_terminates_f=0}
Let $\beta$ be a fair execution of \CausalEC and let $s$ be a server node. Then every read operation $\pi$ on object $X$ issued by a non-halting client $c \in \mathcal{C}_{s}$ terminates so long as there is at least one recovery set $S \in \mathcal{R}_{X}$ such the nodes in $S \cup \{s\}$ are non-halting nodes.
\end{theorem}

\begin{theorem}[Eventual Consistency] \label{eventual consistent}

Every execution of \CausalEC~is eventually consistent
\end{theorem}

\begin{theorem}[Storage Cost]
Consider a fair execution $\beta$ of \CausalEC~ where every server is non-halting. Suppose that after a point $P$ in $\beta$, there are no write operations invoked in the execution. Then, 
\begin{enumerate}[(a)]
\item eventually, for every object $X$, for every server $s$, list $L_s[X]$ is the empty set $\{\}$.
\item eventually, $\texttt{Inqueue}_s = \{\}$ for every server $s$, 
\item if there is a point $Q$ such that there are no read operations invoked after $Q$ in $\beta$, then eventually, $ReadL_{s} = \{\}$ for every server $s$.
\end{enumerate}
\label{thm:storagecost}
\end{theorem}

In plain words, Theorem \ref{thm:storagecost} indicates that all the transient costs associated with lists and the inqueue vanish. In stable state, the only state variable that depends on the object value is the codeword symbol $M$, which implies that the storage cost depends only on the erasure code. 
All proofs use some preliminary definitions and lemmas in Appendix \ref{app:prelim}.. 
The proof of Theorem \ref{thm:causallyconsistent} follows standard approaches, and is placed in Appendix \ref{app:safety}. Theorem \ref{lem:write_terminates} is  straightforward as every server responds immediately on receiving a request (see formal proof in Appendix \ref{app:liveness}). 
 The proof of Theorem \ref{lem:read_terminates_f=0} - placed in Appendix \ref{app:liveness} - relies on lemmas the show that $Error1_s[X]=Error2_s[X]=0$ at all points of an execution. These lemmas imply that if a node receives a $\texttt{val\_resp\_encoded}$ message in response to its $\texttt{val\_inq}$ message, then that the codeword symbol embedded in that $\texttt{val\_resp\_encoded}$ message can be re-encoded to enable a match to the local timestamp. Theorems \ref{eventual consistent} and \ref{thm:storagecost} are proved in Appendices \ref{app:eventual_proof} and \ref{app:storage}.

\subsection{Communication and Transient Storage Costs}
\label{sec:communication}

We analyze the costs of a low-cost variant of \CausalEC~ as follows: (i) inquiry and response messages corresponding to reads, and $\texttt{Del}$ messages for garbage collection contain a (scalar) Lamport timestamp \cite{Lamport_replicatedstatemachine}, rather than a vector timestamp, (ii) $\texttt{Del}$ messages are sent to a leader, who then forwards the messages to all the nodes, and (iii) a read operation sends $\texttt{val\_inquiry}$ messages to only one recovery set of $k$ nodes, and sends these messages to remaining nodes only after a timeout; in a failure-free execution with an appropriately set timeout, these messages will be sent to (and received from) $k$ nodes. 
More details about (i),(ii) and a formal algorithm pseudocode of the variant are presented in Appendix \ref{app:variant}.

%\CausalEC~ has several parameters in its design and the communication costs are dependent on choices for these parameters.  Here, 
Consider a data store over a connected network of $N$ server nodes to store $K$ equal-size objects of $B$ bits each, with the storage capacity of $\alpha B$ bits per node. Assume that $\alpha \geq K/N$ to ensure that the total storage capacity suffices to store all objects. Suppose that objects are grouped into $K/k$ groups of $k$ objects each and an $(N \alpha ,k)$ code over field size of $B$ bits is used for each group. Note from the example in Sec. \ref{subsec:motivation} that the benefits of \CausalEC~ are maximized when the code and placement are carefully fine tuned to the inter-node latencies. However, for simplicity and generality of analysis, we assume that a systematic Reed-Solomon code is used, where any $k$ codeword symbols suffices to decode all $k$ object values of a group. Our analysis assumes a maximum of $L$ updates per server. 
%$\rho_{w},\rho_{r},\rho_{gc},\rho_{enc}$ denote the per-server rates of writes, reads, \texttt{Garbage\_Collection} and \texttt{Encoding} actions. We assume that these are independent Poisson processes.  

\underline{Write and Read Communication Costs:} A read operation sends $\texttt{val\_inq}$ to $k$ servers. A round trip to each server incurs a cost of $O(B)$ bits of data per message, and $k \log L$ bits  of metadata due to one Lamport timestamp per object in that group. %The metadata includes requesting of timestamp information for the $k$ objects of that group, and each (Lamport) timestamp is $O(\log L)$ bits, so this costs $O(k \log L)$ bits for every $\texttt{val\_inq}$ message. Since, in our optimized variant of \CausalEC~, we assume that the read operation just contacts one recovery set of $k$ nodes, 
The total read communication cost is $O(k) B +  O(k^2\log L)$. A write operation triggers propagation of values via the $app$ message, an internal read triggered by an \texttt{Encoding} action, and sending of delete messages as part of \texttt{Garbage\_Collection} action.The $app$ message sent to all $N$ nodes, incurs a communication cost of $O(N) B$, and the $\texttt{Encoding}$ action incurs $\left(O(k) B +  O(k^2 \log L)\right) = \left(O(N) B +  O(k^2 \log L)\right)$. With our optimized protocol, $O(N)$ delete messages with Lamport timestamps cost $O(N \log L).$ Thus total write communication cost is:
$ \centering O(N) B + O(k^2 \log L) + O(N\log L)\label{eq:cost}$

%If $\rho_{w} \ll T,$ where $T$ is the maximum round-trip time of the network, the read communication cost can be reduced to $\frac{\rho_{gc}+\rho_{enc}}{\rho_w+\rho_{gc}+\rho_{enc}} \left(O(k) B +  O(k^2\log L)\right).$ since (at least) a fraction of $\frac{\rho_{w}}{\rho_{w}+\rho_{gc}+\rho_{enc}}$ reads can be served locally; this however incurs a transient storage overhead. 

%The \texttt{Encoding} action effectively performs an internal read, however, if an encoding action is performed after a sequence of writes to an object, only one internal read is required for the whole sequence. 
%The cost of  an internal read per write operation is %$\frac{\rho_{enc}}{\rho_{w}+\rho_{enc}} \left(O(k) B +  O(k^2 \log L)\right)$.

\underline{Transient Storage Overheads:}
While the stable state storage cost (neglecting metadata) is equal to $B/k,$ \CausalEC~incurs a transient overhead due to history lists; we analyze them here. Let $\rho_{w}$ denote the total write arrival rate for an object $X$, and let each server perform a $\texttt{Garbage\_ Collection}$ once every $T_{gc}$ seconds. Under some mild assumptions (See Appendix  \ref{app:transient}), the expected storage overhead of the history list is at most $\frac{3B}{\rho_w T_{gc}};$ the factor of $3$ appears because $2$ \texttt{Garbage\_Collection} actions may be required to remove a version, and a version may have to wait up to time up to $T_{gc}$ before a \texttt{Garbage\_Collection} is triggered. For workloads where very frequent writes to the same object (i.e., high $\rho_{w}$) occur only to a small set of ``hot'' objects in the system, our analysis suggests that \CausalEC~ would store exactly what is prescribed by the code for most objects, for most of the time. As an example, we conduct a coarse analysis for the default parameters of YCSB workload \cite{cooper2010benchmarking}. The workload has 120 million objects, with Zipfian parameter $0.99$. Assuming throughput $200,000$ requests per second and read/write ratio 50\%, we observe that $\rho_w < 1/1000$ per second for more than $95\%$ of the objects. If erasure coding with dimension $k$ is used for the $95\%$ of the objects with the lowest arrival rates (and replication\footnote{In practice, data stores detect arrival rates and adapt various parameters based on its characteristics; see, for example \cite{abebe2020morphosys,zare2021legostore}. So detecting a subset of objects with very high arrival rates and using replication for them is reasonable.} for the remaining $5\%$ of the objects),  and a lazy garbage collection (GC) performed every $T_{gc}=2 \textrm{minutes}$ (which is much greater than round trip times, see Fig. \ref{fig:AWS}), the average storage cost per erasure coded object is $(\frac{1}{k}+0.05)B$, a mere $5\%$ overhead due to history lists (See Appendix \ref{app:transient} for details).
%See Sec. \ref{sec:conclusion}for an informal discussion on transient storage cost overheads.
%The communication overheads can be made small by reducing $\rho_{gc},\rho_{enc}$ - however, this comes at the cost of increase in transient storage cost due to delayed garbage collection.

%\begin{remark} A remark on transient storage costs.
%If there are $\rho_w$ writes per second per object based on a Poisson process, our erasure code combines k objects, a garbage-collection action is done every TTT seconds, then if $\rho_w \ll 2k/T\rho_w \ll 2k/T\rho_w​\2k/T$ and $T \gg$ round-trip time among servers, then for that group of k objects, the history lists are empty almost always! As a back of the envelope calculation, consider YCSB workload (Zipfian, parameter 0.99) with 120 million objects, throughput 200,000 requests per second, read/write ratio 50. Assume erasure code of $k=5$, with $n=10$ data centers spanning the globe. Then the above analysis reveals that for performing garbage collection (GC) every 2 minutes suffices for 97.5\% of the objects (to keep storage cost equal to underlying EC code); performing GC every 10s suffices for another 1.5\% of the objects leaving only 1\% of the objects experiencing transient overheads. Replication can be used for these objects.
%\end{remark}

%The comparative costs of replication can be obtained via setting $k=1$ in the above expressions.  
% If $\alpha \geq K/k,$ the worst-case latency is....

%\vspace{-5pt} 
\section{Related Works}
\label{sec:related}

%Several related works have been discussed in Section \ref{sec:Introduction}, and they are not discussed here.

\textbf{Replication:} There is a long line of work exploring causally consistent data stores starting with the seminal work of \cite{ahamad1995causal}. Much of this work focuses on techniques for reducing the overheads of tracking causal dependencies (e.g,  vector clocks) for  both full replication \cite{lloyd2011don, du2013orbe, du2014gentlerain}, and for partial replication \cite{shen2015causal, xiang2019partially, helary2006efficiency,bravo2017saturn, mehdi2017can, akkoorath2016cure}. 
These works are complementary as they do not use erasure coding.

\noindent\textbf{Intra-object Erasure Coding:}
Erasure coding based algorithms for read/write memory emulation with atomic (linearizable) consistency are developed for both crash faults and byzantine faults in \cite{cadambe2017coded, Konwar_ipdps2016, CT, Hendricks, dobre_powerstore, Dutta}. 
There are systems \cite{chen2017giza, mu2014paxos, wang2020craft, du2016pando} that adapt consensus algorithms to utilize erasure coding and provide more involved data access primitives to the clients. Reference \cite{Anderson_etal} develops an eventually consistent key-value store that uses erasure coding. However, all these works including \cite{Anderson_etal} partition a data object and encodes the partitions of the data object - they do not use cross-object erasure coding. As a consequence, read operations do not return locally; \emph{every read operation} necessarily contact remote nodes. Like \CausalEC~, several previous algorithms store a history of values of older versions. Impossibility results in \cite{Cadambe_Wang_Lynch2016, spiegelman2016space} show that linearizable non-blocking erasure coding based algorithms necessarily store history. Understanding transient storage cost overheads for eventual/causally consistent storage is an area of future work.

\noindent \textbf{Cross-object Erasure Coding:} References \cite{eikel2014robust,AJX,lyu2018erasure} utilize cross-object erasure coding in distributed algorithms for data storage, but in settings that are significantly simpler than ours. 
Reference \cite{eikel2014robust} studies a synchronous system and thus does not include the complexities of disparate versions being combined at different servers at a given execution point. %Additionally, their write operation effectively decodes all objects that are encoded together before re-encoding, and can incur significantly higher communication cost as compared with \CausalEC.  
Reference \cite{AJX} uses cross-object erasure coding in a much simpler multi-writer single-reader asynchronous system. The reference emulates a read/write object with regular semantics, unlike our system which allows for concurrent distributed operations. In a master's thesis, reference \cite{lyu2018erasure}, like our paper, develops a causally consistent data storage algorithm based on cross-object erasure codes. However, the algorithm of \cite{lyu2018erasure} only applies for the special "toy" scenario where (i) only an $(N,K=2)$ systematic erasure code is supported, and (ii) clients accessing a specific server is  restricted to access certain specific objects. The technical approach of \cite{lyu2018erasure} - which does not use re-encoding functions like us - does not appear to be easily generalizable to arbitrary values of $K$. The most critical differences between \CausalEC~ and prior works \cite{AJX,lyu2018erasure} come from comparing their liveness properties. In essence, both  \cite{AJX,lyu2018erasure} require a systematic code (where there are $K$ servers that store an uncoded copy of the data), and these servers are required to be always available to ensure termination of read operations. That is, their liveness properties do not inherit the fault-tolerance properties of the underlying erasure code unlike \CausalEC. %The algorithm of \cite{AJX} does not allow parity nodes to serve read operations, and these nodes are mainly used as a backup for failure recovery. Reference \cite{lyu2018erasure} does allow parity nodes to handle and assist in responding to read operations, but it has a blocking nature where read requests may wait arbitrarily long before responding due to concurrent writes. 

\section{Conclusion}\label{sec:conclusion}
    We show that erasure coding is compatible with causal consistency through the development of \CausalEC. The development of a systematic approach to  erasure code design that optimizes storage-latency trade-offs  for general network topologies is a future research direction that complements our paper.  While we provide a coarse performance analysis in Section \ref{sec:communication}, a detailed systems understanding of the performance of \CausalEC~ and algorithmic improvements to reduce overheads is also an important future research direction. 
%In a client server model where a given client sends requests to a particular server, \CausalEC~ ensures that write operations can be returned locally by the servers. For servers that store objects in an uncoded manner, read operations also return locally. In general, a read operation to an object can be returned by a server on contacting a small subset of other servers so long as the underlying erasure code allows for the object to be decoded from that subset.  A novel aspect of our protocol is that it uses \emph{cross-object} erasure coding, where values corresponding to multiple objects are jointly encoded.
 %In Section \ref{subsec:motivation}, we developed a toy example over a coarse model of latency in a distributed server setting that demonstrates that cross-object erasure coding \emph{can} have lower latencies as compared with partial replication. 

 \section*{Acknowledgement}
{This work is supported partially by NSF Grants CCF:1553248, CNS: 2211045 and by a Google Faculty Award. We thank Dr. Ramy E. Ali and Prof. Bhuvan Urgaonkar for  discussions that led to the development of a preliminary algorithm called \CausalEC\_\texttt{exp}, which is documented in Shihang Lyu's masters' thesis \cite{lyu2018erasure}; see Sec. \ref{sec:related} for a discussion. We thank Raj Pandey for help with the latency of partial replication in Fig. \ref{table:cost_comparison}. We also thank Prof. Bhuvan Urgaonkar for continued discussions throughout the development of the manuscript.}

  \newpage

  \bibliography{cadambe-references,Biblio-Database}

\appendix
\newpage 
\section{Comparison with Partial Replication Protocols ~and Table in Fig. \ref{fig:AWS}}
\label{app:write_cost}
Recall the example of Sec. \ref{subsec:motivation} where four objects $\mathcal{X}_1, \mathcal{X}_2,\mathcal{X}_3,\mathcal{X}_4,$ are placed along $6$ data centers as follows: $\mathcal{X}_{1}$ is placed in Seoul and Ireland, $\mathcal{X}_{2}$ is placed Mumbai and London, and  $\mathcal{X}_{3},\mathcal{X}_{4}$  are respectively placed in North California and Oregon. For the table Fig. \ref{fig:AWS}, it is shown that the storage scheme has a worst-case latency of $228ms$ and an average latency of $88ms$. It is worth examining a protocol that achieves these numbers. Partial replication has been studied in \cite{helary2006efficiency, xiang2019partially} among others, and in these papers, the common assumption is that read operations are issued only to objects present locally at the data store. Such an assumption restricts a client's access to a subset of the objects in the data store to ensure causality, which is different from our model, where we a client is allowed to access all objects. The algorithm of \cite{xiang2019partially} (see \cite{xiang_vaidya_partial_arxiv}, Appendix F) allows clients to access all objects by enabling clients to access contact different servers for different objects. This flexibility comes due to a careful vector clock construction they develop to ensure causality. However, their operations are not non-blocking. To ensure causality, a read requests may be buffered at nodes that may have to wait for respones from \emph{specific servers} before completing the read. Thus, the latencies reported in Table \ref{fig:AWS} cannot be achieved by \cite{xiang2019partially}, and crash failures of specific servers can lead to reads being blocked for ever (even if a copy of that object exists at a different server). In other words, if we view partial replication a special case of erasure coding, that protocol does satisfy requirement (II) in the introduction.

In fact, to overcome this limitation, writes will have to propagate object values to all the servers - including servers that do not necessarily store that object, similar to \CausalEC. Our write communication cost in Table \ref{table:cost_comparison} is based on such a protocol. The key difference is that in \CausalEC~, writes incur a second overhead of up to $kB$ due to internal reads triggered as a part of re-encoding actions that update the stored codeword symbols with the new versions. Section \ref{sec:communication} has a discussion on communication costs of \CausalEC.

\section{Preliminary Definitions and Lemmas for Correctness Proofs}
\label{app:prelim} 
This appendix contains notations, definitions and lemmas that are  utilized in proofs of correctness (statements are in Sec. \ref{sec:safety}).  In the sequel, our statements and proofs use the following notation. For state variable $S,$ its value at node $s$ at point $P$ in an execution $\beta$ is denoted by $S_s^{P}.$ For example, for an object $X$, the value of $M_s.tagvec[X]$ at point $P$ at node $s$ is denoted as $M_s.tagvec[X]^{P},$ or equivalently $M_s^{P}.tagvec[X]$.

The safety proofs use a partial order $\leadsto$ and total order $\prec$ defined below.

\begin{definition}[Timestamp of an operation, tag of a write operation] \label{def:ts_tag}
Let $\pi$ be an operation that terminates in an execution $\beta$ of \CausalEC~ with operation identifier $opid$ issued by a client $c \in \mathcal{C}_{s}$ with identifier $clientid$. The \emph{timestamp} of $\pi$, denoted as $ts(\pi)$, is defined as the $vc_{s}^{P}$, that is, node $s$'s vector clock at the point $P$, where:
\begin{itemize}
\item If $\pi$ is a write operation, $P$  is the first point where server $s$ sends a \newline $\langle opid,  \texttt{write-return-ack} \rangle$ message to client $c,$
\item If $\pi$ is a read operation, $P$  is the first point where server $s$ sends a $\langle \texttt{read-return}, opid, v \rangle$ message to client $c,$ for some $v \in \mathcal{V}$. 
\end{itemize}
The tag of a write operation $w$ that is invoked at client with identifier $id$ is denoted as  $tag(w)$, and is defined to be $(ts(w),id)$.
\end{definition}

%Note that timestamp is defined for all operations, whereas tag is defined only for write operations.

% partial order for any execution $\beta$ that satisfies the properties of Definition \ref{def:original_causal}. 
\begin{definition}For distinct operations $\pi_1, \pi_2$ in an execution $\beta$ of \CausalEC, we define 
$ \pi_1 \leadsto \pi_2$ if:
    {\bf (a)} $\pi_1, \pi_2$ both have timestamps in $\beta$ and $ts(\pi_1) < ts(\pi_2)$, or {\bf (b)}$\pi_1, \pi_2$ both have timestamps with $ts(\pi_1)=ts(\pi_2),$ and $\pi_1$ is a write operation, or
  {\bf (c)} $\pi_1, \pi_2$ are read operations that have timestamps issued by the same client  with $ts(\pi_1)=ts(\pi_2),$ and $\pi_1 \rightarrow \pi_2,$ or {\bf (d)} $\pi_1,\pi_2$ are write operations and
  $tag(\pi_1)< tag(\pi_2)$, or
    {\bf (e)} $\pi_1$ has a timestamp, but $\pi_2$ does not have a timestamp in $\beta.$ 
    
    For write operations $\pi_1,\pi_2$ we say $\pi_1 \prec \pi_2$ if $tag(\pi_1) < tag(\pi_2)$. 
\label{def:ourcausalordering}
\end{definition}

\begin{lemma}\label{lem:vconlyincreases}
Let $\beta$ be an execution of $CausalEC$ and let $P$ and $Q$ be two points of $\beta$ such that $P$ comes after $Q.$ Then, for any server $s$, $vc_{s}^{P} \geq vc_{s}^{Q}.$
\end{lemma}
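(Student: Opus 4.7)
The plan is to proceed by induction on the number of steps of $\beta$ between the point $Q$ and the point $P$. The base case, in which $P=Q$, is immediate since $vc_{s}^{P}=vc_{s}^{Q}$ and the vector clock order is reflexive. For the inductive step, it suffices to show that a single transition of the automaton cannot decrease $vc_{s}$ in any component; then composing transitions preserves the componentwise $\geq$ relation along the execution.

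To verify the single-step claim, I would scan Algorithms~\ref{alg:inputactions_clients}, \ref{alg:input_actions}, and \ref{alg:internal_actions} for every line that writes to $vc_s$. There are exactly two such lines. The first is line~\ref{line:vcincrement}, executed on receipt of a $\texttt{write}$ from a local client, where we assign $vc_s[s]\leftarrow vc_s[s]+1$ and leave all other components of $vc_s$ unchanged; this clearly only increases the vector clock componentwise. The second is line~\ref{line:vcincrement_apply}, executed inside $\texttt{Apply}\_\texttt{Inqueue}$, where we assign $vc_s[j]\leftarrow t.ts[j]$ for some $j\neq s$. The guard in line~\ref{line:applycondition} enforces $t.ts[j]=vc_s[j]+1$ at the moment of the transition, so this assignment also increases $vc_s[j]$ by exactly one, while again leaving the other components unchanged. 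All other transitions (message receipts, sends, $\texttt{Encoding}$, $\texttt{Garbage}\_\texttt{Collection}$, etc.) do not touch $vc_s$, so they preserve it trivially.

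Combining these observations, every step of $\beta$ maps $vc_s$ to a vector that is componentwise $\geq$ the previous one, which yields $vc_{s}^{P}\geq vc_{s}^{Q}$ by induction on the number of steps between $Q$ and $P$. I do not anticipate any real obstacle here; the only care needed is to make sure no transition I might overlook writes to $vc_s$, and to be explicit about reading the guard in line~\ref{line:applycondition} to justify that the $\texttt{Apply}\_\texttt{Inqueue}$ update is a strict increase rather than an arbitrary overwrite.
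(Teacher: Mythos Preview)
Your proposal is correct and follows the standard argument; the paper itself omits the proof as ``standard,'' so your approach is exactly what was intended. The only two assignments to $vc_s$ are lines~\ref{line:vcincrement} and~\ref{line:vcincrement_apply}, and you have correctly identified that both strictly increase a single component while leaving the others unchanged.
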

The proof is standard and omitted. The following Lemma indicates that in any execution, a write can be uniquely identified by its tag, timestamp, or Lamport-timestamp. 

\begin{lemma}\label{lem:uniquetag1}
Let $\pi_1$ and $\pi_2$ be two distinct write operations in a well-formed execution $\beta$ of CausalEC. Then
\begin{enumerate}[(a)]
\item If $ts(\pi_1) \nprec ts(\pi_2)$, and $\pi_1$ is issued by a client in $\mathcal{C}_{s}$ for some server $s,$ then $ts(\pi_1)[s] > ts(\pi_2)[s].$
    \item 
    $tag(\pi_1) < tag(\pi_2)$ or $tag(\pi_2) < tag(\pi_1).$ 
    \item $lt(\pi_1) \neq lt(\pi_2)$ 
\end{enumerate}
\end{lemma}

The following lemma is an immediate consequence of Lemma \ref{lem:uniquetag1} and is stated without proof.

 \begin{lemma}\label{lem:uniquetag}
In any execution of \CausalEC, wvery write operation has a unique timestamp, a unique Lamport-timestamp, and a unique tag.
\end{lemma}

The following lemma will be used in proofs of causal consistency and eventual consistency.

\begin{lemma}\label{lem:returntagofread}
Consider an execution $\beta$ of \CausalEC. Suppose $\pi$ is any read operation to object $X$ invoked by a client in $\mathcal{C}_{s}$, where $s$ is any server. Suppose that $\pi$ completes in $\beta$ and that the $\texttt{read-return}$ message is sent from server $s$ to the client of $\pi$ at point $Q$ in $\beta$.  Let $P$ be any point before $Q$ such that $L_{s}^{P}[X]$ is non-empty. Then $\pi$ returns the value $v$ which is the value of a write operation $\phi$ to object $X$ that satisfies: $tag(\phi) \geq L_s^{P}[X].Highesttagged.tag$
\end{lemma}

We prove Lemmas \ref{lem:uniquetag1} and \ref{lem:returntagofread} next.

\subsection{Proof of Lemma \ref{lem:uniquetag1}}
There are only two cases: (i) $\pi_1,\pi_2$ are issued by clients that correspond to the same server $\mathcal{C}_{s}$ and (ii) $\pi_1, \pi_2$ are issued clients that correspond respectively to different servers $\mathcal{C}_{s},\mathcal{C}_{s'}, s \neq s'$.

\underline{Case (i):} Without loss of generality, we assume that $\langle write, opid,X_1, v_1 \rangle$ of operation $\pi_1$ arrives at server $s$ after the corresponding $\langle write, opid,X_2, v_2 \rangle$ message from
$\pi_2$ arrives at server $s$. Let $P$ be the point if receipt of the first message and $Q$ be the point of receipt of the second message.  Note that $P$ comes after $Q$ in $\beta.$ Lemma \ref{lem:vconlyincreases} implies that $vc_{s}^{P} \geq vc_{s}^{Q}.$ Furthermore, on receipt of the  $\langle \texttt{write},X,v\rangle$ message, in line \ref{line:vcincrement}, the vector clock is incremented. So we can conclude that $vc_{s}^{P}[s] > vc_{s}^{Q}[s],$ and consequently  $vc_{s}^{P} > vc_{s}^{Q}$. Since, by definition, we have $tag(\pi_1).ts = vc_s^{P}$ and $tag(\pi_2).ts =vc_s^{Q}$, we conclude that $ts(\pi_1)[s]>ts(\pi_2)[s]$ and therefore, $ts(\pi_1) > ts(\pi_2), lt(\pi_1) \neq lt(\pi_2)$. Thus lemma statement (a) holds. Further,  this automatically implies that $tag(\pi_1) > tag(\pi_2)$, which implies lemma (b) holds. The proof relies on the following claims.

\underline{Case (ii):}
Suppose  $ts(\pi_1)[s] \leq ts(\pi_2)[s]$. We aim to show that $ts(\pi_2) > ts(\pi_1).$ Let $\pi_1,\pi_2$ be respectively issued by a clients in $\mathcal{C}_{s}, \mathcal{C}_{{s}'}.$

\begin{claim}
Consider any any tuple  $(\overline{s},\overline{X},\overline{v},\overline{t})$ in $Inqueue.Head.$ If we index all the writes by clients in $\mathcal{C}_{\overline{s}}$ as $\pi_{1},\pi_{2}, \ldots, $ in the order of the arrival of the corresponding $\texttt{write}$ messages at server $\overline{s}$, then:
 $\pi_{\overline{t}.ts[\overline{s}]}$ is a write to object $\overline{X}$ with value $v$. Further $\overline{t}.ts=ts(\pi_{\overline{t}.ts[\overline{s}]}).$
 \label{claim:writeordering11}
\end{claim}
\begin{proof}
The claim follows from the protocol at server $\overline{s}.$ Specifically, the server  increments $vc_{\overline{s}}[\overline{s}]$ by $1$ on receiving a  $\texttt{write}$ message, and never decrements it. Therefore, $\pi_{\overline{t}.ts[\overline{s}}$ is a write to object $\overline{X}$ with value $v$. Since any $\langle \texttt{app},\overline{s},\overline{X},\overline{v},\overline{t})\rangle$ message has the $\overline{t}$ being the timestamp, $\overline{X}$ being the object and $\overline{v}$ being the value of the write operation, so the claim follows.
\end{proof}

\begin{claim}
For any servers $s,\overline{s}$ for any point $Q$ of $\beta$, let $\pi$ denote the $vc_s^{Q}[\overline{s}]$-th write operation at server $s$ as per the ordering of Claim \ref{claim:writeordering}. Then $vc_s^{Q} \geq ts(\pi).$
\label{claim:writeordering21}
\end{claim}
\begin{proof}
From the protocol, server $s$ only increments $vc_{s}[\overline{s}]$ on an $\texttt{Apply\_Inqueue}$ action, where a tuple from $\overline{s}$ at $Inqueue.Head$ is processed and line \ref{line:applycondition} returns true. Consider the tuple $(\overline{s},\overline{X},\overline{v},\overline{t})$ which corresponds to $\pi$. This tuple is processed by point $Q$ and on processing this tuple, \ref{line:applycondition} is true, and line \ref{line:vcincrement_apply} is executed. Therefore, after the $\texttt{Apply\_Inqueue}$ that processes this tuple, $vc_{s} \geq \overline{t}.ts = ts(\pi).$ The claim follows for point $Q$ applying Lemma \ref{lem:vconlyincreases}.
\end{proof}

Because of claim \ref{claim:writeordering11}. we know that $\pi_1$ is the $ts(\pi_1)[s]$th write arriving at server $s$. Let $\overline{\pi}_{1}$ denote the $ts(\pi_2)[s]$th write arriving at server $s$. Because of the same reason as $(i)$  $ts({\pi}_{1})< ts(\overline{\pi}_{1}).$ Let $Q$ be the point at which server $s'$ receives the $\texttt{write},opid,X,v\rangle$ message from $\pi_2.$ Claim  \ref{claim:writeordering21} implies that $ts(\pi_2) \geq ts(\overline{\pi}_1) > ts({\pi}_1).$ 
This completes the proof of (a).
The proofs of (b) follows from noting that the initial (server) indices of $lt(\pi_1),lt(\pi_2)$ are distinct. For proof of $(c)$, note that  $tag(\pi_1).id, tag(\pi_2).id$ are distinct. If $tag(\pi_1).ts,tag(\pi_1).ts$ are incomparable, then $tag(\pi_1),tag(\pi_2)$ can be compared using their ids.
 \subsection{Proof of Lemma \ref{lem:returntagofread}}

 We first show the following weaker version.
 
 \begin{lemma}\label{lem:returntagofread2}
Consider an execution $\beta$ of \CausalEC. Suppose $\pi$ is any read operation to object $X$ invoked by a client in $\mathcal{C}_{s}$, where $s$ is any server. Suppose that $\pi$ completes in $\beta$ and that the $\langle\texttt{read},opid,X\rangle$ message from the client of $\pi$ is received by server $s$ at point $Q$ in $\beta$.  Let $P$ be any point before $Q$ such that $L_{s}^{P}[X]$ is non-empty. Then $\pi$ returns the value $v$ which is the value of a write operation $\phi$ to object $X$ that satisfies: $tag(\phi) \geq L_s^{P}[X].Highesttagged.tag$
\end{lemma}
We mimic the proof of Lemma \ref{lem:readLtagsmallerthanvc}, with appropriate modifications.
 
\begin{proof}
Let $\pi$ be an operation that terminates in $\beta$. Without loss of generality, assume that $\pi$ is issued by client $c$ in $\mathcal{C}_{s}$ for some server $s.$
Based on the protocol $\pi$ returns because at a point $P'$ in $\beta$, server $s$ sends  a $\langle \texttt{read-return}, opid, v\rangle$ to the client.

This message is sent on executing one of the following lines:
\begin{enumerate}
\item Line \ref{line:sendtoread1} in Algorithm \ref{alg:inputactions_clients}.
    \item Line \ref{line:readimmediatereturn} in Algorithm \ref{alg:inputactions_clients}, or
    \item Line \ref{line:read-send-val_resp_encoded} in Algorithm \ref{alg:input_actions}, or 
    \item Line \ref{line:valresp_handle} in Algorithm \ref{alg:input_actions}, or 
    \item Line \ref{line:readreturn_apply} in Algorithm \ref{alg:internal_actions}.
\end{enumerate}
If the read $\pi$ is not returned via Line \ref{line:readimmediatereturn} in Algorithm \ref{alg:inputactions_clients}, then, at point $Q$, $L_{s}^{Q}[X]$ is empty, or $L_{s}^{Q}[X].Highesttagged.tag < M_s^{Q}[X].tag$. In either case, at point $Q$ which is after point $P$, because of Lemma \ref{lem:listtag}, $M_{s}^{Q}[X].Highesttagged.tag \geq L_{s}^{P}[X].Highesttagged.tag.$ So, for reads that returns in all the above cases except Line \ref{line:readimmediatereturn} of Algorithm \ref{alg:inputactions_clients}, it suffices to show that $tag(\pi) \geq M_{s}^{Q}.tagvec[X].$

\underline{1. Line  \ref{line:sendtoread1} in Algorithm \ref{alg:inputactions_clients}.} 
 Note that $ts(\pi) = vc_s^{P'}.$
    In this case point $P'$ is the point of receipt of a $\langle \texttt{write}, opid, X, v\rangle$ message from a write operation $\phi$, and the read $\pi$ with operation id $opid$ returns the value of $\phi$. By Definition \ref{def:ts_tag}, we have $ts(\phi) = vc_s^{P'}$.  Because $P'$ comes after $P$, from Lemma \ref{lem:vc_after_apply}, we have $ts(\phi) = vc_s^{P'} > vc_{s}^{Q}.$ From Lemma \ref{lem:vc}, we have $vc_{s}^{P'} \geq M_s^{Q}[X].tag.$ Thus, we have   $tag(\phi) > M_s^{Q}[X].tag$ as desired.
\underline{2. Line  \ref{line:readimmediatereturn} in Algorithm \ref{alg:inputactions_clients}.}

The value $v$ returned by $\pi$ is one that forms a tuple $(t,v)$ in list $L[X],$ where $t = L_s^{Q}[X].Highesttagged.tag$. By Lemma \ref{lem:valuesarelegitimate}, $v$ is the value of the unique write $\phi$ with $tag(\phi)=t$. Because Line \ref{line:readimmediatereturn} is executed only if  \ref{line:readimmediatereturncheck1} is satisfied, we infer that $t \geq M_s^Q.tagvec[X]$. Since Lemma \ref{lemma:tagsalwaysincrease} implies that $M_s^Q.tagvec[X] \geq M_s^P.tagvec[X]$, we have $t \geq M_s^P.tagvec[X]$. Because $L_s^{P}[X]$ is non-empty, Lemma \ref{lem:listtag} combined with the fact that $t = L_s^Q[X].Highesttagged.tag \geq M_s^Q.tagvec[X]$ implies that $t \geq L_s^P[X].Highesttagged.tag$ as desired.

\underline{3. Line  \ref{line:read-send-val_resp_encoded} in Algorithm \ref{alg:input_actions}:} 
Note that at the point $R$ of receipt of the $\texttt{val\_resp\_encoded}$ with operation id $opid$ and tag vector $requestedtags,$ the fact that Line \ref{line:read-send-val_resp_encoded} was executed implies there exists an entry in $ReadL_s^R$ with operation id $opid$ and tag vector $requestedtags.$
From Lemma \ref{lem:valuesarelegitimate}, the input to the function $\chi_{T_O}$ on Line \ref{line:valresp_handle_decoding} is a set whose elements are of the form $(\overline{i},{v}_{\overline{i}})$ where ${v}_{i} = \Phi_i(w_1, w_2, \ldots, w_K)$ where, for $\ell \in \{1,2,\ldots,K\},$ we have $w_\ell$ is the value of the unique write with tag $requestedtags[X_\ell].$

Therefore, the output of the decoding function, which is returned to the read, is the value $w$ of the value of the unique write $\phi$ with tag $requestedtags[X].$  
Suppose the entry operation id  $opid$, and tag vector $requestedtags$  is added to $ReadL_s$ at point $Q$. 
From the algorithm, at point $Q$, the node executed  line \ref{line:addtoreadl} in Algorithm \ref{alg:inputactions_clients} or Line \ref{line:readlentryencoding} in Algorithm \ref{alg:internal_actions}. In either case, we have $requestedtags=M_s^{Q}.tagvec[X].$ Thus, the read returns the value of the unique write operation $\phi$ with tag $M_s^{Q}.tagvec[X].$

\underline{4. Line  \ref{line:valresp_handle} in Algorithm \ref{alg:input_actions}} 
Let $R$ be the point of a $$\langle \texttt{val\_resp}, X, v, clientid, opid, tvec\rangle.$$ From the condition on Line \ref{line:valresp_handle_condition}, we know that there exists an entry in $ReadL_s^{R}$ with operation id $opid$ and tag vector $tvec$. Let $Q$ be the point before $R$ where this entry was added to $ReadL_s.$ From Lemma \ref{lem:valuesarelegitimate}, we infer that $v$ is the value of the unique write $\phi$ with tag $tvec[X].$ The rest of the proof is similar to Case 3.

Suppose the entry operation id  $opid$, and tag vector $tvec$  is added to $ReadL_s$ at point $Q$. 
From the algorithm, at point $Q$, the node executed  line \ref{line:addtoreadl} in Algorithm \ref{alg:inputactions_clients} or Line \ref{line:readlentryencoding} in Algorithm \ref{alg:internal_actions}. In either case, we have $tvec=M_s^{Q}.tagvec.$ Thus, the read returns the value of the unique write operation $\phi$ with tag $M_s^{Q}.tagvec[X]$ as desired.

\underline{5. Line  \ref{line:readreturn_apply} in Algorithm \ref{alg:internal_actions}} 

The proof is similar to Cases  $3$ and $4$. Line \ref{line:readreturn_apply} is executed in an $\texttt{Apply\_Inqueue}$ internal action. The value $v$  returned is the value of the unique write operation $\phi$ with tag $t.$  The fact that Line \ref{line:readreturn_apply} was executed implies that the condition in \ref{line:readreturn_apply_condition} is satisfied. This condition implies that an entry with operation $opid$ and tag vector $tvec$ exists in $ReadL$ at the point of execution of the $\texttt{Apply\_Inqueue}$ action. The condition also implies that $t \geq tvec[X]$. 

Suppose the entry operation id  $opid$, and tag vector $tvec$  is added to $ReadL_s$ at point $Q$. 
From the algorithm, at point $Q$, the node executed  line \ref{line:addtoreadl} in Algorithm \ref{alg:inputactions_clients} or Line \ref{line:readlentryencoding} in Algorithm \ref{alg:internal_actions}. In either case, we have $tvec=M_s^{Q}.tagvec.$ Therefore, we have $tag(\phi) \geq tvec[X] = M_s^Q.tagvec[X]$ as desired.

\end{proof}
\begin{proof}[Proof of Lemma \ref{lem:returntagofread}]
Let $Q'$ denote the point at which server $s$ receives a $\langle read, opid, X\rangle$ message from the client on behalf of $\pi$. If, in $\beta,$ there exists any point before $Q'$ such that $L_{s}^{P'}[X].Highesttagged.tag \geq L_{s}^{P}[X].Highesttagged.tag,$ then Lemma \ref{lem:returntagofread2} implies that the read $\pi$ returns the tag of an operation $\phi$ that satisfies $tag(\phi) \geq L_{s}^{P'}[X].Highesttagged.tag \geq L_{s}^{P}[X].Highesttagged.tag.$ So to show the lemma, it suffices to consider executions $\beta$ where for all point $P'$ before $Q$, 
\begin{equation}L_{s}^{P'}[X].Highesttagged.tag < L_{s}^{P}[X].Highesttagged.tag \label{eq:12}
\end{equation}
Note that this automatically implies that $P$ occurs after $Q'.$

Let $P''$ be the first point after $Q'$ such that $$L_{s}^{P''}[X].Highesttagged.tag \geq L_{s}^{P}[X].Highesttagged.tag.$$

\begin{claim}
At point $P''$, there is a tuple of the form $(clientid,opid,X,tvec,\overline{w})$ with $tvec[X] = M_s^{Q'}.tagvec[X]$ in $ReadL_{s}^{P''}.$
\end{claim}
\begin{proof}
Because of (\ref{eq:12}), the point $P''$ comes after $Q'.$ Because of the Lemma hypothesis, $P''$ comes at or before $Q$. Because the read did not return at $Q',$ server  added a tuple $(clientid,opid,X,tvec,\overline{w}$ to $ReadL_{s}^{Q'}.$ Because the read has not yet returned before $P''$, the tuple exists in $ReadL_{s}^{P''}.$ 
\end{proof}

From the server protocol, we observe that $P''$ must be a point of arrival of a write in line \ref{line:writeaddtolist} of Algorithm \ref{alg:input_actions}, or an $\texttt{Apply\_Inqueue}$ action at line \ref{line:list_applyaction} of Algorithm \ref{alg:internal_actions}. We show that the lemma holds for both cases.

In first case, server $s$ sends a \texttt{read-return} message to $\pi$ in line \ref{line:sendtoread1} in Algorithm \ref{alg:input_actions}. Consequently, $P''=Q$ and $tag(\pi)=tag(\phi),$ where $\phi$ is the write operation being performed in line line \ref{line:writeaddtolist} of Algorithm \ref{alg:input_actions}. 

In the second case, it is instructive to note that $M_s^{Q'}.tagvec[X] < L_s^{P}[X].Highesttagged.tag \leq L_s^{P''}[X].Highesttagged.tag.$ Consequently, the condition in line \ref{line:readreturn_apply_condition} is satisfied, and line \ref{line:readreturn_apply} is executed with tag $t= L_s^{P''}[X].Highesttagged.tag \geq L_s^{P}[X].Highesttagged.tag$. Since the tag $t$ is the tag of the write operation $\phi$ that on behalf of which the $\texttt{apply}$ message is sent, the lemma statement readily holds.
\end{proof}

 \section{Proof of Theorem \ref{thm:causallyconsistent}}
\label{app:safety}

For any execution $\beta,$ the relation $\leadsto$ is a partial order among all the operations of $\beta.$
Let $\pi$ be any read operation to object $X$ that completes in an execution $\beta.$ Then $\pi$ returns the value $v$ of a write operation $\phi$ to object $X$ in $\beta$ with $ts(\phi) \leq ts(\pi)$. The proof of Theorem \ref{thm:causallyconsistent} relies on Lemmas  \ref{lem:causal1}, \ref{lem:causal2} stated next.

 \begin{lemma}
For any execution $\beta,$ the relation $\leadsto$ is an irreflexive partial order among all the operations of $\beta.$
\label{lem:causal1}
\end{lemma}

\begin{lemma}
Let $\pi$ be any read operation to object $X$ that completes in an execution $\beta.$ Then $\pi$ returns the value $v$ of a write operation $\phi$ to object $X$ in $\beta$ with $ts(\phi) \leq ts(\pi)$.
\label{lem:causal2}
\end{lemma}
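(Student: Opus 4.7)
The plan is to perform a case analysis on how the read operation $\pi$ can terminate in \CausalEC, identifying for each case the unique write operation $\phi$ whose value is returned and arguing that $ts(\phi)\le ts(\pi)$. Let $s$ be the server associated with the client that issued $\pi$, and let $P$ be the point at which $s$ sends the $\langle\texttt{read-return},opid,v\rangle$ message, so by definition $ts(\pi)=vc_s^{P}$. By Lemma \ref{lem:uniquetag}, a tag uniquely determines a write operation, so it suffices to exhibit a tag $t$ such that (a) the returned value equals the value of the write with tag $t$, and (b) $t.ts\le vc_s^{P}$ in the vector-clock order.

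The crux of the proof is three inductive invariants that I would first establish by inspecting every line that mutates the relevant state variables. (I) If at some point $(t,v)\in L_s[X]$, then $t$ is the tag of some write $\phi$ to $X$ with value $v$. (II) If $(t,v)\in L_s[X]$ at point $P$, then $t.ts\le vc_s^{P}$. (III) At every point $P$ and for every $X$, $M_s.tagvec[X]^{P}.ts\le vc_s^{P}$, and $M_s.tagvec[X]^{P}$ is the tag of some write to $X$. These follow because: client writes (line \ref{line:writeaddtolist}) insert tags equal to the just-incremented $vc_s$; the apply action (lines \ref{line:applycondition}--\ref{line:list_applyaction}) raises $vc[j]$ to match $t.ts[j]$ while its firing predicate guarantees all other components are already dominated; the Encoding action raises $M.tagvec[X]$ either to $L[X].Highesttagged.tag$ (bounded by (II)) or to $\max(U\cap\overline{U})$, whose elements come from $L[X]$; and the $\texttt{val\_resp}$-style insertions (lines \ref{line:valrespaddtolist}, \ref{line:readaddtolist}) use tags that were previously snapshots of $M.tagvec$, so (III) at the earlier point and Lemma \ref{lem:vconlyincreases} yield the bound.

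With these invariants in hand I would walk through the six ways $\pi$ can complete: (1) immediate local return from $L[X]$ (line \ref{line:readimmediatereturn}); (2) local decoding from $M.val$ when $\{s\}\in\mathcal{R}_X$; (3) a $\texttt{val\_resp}$ from a remote node (line \ref{line:valresp_handle}); (4) decoding from $\texttt{val\_resp\_encoded}$ responses (line \ref{line:read-send-val_resp_encoded}); (5) an $\texttt{Apply\_InQueue}$ action that clears a pending read (line \ref{line:readreturn_apply}); and (6) an incoming client write unblocking a pending read (line \ref{line:sendtoread1}). In cases (1), (2), (5), and (6), the relevant tag is either currently in $L[X]$, equals the current $M.tagvec[X]$, or is set by the action itself, and the bound $t.ts\le vc_s^{P}$ follows directly from (II), (III), or the vector-clock update performed in the same action. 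In cases (3) and (4), the returned value corresponds to $requestedtags[X]$, a snapshot of $M_s.tagvec[X]$ taken at an earlier point $Q\le P$; then (III) at $Q$ combined with Lemma \ref{lem:vconlyincreases} gives $requestedtags[X].ts\le vc_s^{Q}\le vc_s^{P}$.

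The principal obstacle is case (4). One must verify algebraically that the $Modified\_codeword$ assembled at $s$ through the $\Gamma$-based corrections (lines \ref{line:Modifiedcodeword1}, \ref{line:Modifiedcodeword2}) truly equals $\Phi_j$ evaluated on a vector of object values whose $X$-coordinate is the write value with tag $requestedtags[X]$ and whose other coordinates are shared consistently across the responding nodes in the recovery set $T$, so that $\Psi_T^{(\ell)}$ in line \ref{line:valresp_handle_decoding} recovers precisely that write's value. This rests on a separately established subclaim that whenever the entry is upgraded to a non-$\bot$ codeword symbol, the $Error1, Error2$ flags are zero (i.e., the missing list entries needed for the $\Gamma$ corrections are always available), together with a telescoping argument over the sequence of $\Gamma$-applications at the responder and the requester. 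Granting this algebraic coherence, the identification of $\phi$ and the bound $ts(\phi)\le ts(\pi)$ again reduce to invariants (I)--(III).
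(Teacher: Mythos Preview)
Your proposal is correct and follows essentially the same approach as the paper's proof: both perform a case analysis on the line where the \texttt{read-return} message is sent, and both rely on the same supporting invariants (your (I)--(III) are precisely the paper's Lemma~\ref{lem:vc} and the relevant parts of Lemma~\ref{lem:valuesarelegitimate}, and your ``separately established subclaim'' about the error flags is Lemmas~\ref{lem:error1} and~\ref{lem:error2}). Your enumeration is in fact slightly more complete than the paper's: you include the local-decoding branch when $\{s\}\in\mathcal{R}_X$ (your case~(2)), which the paper's five-case list omits, though it is easily handled by your invariant~(III) together with Lemma~\ref{lem:valuesarelegitimate}(ii).
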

 
 We state and prove some preliminary lemmas before proving Lemmas \ref{lem:causal1},\ref{lem:causal2}. Then we prove Theorem \ref{thm:causallyconsistent}.

\subsection{Preliminary Lemmas}

\begin{lemma}
Let $\beta$ be an execution of \CausalEC~ and $s$ be a server. Consider any point $P$ such that there exists a tuple \newline $(clientid,opid, X, tvec,\bar{v})$ for $id \in \mathcal{C}$ in $ReadL_s,$ where $opid$ is the identifier of read operation  $\pi$. Let $Q$ be a point after $P$ in $\beta$ such that a tuple $(clientid, opid, X, tagvec,\bar{w})$ does not exist in $ReadL_{s}$. Then server $s$ responds to read operation $\pi$ in $\beta$ with a $\texttt{read-return}$ message by point $Q$. 
\label{lem:read-liveness}
\end{lemma}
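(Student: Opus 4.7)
The plan is to prove this by direct case analysis on every point in the protocol that can cause a tuple with the four leading components $(clientid, opid, X, tvec)$ to disappear from $ReadL_s$. First I would enumerate all the lines in Algorithms~\ref{alg:inputactions_clients}, \ref{alg:input_actions}, and \ref{alg:internal_actions} at which an entry of $ReadL_s$ is removed. By inspection, the only such lines are \ref{line:read-remove1} (in the \texttt{write} handler), \ref{line:read-remove2} (in the \texttt{val\_resp\_encoded} handler), \ref{line:read-remove3} (in the \texttt{val\_resp} handler), \ref{line:read-remove4} (in \texttt{Apply\_InQueue} for non-localhost pending reads), and \ref{line:read-remove5} (in \texttt{Apply\_InQueue} for localhost pending reads). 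All other transitions either leave $ReadL_s$ unchanged or strictly extend it, with one exception discussed below.

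The key observation for each of the first four lines is that, in the same transition, a message $\langle \texttt{read-return}, opid, v\rangle$ is emitted to the client with identifier $clientid$ \emph{before} the removal is executed, provided $clientid \neq \texttt{localhost}$. Specifically, line \ref{line:sendtoread1} precedes line \ref{line:read-remove1}; line \ref{line:read-send-val_resp_encoded} precedes line \ref{line:read-remove2} in the branch $clientid \neq \texttt{localhost}$; line \ref{line:valresp_handle} precedes line \ref{line:read-remove3} in the corresponding branch; and line \ref{line:readreturn_apply} precedes line \ref{line:read-remove4}, where the loop iterator in line \ref{line:readreturn_apply_condition} already restricts to $clientid \neq \texttt{localhost}$. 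Line \ref{line:read-remove5} only fires for $clientid = \texttt{localhost}$ and is therefore excluded by the hypothesis that the client identifier lies in $\mathcal{C}$.

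The sole subtlety, and the step I expect to be the main obstacle, is line \ref{line:valrespencoded_addtuple} inside the \texttt{val\_resp\_encoded} handler: that transition deletes a tuple $(clientid, opid, X, requestedtags, \bar{w})$ and simultaneously inserts a new tuple $(clientid, opid, X, requestedtags, \bar{v})$ with identical leading four components. I would handle this by noting that after this transition a tuple with the specified first four components is still present in $ReadL_s$, and therefore this step alone cannot account for the absence of such a tuple at point $Q$. Any subsequent loss of the tuple must again go through one of the five enumerated removal lines.

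Combining these observations, between points $P$ and $Q$ the tuple must have been removed at one of the lines \ref{line:read-remove1}--\ref{line:read-remove4} (line \ref{line:read-remove5} being ruled out by $clientid \in \mathcal{C}$), and by the per-line argument above, each such removal is immediately preceded within the same atomic transition by a $\langle\texttt{read-return}, opid, v\rangle$ send to client $clientid$ for operation $\pi$. This gives the claimed response by point $Q$.
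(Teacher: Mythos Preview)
Your proposal is correct and follows the same strategy as the paper: enumerate every code location at which a $ReadL_s$ entry can be removed and verify that a \texttt{read-return} is sent before the removal whenever $clientid \in \mathcal{C}$. Your argument is in fact more careful than the paper's, which simply lists lines~\ref{line:read-remove1}, \ref{line:read-remove2}, \ref{line:read-remove3}, \ref{line:read-remove4} and asserts that each is preceded by a \texttt{read-return}; the paper neither mentions line~\ref{line:read-remove5} (which you correctly exclude via the $clientid \neq \texttt{localhost}$ hypothesis) nor the replace-in-place at line~\ref{line:valrespencoded_addtuple} (which you correctly observe preserves a tuple with the same leading four components).
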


\begin{proof}
From the protocol, an element is removed from $ReadL_{s}$ in line \ref{line:read-remove1} in in Algorithm \ref{alg:inputactions_clients},  or lines \ref{line:read-remove2} or \ref{line:read-remove3} in Algorithm \ref{alg:input_actions}, 
or in line \ref{line:read-remove4}
 in Algorithm \ref{alg:internal_actions}. Based on the protocol, in each case, just before the removal of the $ReadL,$ a $\texttt{read-return}$ message to a read. Therefore, server $s$ responds to read operation $\pi$ in $\beta$ with a $\texttt{read-return}$ message by point $Q$.
\end{proof}

\begin{lemma}\label{lem:vc_after_apply}
Consider a node $s$ and a point $P$ at which the \newline $\texttt{Apply\_Inqueue}$ action is performed such that the if statement on line \ref{line:applycondition}~returns true. Suppose that just before point $P$, $InQueue.Head = (j,X,v,t).$ Then, at any point $Q$ which is identical to or after point $P$, we have  $vc_{s}^{Q} \geq t.ts$.
\end{lemma}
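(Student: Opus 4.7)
The plan is to argue directly from the transition rule for $\texttt{Apply\_Inqueue}$ (Algorithm~\ref{alg:internal_actions}) and then invoke monotonicity of the vector clock. The statement naturally splits into two parts: first, establishing $vc_s^P \geq t.ts$ at the point $P$ itself; second, propagating the inequality to every later point $Q$.

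For the first part, I would look at the state immediately before $P$, call it $P^-$. The hypothesis that the if-condition on line~\ref{line:applycondition} evaluates to true at $P^-$ with $InQueue.Head = (j,X,v,t)$ gives us precisely that $t.ts[p] \leq vc_s^{P^-}[p]$ for every $p \neq j$ and $t.ts[j] = vc_s^{P^-}[j] + 1$. The effect of the action at $P$ is then read off line~\ref{line:vcincrement_apply}: only the $j$-th coordinate of $vc_s$ is modified, and it is set to $t.ts[j]$. Thus $vc_s^P[j] = t.ts[j]$ and $vc_s^P[p] = vc_s^{P^-}[p] \geq t.ts[p]$ for $p \neq j$, which is exactly $vc_s^P \geq t.ts$ componentwise.

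For the second part, I would simply invoke Lemma~\ref{lem:vconlyincreases}, which already establishes that the vector clock of a server is monotone non-decreasing along an execution. Since $Q$ is identical to or comes after $P$, we have $vc_s^Q \geq vc_s^P$, and transitivity of $\geq$ on $\mathcal{VC}$ yields $vc_s^Q \geq t.ts$ as desired.

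The main (and only) subtlety will be bookkeeping around the atomic transition at $P$: being careful that the condition evaluated on the pre-state $P^-$ while the assignment produces the post-state $P$, and that no other coordinate of $vc_s$ is touched by the $\texttt{Apply\_Inqueue}$ effect block. This is immediate from inspection of Algorithm~\ref{alg:internal_actions} and does not require any genuinely new argument, so I do not expect any real obstacle beyond careful citation of the relevant lines.
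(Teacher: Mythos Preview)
Your proposal is correct and follows essentially the same approach as the paper: read off the precondition on line~\ref{line:applycondition} at the pre-state, apply the assignment on line~\ref{line:vcincrement_apply} to conclude $vc_s^P \geq t.ts$ componentwise. Your additional invocation of Lemma~\ref{lem:vconlyincreases} to extend the inequality to later points $Q$ is a small elaboration the paper's proof omits (the lemma statement as written has $vc^{s,P}$ rather than $vc^{s,Q}$, so the paper stops at $P$), but it is exactly the right way to close that gap.
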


\begin{proof}
When a node $s$ performs an $\texttt{Apply}\_\texttt{InQueue}$ internal action, at the beginning of this action, $(j,X,v,t)$ is the head of $InQueue$, and $t.ts[p]\leq vc[p]$ for all $p\neq j$, and $t.ts[j]=vc[j]+1$.

On performing this $\texttt{Apply}\_\texttt{InQueue}$ action, the $j$th component of local vector clock is incremented to $vc[j]=t.ts[j]$.

So after this $\texttt{Apply}\_\texttt{InQueue}$ action is complete, $ vc[p]\geq t.ts[p]$ for all $p\neq j$ and $vc[j]=t.ts[j]$, so $vc\geq t.ts$.
\end{proof}

\begin{lemma}\label{lem:listtag}
Consider a point $P$ of an execution $\beta$ of $CausalEC$ such that at server $s$, the list $L_s^{P}[X]$ is non-empty. At any point $Q$ of $\beta$ that comes after $P$, at least one of the following statements is true:
\begin{itemize}
    \item  $L_s^{Q}[X]$ contains an element $(L_{s}^{P}[X].Highettagged.tag,v)$ for some $v$, and consequently $$ L_s^{Q}[X].Highesttagged.tag \geq L_s^{P}[X].Highesttagged.tag,$$ or 
    \item $M_s^{Q}.tagvec[X] \geq L_s^{P}[X].Highesttagged.tag$
\end{itemize}
\end{lemma}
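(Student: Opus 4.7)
My plan is to prove Lemma \ref{lem:listtag} by strong induction on the number of transitions in $\beta$ between points $P$ and $Q$. Write $\tau = L_s^{P}[X].Highesttagged.tag$. I will establish the stronger invariant: at every point $R$ between $P$ and $Q$ inclusive, either (i)~$L_s^{R}[X]$ is non-empty with $L_s^{R}[X].Highesttagged.tag \geq \tau$, or (ii)~$M_s^{R}.tagvec[X] \geq \tau$. The base case $R = P$ is immediate from the lemma's hypothesis that $L_s^{P}[X]$ is non-empty, so case (i) holds with equality.

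For the inductive step, I would perform a case analysis on the single transition that carries the algorithm from a point $R'$ to its successor $R''$. Most cases are routine: actions that do not touch $L_s[X]$ or $M_s.tagvec[X]$ (for instance, sending or receiving messages that only update $InQueue$, $DelL$, or $ReadL$) trivially preserve the invariant. Actions that add a tuple to $L_s[X]$ — namely, handling a local $\texttt{write}$ on line~\ref{line:writeaddtolist}, \texttt{Apply\_Inqueue} on line~\ref{line:list_applyaction}, or handling \texttt{val\_resp}/\texttt{val\_resp\_encoded} on lines~\ref{line:valrespaddtolist}/\ref{line:readaddtolist} — can only raise $L_s[X].Highesttagged.tag$, which preserves~(i). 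The \texttt{Encoding} action on lines~\ref{line:updatetag1} and~\ref{line:updatetag2} only increases $M_s.tagvec[X]$, which preserves~(ii) if it already held. The only action that can \emph{decrease} the witness of the invariant is \texttt{Garbage\_Collection}, which can remove the highest-tagged element of $L_s[X]$; this is the main obstacle.

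For \texttt{Garbage\_Collection}, I would case split on which of the three branches (lines~\ref{line:GCcondition1}, \ref{line:GCcondition3}, or the final \texttt{else}) executes, leveraging the invariant $tmax[X] \leq M_s.tagvec[X]$ that is mentioned in the description following line~\ref{line:tmax} of Algorithm~\ref{alg:internal_actions}, and which I would establish as a separate preliminary invariant. In the line~\ref{line:GCcondition1} branch, the guard itself ensures $L_s^{R'}[X].Highesttagged.tag \leq M_s^{R'}.tagvec[X] = tmax[X]$, so if (i) held with $L_s^{R'}[X].Highesttagged.tag \geq \tau$ then $M_s^{R''}.tagvec[X] \geq \tau$ and (ii) holds after the deletions. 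In the line~\ref{line:GCcondition3} branch, only elements with tag $\leq tmax[X] < M_s^{R'}.tagvec[X]$ are deleted, so either the highest-tagged element survives (preserving~(i)) or its tag was $\leq tmax[X] < M_s.tagvec[X]$ and~(ii) takes over. In the final \texttt{else} branch, only elements with tag strictly less than $tmax[X]$ are removed, so an element with tag equal to the old $L_s[X].Highesttagged.tag \geq \tau$ is preserved whenever $L_s^{R'}[X].Highesttagged.tag \geq tmax[X]$; otherwise $\tau \leq L_s^{R'}[X].Highesttagged.tag < tmax[X] \leq M_s^{R'}.tagvec[X]$ and (ii) holds at $R''$.

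The principal technical difficulty is the reliance on the auxiliary invariant $tmax_s[X] \leq M_s.tagvec[X]$; I would prove this as a short preliminary lemma by induction on transitions, observing that $tmax[X]$ advances only via line~\ref{line:tmax} of \texttt{Garbage\_Collection} using tags from $DelL[X]$ that were deposited (on lines~\ref{line:Delete_local1} and~\ref{line:Delete_local2}) no later than the corresponding assignment to $M.tagvec[X]$ on lines~\ref{line:updatetag1} and~\ref{line:updatetag2}. A second subtle point is that the \texttt{else} branch of \texttt{Garbage\_Collection} uses a \emph{strict} inequality on the tag, and this strictness is exactly what saves the argument when $L_s[X].Highesttagged.tag = tmax[X]$; I would flag this as the reason the protocol deliberately distinguishes the branches.
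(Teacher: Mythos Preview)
Your induction-on-transitions argument is correct, and it relies on the same auxiliary fact as the paper (namely $tmax_s[X] \leq M_s.tagvec[X]$, which the paper packages as Lemma~\ref{lem:tag-delete-ordering}). However, the paper's proof is considerably more direct than your transition-by-transition case analysis. Rather than maintain the disjunctive invariant at \emph{every} intermediate point, the paper simply tracks the single tuple $(t,v)$ with $t = L_s^{P}[X].Highesttagged.tag$: if $(t,v)$ survives in $L_s^{Q}[X]$ the first bullet holds trivially; if not, the paper jumps to the \emph{first} point $Q'$ at which $(t,v)$ is removed, observes that removal can only occur in \texttt{Garbage\_Collection} when $t \leq tmax_s^{Q'}[X]$, invokes Lemma~\ref{lem:tag-delete-ordering} to get $tmax_s^{Q'}[X] \leq M_s^{Q'}.tagvec[X]$, and then invokes monotonicity of $M_s.tagvec[X]$ (Lemma~\ref{lemma:tagsalwaysincrease}) to push this forward to $Q$. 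This avoids your three-way branch analysis of \texttt{Garbage\_Collection} entirely: the paper does not need to distinguish lines~\ref{line:GCcondition1}, \ref{line:GCcondition3}, and the final \texttt{else}, because all three branches share the property that any deleted tag is $\leq tmax[X]$. Your approach has the virtue of being mechanical and reusable for similar invariants, but here it is doing more work than needed; the paper's ``track one element until it disappears'' trick is the shorter path.
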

\begin{proof}
Let $t=L_s^{P}[X].Highesttagged.tag.$ Since $L_s^{P}[X]$ is non-empty, there exists a tuple $(t,v)$ in $L_s^{P}[X].$ If $(t,v)$ exists in $L_s^{Q}[X]$ then the lemma statement holds trivially. We consider these case where $(t,v)$ does not exist in $L_s^{Q}[X]$. Let $Q'$ be the first point after $P$ at which tuple $(t,v)$ does not exist in $L_s^{Q}[X];$  note that $Q'$ is no later than $Q$.

From the protocol, a $\texttt{Garbage\_Collection}$ action took place at $Q'.$ In the execution of the action, it executed either line \ref{line:GC1} or line \ref{line:GC2} in Algorithm \ref{alg:internal_actions}. From the code in these lines, we have $t \leq tmax[X].$ From the definition of $tmax[X]$ in line \ref{line:tmax}, we conclude that there is a tag $t' \geq t$ such that server $s$ has an element of $(t',s)$ in $DelL_s^{Q'}[X].$ From Lemma \ref{lem:tag-delete-ordering}, we infer that $t' \leq M_s^{Q'}.tagvec[X].$ We thus infer that $L_s^{P}[X].Highesttagged.tag = t \leq t' \leq M_s^{Q'}.tagvec[X] \stackrel{(a)}{\leq} M_s^{Q}.tagvec[X],$ where $(a)$ follows from Lemma \ref{lemma:tagsalwaysincrease}. This completes the proof.

\end{proof}

\begin{lemma}\label{lem:vc}
Consider an execution $\beta$ of $CausalEC$. 
At any point $P$ be $\beta,$ for any server $s$, we have:
    
    \begin{enumerate}[(a)]
      \item If list $L[X]$ is non-empty, then, for any entry $(t,v)$ in $L_s^{P}[X],$ we have  $vc_s^P \geq t.ts$
    \item $vc_s^P \geq M_s^{P}.tagvec[X].ts$ for any object $X$, and 
  
    \end{enumerate}
    
\end{lemma}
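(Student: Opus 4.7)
The plan is to prove (a) and (b) simultaneously by induction on the number of transitions in the prefix of $\beta$ leading up to $P$. The base case is the initial state, where $vc_s = \vec{0}$, $L_s[X] = \{(\vec{0}, \vec{0})\}$, and $M_s.tagvec[X] = \vec{0}$ for every object $X$; both invariants hold trivially since every component is $\vec{0}$.

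For the inductive step, I would enumerate the actions that can modify $vc_s$, $L_s[X]$, or $M_s.tagvec[X]$ and show that each preserves the conjunction of (a) and (b). Actions that only grow $vc_s$ without touching the other two variables preserve (a) and (b) immediately, since the inductive hypothesis combined with a larger $vc_s$ still yields the required inequalities. The non-trivial cases are those that add entries to $L_s[X]$ or rewrite $M_s.tagvec[X]$:
\begin{enumerate}[(i)]
\item \textbf{Client write} at server $s$ (Algorithm \ref{alg:inputactions_clients}): increments $vc_s$ on line \ref{line:vcincrement} and then adds $(t,v)$ with $t.ts = vc_s$ to $L[X]$ on line \ref{line:writeaddtolist}; the new entry satisfies (a) by construction, and (b) is preserved since $M.tagvec[X]$ is unchanged while $vc_s$ only grew.
\item \textbf{Apply\_Inqueue} (Algorithm \ref{alg:internal_actions}): Lemma \ref{lem:vc_after_apply} yields $vc_s \geq t.ts$ immediately after the action, so the new entry added on line \ref{line:list_applyaction} satisfies (a).
\item \textbf{Receive val\_resp\_encoded}, line \ref{line:readaddtolist}: the added entry is $(M.tagvec[X], v)$, and the inductive hypothesis (b) directly gives $vc_s \geq M.tagvec[X].ts$.
\item \textbf{Receive val\_resp}, line \ref{line:valrespaddtolist}: the added entry is $(requestedtags[X], v)$. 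I would trace $requestedtags$ back to the unique val\_inq sent by server $s$ itself on line \ref{line:valinq_send} or line \ref{line:valinq_send2}, where it was set equal to $M.tagvec$ at that earlier point $P'$. The inductive hypothesis (b) at $P'$ gives $vc_s^{P'} \geq M_s^{P'}.tagvec[X].ts = requestedtags[X].ts$, and Lemma \ref{lem:vconlyincreases} lifts this to $vc_s^{P} \geq requestedtags[X].ts$.
\item \textbf{Encoding}, lines \ref{line:updatetag1} and \ref{line:updatetag2}: in both cases the new value of $M.tagvec[X]$ is the tag of some entry currently in $L[X]$ (for line \ref{line:updatetag2}, this follows from the definition of $\overline{U}$ on line \ref{line:Delete2_Ubarset}), so the inductive hypothesis (a) supplies the needed inequality for (b).
\item \textbf{Garbage\_Collection}: only deletes entries from $L[X]$ and sends messages, so it preserves both invariants trivially.
\end{enumerate}

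The main obstacle I anticipate is case (iv), which is the only case that appeals to the inductive hypothesis at an earlier point in the execution rather than at the immediately preceding point. The argument requires observing that whenever $s$ receives a val\_resp (or val\_resp\_encoded) with a given $requestedtags$ field, that tag vector was set by $s$ itself when dispatching the matching val\_inq --- this follows from inspection of the protocol, since no other action creates the $requestedtags$ field of a val\_inq. Once this correspondence is established, the combination of the inductive hypothesis at the send point of the val\_inq and the monotonicity of $vc_s$ (Lemma \ref{lem:vconlyincreases}) closes the case. All remaining cases reduce to straightforward bookkeeping.
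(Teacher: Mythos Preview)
Your proposal is correct and follows essentially the same approach as the paper: a case analysis on the actions that can add entries to $L_s[X]$ or update $M_s.tagvec[X]$, with the crucial step in the \texttt{val\_resp}/\texttt{val\_resp\_encoded} cases being to trace the tag back through the $ReadL$ entry to the value of $M.tagvec$ at the time the internal read was initiated, and then invoke monotonicity of $vc_s$. The paper organizes this as a first-violation contradiction (first proving a claim that every $M.tagvec[X]$ tag originated in $L[X]$, hence (a)$\Rightarrow$(b), then proving (a) alone), whereas you do a direct simultaneous induction on (a)$\wedge$(b); the underlying case analysis and key idea are identical.
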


\begin{proof}

We begin the proof with the following claim.

\begin{claim}
Let $t=M_s^{P}[X].tag.$
There, either $t = \vec{0}$ or there is a point $Q$ before point $P$ such that there exists a tuple of the form $(t,v)$ in $L_s^{Q}[X].$
\label{claim:mtagslist}
\end{claim}

\begin{proof}[Proof of Claim]
If $t \neq \vec{0},$ then the $M_s[X].tag$ was updated in an $\texttt{Encoding}$ action either in lines \ref{line:updatetag1} or \ref{line:updatetag2} at some point $Q'$ that is not later than $P$. Based on the conditions to execute these lines in lines \ref{line:conditionforupdate} and \ref{line:Forloopupdatetag2}, there exists a tuple of the form $(t,v)$ in $L_s^{Q}[X],$ at point $Q'.$ Furthermore, since an $\texttt{Encoding}$ action does not add elements to the list $L_s[X]$,  this element must exist in the list at some point $Q$ before $Q'.$
\end{proof}

The claim above implies that if $(a) \Rightarrow (b).$ To see this, suppose as a contradiction, $(b)$ is violated, that is,  then $t=M_s^{P}{[X]}.tag.ts > vc_s^{P}$ for some point $P$. Then the claim shows that there is a point $Q$ before $P$ such that an element $(t,v)$ is in $L_s[X]^{Q}$ at some point $Q$ before $P$. Therefore $L_s^{Q}[X].Highesttagged.ts \geq t \geq vc_s^{P} \stackrel{(1)}{\geq} vc_s^{Q},$ where in $(1)$ we have used Lemma \ref{lem:vconlyincreases}. This is a contradiction of $(a)$. 

So to complete the proof, it suffices to show $(a).$

\underline{Proof of (a)}

Let $P$ be the first point of $\beta$ where $(a)$ is violated.

At point $P$ where $L_s^{P}[X]$ is non-empty, and there is an element $(t,v) \in L_s^{P}[X]$ such that  $t.ts \leq vc_{s}^{P}.$
Since $vc_s$ cannot decrease (Lemma \ref{lem:vconlyincreases}, we infer that an element $(t,v)$ is added  at point $P$ to $L_s^{P}[X]$ by executing:
\begin{enumerate}[(i)]
\item lines
\ref{line:writeaddtolist} in Algorithm \ref{alg:inputactions_clients}, or 
\item line \ref{line:list_applyaction} in Algorithm \ref{alg:internal_actions}, or
\item line \ref{line:readaddtolist} in Algorithm \ref{alg:input_actions}, or 
\item line \ref{line:valrespaddtolist} in Algorithm \ref{alg:input_actions}.
 \end{enumerate}
We show a contradiction in each case.

\underline{(i):}From the protocol, $L_s^{P}[X].Highesttagged.ts = vs_s^{P},$ which does not violate $(a),$ a contradiction.
\underline{(ii):} From the protocol, we know that before point $P$, the element at $\texttt{Inqueue}.Head$ had tag $t$ and value $v$. Furthermore, the fact that Line \ref{line:list_applyaction} implies that the condition in Line \ref{line:list_applyaction} returned true. From Theorem \ref{lem:vc_after_apply}, we have $vc_s^{P} \geq t.ts$. Therefore $(a)$ is not violated at point $P$.

\underline{(iii) and (iv)}
In these cases, because of conditions in lines \ref{line:valrespencoded_readcheck},  \ref{line:valresp_handle_condition}, a tuple with  tag-vector $tvec$ exists in $ReadL_s^{P}$ with $tvec[X] = t$ at the point of receipt of the $\texttt{val\_resp}$ or $\texttt{val\_resp\_encoded}$ message. 
Such an element is added at a point $Q$ before point $P$ either at line \ref{line:addtoreadl} in Algorithm \ref{alg:inputactions_clients}, or line \ref{line:readlentryencoding} in Algorithm \ref{alg:internal_actions}. In either case, we have $tvec= M_s^{Q}.tagvec$. Because we assume that $(a)$ is contradicted at point $P$, we have $t.ts = tagvec[X].ts > vc_s^{P}.$ From Lemma \ref{lem:vconlyincreases}, we have $tagvec[X].ts > vc_s^{Q},$ which readily implies that $M_s^{Q}.tagvec[X].ts > vc_s^{Q}.$ Therefore, point $Q$ violates $(b)$. Claim \ref{claim:mtagslist} implies that there is a point $R$ before $Q$ which violates $(a)$. Because $Q$ is no later than $P$, this violates the hypothesis that $P$ is the first point at which $(a)$ is violated. This completes the proof.

\end{proof}

\begin{lemma}\label{lem:readlist}
Consider an execution $\beta$ of $CausalEC$. 
At any point $P$ be $\beta,$ for any server $s$, suppose there exists an entry $$(clientid, opid, X, tagvec, w_1, w_2, \ldots, w_N)$$ in $ReadL_{s}^{P}.$ Then, for all objects $\overline{X} \in \mathcal{X}_{s},$ we have:
\begin{itemize}
    \item $ M_s^P.tagvec[\overline{X}]  \geq tagvec[\overline{X}],$ and 
    \item $vc_s^P \geq tagvec[\overline{X}].ts$
\end{itemize} 
\label{lem:readLtagsmallerthanvc}
\end{lemma}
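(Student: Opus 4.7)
The plan is to show that the statement holds at the point $Q$ at which the entry in question was first inserted into $ReadL_s$, and then to propagate the inequalities forward to $P$ using the monotonicity of $vc_s$ and of $M_s.tagvec$. First, I would locate all lines of the protocol that can add a tuple to $ReadL_s$ or modify its $tagvec$ field. Inspecting Algorithms \ref{alg:inputactions_clients}, \ref{alg:input_actions}, \ref{alg:internal_actions}, these are line \ref{line:addtoreadl} (on receipt of a client \texttt{read}), line \ref{line:readlentryencoding} (during an \texttt{Encoding} action that issues a local read), and line \ref{line:valrespencoded_addtuple} (on receipt of a \texttt{val\_resp\_encoded}). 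The third of these only rewrites the codeword-vector component of the tuple while leaving the $tagvec$ field equal to $requestedtags$, which, in turn, equals the $tagvec$ of the preexisting tuple it replaces. Thus it suffices to analyze the first two cases, since every tuple present at point $P$ inherits its $tagvec$ field from the point $Q \leq P$ at which it was originally inserted.

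Next, in both remaining insertion sites, the code sets $tagvec \leftarrow M_s^{Q}.tagvec$, where $Q$ is the point of insertion. Therefore, for every $\overline{X} \in \mathcal{X}$ (in particular every $\overline{X}\in\mathcal{X}_s$), we have $tagvec[\overline{X}] = M_s^{Q}.tagvec[\overline{X}]$. For the first bullet, I would invoke monotonicity of $M_s.tagvec[\overline{X}]$ in time: a straightforward inspection of the \texttt{Encoding} action (lines \ref{line:updatetag1} and \ref{line:updatetag2}) shows that $M_s.tagvec[\overline{X}]$ is only ever overwritten with a strictly larger tag, so $M_s^{P}.tagvec[\overline{X}] \geq M_s^{Q}.tagvec[\overline{X}] = tagvec[\overline{X}]$. (This monotonicity is the same fact used implicitly in Lemma \ref{lem:listtag} and presumably formalized as a standalone invariant earlier in the proofs.) For the second bullet, at point $Q$ we apply part (b) of Lemma \ref{lem:vc} to obtain $vc_s^{Q} \geq M_s^{Q}.tagvec[\overline{X}].ts = tagvec[\overline{X}].ts$, and then invoke Lemma \ref{lem:vconlyincreases} to conclude $vc_s^{P} \geq vc_s^{Q} \geq tagvec[\overline{X}].ts$.

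The main potential obstacle is making sure the monotonicity of $M_s.tagvec[\overline{X}]$ is rigorously established; this requires verifying that \emph{both} branches of the \texttt{Encoding} action (the in-$\mathcal{X}_s$ branch guarded by line \ref{line:Forloopupdatetag1} and the out-of-$\mathcal{X}_s$ bookkeeping branch guarded by line \ref{line:Forloopupdatetag2}) only ever advance the tag forward, and that no other action overwrites $M_s.tagvec$. In the former branch the new value is $L[X].Highesttagged.tag$, which is strictly greater than the old $M.tagvec[X]$ by the guard; in the latter the new value is $\max(U\cap\overline{U})$ which by the definition of $\overline{U}$ exceeds the old $M.tagvec[X]$. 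Once this invariant is recorded as a short preliminary lemma, the two required inequalities drop out immediately from the chain described above, so the rest of the proof is essentially bookkeeping across the three insertion/modification sites for $ReadL_s$.
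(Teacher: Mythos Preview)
Your proposal is correct and follows essentially the same approach as the paper: identify the insertion point $Q$ where $tagvec = M_s^Q.tagvec$, then use monotonicity of $M_s.tagvec$ (which is exactly the paper's Lemma~\ref{lemma:tagsalwaysincrease}) for the first bullet and Lemma~\ref{lem:vc} for the second. The only cosmetic difference is that the paper applies Lemma~\ref{lem:vc} at point $P$ and chains through the first bullet, whereas you apply it at $Q$ and chain through Lemma~\ref{lem:vconlyincreases}; both routes are equally valid.
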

\begin{proof}
Suppose the entry $(clientid, opid, X, tvec, w_1, w_2, \ldots, w_N)$  is added to $ReadL_s$ at point $Q$, which is before $P$. 
From the algorithm, at point $Q$, the node executed  line \ref{line:addtoreadl} in Algorithm \ref{alg:inputactions_clients} or Line \ref{line:readlentryencoding} in Algorithm \ref{alg:internal_actions}. In either case, we have $tvec=M_s^{Q}.tagvec.$ Consider any object $\overline{X} \in \mathcal{X}_{s}$. From Lemma \ref{lemma:tagsalwaysincrease}, we have $tvec[\overline{X}] = M_s^{Q}.tagvec[\overline{X}] \leq M_s^{P}.tagvec[\overline{X}]$ as desired. Further, this  readily implies that $tvec[\overline{X}].ts \leq M_s^{P}.tagvec[\overline{X}].ts$. From Lemma \ref{lem:vc}, we have $M_s^{P}.tagvec[\overline{X}].ts \leq vc_s^{P}.$ Therefore, we have    $tvec[\overline{X}].ts \leq vc_s^{P}.$ This completes the proof.
\end{proof}

\begin{lemma}
Consider any execution $\beta$ of $CausalEC.$
\begin{enumerate}[(i)]
\item 
If an item $(t,v)$ is in list $L[X],$ then $v$ is the value of the unique write with tag $t$
\item at any point $P$ in the execution, 
$M_s^{P}.val = \Phi_{s}(w_1, w_2, \ldots, w_N)$ where $w_i$ is the value of the unique write with tag $M_s^{P}.tag[O_i]$ if $M_s^{P}.tag[O_i] \neq 0$, and $w_{i} = \vec{0}$ if $M_s^{P}.tag[O_i] = \vec{0}.$
\item When node $s$ sends a $$\langle \texttt{val\_resp\_encoded}, \bar{M}, clientid, opid, X, tagvec \rangle, $$ then 
$$M.val = \Phi_{s}(w_1, w_2, \ldots, w_N)$$ where $w_i$ is the value of the unique write with tag $\bar{M}.tagvec[O_i]$ if $\bar{M}.tag[O_i] \neq 0$, and $w_{i} = \vec{0}$ if $\bar{M}.tagvec[O_i] = \vec{0}.$
 \item 
 Suppose there exists an entry \newline $(clientid, opid, X, tagvec, w_1, w_2, \ldots, w_N)$ in $ReadL_{s}^{P}$. Then $w_{i}$ is either equal to $\bot$, or $\Phi_{i}(v_1, v_2, \ldots, v_K)$ where $v_i$ is the value of the unique write with tag $tagvec[O_i]$.
    \end{enumerate}
    \label{lem:valuesarelegitimate}
 \end{lemma}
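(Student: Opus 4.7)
The plan is to prove all four claims simultaneously by induction on the number of steps taken in the execution $\beta$. The claims are tightly coupled: claim (i) is the ``ground truth'' that values in history lists correspond to actual writes, while (ii), (iii), and (iv) assert that encoded symbols stored in $M$, sent in \texttt{val\_resp\_encoded} messages, and held in \texttt{ReadL} entries are all valid encodings of actual written values. These claims refer to each other across transitions, so a simultaneous induction is unavoidable. Lemma \ref{lem:uniquetag} guarantees that the phrase ``the unique write with tag $t$'' is well-defined whenever such a write exists.

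For the base case, lists are empty, $M.val = \vec{0}$ with all tags $\vec{0}$, and $ReadL$ is empty, so all four claims hold vacuously or by the convention that $\Phi_s(\vec{0},\ldots,\vec{0})=\vec{0}$ (linearity of $\Phi_s$). For the inductive step, I would go through the action types and verify preservation. For claim (i), the list $L[X]$ gains new tuples in exactly four places: (a) line \ref{line:writeaddtolist} when a client issues a write (trivially correct as $v$ is the value of this very write, whose tag is the newly minted $(vc, id)$); (b) line \ref{line:list_applyaction} when an \texttt{app} message is applied (the \texttt{app} message was sent at line \ref{line:writesendapply} by the originating server right after the write was recorded, so the $(t,v)$ pair in the message corresponds to that write); (c) line \ref{line:readaddtolist}, when a locally-initiated read completes decoding, for which correctness depends on (iv) at the immediately preceding point; (d) line \ref{line:valrespaddtolist}, for which correctness depends on the sender of \texttt{val\_resp} having a correct list entry, i.e., on (i) at the sender.

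For claim (ii), the key transitions are the \texttt{Encoding} action lines that update $M.val$. Writing the update symbolically, the new $M.val$ equals $\Gamma_{s,k}(M.val, v_{\text{old}}, v_{\text{new}})$ where $v_{\text{old}}$ is the value with tag $M.tagvec[X]$ pulled from $L[X]$ (legitimate by inductive (i)) and $v_{\text{new}}$ is $L[X].Highesttagged.val$ (legitimate by inductive (i)). By the defining property of the re-encoding function (Definition \ref{def:reencoding}), the output equals $\Phi_s$ evaluated on the vector obtained from the old $\vec{w}$ by swapping the $k$-th coordinate from $v_{\text{old}}$ to $v_{\text{new}}$, which is exactly what the updated tag vector demands. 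Claim (iii) follows by the same mechanism, examining the construction of \texttt{ResponsetoValInq} through the for-loop at line \ref{line:forloop1}: starting from $M$ (valid by (ii)), each iteration applies $\Gamma_{s,\cdot}$ with arguments drawn either from $L[X]$ entries (valid by (i)) or from $\vec{0}$; the tagvec component is updated accordingly, and Definition \ref{def:reencoding} guarantees consistency of the output with the updated tagvec.

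For claim (iv), tuples enter $ReadL$ at lines \ref{line:addtoreadl} and \ref{line:readlentryencoding}, both of which set $w_s = M.val$ (valid by (ii), since $tagvec = M.tagvec$ at the time of insertion) and all other entries to $\bot$. The only further update to a $ReadL$ tuple's codeword slot is at line \ref{line:valrespencoded_addtuple}, which sets $w_j = Modified\_codeword$. Here the induction combines (iii) at the sender with the same $\Gamma$-based argument: starting from $\bar{M}.val$ (valid by inductive (iii) applied to the sender's send action and the FIFO property of channels carrying the value unchanged), the receiver modifies $Modified\_codeword$ using local list entries (valid by (i)) whenever $requestedtags$ differs from $\bar{M}.tagvec$, and Lemmas \ref{lem:error1}, \ref{lem:error2} guarantee that the re-encoding actually completes without hitting the error cases, so that $Modified\_codeword$ ends up equal to $\Phi_j$ evaluated at the values dictated by $requestedtags = tagvec$.

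The main obstacle will be bookkeeping the tightly intertwined invariants across the four claims cleanly, in particular getting the ordering right in case (d) of claim (i) and the case analysis in (iv): the argument must reason about the sender's state at the moment of the send, the receiver's state at the moment of receipt, and must invoke the re-encoding identities exactly. The $\Gamma$-based calculations in (ii), (iii), (iv) are routine given Definition \ref{def:reencoding}, but one must be careful to track which coordinates of the implicit vector $(w_1, \ldots, w_K)$ are changed by each application of $\Gamma_{s,k}$ and to confirm that the tagvec is updated in lockstep. The appeals to Lemmas \ref{lem:error1} and \ref{lem:error2} are essential to rule out the pathological branches in the \texttt{val\_resp\_encoded} handler.
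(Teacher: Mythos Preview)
Your proposal is correct and follows essentially the same approach as the paper: a simultaneous induction on the execution over all four invariants, with case analysis on the transitions that modify $L[X]$, $M$, the outgoing \texttt{val\_resp\_encoded} payload, and $ReadL$. One small simplification you missed: your appeal to Lemmas~\ref{lem:error1} and~\ref{lem:error2} in the (iv) case is unnecessary, since the protocol only executes line~\ref{line:valrespencoded_addtuple} when the error check on line~\ref{line:errorcondition} passes, so if an error flag is ever set the $ReadL$ entry is left untouched and the invariant is preserved trivially---the paper's proof handles this case conditionally (``if $Error1[X]=Error2[X]=0$ \ldots'') rather than by invoking those lemmas.
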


\begin{proof}[Proof Sketch]

The lemma holds because of the following claims.

\begin{claim}
Consider an execution $\beta$ where at point $P$, a \newline $\texttt{val\_resp\_encoded}$ message is received. Suppose in execution $\beta$, at every point before point $P$, all the invariants stated by the lemma holds. Then, the lemma holds at point $P$ as well. 
\label{lem:val_resp_encoded_integrity1}
\end{claim}

\begin{proof}
At point $P$, a $$\langle\texttt{val\_resp\_encoded},\bar{M},clientid, opid,requestedtags, \rangle$$ message is received by server $s$ from server $s'$. If the condition in \ref{line:valrespencoded_readcheck} in Algorithm \ref{alg:input_actions} is not satisfied, then the server state does not change, the lemma continues to hold. Henceforth, we assume that the condition in \ref{line:valrespencoded_readcheck}, that is, there is a tuple $(clientid, opid, \overline{X}, requestedtags, \overline{w})$ in $ReadL_s^{P}.$

We show that if $Error1[X] =Error2[X]=0$ for all $X \in \mathcal{X}_{s'},$ then by the time line \ref{line:errorcondition} is executed, $Modified\_Codeword$ contains a codeword that is equal to $\Phi_{s'}(v_1, v_2, \ldots, v_K)$ where $v_\ell, \ell=1,2,\ldots,K$ represents the value of the unique write whose tag is equal to $requestedtags[X_\ell].$ 
 Before entering the for loop, because theitem (iii) of the lemma statement holds before point $P$, and $Modified\_Codeword$ is initialized to $\overline{M}.val$ in line \ref{line:modifiedcodeword_init}, we know that  
$Modified\_Codeword = \Phi_{s'}(w_1, w_2, \ldots, w_K)$
where $w_{i}$ is the value of the unique write with tag $\overline{M}.tag[X_i].$ Based on Definition \ref{def:ecobjectstored}, it follows that we can write $$Modified\_Codeword = \Phi_{s'}(\overline{w}_1, \overline{w}_2, \ldots, \overline{w}_K)$$ where
$$ \overline{w}_{k} =\left\{\begin{array}{cc} \textrm{the value of the write with tag}requestedtags[X_k] & \textrm{if }X_k \notin \mathcal{X}_{s'} \\ \textrm{the value of the write with tag}\overline{M}.tagvec[X_k]& \textrm{if }X_k \in \mathcal{X}_{s'} \end{array}\right.$$ 

We examine the for loop in Line \ref{line:valrespencoded_objectforloop} iteratively over the object in $\mathcal{X}_{s'}$, and show that the loop ensures that for each $X \in \mathcal{X}_{k},$ the codeword in $Modified\_Codeword$ is modified to reflect the encoding of the value of the write with tag $requestedtags[X].$ 

For an object $X \in \mathcal{X}_{s'}$, if $requestedtags[X] = \overline{M}.tagvec[X],$ then no modification is required and the For loop \ref{line:valrespencoded_objectforloop} proceeds to the next object. If $requestedtags[X] = \overline{M}.tagvec[X],$ since the hypothesis of the claim states that $Error1[X] \neq 1$, we infer from condition in line \ref{line:metaerrorcondition1}  that either $\overline{M}.tagvec[X] = \mathbf{0}$ or line \ref{line:Modifiedcodeword1} is executed. In either case, by the time line \ref{line:metaerrorcondition2} is checked, $Modified\_Codeword$ stores a codeword that corresponds to value $\mathbf{0}$ for object $X$. Because $Error2[X] \neq 1,$ line \ref{line:Modifiedcodeword2} is executed. At the end of this line,   $Modified\_Codeword$ stores a codeword that corresponds to value of the write $requestedtags[X]$ for object $X$.

Therefore,at the point where line \ref{line:errorcondition} is executed, 

$$Modified\_Codeword = \Phi_{s'}(\overline{v}_1, \overline{v}_2, \ldots, \overline{v}_K)$$ where
$$ \overline{v}_{k} = \textrm{the value of the write with tag}requestedtags[O_k]$$ 

This implies that if $ReadL_s^{P}$ or if $L_s^{P}[X]$ are modified in line \ref{line:valrespencoded_addtuple} or line \ref{line:readaddtolist} are consistent with the (i),(iv) stated in the lemma. 
\end{proof}

\begin{claim}
Consider an execution $\beta$ where at point $P$, a \newline $\texttt{val\_resp\_encoded}$ is sent by server $s$. Suppose in execution $\beta$, at every point before $P$, all the invariants stated by the lemma holds. Then, the lemma holds at point $P$ as well. 
\label{lem:val_resp_encoded_integrity2}
\end{claim}

\begin{proof}
Note that point $P$ is the point of receipt of a $\texttt{val\_inq}$ message. From Line \ref{line:Responsetovalinqinit} in Algorithm \ref{alg:input_actions}, we know that \newline $ResponsetoValInq$ is initialized to $M.val$. Because invariant (ii) of the lemma is true by point $P$, we know that after executing line \ref{line:Responsetovalinqinit}, we have 
 $ResponsetoValInq.val = \Phi_{s'}({w}_1, w_2, \ldots, w_K)$ where
$w_{k}, k=1,2,\ldots,K$ is the  value of the write with tag $M_s^{P}.tagvec[X_k]$. Also note that $ResponsetoValInq.tag$ is equal to $M_s[X].tag$. Since invariant (i) stated by the lemma is true by point $P$, and the action at point $P$ does not modify $L_s^{P}[X]$, the invariant is true at point $P$ as well. Specifically, any value $(t,v)$ in $L_s^P[X]$ has the property that $v$ is the value of the unique write with tag $t$. Therefore, after executing   \ref{line:ResponsetoValInqvalupdate1}, \ref{line:ResponsetoValInqtagupdate1}, or after executing \ref{line:ResponsetoValInqvalupdate2}, \ref{line:ResponsetoValInqtagupdate2}, the following invariant is true:

$ResponsetoValInq.val = \Phi_{s'}(\overline{w}_1, \overline{w}_2, \ldots, \overline{w}_K)$ 
where $\overline{w}_{k}, k=1,2,\ldots,K$ is the  value of the write with tag $ResponsetoValInq.tagvec[X_k].$  

Therefore $(iii)$ is true at point $P.$ Furthermore, the action at point $P$ does not modify $M_s^{P}$ or $ReadL_{s}^{P}.$ Since these variables satisfy Lemma invariants (ii),(iv)  before point $P$  they satisfy (ii) and (iv) at point $P$ as well.  
\end{proof}

The lemma can then be proved by induction on the sequence of states in $\beta.$ The lemma is clearly true at the initial point of $\beta.$ Assume that the lemma is true point $P$ Of $\beta.$ We can show that the lemma is true at point $P'$ which immediately succeeds $P$. We omit the mechanical details for the sake of brevity, and only provide a sketch here. Informally, at point $P'$,  $L[X]$ satisfies (i) because it is either updated by writes, or $\texttt{Apply\_Inqueue}$ actions, or receipt of $\texttt{val\_resp}$ or $\texttt{val\_resp\_encoded}$ messages. Because $(i),(ii),(iii),(iv)$ are satisfied at point $P'$, the only non-trivial case is to show (i) after the receipt of $\texttt{val\_resp\_encoded}$ message, which is shown in Claim \ref{lem:val_resp_encoded_integrity1}. 

Because $(i), (ii)$ is satisfied at $P'$, and because $M.tagvec$ and $M.val$ are only updated as a part of $\texttt{Encoding}$ action, the result of the actions satisfy $(ii)$ at point $P$ as well. Claim \ref{lem:val_resp_encoded_integrity2} shows that (iii) is satisfied. $ReadL$ is updated only on receipt of a read, or receipt of $\texttt{val\_resp\_encoded}$ message, or on behalf of an encoding operation. Because of Because $(ii), (iii)$  are satisfied at $P'$, and because of  Claim \ref{lem:val_resp_encoded_integrity1}, $ReadL$ satisfies $(iv).$ as well.

\subsection{Proof of Lemma \ref{lem:causal1}}
\begin{proof}
To show that $\leadsto$ is an irreflexive partial order, it suffices to consider any execution $\beta$ and show that:
\begin{enumerate}[(I)]
    \item  For any two distinct operations $\pi_1,\pi_2$ in $\beta$,  $\pi_1\leadsto \pi_2 \Rightarrow \pi_2 \not \leadsto \pi_1$.
    \item $\pi_1 \leadsto \pi_2, \pi_2 \leadsto \pi_3 \Rightarrow \pi_1 \leadsto \pi_3$. 
\end{enumerate}

\underline{Proof of {(I)}}
Suppose for a contradiction that $\pi_1 \leadsto \pi_2$, and $\pi_2  \leadsto \pi_1$.
By Definition \ref{def:ourcausalordering}, $\pi_1,\pi_2$ both acquire timestamps, and  $ts(\pi_1) \leq ts(\pi_2)$, and $ts(\pi_2) \leq ts(\pi_1)$. This implies that $ts(\pi_1)=ts(\pi_2)$. 

\begin{claim}
$\pi_1,\pi_2$ are both read operations.
\end{claim}
\begin{proof}[Proof of Claim]
Because of Lemma \ref{lem:uniquetag1} which states that distinct write operations have different timestamps, $\pi_1$ and $\pi_2$ cannot both be write operations. We show that even one of them cannot be a write operation. Suppose as a contradiction, $\pi_1$ is a write and $\pi_2$ is a read. From the definition of $\leadsto$ in Definition \ref{def:ourcausalordering}, we cannot have $\pi_2 \leadsto \pi_1.$ Therefore $\pi_1$ is not a write. By a symmetric argument, $\pi_2$ is also not a write.
\end{proof}

Because $ts(\pi_1)=ts(\pi_2)$ and $\pi_1 \leadsto \pi_2,$ Definition \ref{def:ourcausalordering} implies that $\pi_1$ and $\pi_2$ happen at a same client and $\pi_1 \leadsto \pi_2$. Because $ts(\pi_1)=ts(\pi_2)$ and $\pi_2 \leadsto \pi_1,$ we have $\pi_2 \leadsto \pi_1.$ However, this is a contradiction to our earlier conclusion that $\pi_1 \leadsto \pi_2;$ this completes the proof.

\underline{Proof of {(II)}}
Consider three operations $\pi_1, \pi_2, \pi_3$ such that $\pi_1 \leadsto \pi_2, \pi_2 \leadsto \pi_3.$ This implies that $\pi_1,\pi_2$ acquire timestamps in $\beta$. If $\pi_3$ does not acquire a timestamp in $\beta$, then Definition \ref{def:ourcausalordering} implies that $\pi_1 \leadsto \pi_3$ to complete the proof.
We now assume that $\pi_3$ acquires a timestamp in $\beta.$

From Definition \ref{def:ourcausalordering}, we have $ts(\pi_1) \leq ts(\pi_2) \leq ts(\pi_3).$ If at least one of these inequalities is strict, we readily have $ts(\pi_1)< ts(\pi_3) \Rightarrow \pi_1 \leadsto \pi_3,$ which completes the proof. 

We now consider the case where both these inequalities are not strict, that is, $ts(\pi_1) = ts(\pi_2)=ts(\pi_3).$ The fact that $\pi_1 \leadsto \pi_3 \leadsto \pi_3$  implies that $\pi_2, \pi_3$ are both read operations issued at the same client with $\pi_2 \rightarrow \pi_3$. The fact that $\pi_1 \leadsto \pi_2$ with $ts(\pi_1) = ts(\pi_2)$ implies that, $\pi_1$ is a write operation or $\pi_1 \rightarrow \pi_2;$ in either case, we show that $\pi_1 \leadsto \pi_3$ to complete the proof.
Specifically, if $\pi_1$ is a write operation, then combined with the fact that $ts(\pi_1) = ts(\pi_3),$ Definition \ref{def:ourcausalordering} implies that $\pi_1 \leadsto \pi_3$. On the other hand, if $\pi_1 \rightarrow \pi_2,$ because we have already shown that $\pi_2 \rightarrow \pi_3,$ we have $\pi_1 \rightarrow \pi_3.$ Definition \ref{def:ourcausalordering} implies that $\pi_1 \leadsto \pi_3$ to complete the proof.

\end{proof}

\subsection{Proof of Lemma \ref{lem:causal2}}

\begin{proof}
Let $\pi$ be a read operation that terminates in $\beta$. Without loss of generality, assume that $\pi$ is issued by client $c$ in $\mathcal{C}_{s}$ for some server $s.$
Based on the protocol $\pi$ returns can return because at a point $P$ in $\beta$, server $s$ sends  a $\langle \texttt{read-return}, opid, v\rangle$ to the client. This message is sent on executing one of the following lines:
\begin{enumerate}
\item Line \ref{line:sendtoread1} in Algorithm \ref{alg:inputactions_clients}.
    \item Line \ref{line:readimmediatereturn} in Algorithm \ref{alg:inputactions_clients}, or
    \item Line \ref{line:read-send-val_resp_encoded} in Algorithm \ref{alg:input_actions}, or 
    \item Line \ref{line:valresp_handle} in Algorithm \ref{alg:input_actions}, or 
    \item Line \ref{line:readreturn_apply} in Algorithm \ref{alg:internal_actions}.
\end{enumerate}

We show that the lemma holds for each of these cases. Note that $ts(\pi) = vc_s^{P}.$

\underline{1. Line  \ref{line:sendtoread1} in Algorithm \ref{alg:inputactions_clients}.} 
In this case point $P$ is the point of receipt of a $\langle \texttt{write}, opid, X, v\rangle$ message from a write operation $\phi$. By Definition \ref{def:ts_tag}, we have $ts(\phi) = vs_{s}^{P}.$
        Therefore $\pi$ returns the value $v$ of a write operation $\phi$ with $ts(\phi) = ts(\pi).$

\underline{2. Line  \ref{line:readimmediatereturn} in Algorithm \ref{alg:inputactions_clients}.} 

In this case, the value $v$ is one that forms a tuple $(t,v)$ in list $L[X].$ The value $v$ is the value of the unique write operation $\phi$ with timestamp $ts(\phi) = t.ts.$ Furthermore, from Lemma \ref{lem:listtag}, we have $t.ts \leq vs_s^{P}.$ Therefore $\pi$ returns the value of a write operation $\phi$ with $ts(\phi) \leq ts(\pi).$

\underline{3. Line  \ref{line:read-send-val_resp_encoded} in Algorithm \ref{alg:input_actions}} 
Note that at the point of receipt of the $\texttt{val\_resp\_encoded}$ with operation id $opid$ and tag vector \newline $requestedtags,$ the fact that Line \ref{line:read-send-val_resp_encoded} was executed implies there exists an entry in $ReadL$ with operation id $opid$ and tag vector $requestedtags.$
From Lemma \ref{lem:valuesarelegitimate}, the input to the function $\chi_{T_O}$ on Line \ref{line:valresp_handle_decoding} is a set whose elements are of the form $(\overline{i},{v}_{\overline{i}})$ where ${v}_{i} = \Phi_i(w_1, w_2, \ldots, w_K)$ where, for $\ell \in \{1,2,\ldots,K\},$ we have $w_\ell$ is the value of the unique write with tag $requestedtags[X_\ell].$ Therefore, the output of the decoding function, which is returned to the read, is the value $w$ of the value of the unique write $\phi$ with tag $requestedtags[X].$ Since, at this point, there exists a tuple in $ReadL$ with tag vector $requestedtags,$ Lemma \ref{lem:readLtagsmallerthanvc}  implies that $vc_{s}^{P} \geq requestedtags[X].ts.$  Therefore, the read returns the value of a write $\phi$ with $ts(\phi) \leq ts(\pi)$. 

\underline{4. Line  \ref{line:valresp_handle} in Algorithm \ref{alg:input_actions}} 
This is the point of a $$\langle \texttt{val\_resp}, X, v, clientid, opid, tvec\rangle.$$ From the condition on Line \ref{line:valresp_handle_condition}, we know that there exists ane entry in $ReadL$ with operation id $opid$ and tag vector $tvec$ at point $P$. Furthermore, from Lemma \ref{lem:valuesarelegitimate}, we infer that $v$ is the value of the unique write $\phi$ with tag $tvec[X].$  From Lemma \ref{lem:readLtagsmallerthanvc}, we conclude that $tvec[X].ts \leq vc_s^{P} = ts(\pi).$ Therefore, the read returns the value of a write $\phi$ with $ts(\phi) \leq ts(\pi)$. 

\underline{5. Line  \ref{line:readreturn_apply} in Algorithm \ref{alg:internal_actions}} 

The proof is similar to Case $2.$ The value $v$ is one that forms a tuple $(t,v)$ in list $L[X].$ This is because the tuple is added to list $L[X]$ in the same action in Line \ref{line:list_applyaction}.  The value $v$ is the value of the unique write operation $\phi$ with timestamp $ts(\phi) = t.ts.$ Furthermore, from Lemma \ref{lem:listtag}, we have $t.ts \leq vs_s^{P}.$ Therefore $\pi$ returns the value of a write operation $\phi$ with $ts(\phi) \leq ts(\pi).$

\end{proof}

\subsection{Proof of Theorem \ref{thm:causallyconsistent}.}
Lemma \ref{lem:causal1} implies that $\leadsto$ is a partial order on all operations in an execution $\beta$ of \CausalEC. We show that the  orderings $\leadsto, \prec$ of Definition \ref{def:ourcausalordering} satisfy properties {(a)}, {(b)} and {(c)} in Definition \ref{def:original_causal}. 

We first show property (a), that is: $\pi_1 \rightarrow \pi_2 \Rightarrow \pi_1 \leadsto \pi_2.$

\underline{Proof of (a) in Definition \ref{def:original_causal}}
If $\pi_1 \rightarrow \pi_2,$ then they are issued from the same client $c$. Note that $\pi_1$ terminates in $\beta$ because $\pi_2$ cannot be issued by $c$ before $\pi_1$ terminates. Therefore $\pi_1$ acquires a timestamp. If $\pi_2$ does not acquire a timestamp, then $(d)$ in Definition \ref{def:ourcausalordering} implies that $\pi_1 \leadsto \pi_2.$ We now handle the case where $\pi_2$ acquires a timestamp. 

Note that these operations send their message to the same server $s$ where $c \in \mathcal{C}_{s}.$ Since $\pi_1$ terminates before $\pi_2$ begins, the server sends a response to $\pi_1$ at a point $P$  before the point $Q$ where it responds to $\pi_2.$ Because of Lemma \ref{lem:vconlyincreases}, $vc_{s}^{Q} \geq vc_s^{P}.$ Further, since $vc_{s}^{P} = ts(\pi_1)$ and $vc_s^{P}=ts(\pi_2),$ we have $ts(\pi_1) \leq ts(\pi_2).$ If the inequality is strict, then we readily have $\pi_1 \leadsto \pi_2$ from Definition \ref{def:ourcausalordering}. 

We examine the case that $ts(\pi_1)=ts(\pi_2).$ We claim that if this is the case, then $\pi_2$ is a read. As a contradiction, if $\pi_2$ is a write, based on the server protocol in Alg. \ref{alg:inputactions_clients}, server $s$ increments the vector clock at point $Q$ before responding to $\pi_2$. Therefore, we have $vc_{s}^{Q} > vc_{s}^{P}$ which implies $ts(\pi_1) < ts(\pi_2)$ contradicting the equality of the timestamps. Therefore $\pi_2$ is a read operation, and $\pi_1$ can be a read or a write operation. In either case, (b) and (c) in Definition \ref{def:ourcausalordering}, imply that $\pi_1 \leadsto \pi_2.$

\underline{Proof of (b) in Definition \ref{def:original_causal}}
The proof follows from the definition of $\leadsto $ and $\prec$. Specifically, if $\pi_1,\pi_2$ are writes and $\pi_1 \prec \pi_2,$ then $tag(\pi_1) < tag(\pi_2).$ Definition \ref{def:ourcausalordering} implies $\pi_1 \leadsto \pi_2.$

\underline{Proof of (c) in Definition \ref{def:original_causal}:}
Consider any read operation $\pi_2$ on object $X$ at server $s$ that completes with value $v$. Let $S$ be the set of all write operations $w$ such that $w \leadsto \pi_2.$ Because of Lemma \ref{lem:causal2}, we know that $S$ is non-empty. Because $\pi_2$ completes in $\beta,$ it acquires a timestamp $ts(\pi_2)$. Because every write operation acquires a unique timestamp, and every timestamp is a vector of non-negative entries, $S$ is finite.  Let $\pi_1$ be the write with the highest tag in $S$ to object $X$. Note that $\pi_1$ is unique and well-defined because of Lemma \ref{lem:uniquetag1} combined with the fact that $S$ is non-empty and finite. We aim to show that $\pi_2$ returns the value of write $\pi_1$. %We claim $\pi_1$ satisfies the necessary attributes listed in Definition \ref{def:original_causal}. 

Because of Lemma \ref{lem:readLtagsmallerthanvc}, $\pi_2$ returns the value of some write $\pi$ in $S$ with $ts(\pi) \leq ts(\pi_2).$ Therefore $\pi \leadsto \pi_2$ and consequently $\pi \in S.$
 Lemma \ref{lem:uniquetag1} implies that $tag(\pi) < tag(\pi_1)$. Consequently, $ts(\pi).ts[\overline{s}] < ts(\pi_1).ts[\overline{s}] \leq ts(\pi_2).ts[\overline{s}],$ where $\pi_1$ is a write issued to server $\overline{s}$ from some client in $\mathcal{C}_{\overline{s}}.$  From Definition \ref{def:ts_tag}, $vc_{s}^{P}[\overline{s}] \geq ts(\pi_1).ts[\overline{s}]$, where $P$ is the point of read-return.

We consider two cases (i) $\overline{s}=s$ and (ii) $\overline{s} \neq s$. 
In case (i), from server protocol, read $\pi_2$ returns after server $s$ returns an acknowledgement to write $\pi_1$. From Lemma \ref{lem:returntagofread}, we conclude that $\pi_2$ returns the value  of a write whose tag that is at least as large as $tag(\pi_1).$ Consequently, it could not have return the value of $\pi.$

In case (ii), note from server protocol that $vc_{s}[\overline{s}]$ is only incremented on receiving $\texttt{Apply\_Inqueue}$ messages from server $\overline{s}.$ In particular, it reaches $vc_{s}^{P} [\overline{s}] \geq ts(\pi_1)[\overline{s}]$ implies that server $s's$ $\texttt{Apply\_Inqueue}$ action empties the $(\overline{s},X,v,tag(\pi_1)$ message from $\pi_1$ before returning read $\pi_2$. From Algorithm \ref{alg:internal_actions}, we infer that $(tag(\pi),v)$ is in $L_s[X]$ before point $P$.  Lemma \ref{lem:returntagofread} implies that $\pi_2$ returns the value  of a write whose tag that is at least as large as $tag(\pi_1).$ Consequently, it could not have return the value of $\pi.$

%Because $\pi_1$ is the write operation with the largest timestamp among the write operations in $S$, and because $ts(\pi_3) > ts(\pi_1)$. we infer that $\pi_3$ does not belong to $S$. Because $\pi_3$ does not belong to $S$, we conclude that $\pi_3$ either does not have value $v$ or is not performed on object $X$.

\remove{

\underline{Proof of $(i)$}
$(i)$ holds at point $P$. For a server $s$, if no new element is added to the list at point $P'$, then $(i)$ holds at $P'$ as well.  We consider the case where an element of the form $(t,v)$ is added to a list $L_s[X]$ for an object $X$ at point $P'$.
From the protocol, an item is added to list $L[X]$ on four instances: (a) a write to object $X$ in Line \ref{line:writeaddtolist} in Algorithm \ref{alg:inputactions_clients}, (b) an $\texttt{Apply\_Inqueue}$ action to object $X$ in Line \ref{line:list_applyaction} in Algorithm \ref{alg:internal_actions}, (c), an decoding action in Line \ref{line:readaddtolist} in Alg. \ref{alg:input_actions}, or (d) on receiving a $\texttt{val\_resp}$ message in Line \ref{line:valrespaddtolist}.  The lemma is true if line (a) is executed because $t$ is the tag of the unique write that sent the $\texttt{write}$ message, and $v$ is the value of the write.  The lemma is true if line (b) is executed on receipt of an $\langle \texttt{app},X,v,t\rangle$ from node $s'$, and $(t,v)$ is added to $L_s[X]$. This is because the $\langle \texttt{app},X,v,t\rangle$ is sent by $s'$ on executing Line \ref{line:writesendapply}, and the code implies that $v$ is the value of the write with tag $t$. The lemma is also true in case $(d),$ because the $\langle \texttt{val\_resp}, clientid, opid, \overline{X}, v, \overline{tvec}angle$ message sent by server $s'$ in Line \ref{line:val_resp_send} contains value $v$ where $(v,\overline{tvec}[X]) \in L_{s'}^Q[X],$ at some point $Q$ before $P$, and therefore, $v$ is the value of the write operation with tag $\overline{tvec}[X].$ Thus the property holds at point $P$ as well.

We now consider case $(c),$ which is the receipt of $\langle \texttt{val\_resp\_encoded}, \overline{M},clientid, opid, \overline{X}, tvec \langle$ message at point $P'$. Note that by point $P$, statement $(iv)$ of the lemma is true. Therefore, for, for $\overline{i} \neq s,$ a tuple $(\overline{i},v)$ that is input to $\psi_{T_O}$ on line \ref{line:valresp_handle_decode}, has $v = \Phi_{\overline{i}}(w_1, w_2, \ldots, w_K)$ where $w_{\ell}$ is the value of the write with tag $tvec[X_\ell].$ We argue that a tuple $(\overline{s},v)$ that is input to \ref{line:valresp_handle_decode} also has the property that $v = \Phi_{\overline{i}}(w_1, w_2, \ldots, w_K)$ where $w_{\ell}$ is the value of the write with tag $tvec[X_\ell].$

Therefore, the decoding in
\ref{line:valresp_handle_condition} outputs $v$, where $v$ is the value of the unique writes of tag $requestedtags[X],$ where $X$ is the object of operation $opid.$ In particular, the value $v$ is added to an internal read in the 
{\color{red} TBD}}

\end{proof}

\section{Liveness: Proof of Theorems \ref{lem:write_terminates} and \ref{lem:read_terminates_f=0}}
\label{app:liveness}

We begin with a formal proof of termination of write operations.

\begin{proof}[Proof of Theorem \ref{lem:write_terminates}]
The client $c \in \mathcal{C}_{s}$ sends a $\langle \texttt{write},X,v\rangle $ message to node $s$.  On receipt of this $\texttt{write},X,v$ from client $c$, node $s$ sends $ack$ back to client $c$ during this input action transition (Line \ref{line:writeresponse} in Algorithm \ref{alg:inputactions_clients}). Since node $s$ is a non-halting node, based on its protocol, the write operation eventually gets a corresponding $ack$ from node $s$. So every write operation terminates.
\end{proof}

The main goal of this section is to prove Theorem \ref{lem:read_terminates_f=0}. The proof relies on Lemmas \ref{lem:error1},\ref{lem:error2} and \ref{lem:livenessvalresp} stated next.

 \begin{lemma}
  At any point $L$ of an execution $\beta,$ for any server $s$ and object $X \in \mathcal{X}$:
$Error1_{s}^{L}[X] = 0$
 \label{lem:error1}
 \end{lemma}

  \begin{lemma}
  At any point $L$ of an execution $\beta,$ for any server $s$ and object $X \in \mathcal{X}$,
$Error2_s^{L}[X] = 0$
 \label{lem:error2}
 \end{lemma}

\begin{lemma}
Let $\beta$ be an execution of \CausalEC~ and $s$ be a server. Consider any point $P$ such that there exists a tuple  $(clientid, opid, X, tags,\bar{w})$ in $ReadL_s$. Suppose that node $s$ receives  $\langle val\_resp\_encoded, clientid,opid, X, tags, \overline{v}\rangle$ message from node $j$ at some point $Q$ before point $P$, we have: $ \Pi_j(\bar{w}) \neq \bot$. 
\label{lem:livenessvalresp}
\end{lemma}

 We begin with some preliminary lemmas in Section \ref{sec:live_prelim}. We then prove 
 Lemmas \ref{lem:error1}, \ref{lem:error2},\ref{lem:livenessvalresp} in the subsequent subsections, and then prove Theorem \ref{lem:read_terminates_f=0}.

\subsection{Preliminary Lemmas}
\label{sec:live_prelim}
\begin{lemma}
Let $P,Q$ be two points in an execution $\beta$ of $CausalEC$ such that $P$ comes after $Q.$ Then for any node $s,$ for any object $X$, $M_s^{P}.tagvec[X]\geq M_s^{Q}.tagvec[X].$ 
\label{lemma:tagsalwaysincrease}
\end{lemma}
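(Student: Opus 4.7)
The plan is to reduce to the single-step case and then enumerate the places in the protocol that can modify $M_s.tagvec[X]$. Specifically, by induction on the number of actions between $Q$ and $P$, it suffices to consider the case where $P$ is the point immediately after a single action taken at $Q$ by some component of the system. If the action is not one taken by server $s$ (e.g., an action at another server, a client action, or a channel action), then $M_s.tagvec$ is unchanged and the conclusion holds trivially. If the action is a receive action at $s$ (any of the handlers in Algorithm~\ref{alg:input_actions}) or the internal \texttt{Apply\_Inqueue} or \texttt{Garbage\_Collection} actions at $s$, then inspection of the code shows that none of these modify $M_s$; only the variable $ResponsetoValInq$ or $Modified\_codeword$ (which are local scratch variables, not $M_s$ itself) is manipulated during $\texttt{val\_inq}$ and $\texttt{val\_resp\_encoded}$ processing, so again $M_s.tagvec[X]$ is preserved.

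The only nontrivial case is therefore an \texttt{Encoding} internal action at $s$, where $M_s.tagvec[X]$ can be overwritten on line~\ref{line:updatetag1} (if $X\in\mathcal{X}_s$) or line~\ref{line:updatetag2} (if $X\notin\mathcal{X}_s$). In the first case, the surrounding guard on line~\ref{line:Forloopupdatetag1} requires $L_s[X].Highesttagged.tag > M_s.tagvec[X]^{Q}$, and line~\ref{line:updatetag1} assigns $M_s.tagvec[X]\leftarrow L_s[X].Highesttagged.tag$, so $M_s^{P}.tagvec[X] > M_s^{Q}.tagvec[X]$. In the second case, the outer guard on line~\ref{line:Forloopupdatetag2} together with the definition of $\overline{U}$ on line~\ref{line:Delete2_Ubarset} ensures every element of $\overline{U}$ exceeds $M_s^{Q}.tagvec[X]$; the $\texttt{if}$ on line~\ref{line:Delete2_condition} guarantees $U\cap\overline{U}\neq\emptyset$ before the update, and line~\ref{line:updatetag2} sets $M_s.tagvec[X]\leftarrow \max(U\cap\overline{U})$, which is again strictly larger than the previous value.

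Combining the cases, each single action either leaves $M_s.tagvec[X]$ unchanged or strictly increases it in the partial order on tags, so by induction on the number of steps separating $Q$ from $P$ we obtain $M_s^{P}.tagvec[X]\geq M_s^{Q}.tagvec[X]$. I do not anticipate a real obstacle: the argument is a routine monotonicity claim justified by direct code inspection. The only care required is to be exhaustive about message handlers — in particular, to confirm that $\texttt{val\_inq}$ handling writes to $ResponsetoValInq$ rather than to $M_s$, and that $\texttt{val\_resp\_encoded}$ handling writes to $Modified\_codeword$ and $ReadL_s$ rather than to $M_s$ — so that no hidden modification of $M_s.tagvec[X]$ is missed.
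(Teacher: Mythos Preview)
Your proposal is correct and follows essentially the same approach as the paper: both observe that only the \texttt{Encoding} action can modify $M_s.tagvec[X]$, and both verify monotonicity in the two cases $X\in\mathcal{X}_s$ and $X\notin\mathcal{X}_s$ by appealing to the guards on lines~\ref{line:Forloopupdatetag1} and~\ref{line:Forloopupdatetag2} together with the definition of $\overline{U}$. Your write-up is in fact slightly more careful than the paper's in explicitly ruling out the other handlers (noting that $ResponsetoValInq$ and $Modified\_codeword$ are scratch variables, not $M_s$), whereas the paper simply asserts that only \texttt{Encoding} actions change $M_s.tagvec[X]$.
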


\begin{proof}
From the protocol code in Algorithms \ref{alg:inputactions_clients} and \ref{alg:internal_actions}, we note that at any server $s$, the only actions that can change $M_s.tagvec[X]$ are $\texttt{Encoding}$ actions. To show the lemma, it therefore suffices to show the following claim:
\begin{claim}
Consider two consecutive points $P,Q$ of the execution, where $P$ is a point preceding an  $\texttt{Encoding}$ action, and $Q$ is the point of the  $\texttt{Encoding}$ action. Then $M_s^{Q}.tagvec[X] \geq M_s^{P}.tagvec[X].$
\end{claim}

If $X \in \mathcal{X}_{s},$ then line \ref{line:updatetag1} of Algorithm \ref{alg:internal_actions} is the only assignment that can sets $M_s^{Q}.tagvec[X] = L_s^{Q}[X].Highesttagged.tag.$  To execute line \ref{line:updatetag1}, we note that the condition in the For loop in line \ref{line:Forloopupdatetag1} needs to be satisfied, and this condition dictates that \newline $L_s^{Q}[X].Highesttagged.tag >M_s^{P}.tagvec[X]$.  Therefore, $M_s^{Q}.tagvec[X] >M_s^{P}.tagvec[X]$. 

If $X \notin \mathcal{X}_{s},$ then the argument is similar to the case of $X \in \mathcal{X}_{s},$ with the  For loop condition in \ref{line:Forloopupdatetag2} coupled with the construction of set $\overline{U}$ in line \ref{line:Delete2_Ubarset} ensuring that 
the assignment of line \ref{line:updatetag2} results in $M_s^{Q}.tagvec[X] >M_s^{P}.tagvec[X]$. 
\end{proof}

 \begin{lemma}
Let $P$ be a point in an execution $\beta$ of $CausalEC$. For any node $s$, either $M_s^{P}.tagvec[X] = \vec{0},$ or there is some point $P'$ which is no later than (and possibly equal to) $P$ in $\beta$ such that a tuple of the form $(M_s^{P}.tagvec[X],val)$ in $L[X]$ at point $P',$ that is, in $L_s^{P'}[X].$
\label{lem:pointsexec}
 \end{lemma}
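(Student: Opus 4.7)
The plan is to argue by tracing backwards through the execution to find the most recent point at which $M_s.tagvec[X]$ was assigned its current value, and then inspect the code of that transition to exhibit the desired element in the list.

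First, I would observe that the only transitions that can modify $M_s.tagvec[X]$ are the two branches of the $\texttt{Encoding}$ internal action, specifically line \ref{line:updatetag1} (the case $X \in \mathcal{X}_s$) and line \ref{line:updatetag2} (the case $X \notin \mathcal{X}_s$). This can be verified by a direct scan of Algorithms \ref{alg:inputactions_clients}, \ref{alg:input_actions}, \ref{alg:internal_actions}: no other line writes to $M.tagvec[X]$. Since the initial value of $M_s.tagvec[X]$ is $\vec{0}$, if no such $\texttt{Encoding}$ action has ever updated $M_s.tagvec[X]$ by point $P$, then $M_s^{P}.tagvec[X] = \vec{0}$, and we are done. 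Otherwise, let $P'$ be the point (no later than $P$) at which the last such update occurred, and let the new value assigned be $t := M_s^{P'}.tagvec[X]$; by Lemma \ref{lemma:tagsalwaysincrease} and the choice of $P'$ as the most recent update, we have $M_s^{P}.tagvec[X] = t$ as well.

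Next, I would split on the two branches of the $\texttt{Encoding}$ action. In the branch $X \in \mathcal{X}_s$, line \ref{line:updatetag1} assigns $t \leftarrow L_s[X].Highesttagged.tag$; since the $\texttt{Encoding}$ action does not remove elements from $L[X]$, the tuple $(L_s[X].Highesttagged.tag, L_s[X].Highesttagged.val)$ is present in $L_s^{P'}[X]$ at point $P'$, giving the required $(t,val)$. In the branch $X \notin \mathcal{X}_s$, line \ref{line:updatetag2} assigns $t \leftarrow \max(U \cap \overline{U})$; by the definition of $\overline{U}$ in line \ref{line:Delete2_Ubarset}, every tag in $\overline{U}$ has a corresponding value tuple in $L_s^{P'}[X]$, so in particular $(\max(U \cap \overline{U}), val) \in L_s^{P'}[X]$ for some $val$. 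In either case, $P'$ is the required point.

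I do not anticipate any substantive obstacle: the argument is essentially a case analysis of the single write-site of $M.tagvec[X]$, combined with Lemma \ref{lemma:tagsalwaysincrease} to rule out later modifications. The only care needed is to ensure that the chosen $P'$ is the \emph{post-state} of the $\texttt{Encoding}$ action (so that $M_s^{P'}.tagvec[X]$ already reflects the updated tag), and that the tuple witnessing membership in $L[X]$ is present at that same post-state; both follow directly because the $\texttt{Encoding}$ action does not delete from $L[X]$.
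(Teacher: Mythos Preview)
Your proposal is correct and follows essentially the same approach as the paper: both identify the (unique) $\texttt{Encoding}$ action at which $M_s.tagvec[X]$ was set to its current value, then case-split on whether $X \in \mathcal{X}_s$ (line \ref{line:updatetag1}) or $X \notin \mathcal{X}_s$ (line \ref{line:updatetag2}) to exhibit the required tuple in $L_s[X]$ via the precondition of that line. The only cosmetic difference is that the paper phrases $P'$ as ``the first point where $M_s^{P'}.tagvec[X] = M_s^{P}.tagvec[X]$'' whereas you phrase it as ``the last update point,'' but these coincide given Lemma \ref{lemma:tagsalwaysincrease}.
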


 \begin{proof}
 Let $P'$ be the first point where $M_s^{P'}.tagvec[X] = M_s^{P}.tagvec[X].$ Note that $P'$ is no later than $P$ in $\beta.$ Since, $M_s.tagvec[X]$ can only be changed by an $\texttt{Encoding}$ internal action, we infer that an $\texttt{Encoding}$ was performed at $P.$ Furthermore, as a part of the action, line \ref{line:updatetag1} or line \ref{line:updatetag2} in Algorithm \ref{alg:internal_actions} was executed at $P'.$ If $X \in \mathcal{X},$ from the protocol in line \ref{line:updatetag1}, we conclude that $L_s^{P'}[X].Highesttagged.tag = M_s^{P'}.tagvec[X].$ If $X \notin \mathcal{X}$, from the construction of set $\overline{U}$ in line \ref{line:Delete2_Ubarset}, we conclude that line \ref{line:updatetag2} updates the tag only if there exists $val'$ that satisfies $(M_s^{P'}.tagvec[X], val') \in L_{s}^{P'}[X]$. 
 Furthermore, from the protocol, every element in $L[X]$ is of the form $(tag,val)$ with $val$ taking the value of the unique write whose tag is $tag$. Therefore, there exists a tuple of the form $(M_s^{P}.tagvec[X],val)$ at point $P'$ in $L[X]$ at server $s.$
 \end{proof}

 \begin{lemma}
Let $P$ be a point in an execution $\beta$ of $CausalEC$. For any node $s$, for an object $X$, let $t$ be the highest tag such that node $s$ has  sent a message of the form $\langle \texttt{del},X,t\rangle$ or added a tuple of the form $(t,s)$ to $DelL_s[X]$ before point $P$. Then $t \leq M_s^{P}.tagvec[X]$. 
\label{lem:tag-delete-ordering}
 \end{lemma}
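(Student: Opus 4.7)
The plan is to prove the lemma by induction on the sequence of points in $\beta$ at which node $s$ either sends a $\langle \texttt{del}, X, t'\rangle$ message or adds a tuple of the form $(t',s)$ to $DelL_s[X]$. Inspection of Algorithm \ref{alg:internal_actions} reveals that these are exactly three kinds of events: (i) the Encoding action at lines \ref{line:Delete_send1} and \ref{line:Delete_local1} (the $X \in \mathcal{X}_s$ branch); (ii) the Encoding action at lines \ref{line:Delete_send2} and \ref{line:Delete_local2} (the $X \notin \mathcal{X}_s$ branch); and (iii) the Garbage Collection action at line \ref{line:Delete_send3} (only fires when $X \in \mathcal{X}_s$). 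Combined with Lemma \ref{lemma:tagsalwaysincrease}, which says $M_s.tagvec[X]$ is non-decreasing over $\beta$, it suffices to show that at the very point when such an event produces tag $t'$, we have $t' \leq M_s.tagvec[X]$ immediately after the event; the lemma then follows for all later points $P$.

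For cases (i) and (ii), I would simply observe that each of these events is embedded in an Encoding action that, in the same atomic transition, executes the corresponding assignment on line \ref{line:updatetag1} or line \ref{line:updatetag2}. In case (i) the tag used in the delete message and in the $DelL$ entry is $L[X].Highesttagged.tag$, which is exactly the value assigned to $M.tagvec[X]$ in line \ref{line:updatetag1}; in case (ii) the tag is $\max(U\cap\overline{U})$, which is exactly the value assigned to $M.tagvec[X]$ in line \ref{line:updatetag2}. In both cases, after the action, $t' = M_s.tagvec[X]$, so the required inequality holds.

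The main obstacle is case (iii), since the Garbage Collection action at line \ref{line:Delete_send3} does not itself modify $M.tagvec[X]$. Here I would use the induction hypothesis together with the structure of the set $U$ computed at line \ref{line:Delete3_Uset}. Since the send in line \ref{line:Delete_send3} only fires when $X \in \mathcal{X}_s$, we have $s \in R$, so the membership condition defining $U$ forces the existence of some $\hat{t}$ with $(\hat{t},s) \in DelL_s[X]$ and $\hat{t} \geq \max(U)$. Such an entry must have been added by a strictly earlier event of type (i), (ii), or (iii) at node $s$. The induction hypothesis, applied to that earlier event, gives $\hat{t} \leq M_s.tagvec[X]$ at the time of addition, and Lemma \ref{lemma:tagsalwaysincrease} lifts this to the current point, yielding $\max(U) \leq \hat{t} \leq M_s.tagvec[X]$ as desired.

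Putting the three cases together establishes the invariant that every event at node $s$ producing a tag $t'$ (either sent in a \texttt{del} message or recorded in $DelL_s[X]$) satisfies $t' \leq M_s.tagvec[X]$ at that moment. Taking the maximum such $t'$ over all events strictly before $P$ and invoking Lemma \ref{lemma:tagsalwaysincrease} one final time yields $t \leq M_s^{P}.tagvec[X]$, completing the proof.
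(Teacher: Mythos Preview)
Your proposal is correct and follows essentially the same approach as the paper's proof: both argue case-by-case over the three places where a \texttt{del} message is sent or a $(t',s)$ entry is added to $DelL_s[X]$, observe that in the Encoding cases the tag equals the newly assigned $M.tagvec[X]$, and for the Garbage Collection case use that $s\in R$ forces an earlier $(\hat t,s)\in DelL_s[X]$ with $\hat t\ge\max(U)$, to which the already-established Encoding case applies (together with Lemma~\ref{lemma:tagsalwaysincrease}). The only real difference is that you make the inductive structure explicit, whereas the paper handles the Garbage Collection case by simply referring back to its ``Case~(IA)''; your phrasing is arguably cleaner, though note that in your case~(iii) the earlier event that added $(\hat t,s)$ must in fact be of type~(i) only, since type~(iii) does not write to $DelL$ and type~(ii) requires $X\notin\mathcal X_s$.
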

 
 \begin{proof}
 We consider two cases: $X \in \mathcal{X}_{s}$ and $X \notin  \mathcal{X}_{s}$

\underline{Case (I): $X \in \mathcal{X}_{s}$}

Based on the protocol,  node $s$ has  sends $\langle \texttt{del},X,t\rangle$ message or adds a tuple of the form $(t,s)$ in either Line \ref{line:Delete_send1} or Line \ref{line:Delete_send3} of Algorithm \ref{alg:internal_actions}. Let $Q$ denote a point of the execution where one of these internal actions is performed. To show the lemma, it suffices to show that $M_s^P.tagvec[X] \geq M_s^Q.tagvec[X]$ for at any point $P$ that comes after $Q$.

\underline{Case (IA) - Line \ref{line:Delete_send1}:} This line is performed as a part of the \texttt{Encoding} internal action. As a part of the state changes in this internal action, in line \ref{line:updatetag1}, the node sets $M_s^Q.tagvec[X]$ to be $t$. From Lemma \ref{lemma:tagsalwaysincrease}, we conclude that at any point $P$ that comes after $Q$, we have $M_s^P.tagvec[X] \geq M_s^Q.tagvec[X] = t.$ This completes the proof.

\underline{Case (IB) - Line \ref{line:Delete_send3}:} Notice that $X \in \mathcal{X}_{s},$ which implies that $s \in R_{s}^{Q}.$
Based on the condition imposed in Line \ref{line:Delete_cond3}, the fact that line (\ref{line:Delete_send3}) was executed implies that $(t,v) \in DelL^{s}_Q[X]$. Specifically, the tuple $(t,v)$ was added at some point $Q'$ before $Q,$ and this was added as a part of line \ref{line:Delete_send1} in an $\texttt{Encoding}$ action. From the result of Case (IA), and because $P$ comes after $Q'$, we conclude that a have $M_s^{P}.tagvec[X] \geq M_s^{Q'}.tagvec[X] = t.$ This completes the proof.

\underline{Case (II): $X \notin \mathcal{X}_{s}$}
 
Based on the protocol,  node $s$ has  sends $\langle \texttt{del},X,t\rangle$ message or adds a tuple of the form $(t,s)$ to $DelL[X]$ in Line \ref{line:Delete_send2} of Algorithm \ref{alg:internal_actions} as a part of the $\texttt{Encoding}$ internal action. Let $Q$ denote the point of the execution where this internal action is performed.  As a part of the state changes in this internal action, in line \ref{line:updatetag2}, the node sets $M_s^{Q}.tagvec[X]$ to be $t$. From Lemma \ref{lemma:tagsalwaysincrease}, we conclude that at any point $P$ that comes after $Q$, we have $M_s^{P}.tagvec[X] \geq M_s^{Q}.tagvec[X] = t.$ This completes the proof.
 \end{proof}
 
 \begin{lemma}
Let $P$ be a point in an execution $\beta$ of $CausalEC$. For any node $s$ and any object $X$, suppose no tuple of the form $(M_s^{P}.tagvec[X],val)$ exists in $L_s^{P}[X].$ Then, node $s$ received a tuple of the from $\langle \texttt{del},X,M_s^{P}.tagvec[X],val\rangle$  from every node in the system by some point before $P$. 
\label{lem:currenttag-delete}
 \end{lemma}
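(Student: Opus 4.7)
The plan is to locate the latest point at which a tuple $(t^*, val)$ with $t^* = M_s^{P}.tagvec[X]$ was removed from $L_s[X]$, and then argue that such removal could only have occurred via the GC1 branch of the $\texttt{Garbage\_Collection}$ action, whose precondition is exactly what we need. First I would invoke Lemma \ref{lem:pointsexec} (with the degenerate case $t^* = \vec{0}$ handled by the initial tuple $(\vec{0}, \vec{0}) \in L_s[X]$) to obtain a point $P' \leq P$ at which some tuple $(t^*, val) \in L_s^{P'}[X]$. Since the only action that removes items from a list is $\texttt{Garbage\_Collection}$, the hypothesis $(t^*, val) \notin L_s^{P}[X]$ produces a latest removal point $R$ with $P' \leq R < P$ at which a $\texttt{Garbage\_Collection}$ action on $X$ discarded $(t^*, val)$.

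Next I would rule out $M_s^{R}.tagvec[X] < t^*$. By Lemma \ref{lemma:tagsalwaysincrease} we already have $M_s^{R}.tagvec[X] \leq t^*$. To rule out the strict inequality I need the auxiliary invariant $tmax_s[X] \leq M_s.tagvec[X]$; this is casually asserted in the paragraph following line \ref{line:tmax} but is not formally established in the excerpt, so I would prove it as a preliminary lemma by combining Lemma \ref{lem:tag-delete-ordering} (applied to $s$'s own entries $(\hat t, s) \in DelL_s[X]$) with the observation that $tmax_s[X] = \max(S)$ forces the existence of some $\hat t \geq tmax_s[X]$ with $(\hat t, s) \in DelL_s[X]$. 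Granting this invariant, if $M_s^{R}.tagvec[X] < t^*$, then $tmax_s^{R}[X] \leq M_s^{R}.tagvec[X] < t^*$, and none of the GC branches can delete a tuple with tag $t^*$ (they remove only tags $\leq tmax_s[X]$ or $< tmax_s[X]$), contradicting the removal at $R$. Hence $M_s^{R}.tagvec[X] = t^*$.

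With $M_s^{R}.tagvec[X] = t^*$, I would rule out GC3 and the else branch GC2 at $R$. GC3's precondition $tmax_s[X] < M_s.tagvec[X]$ combined with its deletion rule $tag \leq tmax_s[X]$ forces $t^* \leq tmax_s^{R}[X] < t^*$, impossible; GC2 deletes only tags strictly below $tmax_s^{R}[X] \leq t^*$, again impossible for a tag equal to $t^*$. Hence GC1 executed at $R$, and its precondition on line \ref{line:GCcondition1} forces $M_s^{R}.tagvec[X] = t^* \in \overline{S}_s^{R}[X]$. Unfolding the definition of $\overline{S}$ on line \ref{line:s1set}, this means that for every $i \in \mathcal{N}$ there is an entry $(t^*, i) \in DelL_s^{R}[X]$; for $i \neq s$ such an entry is inserted only via line \ref{line:delmessage} upon receipt of $\langle \texttt{del}, X, t^*\rangle$ from node $i$, while for $i = s$ the entry is added by $s$'s own bookkeeping in lines \ref{line:Delete_local1} or \ref{line:Delete_local2}. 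Since all these events precede $R < P$, this yields the lemma's conclusion (after reading the stated message format $\langle \texttt{del}, X, t^*, val\rangle$ as an apparent typo for $\langle \texttt{del}, X, t^*\rangle$, which is what the protocol actually sends). The main obstacle is precisely the preliminary invariant $tmax_s[X] \leq M_s.tagvec[X]$: although it is asserted in passing, its proof requires the bootstrap argument above using Lemma \ref{lem:tag-delete-ordering} applied to $s$'s own entries in $DelL_s[X]$, and without it the strict-inequality case at $R$ cannot be eliminated.
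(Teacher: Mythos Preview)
Your proposal is correct and follows essentially the same route as the paper's proof: locate a point where $(t^*,val)\in L_s[X]$ via Lemma~\ref{lem:pointsexec}, identify the garbage-collection step that removed it, use Lemma~\ref{lem:tag-delete-ordering} to derive $tmax_s[X]\le M_s.tagvec[X]$, force equality and hence branch~GC1, and then unpack the $\overline{S}$ condition. Your write-up is actually a bit more careful than the paper's on the edge cases (the $t^*=\vec{0}$ case, the $i=s$ entry in $DelL_s[X]$, and the explicit elimination of GC2/GC3), but the argument is the same.
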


 \begin{proof}
 From Lemma \ref{lem:pointsexec}, we know that there is a point $P'$ at or before $P$ such that tuple of the form $(M_s^{P}.tagvec[X],val)$ exists in $L_s^{P'}[X].$ Let $Q$ be the first point after $P'$ such that this tuple does not exist in $L_s^{P'}[X].$ By the hypothesis of the lemma $Q$ can be no later than $P$. From the protocol, since the only action that removes objects from $L_s^{P}[X]$ is a $\texttt{Garbage\_Collection}$ action, $\texttt{Garbage\_Collection}$ action took place at $Q$, and the tuple was removed from $L_s[X]$ as a part of line \ref{line:GC1} or \ref{line:GC3} or  \ref{line:GC2} of Algorithm \ref{alg:internal_actions}. From Lemma \ref{lem:tag-delete-ordering}, we infer that for any tag $t > M_s^{Q}.tagvec[X],$ the element $\{(t,s)\}$ does not belong to $Del_s^{Q}[X].$ Therefore, $tmax_s^{Q}[X] \stackrel{(a)}{\leq} M_s^Q.tagvec[X].$ Furthermore,   \newline $M_s^Q.tagvec[X] \stackrel{(b)}{\leq} M_s^P.tagvec[X]$ because $Q$ is no later than $P$. Since lines \ref{line:GC1}, \ref{line:GC3}, \ref{line:GC2} can only remove elements from $L_s^{Q}[X]$ with tags no bigger than $tmax_s^{Q}[X]$, we can conclude that:
 \begin{itemize}
     \item Inequalities $(a),(b)$ are in fact met with equality, which further implies that $t{max}_s^{Q}[X] = M_s^P.tagvec[X]$, and 
     \item Line \ref{line:GC1} is executed at point Q.
 \end{itemize}

 Since $t{max}_s^{Q}[X] = M_s^{P}.tagvec[X]$ and line \ref{line:GC1} was executed we conclude that  $M_s^{P}.tagvec[X] \in \overline{S}_{s}^{Q},$ where $\overline{S}_{s}^{Q}$ is formed in line \ref{line:s1set}. The definition of line \ref{line:s1set} implies that at $Q,$ $\{i:(M_s^{P}.tagvec[X],i) 
 \in DelL[X]$ is equal to $\mathcal{N}.$ Since tuples are only added to  $DelL[X]$ on receipt of $\langle \texttt{del},X,t,val\rangle$ messages in Line \ref{line:delmessage} in Algorithm \ref{alg:input_actions}, we conclude that node $s$ received a tuple of the from $\langle \texttt{del},X,M_s^{P}.tagvec[X],val\rangle$  from every node in the system by some point before $Q$, and hence before $P$. 
 \end{proof}

 \begin{lemma}
 Let $P$ be a point of an execution $\beta$ such that for a node $s$ at point $P$, there exists $val$ such that there is tuple $(M_s^{P}.tagvec[X], val)$ in $L_s^{P}[X]$ for some object $X$. Let $Q$ be a point in $\beta$ at or after $P$ such that a tuple $(clientid,opid, X,tvec,i,\overline{w})$ exists in $ReadL_s^{Q}$ where $tvec[X]=M_s^{P}.tagsvec[X].$ If $M_s^{Q}.tagvec[X] >M_s^{P}.tagvec[X],$ then there exists a tuple $(M_s^{P}.tagvec[X], val)$ in $L_s^{Q}[X].$
 \label{lem:listsavespendingread}
 \end{lemma}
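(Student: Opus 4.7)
Set $t_1 := M_s^P.tagvec[X]$. By Lemma \ref{lemma:tagsalwaysincrease}, $M_s.tagvec[X]$ is monotonically non-decreasing, so I define $R'$ to be the earliest point in $(P, Q]$ at which $M_s^{R'}.tagvec[X] > t_1$; this $R'$ exists by the lemma's hypothesis. The plan is to split the interval at $R'$: first show that the pending read protects $(t_1, val)$ from garbage collection on $[R', Q]$, and then show $(t_1, val) \in L_s[X]$ just after the action at $R'$.

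For the shielding part, I observe that the $ReadL_s^Q$ entry $\eta$ with operation identifier $opid$ has persisted since its unique insertion at some point $R_0 \leq Q$, since unique $opid$'s and the one-shot removal discipline (cf.\ Claim \ref{claim:liveness1}) preclude re-insertion. At $R_0$, whether $\eta$ was added by line \ref{line:addtoreadl} or line \ref{line:readlentryencoding}, its $tvec$ field is set to $M_s^{R_0}.tagvec$, so $M_s^{R_0}.tagvec[X] = tvec[X] = t_1$, and monotonicity forces $R_0 < R'$. Therefore, for every $R \in [R', Q]$, $\eta \in ReadL_s^R$ with $tvec[X] = t_1 < M_s^R.tagvec[X]$, placing $t_1$ in the set $T^R$ built at line \ref{line:pendingreads}. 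Since all three garbage-collection deletion statements---lines \ref{line:GC1}, \ref{line:GC2}, and \ref{line:GC3}---explicitly exclude tuples whose tag lies in $T$, the tuple $(t_1, val)$ cannot be deleted in $[R', Q]$.

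For the presence step at $R'$, I exploit that the $\texttt{Encoding}$ action never removes elements from $L[X]$, so it suffices to show $(t_1, val) \in L_s[X]$ immediately before the action. When $X \in \mathcal{X}_s$, the update at $R'$ must fire line \ref{line:updatetag1}, whose precondition at line \ref{line:conditionforupdate} requires the existence of some $val'$ with $(t_1, val') \in L_s[X]$ at the start of the action; Lemma \ref{lem:valuesarelegitimate}(i) forces $val' = val$, and the claim follows.

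The main obstacle is the case $X \notin \mathcal{X}_s$, where line \ref{line:updatetag2} updates $M.tagvec[X]$ using only the existence of higher-tagged tuples in $L[X]$ (via the set $\overline{U}$ at line \ref{line:Delete2_Ubarset}) and never directly references $(t_1, val)$. Here I must rule out garbage collection of $(t_1, val)$ during $[P, R')$, where $M_s.tagvec[X] = t_1$ throughout. The invariant $tmax[X] \leq M.tagvec[X]$ (stated just after line \ref{line:pendingreads}) combined with the tuple's tag $t_1$ immediately shows that branch \ref{line:GCcondition3} and the else branch \ref{line:GC2} cannot delete $(t_1, val)$. Only branch \ref{line:GCcondition1} remains, and eliminating it will require a careful accounting of which nodes have issued $\texttt{del}$ messages with tag exactly $t_1$, coupled with the argument that even if this branch were to fire, the inquiry $\eta$ issued at $R_0$ must re-populate $L_s[X]$ via the $\texttt{val\_resp}$ or $\texttt{val\_resp\_encoded}$ handling at lines \ref{line:valrespaddtolist}/\ref{line:readaddtolist} before any later $\texttt{Encoding}$ action can advance $M_s.tagvec[X]$ past $t_1$. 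Formalizing this restoration argument---interleaving the FIFO channel property with the pending read's lifecycle and the $\texttt{Encoding}$ bookkeeping---is the most delicate part of the proof.
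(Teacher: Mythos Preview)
Your overall strategy mirrors the paper's exactly: both define $P'$ (your $R'$) as the first point where $M_s.tagvec[X]$ exceeds $t_1$, both show the $ReadL$ entry must have been inserted before $P'$ (the paper's Claim~\ref{claim:l2}), and both argue that on $[P',Q]$ the entry places $t_1$ in the set $T$ of line~\ref{line:pendingreads} so no garbage-collection branch can delete the tuple (the paper's Claim~\ref{claim:l3}). For $X\in\mathcal{X}_s$ you and the paper agree: line~\ref{line:conditionforupdate} directly witnesses $(t_1,\mathit{val})\in L_s[X]$ just before the update.

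Where you diverge is the case $X\notin\mathcal{X}_s$. You are right that line~\ref{line:Delete2_condition} only tests for the existence of some \emph{higher}-tagged tuple in $\overline{U}$ and never mentions $t_1$; the paper's Claim~\ref{claim:l1} simply asserts that the conditions for lines~\ref{line:updatetag1}/\ref{line:updatetag2} suffice and does not elaborate the $X\notin\mathcal{X}_s$ branch separately. So your instinct that more care is needed here is sound, and your reduction to ruling out branch~\ref{line:GCcondition1} on $[P,R')$ is the correct framing.

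However, your proposed repair---a ``restoration'' argument via the $\texttt{val\_resp}$/$\texttt{val\_resp\_encoded}$ handlers---does not go through. Lines~\ref{line:valrespaddtolist} and~\ref{line:readaddtolist} add to $L_s[X]$ \emph{only} when $clientid=\texttt{localhost}$; if the $ReadL$ entry $\eta$ was created on behalf of an external client at line~\ref{line:addtoreadl}, the responses merely trigger a $\texttt{read-return}$ and never touch $L_s[X]$. Even in the localhost case, nothing in the protocol forces the response to arrive before the $\texttt{Encoding}$ action at $R'$, so there is no ordering guarantee to exploit. The right way to close the $X\notin\mathcal{X}_s$ gap is not restoration but a direct survival argument on $[P,R')$: having already dispatched branches~\ref{line:GCcondition3} and~\ref{line:GC2}, for branch~\ref{line:GCcondition1} you should chase the requirement $t_1\in\overline{S}$ (in particular $(t_1,s)\in DelL_s[X]$) back through line~\ref{line:Delete_local2} and its accompanying $M.tagvec[X]$ update, rather than rely on any later message re-populating the list.
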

 \begin{proof}
Let $P'$ be the first point after $P$ such that $M_s^{P'}.tagvec[X] > M_s^{P}.tagvec[X]$. Notice that $P'$ is no later than $Q.$ The proof involves three claims: Claim \ref{claim:l1}, \ref{claim:l2} and \ref{claim:l3}.

\begin{claim}
There exists a tuple $(M_s^{P}.tagvec[X], val)$ in $L_{s}^{P'}[X].$
\label{claim:l1}
\end{claim}
 \begin{proof}[Proof of Claim \ref{claim:l1}]
At the point $P''$ which is immediately before $P',$ we have $M_s^{P''}.tagvec[X]  = M_s^{P}.tagvec[X]$, because $P'$ be the first point after $P$ such that $M_s^{P'}.tagvec[X] > M_s^{P}.tagvec[X]$.
 From the protocol, it readily follows that an $\texttt{Encoding}$ action took place at $P',$ and line \ref{line:updatetag1} or line \ref{line:updatetag2} of Algorithm \ref{alg:internal_actions}  was executed. Since the condition in line \ref{line:conditionforupdate} or \ref{line:Delete2_condition} returns true for the execution of line \ref{line:updatetag1} or line \ref{line:updatetag2}, we conclude that a tuple of the form $(M_s^{P}.tagvec[X], val)$ exists in $L_s^{P'}[X].$
 \end{proof}
 \begin{claim}
  The tuple $(clientid,opid, X,tvec,i,\overline{w})$ in $ReadL_s^{Q}$ with $tvec[X]=M_s^{P}.tagvec[X]$ is added to $ReadL_s$ at some point before $P'.$
  \label{claim:l2}
 \end{claim}
 \begin{proof}[Proof of Claim \ref{claim:l2}]
 Consider any tuple \newline $(clientid,opid, X,tvec,i,\overline{w})$ in $ReadL_s^{Q}$ with $tvec[X]=M_s^{P}.tagvec[X].$

At $P',$ we have $M_s^{P'}.tagvec[X] > M_s^{P}.tagvec[X].$ Because of Lemma \ref{lemma:tagsalwaysincrease}, at any point $R$ after $P',$ $M_s^{R}.tagvec[X] > M_s^{P}.tagvec[X].$ From the protocol, the tuple is added to $ReadL$ during execution of line \ref{line:addtoreadl} in Algorithm \ref{alg:inputactions_clients} or \ref{line:readlentryencoding} in Algorithm \ref{alg:internal_actions}. If the tuple is added at point $Q'$, by examining both lines, we have $tvec[X]=M_s^{Q'}.tagvec[X].$ However, $M_s^{Q'}.tagvec[X] = M_s^{P}.tagvec[X].$ Because for any point $R$ at or after $P'$, we have $M_s^{R}.tagvec[X] > M_s^{P}.tagvec[X],$ we conclude that $Q'$ must be before $P'.$
 \end{proof}
 \begin{claim}
If a \texttt{Garbage\_Collection} action that takes place between $P'$ and $Q,$ the action does not remove the tuple $(M_s^{P}.tagvec[X], val)$ from $L[X].$
\label{claim:l3}
 \end{claim}
\begin{proof}[Proof of Claim \ref{claim:l3}]
Suppose a \texttt{Garbage\_Collection} is performed at point $Q''$ that is between $P'$ and $Q.$ Because $Q''$ comes after $P'$, we have $M_s^{Q''}.tagvec[X] >  M_s^{P}.tagvec[X]$. Because of Claim \ref{claim:l2}, the tuple of the form $(clientid,opid X,tvec,i,v)$ with $tvec[X] = M_s^{P}.tagvec[X]$ in $ReadL_s^{Q}$ that is noted in the hypothesis of the lemma is in $ReadL_{s}$ at every point between $P'$ and $Q$, specifically, it is in $ReadL_{s}^{Q''}.$ The \texttt{Garbage\_Collection} action removes elements from $L[X]$ for $X \in \mathcal{X}_{s}$ only if it executes line \ref{line:GC1}, line \ref{line:GC3} or line \ref{line:GC2}. However these lines do not remove elements from $T$ which is found in Line \ref{line:pendingreads}. Because $ReadL_s^{Q''}$ has a tuple of the form $(clientid,opid, X,tvec,i,v)$ with $tvec[X] = M_s^{P}.tagvec[X]$ and because $M_s^{P}.tagvec[X] < M_s^{Q''}.tagvec[X],$ we conclude $tagvec[X]$ is in the set $T$. Therefore, the \texttt{Garbage\_collection} action at point $Q''$ does not remove the tuple $(M_{s}^{P}.tagvec[X], val)$ from $L[X].$ 
\end{proof}
 \end{proof}
 
\begin{lemma}
 Consider a node $s$ and an object $X \notin \mathcal{X}_{s}.$  Let $P$ be a point of an execution $\beta$ such that $M_s^{P}.tagvec[X] > \mathbf{0}.$  Then, for any server $s'$ such that $X \in \mathcal{X}_{s'},$ node $s$ has received a message $\langle \texttt{del}, X,t)\rangle$ from server $s'$ before point $P$ for some $t \geq M_{s}^{P}.tagvec[X]$
 \label{lem:handlenonpresentobjects}
 \end{lemma}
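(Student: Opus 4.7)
The plan is to trace back to the most recent moment at which $M_s.tagvec[X]$ was set to its value at $P$, show that this update could only have happened through the ``bookkeeping'' branch of the \texttt{Encoding} action (line \ref{line:updatetag2}), and then read off the required $\texttt{del}$ receipts directly from the definition of the set $U$ that gated that update.

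First I would invoke Lemma \ref{lemma:tagsalwaysincrease} together with the fact that $M_s^{P}.tagvec[X] > \mathbf{0}$ while initially $M_s.tagvec[X] = \mathbf{0}$ to pick the latest point $P' \leq P$ at which $M_s.tagvec[X]$ is assigned the value $M_s^{P}.tagvec[X]$, so that $M_s^{P'}.tagvec[X] = M_s^{P}.tagvec[X]$ and the immediately preceding state had a strictly smaller value. Since $M_s.tagvec[X]$ is modified only by an \texttt{Encoding} action, and since $X \notin \mathcal{X}_s$ rules out line \ref{line:updatetag1}, the update at $P'$ must have been produced by line \ref{line:updatetag2} of Algorithm \ref{alg:internal_actions}. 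Hence $M_s^{P'}.tagvec[X] = \max(U \cap \overline{U})$, where $U$ and $\overline{U}$ are the sets constructed in lines \ref{line:Delete2_Uset} and \ref{line:Delete2_Ubarset} at $P'$ with respect to $R = \{i \in \mathcal{N} : X \in \mathcal{X}_i\}$.

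Next I would unpack the definition of $U$ at $P'$. By construction, $\max(U \cap \overline{U}) \in U$, so for every $s' \in R$ there exists $\hat{t}$ with $(\hat{t}, s') \in DelL_s^{P'}[X]$ and $\hat{t} \geq \max(U \cap \overline{U}) = M_s^{P}.tagvec[X]$. The only transition that adds a tuple of the form $(\hat{t}, s')$ to $DelL_s[X]$ is the receipt of a $\langle \texttt{del}, X, \hat{t}\rangle$ message from $s'$ (line \ref{line:delmessage} of Algorithm \ref{alg:input_actions}). Therefore, for each $s' \in R$, node $s$ received such a message from $s'$ at some point before $P'$, and hence before $P$, with a tag $\hat{t} \geq M_s^{P}.tagvec[X]$, which is exactly the claim.

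The only subtlety I anticipate is justifying that line \ref{line:updatetag2} is indeed the sole pathway for $M_s.tagvec[X]$ to change when $X \notin \mathcal{X}_s$; this is immediate from inspecting Algorithm \ref{alg:internal_actions}, since the only other assignments to $M.tagvec[X]$ occur inside the for-loop ranging over $\mathcal{X}_s$. No further case analysis should be required, and the result follows.
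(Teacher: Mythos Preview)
Your proposal is correct and follows essentially the same approach as the paper's proof: both argue that since $X \notin \mathcal{X}_s$, the update to $M_s.tagvec[X]$ must have occurred via line \ref{line:updatetag2}, and then use the definition of $U$ in line \ref{line:Delete2_Uset} to conclude that the required $\texttt{del}$ messages were received. Your version is simply more explicit about locating the update point $P'$, about why $\max(U\cap\overline{U}) \in U$ forces the $DelL$ entries, and about why entries $(\hat{t},s')$ with $s' \in R$ (hence $s' \neq s$) can only arise from received $\texttt{del}$ messages.
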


  \begin{proof}
  
  For $X \notin \mathcal{X}_{s},$ $M_s.tagvec[X]$ is updated in Line \ref{line:updatetag2} in Algorithm \ref{alg:internal_actions}. From the condition to execute  Line \ref{line:updatetag2} in Line \ref{line:Delete2_condition}, we infer that node $s$ received a $\langle \texttt{del},X,t\rangle$  with $t \geq M_{s}^{P}.tagvec[X]$ no later than point $P$ from all nodes in $\{i: X \in \mathcal{X}_{i}\}$.  Specifically it received the message from node $s'$ no later than $P.$
  \end{proof}
  
\begin{lemma}
 Consider nodes $s,s'$ and an object $X$ such that $X \notin \mathcal{X}_{s}, X \in \mathcal{X}_{s'}.$ At any point $P$ be a point of an execution $\beta,$ we have $M_{s}^{P}.tagvec[X] \leq M_{s'}^{P}.tagvec[X].$
 \label{lem:nonpresentobjectshavesmallertags}
 \end{lemma}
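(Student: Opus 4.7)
The plan is to derive the inequality by composing two previously established lemmas about the interplay between the \texttt{tagvec} state variable and \texttt{del}-message traffic. The intuition is that $s$, which does not store $X$, can only advance $M_s.tagvec[X]$ on the basis of \texttt{del} messages it has received from servers that do store $X$ (such as $s'$), and any such \texttt{del} message in turn certifies that the sending server $s'$ has already advanced its own $M_{s'}.tagvec[X]$ at least as high.

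First I would dispose of the trivial case $M_s^{P}.tagvec[X] = \vec{0}$, for which the inequality holds immediately because vector clocks have non-negative entries and thus $M_{s'}^{P}.tagvec[X] \geq \vec{0}$. In the remaining case $M_s^{P}.tagvec[X] > \vec{0}$, I invoke Lemma \ref{lem:handlenonpresentobjects} with the given $s$, $X \notin \mathcal{X}_s$, and the specific server $s'$ satisfying $X \in \mathcal{X}_{s'}$; this produces a tag $t \geq M_s^{P}.tagvec[X]$ and a point before $P$ at which $s$ received $\langle \texttt{del}, X, t \rangle$ from $s'$. Since receipt of a message must be preceded by its send, $s'$ sent this \texttt{del} message before point $P$.

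Next I apply Lemma \ref{lem:tag-delete-ordering} to server $s'$, object $X$, and point $P$: letting $t^{\star}$ denote the highest tag such that $s'$ has sent a message $\langle \texttt{del}, X, t^{\star} \rangle$ or added a tuple $(t^{\star}, s')$ to $DelL_{s'}[X]$ before $P$, we obtain $t^{\star} \leq M_{s'}^{P}.tagvec[X]$. By construction, $t^{\star} \geq t$, so chaining the inequalities yields $M_{s'}^{P}.tagvec[X] \geq t^{\star} \geq t \geq M_s^{P}.tagvec[X]$, which is exactly the desired conclusion.

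I do not anticipate any significant obstacle: this lemma is essentially a direct composition of Lemmas \ref{lem:handlenonpresentobjects} and \ref{lem:tag-delete-ordering}. The only mild subtlety is the bookkeeping about temporal ordering of send and receive events to justify that the \texttt{del} send at $s'$ precedes $P$, which is automatic given reliable FIFO channels since the corresponding receive at $s$ precedes $P$.
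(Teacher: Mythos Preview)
Your proposal is correct and follows essentially the same approach as the paper: dispose of the trivial case $M_s^{P}.tagvec[X]=\vec{0}$, then compose Lemma~\ref{lem:handlenonpresentobjects} (to obtain a \texttt{del} message from $s'$ with tag at least $M_s^{P}.tagvec[X]$ sent before $P$) with Lemma~\ref{lem:tag-delete-ordering} (to bound that tag by $M_{s'}^{P}.tagvec[X]$). Your version is just slightly more explicit in the intermediate bookkeeping.
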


 \begin{proof} If $M_s^P.tagvec[X]=\mathbf{0},$ the lemma is trivially true. Otherwise, 
 the hypothesis of Lemma \ref{lem:handlenonpresentobjects} holds, and from the lemma, we infer that node $s'$ sent a  $\langle \texttt{del},X,t\rangle$ before $P$ with $t \geq M_{s}^{P}.tagvec[X]$. Lemma \ref{lem:tag-delete-ordering}   therefore implies that   $M_{s'}^{P}.tagvec[X] \geq M_s^{P}.tagvec[X].$
 \end{proof}

\begin{lemma}
 Let $m$ be a message sent from $s$ and delivered to node $s'$ in an execution $\beta$. Let $P$ be the point of sending of message $m$  from $s$ and $P'$ be the point of receipt at $s'$. For an object $X \in \mathcal{X}_{s'},$ if $M_s^{P}.tagvec[X] < M_{s'}^{P'}.tagvec[X]$ and if there is a point $Q$ before $P'$ such that node $s'$  sends a $\langle \texttt{del}, X, M_s^{P}.tagvec[X] \rangle$ message to any node in the system at point $Q$, then there there exists a tuple of the form $(M_s^{P}.tagvec[X],val)$ in list $L_{s'}^{P'}[X]$,

 \label{lem:objectpresentfordecoding1}
 \end{lemma}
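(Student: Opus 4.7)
Write $t := M_s^{P}.tagvec[X]$. The plan is to combine (i) an initial-existence step that produces a point at which $(t, val) \in L_{s'}[X]$ for some $val$, (ii) a FIFO-based bound yielding $t{max}_{s'}^{R'}[X] \leq t$ for every $R' \leq P'$, and (iii) a case analysis of garbage collection at $s'$ during $[Q, P']$ to conclude that the tuple is either retained or reinserted before $P'$.

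For step (i), I distinguish by which line produced the \texttt{del} at $Q$. Since $X \in \mathcal{X}_{s'}$, this is either line \ref{line:Delete_send1} of \texttt{Encoding} or line \ref{line:Delete_send3} of \texttt{Garbage\_Collection}. In the first case, the code directly gives $(t, val) \in L_{s'}^{Q}[X]$ and $M_{s'}^{Q}.tagvec[X] = t$. In the second case, $\max(U^{Q}) = t$ together with $s' \in R$ forces an entry $(\hat t, s') \in DelL_{s'}^{Q}[X]$ with $\hat t \geq t$, which was added via line \ref{line:Delete_local1} during a prior encoding of $s'$; I then chain back through those encodings (using that the write producing tag $t$ delivers \texttt{app}$(X, val, t)$ to $s'$) to locate a point where $(t, val)$ entered $L_{s'}[X]$.

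For step (ii), Lemmas \ref{lemma:tagsalwaysincrease} and \ref{lem:tag-delete-ordering} jointly imply that every \texttt{del} from $s$ for $X$ sent at or before $P$ carries tag at most $t$. Any \texttt{del} from $s$ with tag strictly greater than $t$ must therefore be sent strictly after $P$; by FIFO of channel $c_{s \to s'}$ such a message arrives at $s'$ strictly after $m$, hence strictly after $P'$. Consequently, for every $R' \leq P'$, every entry $(\hat t, s) \in DelL_{s'}^{R'}[X]$ satisfies $\hat t \leq t$, forcing $t{max}_{s'}^{R'}[X] \leq t$.

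For step (iii), I examine each garbage collection at $R' \in [Q, P']$. The else-branch (line \ref{line:GC2}) removes only tuples with tag strictly less than $t{max} \leq t$, so cannot remove $(t, val)$. The Case-(a) branch (line \ref{line:GC1}) requires $t{max}^{R'}[X] = M_{s'}^{R'}.tagvec[X]$, so together with $t{max} \leq t$ it can trigger only when $M_{s'}^{R'}.tagvec[X] \leq t$. Since the hypothesis forces $M_{s'}^{P'}.tagvec[X] > t$, let $R^{*} \leq P'$ be the first point where $M_{s'}.tagvec[X]$ exceeds $t$; for $R' \geq R^{*}$, Case (a) cannot trigger and so $(t, val)$ is safe from that point onward. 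For $R' < R^{*}$, Case (a) may transiently remove $(t, val)$, but the encoding action at $R^{*}$ that advances $M_{s'}.tagvec[X]$ past $t$ must take the if-branch of line \ref{line:conditionforupdate}, which requires $(t, val) \in L_{s'}[X]$ at that instant; if a prior garbage collection had removed it, an intervening encoding must have taken the else-branch, issued an internal read, and the response (lines \ref{line:readaddtolist}, \ref{line:valrespaddtolist}) reinserts $(t, val)$ into $L_{s'}[X]$ before $R^{*}$. The hard part will be this transient-remove-and-reinsert argument, i.e., verifying that advancing $M_{s'}.tagvec[X]$ past $t$ is the sole way to satisfy the hypothesis and that it strictly requires $(t, val)$ to first be present in $L_{s'}[X]$, after which the bound from step (ii) prevents any further garbage collection from removing it.
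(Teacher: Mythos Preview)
Your step~(ii) and the post-$R^{*}$ portion of step~(iii) match the paper's argument. The remaining machinery---step~(i) and the ``transient-remove-and-reinsert'' explanation---is both unnecessary and where your real gaps sit.

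The paper's proof is far more direct. It never attempts to locate an initial-insertion point for $(t,val)$ in $L_{s'}[X]$, and it never reasons about garbage collection \emph{before} $R^{*}$. Instead it goes straight to $Q' := R^{*}$, the first point at which $M_{s'}.tagvec[X] > t$. Because $X \in \mathcal{X}_{s'}$, that transition is an \texttt{Encoding} action taking the if-branch at line~\ref{line:conditionforupdate}; since just before $Q'$ one has $M_{s'}.tagvec[X] = t$, that branch forces $(t,val) \in L_{s'}^{Q'}[X]$ outright---no history needed. Then your own step-(ii)/post-$R^{*}$ analysis (FIFO gives $t{max}_{s'} \le t < M_{s'}.tagvec[X]$, so lines~\ref{line:GCcondition1} and~\ref{line:GCcondition3} are false and only line~\ref{line:GC2} can fire, removing only tags strictly below $t{max}$) finishes the proof. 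You actually state this observation yourself when you write that the encoding at $R^{*}$ ``requires $(t,val)\in L_{s'}[X]$ at that instant''; the point is that this single observation replaces \emph{all} of step~(i) and the pre-$R^{*}$ analysis.

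Two concrete gaps you should drop rather than try to close. First, in step~(i), the \texttt{Delete\_send3} case is not sound as written: the appeal to the \texttt{app}$(X,val,t)$ message does not establish that $(t,val)$ reaches $L_{s'}[X]$ \emph{before} $P'$---in an asynchronous system that \texttt{app} message may be arbitrarily delayed. Second, in step~(iii), the ``reinsert'' half relies on an internal read (issued from the else-branch) completing before $R^{*}$; nothing guarantees this, since \texttt{val\_resp}/\texttt{val\_resp\_encoded} responses may also be delayed past $P'$. You are rescued only because the successful encoding at $R^{*}$ already certifies $(t,val) \in L_{s'}^{R^{*}}[X]$, rendering the reinsert narrative moot. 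Delete step~(i) and the pre-$R^{*}$ garbage-collection discussion entirely; begin the argument at $R^{*}$.
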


 \begin{proof}
 Based on the hypothesis of the lemma, node $s'$ sent a message of the form \newline $\langle \texttt{del},X,v,M_s^{P}.tagvec[X]\rangle$ at some point $Q$ that is before $P'$. Based on the protocol, at point $Q$, we have $M_{s'}^{Q}.tagvec[X] = M_s^P.tagvec[X].$ Furthermore, at point $P'$ which is after $Q,$ $M_{s'}^{P'}.tagvec[X] > M_{s'}^{Q}.tagvec[X].$ Let $Q'$ be the first point after $Q$ at which $M_{s'}^{Q'}.tagvec[X] > M_{s}^{P}.tagvec[X].$

Based on the protocol, because $X \in \mathcal{X}_{s'}$, we infer that an $\texttt{Encode}$ action takes place at $Q'$. Based on the $\texttt{Encode}$ action steps, at $Q'$, a tuple of the form $({M}_{s}^{P}.tagvec[X],val)$ exists in $L_{s'}^{Q'}[X].$ We claim that node $s'$ does not delete this tuple from $L_{s'}[X]$ between $Q'$ and $P'.$ To show the claim, observe that elements are deleted from $L[X]$ at $s'$  only because of the $\texttt{Garbage\_Collection}$ internal action at node $s'$. We argue that no $\texttt{Garbage\_Collection}$ performed by node $s'$ between $Q'$ and $P'$ deletes the tuple.

Consider a $\texttt{Garbage\_Collection}$ action performed by node $s'$ at point $Q''$ that is between $Q'$ and $P'$. 
Because $Q''$ is after $Q'$, from Lemma \ref{lemma:tagsalwaysincrease}, we note that $M_{s'}^{Q''}.tagvec[X] \geq M_{s'}^{Q'}.tagvec[X]  > {M}_{s}^{P}.tagvec[X]$.  Because of Lemma \ref{lem:tag-delete-ordering}, node $s$ does not send a $\langle\texttt{Del},X,tag\rangle$ for some $tag$ larger than $M_{s}^{P}.tagvec[X]$ from node $s$ by point $P$. Because of the FIFO nature of the channel, node $s'$ does not receive a $\langle\texttt{Del},X,tag\rangle$ for some $tag$ larger than $M_{s}^{P}.tagvec[X]$ from node $s$ by point $P'$. Therefore, as a part of the $\texttt{Garbage\_Collection}$ internal action performed by node $s$ at t $Q''$, we have $t_{max}^{Q''}[X] \leq M_s^{P}.tagvec[X].$ Because $M_{s'}^{Q''}.tagvec[X] > M_{s}^{P}.tagvec[X],$ we have $t_{max,s'}^{Q''}[X] < M_{s'}^{Q''}.tagvec[X].$ Because $X \in \mathcal{X}_{s'}$ and $t_{max,s'}^{Q''}[X] < M_{s'}^{Q''}.tagvec[X]$, lines \ref{line:GCcondition1} and \ref{line:GCcondition3} are not true for object $X$. the $\texttt{Garbage\_Collection}$ action performed at $Q''$ executes line \ref{line:GC2}, which only deletes tags that are strictly smaller than $t{max}[X]$ in Line \ref{line:GC2} in Algorithm \ref{alg:internal_actions}.  As $tmax_{s}^{Q''}[X] \leq M_s^{P}.tagvec[X],$ the tuple of the form  $(M_s^{P}.tagvec[X], val)$ is not deleted by the \newline $\texttt{Garbage\_Collection}$ action at $Q''.$ 

 \end{proof}

 \begin{lemma}
 Consider two nodes $s,s'$, and consider some message $m$ sent from $s$ to $s'$ in an execution $\beta$. Let $P$ be the point of sending of message $m$  from $s$ and $P'$ be the point of receipt. If $M_s^{P}.tagvec[X] \neq M_{s'}^{P'}.tagvec[X],$ then at least one of the following statements is true:
 \begin{itemize}
     \item There exists a tuple of the form $(M_{s}^{P}.tagvec[X],val)$ in list $L_{s}^{P}[X]$, or
     \item $M_s^{P}.tagvec[X] < M_{s'}^{P'}.tagvec[X]$ and $X \notin \mathcal{X}_{s'},$ or
    
     \item $M_s^{P}.tagvec[X] < M_{s'}^{P'}.tagvec[X]$ and there exists a tuple of the form $(M_{s}^{P}.tagvec[X],val)$ in list $L_{s'}^{P'}[X]$.
     
 \end{itemize}
 \label{lem:objectpresentfordecoding}
 \end{lemma}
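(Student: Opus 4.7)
The plan is a case analysis on the ordering of $M_s^P.tagvec[X]$ and $M_{s'}^{P'}.tagvec[X]$. Every value ever stored in an $M.tagvec[X]$ entry is either $\vec{0}$ or the tag of some write operation, so Lemma~\ref{lem:uniquetag1}, together with the observation that $\vec{0}$ is the unique minimum tag, implies that these two tags are comparable. Under the lemma hypothesis they are unequal, so exactly one of $M_s^P.tagvec[X] > M_{s'}^{P'}.tagvec[X]$ or $M_s^P.tagvec[X] < M_{s'}^{P'}.tagvec[X]$ holds, and I would split the argument accordingly.

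In the first case, only bullet~1 can possibly be satisfied (since bullets~2 and~3 both require the opposite strict inequality), so I would prove it by contradiction. Assume no tuple $(M_s^P.tagvec[X],val)$ is present in $L_s^P[X]$. Then Lemma~\ref{lem:currenttag-delete} guarantees that, before point~$P$, node~$s$ received $\langle\texttt{del},X,M_s^P.tagvec[X]\rangle$ from every node, and in particular from~$s'$. Let $R$ be the point at~$s'$ immediately after this del was dispatched; in real time, $R$ precedes the receipt of the del at~$s$, which precedes~$P$, which precedes~$P'$, so $R$ also precedes~$P'$ at~$s'$. Lemma~\ref{lem:tag-delete-ordering} yields $M_{s'}^R.tagvec[X]\geq M_s^P.tagvec[X]$, and Lemma~\ref{lemma:tagsalwaysincrease} lifts this to $M_{s'}^{P'}.tagvec[X]\geq M_s^P.tagvec[X]$, contradicting the case hypothesis.

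In the second case, if $X\notin\mathcal{X}_{s'}$ then bullet~2 holds directly; otherwise $X\in\mathcal{X}_{s'}$, and I would again suppose bullet~1 fails. The same invocation of Lemma~\ref{lem:currenttag-delete} produces a point~$R$ at~$s'$, preceding~$P'$, where $s'$ sent $\langle\texttt{del},X,M_s^P.tagvec[X]\rangle$. Together with the case's strict inequality and the assumption $X\in\mathcal{X}_{s'}$, this is precisely the hypothesis of Lemma~\ref{lem:objectpresentfordecoding1}, whose conclusion delivers bullet~3: a tuple $(M_s^P.tagvec[X],val)$ exists in $L_{s'}^{P'}[X]$ for some $val$.

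The only delicate step I anticipate is justifying the application of Lemma~\ref{lem:currenttag-delete} when $M_s^P.tagvec[X]=\vec{0}$. In Case~I this configuration is impossible since no tag is strictly smaller than~$\vec{0}$; in Case~II it is harmless because the initial tuple $(\vec{0},\vec{0})$ can never be removed from $L_s[X]$ while $M_s.tagvec[X]$ is still $\vec{0}$---a short inspection of the garbage-collection guards in lines~\ref{line:GCcondition1} and~\ref{line:GCcondition3} rules out the required preconditions, since no node ever sends a del with tag~$\vec{0}$ (lines~\ref{line:Delete_send1},\ref{line:Delete_send2},\ref{line:Delete_send3} all transmit strictly positive tags) and the alternative guard $t_{max}[X]<M.tagvec[X]$ is vacuous when $M.tagvec[X]=\vec{0}$. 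Thus bullet~1 is already satisfied in that corner, and the appeals to Lemma~\ref{lem:currenttag-delete} are sound in all nontrivial sub-cases.
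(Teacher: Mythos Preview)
Your proposal is correct and follows the same overall case split as the paper. Case~II is handled identically. In Case~I the paper takes a slightly different tack: rather than arguing by contradiction via Lemma~\ref{lem:currenttag-delete}, it proves bullet~1 directly by invoking Lemma~\ref{lem:pointsexec} to place the tuple $(M_s^{P}.tagvec[X],val)$ in $L_s[X]$ at some earlier point, and then argues it cannot have been garbage-collected by~$P$ because every $\langle\texttt{del},X,t\rangle$ sent by $s'$ before $P'$ has $t\le M_{s'}^{P'}.tagvec[X]<M_s^{P}.tagvec[X]$ (Lemma~\ref{lem:tag-delete-ordering}), forcing $tmax_s[X]<M_s^{P}.tagvec[X]$ at every intervening garbage-collection step. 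Your contradiction route via Lemma~\ref{lem:currenttag-delete} is a bit more economical, since that lemma already packages the garbage-collection reasoning; the paper's route is more self-contained in that it does not need Lemma~\ref{lem:currenttag-delete} for this case. Your explicit handling of the $M_s^{P}.tagvec[X]=\vec{0}$ corner is also more careful than the paper, which leaves that case implicit.
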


 \begin{proof}
Consider the hypothesis of the lemma.
Because \newline $M_{s}^{P}.tagvec[X] \neq M_{s'}^{P'}.tagvec[X],$ and because tags are comparable, we infer that either (I): $M_{s}^{P}.tagvec[X] > M_{s'}^{P'}.tagvec[X],$ or 
(II) $M_{s}^{P}.tagvec[X] < M_{s'}^{P'}.tagvec[X].$ We consider cases (I) and (II) separately and prove that the lemma statement holds for both cases.

\underline{Case (I): :$M_{s}^{P}.tagvec[X] > M_{s'}^{P'}.tagvec[X].$} From Lemma \ref{lem:pointsexec}, node $s$ consists of a tuple of the form $(M_{s}^{P}.tagvec[X],val)$ in its list $L^{s}[X]$ for some $val \in \mathcal{V}$ at some point before point $P$. 
We argue that this tuple exists in the list at point $P$ as well. 

Lemma \ref{lem:tag-delete-ordering} implies that every message sent by node $s'$ before $P'$ of the form  $\langle \texttt{del},X,v,t\rangle$ has $t \leq M_{s'}^{P'}.tagvec[X].$ Because \newline $M_{s'}^{P'}.tagvec[X]^{P'} < M_{s}^{P}.tagvec[X]$ and because $P$ comes before $P'$, we conclude that node $s'$ does not send a message of the form $\langle \texttt{del},X,v,t\rangle$ where $t \geq M_{s}^{P}.tagvec[X]$ before point $P$ to any node. Therefore, node $s$ does not receive a $\langle \texttt{del},X,v,t\rangle$ message $t \geq M_{s}^{P}.tagvec[X]$ from node $s'$ before point $P$. Therefore, for any point $Q$ that is not later than point $P,$ we have $t_{max,s}^{Q}[X] < M_{s}^{P}.tagvec[X]$. 

We conclude that $s$ contains an item of the form \newline $(M.tagvec[X],val)$ in $L[X]$ at point $P$. This is because (i) a tuple of the form $(M_{s}^{P}.tagvec[X],val)$ is exists in $L_{s}[X]$ at node $s$ at some point before $P$, (ii) based on the protocol, this tuple is not deleted by node $s$ before a point $R$ such that $t_{max,s}^{R}[X] \geq M_s^P.tagvec[X]$ and (iii) at every point $Q$ that is no later than $P$, we have $t_{max,s}^Q[X] < M_s^P.tagvec[X]$.

\underline{Case (II):~$M_{s}^{P}.tagvec[X] < M_{s'}^{P'}.tagvec[X].$} If a tuple of the form $(M.tagvec_s^P[X],val)$ exists in $L_s^P[X]$ or if $X \notin \mathcal{X}_{s'}$ then the statement of the lemma is true. So it suffices to prove the lemma statement for the scenario where a tuple of the form $(M_s^P.tagvec[X],val)$ does not exist in $L_s^P[X]$ for $X \in \mathcal{X}_{s'}.$ In this case, the hypothesis of Lemma \ref{lem:currenttag-delete} holds and therefore, we infer that node $s'$ sent a message of the form $\langle \texttt{del},X,v,M_s^P.tagvec[X]\rangle$ at some point $Q$ that is before $P$. Because  $M_s^P.tagvec[X] < M_{s'}^{P'}.tagvec[X]$ and $X \in \mathcal{X}_{s'}$ the hypothesis of Lemma \ref{lem:objectpresentfordecoding1} holds. Therefore, we conclude that there exists a tuple of the form $(M_s^P.tagvec[X], val)$ in $L_{s'}^{P'}[X]$. 
 \end{proof}

 \begin{lemma}
 Consider two nodes $s,s'$, and consider some message $m$ sent from $s$ to $s'$ in an execution $\beta$. Let $P$ be the point of sending of message $m$  from $s$ and $P'$ be the point of receipt. Consider an object $X \in \mathcal{X}_{s'}.$ If $M_{s}^{P}.tagvec[X] < M_{s'}^{P'}.tagvec[X],$ then there exists a tuple of the form $(M_{s'}^{P'}.tagvec[X],val)$ in list $L_{s'}^{P'}[X].$
     
 \label{lem:objectpresentfordecoding2}
 \end{lemma}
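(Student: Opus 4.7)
The plan is to locate a point at which some tuple of the form $(M_{s'}^{P'}.tagvec[X], val)$ already lives in $L_{s'}[X]$ and then show that it cannot be garbage-collected between that point and $P'$. Since the hypothesis gives $M_{s'}^{P'}.tagvec[X] > M_s^{P}.tagvec[X] \geq \vec{0}$, Lemma \ref{lem:pointsexec} applied to server $s'$ at point $P'$ produces a point $P'' \leq P'$ at which such a tuple resides in $L_{s'}^{P''}[X]$. I will take $P''$ to be the latest such point; if $P'' = P'$ the conclusion is immediate, otherwise I will examine the first point $Q$ strictly after $P''$ at which the tuple is absent and drive it to a contradiction.

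Since only a \texttt{Garbage\_Collection} action at $s'$ can remove entries from $L_{s'}[X]$, I will step through the branches of that action for $X \in \mathcal{X}_{s'}$. The branch at line \ref{line:GC3} is ruled out by its precondition $X \notin \mathcal{X}_{s'}$, so deletion of the tuple must occur in line \ref{line:GC1} or line \ref{line:GC2}. In either branch, a tuple with tag $M_{s'}^{P'}.tagvec[X]$ is discarded only if $t_{max,s'}^{Q}[X] \geq M_{s'}^{P'}.tagvec[X]$ (with strict inequality for line \ref{line:GC2}). It thus suffices to show $t_{max,s'}^{Q}[X] < M_{s'}^{P'}.tagvec[X]$.

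The crux of the argument is a FIFO/causality bound on what $s'$ can have in $DelL_{s'}[X]$ from $s$ by point $Q$. Suppose $s$ had ever sent a $\langle\texttt{del}, X, t\rangle$ with $t \geq M_{s'}^{P'}.tagvec[X]$ at some point $R$; Lemma \ref{lem:tag-delete-ordering} forces $M_s^{R}.tagvec[X] \geq M_{s'}^{P'}.tagvec[X]$, and combining with the hypothesis $M_s^{P}.tagvec[X] < M_{s'}^{P'}.tagvec[X]$ and Lemma \ref{lemma:tagsalwaysincrease} pushes $R$ strictly after $P$. The FIFO property of channel $c_{s \to s'}$ then forces the arrival of this \texttt{del} to occur strictly after the arrival of $m$, i.e., strictly after $P'$, and hence after $Q$. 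So $DelL_{s'}^{Q}[X]$ contains no pair $(\hat t, s)$ with $\hat t \geq M_{s'}^{P'}.tagvec[X]$, which by the definition of $t_{max}$ in line \ref{line:tmax} gives $t_{max,s'}^{Q}[X] < M_{s'}^{P'}.tagvec[X]$, contradicting the deletion requirement in either GC branch.

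The step I expect to be the main obstacle is the FIFO/causality argument itself: properly chaining Lemmas \ref{lem:tag-delete-ordering} and \ref{lemma:tagsalwaysincrease} with the FIFO order on $c_{s \to s'}$ so that no stray \texttt{del} from $s$ can beat $m$ to $s'$. The remaining ingredients — applying Lemma \ref{lem:pointsexec} to seed the analysis, and enumerating the two possible GC branches to verify each is blocked — are largely bookkeeping, though the distinction between the $\leq$ threshold in line \ref{line:GC1} and the strict $<$ threshold in line \ref{line:GC2} will need to be tracked carefully.
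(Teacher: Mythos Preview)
Your proposal is correct and follows essentially the same route as the paper: seed the tuple via Lemma~\ref{lem:pointsexec}, then use Lemma~\ref{lem:tag-delete-ordering} together with the FIFO property of $c_{s\to s'}$ to bound $tmax_{s'}[X]$ strictly below $M_{s'}^{P'}.tagvec[X]$, which blocks both lines~\ref{line:GC1} and~\ref{line:GC2} (and line~\ref{line:GC3} is excluded by $X\in\mathcal{X}_{s'}$). The only cosmetic difference is that the paper argues directly that no \texttt{Garbage\_Collection} between the seeding point and $P'$ can delete the tuple, whereas you phrase it as a contradiction at the first deletion point $Q$; the underlying chain of lemmas and the FIFO step are identical.
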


 \begin{proof}
 From Lemma \ref{lem:pointsexec}, we infer that there exists a point $Q$ before $P'$ such that a tuple of the form $(M_{s'}^{P'}.tagvec[X],val)$ exists in list $L_{s'}^{P'}[X].$ Since  $\texttt{Garbage\_Collection}$ internal actions are the only actions that remove entries from list $L_s[X],$ it suffices to show that no  $\texttt{Garbage\_Collection}$ action that occurs between $Q$ and $P'$ removes the element $(M_{s'}^{P'}.tagvec[X],val)$ from the list $L_s[X].$ We show this next.
 
 From Lemma \ref{lem:tag-delete-ordering}, we infer that node $s$ does not send a \newline $\langle \texttt{Del},X,t,val \rangle$ message before point $P$ for any $t > M_s^P.tagvec[X].$ Because of the FIFO nature of the channel, node $s'$ does not receive a $\langle \texttt{Del},X,t,val \rangle$ from node $s'$ for  any $t > M_{s}^{P}.tagvec[X]$ before point $P'$. Thus, for any   $\texttt{Garbage\_Collection}$ that takes place before $P',$ the variable $tmax_{s}[X]$ in line \ref{line:tmax} is no bigger than $M_{s}^{P}.tagvec[X].$ Therefore, the state changes that take place in lines \ref{line:GC1} and \ref{line:GC2} of Algorithm \ref{alg:internal_actions} does not remove the element $(M_{s'}^{P'}.tagvec[X],val)$ from the list $L_s[X].$  \end{proof}

 \begin{lemma}
 Consider two nodes $s,s'$, and consider some message $m$ sent from $s$ to $s'$ in an execution $\beta$. Let $P$ be the point of sending of message $m$  from $s$ and $P'$ be the point of receipt. Consider an object $X$ such that $X \notin \mathcal{X}_{s}, X \in \mathcal{X}_{s'}.$ If $M_s^P.tagvec[X] \neq M_{s'}^{P'}.tagvec[X],$ then there exists a tuple of the form $(M_{s'}^{P'}.tagvec[X],val)$ in list $L_{s'}^{P'}[X].$    
 \label{lem:objectpresentfordecoding3}
 \end{lemma}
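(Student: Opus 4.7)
The plan is to reduce this lemma to Lemma \ref{lem:objectpresentfordecoding2}, which already handles the case $X \in \mathcal{X}_{s'}$ and $M_s^P.tagvec[X] < M_{s'}^{P'}.tagvec[X]$. The only additional content of the current statement is that the weaker hypothesis $M_s^P.tagvec[X] \neq M_{s'}^{P'}.tagvec[X]$ can be strengthened to strict inequality once we know $X \notin \mathcal{X}_s$.

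First I would invoke Lemma \ref{lem:nonpresentobjectshavesmallertags} at the point $P$ with nodes $s$ and $s'$ and object $X$; since $X \notin \mathcal{X}_s$ and $X \in \mathcal{X}_{s'}$ by hypothesis, that lemma yields $M_s^P.tagvec[X] \leq M_{s'}^P.tagvec[X]$. Next, because $P'$ comes after $P$ in $\beta$, Lemma \ref{lemma:tagsalwaysincrease} applied at server $s'$ gives $M_{s'}^P.tagvec[X] \leq M_{s'}^{P'}.tagvec[X]$. Chaining these two inequalities yields $M_s^P.tagvec[X] \leq M_{s'}^{P'}.tagvec[X]$, and combining this with the lemma's hypothesis $M_s^P.tagvec[X] \neq M_{s'}^{P'}.tagvec[X]$ upgrades the inequality to the strict form $M_s^P.tagvec[X] < M_{s'}^{P'}.tagvec[X]$.

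At this point, the hypothesis of Lemma \ref{lem:objectpresentfordecoding2} is satisfied verbatim: we have the same message $m$ with sending point $P$ and receiving point $P'$, object $X \in \mathcal{X}_{s'}$, and the strict inequality $M_s^P.tagvec[X] < M_{s'}^{P'}.tagvec[X]$. Applying Lemma \ref{lem:objectpresentfordecoding2} gives the existence of a tuple of the form $(M_{s'}^{P'}.tagvec[X], val)$ in $L_{s'}^{P'}[X]$, which is exactly the conclusion we want.

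The argument is essentially a two-line corollary, so there is no real obstacle beyond bookkeeping: the technical weight sits inside Lemma \ref{lem:objectpresentfordecoding2} (which in turn leaned on the FIFO property to ensure no $\texttt{del}$ message for a larger tag has reached $s'$ before $P'$) and Lemma \ref{lem:nonpresentobjectshavesmallertags} (which followed from Lemma \ref{lem:handlenonpresentobjects} and Lemma \ref{lem:tag-delete-ordering}). The only thing I would be careful about is using Lemma \ref{lem:nonpresentobjectshavesmallertags} at the correct point, namely $P$, where both vector-clock components refer to the same point of the execution, before lifting the bound to $P'$ via monotonicity.
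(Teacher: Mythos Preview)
Your proposal is correct and follows essentially the same approach as the paper: invoke Lemma~\ref{lem:nonpresentobjectshavesmallertags} at $P$, then Lemma~\ref{lemma:tagsalwaysincrease} from $P$ to $P'$, combine with the $\neq$ hypothesis to get strict inequality, and conclude via Lemma~\ref{lem:objectpresentfordecoding2}.
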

 \begin{proof}
Because $X \in \mathcal{X}_{s'}, X \notin \mathcal{X}_{s},$ from Lemma \ref{lem:nonpresentobjectshavesmallertags}, we have $M_{s}^{P}.tagvec[X] \stackrel{(a)}{\leq} M_{s'}^{P}.tagvec[X].$ Because $P'$ comes after $P$, from Lemma \ref{lemma:tagsalwaysincrease}, we have $M_{s'}^{P}.tagvec[X] \stackrel{(b)}{\leq} M_{s'}^{P'}.tagvec[X].$ The lemma hypothesis that $M_{s}^{P}.tagvec[X] \neq M_{s'}^{P'}.tagvec[X]$ implies that at least one of the inequalities $(a),(b)$ is strict and $M_{s}^{P}.tagvec[X] < M_{s'}^{P'}.tagvec[X].$ Thus, the hypothesis of Lemma \ref{lem:objectpresentfordecoding2} is satisfied and its conclusion implies that there exists a tuple of the form $(M_{s'}^{P'}.tagvec[X],val)$ in list $L_{s'}^{P'}[X].$
 \end{proof}

\subsection{Proof of Lemma \ref{lem:error1}}

 \begin{proof}
 Since the only action that modifies $Error1$ variable is the receipt of a $\texttt{val\_resp\_encoded}$ message, it suffices to consider such a point. Every  $\texttt{val\_resp\_encoded}$ message is sent in response to a $\texttt{val\_inquiry}$ message. Let point $P$ denote the point of sending of the $\langle \texttt{val\_inq},clientid, opid, \overline{X},tvec\rangle$ from node $s$, let $Q$ be the point of receipt of this message at node $s'$ and let $R$ denote the point of receipt of $\langle \texttt{val\_resp\_encoded},\overline{M},clientid,opid,\overline{X},reqtvec\rangle$ message at node $s.$ From the protocol, we conclude that the point of sending of the $\texttt{val\_resp\_encoded}$ from node $s'$ is is also $Q$. Also note that $tvec = M_s^{P}.tagvec$.

 It suffices to consider every object $X \in \mathcal{X}_{s'}$ and argue that on receipt of $\texttt{val\_resp\_encoded}$ message, the action performed by node $s$ does not set $Error1[X] \leftarrow 1.$  Note that $Error1[X]$ is not set to $1$ if $\overline{M}.tagvec[X] = {M}_{s}^P.tagvec[X].$ So it suffices to assume that $\overline{M}.tagvec[X] \neq {M}_{s}^{P}.tagvec[X]$ for the remainder of the proof.
 
 \begin{claim}
 If there exists an element $(M_{s'}^{Q}.tagvec[X],val)$ in $L_{s'}^{Q'}[X],$ then we have $\overline{M}.tagvec[X] \in \{\mathbf{0}, M_{s}^{P}.tagvec[X]\}$. Furthermore, if an element $M_{s}^{P}.tagvec[X]$ also exists in $L_{s'}^{Q'}[X],$ then  $$\overline{M}.tagvec[X] = M_{s}^{P}.tagvec[X].$$ 
 \label{claim:error1vals}
 \end{claim}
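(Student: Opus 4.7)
My plan is to prove the claim by tracing through the code that constructs the response message $\overline{M}$, which is the $ResponsetoValInq$ variable assembled at node $s'$ at point $Q$ upon receipt of the $\texttt{val\_inq}$ (see the inner for-loop starting at line \ref{line:forloop1} in Algorithm \ref{alg:input_actions}). Recall that $ResponsetoValInq$ is initialized to $M_{s'}^{Q}$, so $\overline{M}.tagvec[X]$ starts as $M_{s'}^{Q}.tagvec[X]$, and the only way it can change during this action is by executing line \ref{line:ResponsetoValInqtagupdate1} (which sets the tag to $\mathbf{0}$) or line \ref{line:ResponsetoValInqtagupdate2} (which sets the tag to $wantedtagvec[X]= M_{s}^{P}.tagvec[X]$). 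Therefore the only possible values of $\overline{M}.tagvec[X]$ after the for-loop terminates are $M_{s'}^{Q}.tagvec[X]$, $\mathbf{0}$, or $M_{s}^{P}.tagvec[X]$.

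To prove the first statement, I would split into two cases according to whether $M_{s'}^{Q}.tagvec[X] = M_{s}^{P}.tagvec[X]$ or not. If they are equal, then the outer conditional ``If $M.tagvec[X]\neq wantedtagvec[X]$'' is false, so neither re-encoding branch is executed, and $\overline{M}.tagvec[X]$ remains equal to $M_{s'}^{Q}.tagvec[X] = M_{s}^{P}.tagvec[X]$, which lies in the target set. If they are unequal, the outer conditional is true; since by hypothesis there exists $val$ with $(M_{s'}^{Q}.tagvec[X],val)\in L_{s'}^{Q}[X]$, the ``remove old'' condition in line \ref{line:responsetovalinqcondition1} is satisfied, so line \ref{line:ResponsetoValInqtagupdate1} executes and sets $\overline{M}.tagvec[X]\leftarrow \mathbf{0}$. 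After this, either the ``add new'' condition in line \ref{line:responsetovalinqcondition2} fires (leaving the tag at $M_{s}^{P}.tagvec[X]$) or it does not (leaving the tag at $\mathbf{0}$). Either outcome gives $\overline{M}.tagvec[X] \in \{\mathbf{0}, M_{s}^{P}.tagvec[X]\}$.

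For the ``furthermore'' part, assume additionally that $(M_{s}^{P}.tagvec[X],w)\in L_{s'}^{Q}[X]$ for some $w$. In the equal case, the conclusion $\overline{M}.tagvec[X] = M_{s}^{P}.tagvec[X]$ is immediate as above. In the unequal case, after line \ref{line:ResponsetoValInqtagupdate1} has executed (by the argument of the previous paragraph), the hypothesis guarantees that the uniqueness condition in line \ref{line:responsetovalinqcondition2} is satisfied, so line \ref{line:ResponsetoValInqtagupdate2} executes and sets $\overline{M}.tagvec[X]\leftarrow wantedtagvec[X]= M_{s}^{P}.tagvec[X]$, as required.

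I anticipate no real obstacle here: this is essentially a mechanical case analysis on the conditionals at lines \ref{line:responsetovalinqcondition1} and \ref{line:responsetovalinqcondition2}, together with the observation that only lines \ref{line:ResponsetoValInqtagupdate1} and \ref{line:ResponsetoValInqtagupdate2} modify $ResponsetoValInq.tagvec[X]$ during the action. The only mild subtlety is that the two re-encoding branches are logically independent (both may fire in sequence), so the argument must make clear that the second branch overwrites the effect of the first on the tag component; this follows directly from the textual order of lines \ref{line:ResponsetoValInqtagupdate1} and \ref{line:ResponsetoValInqtagupdate2} in the algorithm.
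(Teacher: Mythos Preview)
Your proposal is correct and follows essentially the same approach as the paper: both trace through the conditional structure of the \texttt{val\_inq} handler, noting that the hypothesis makes the test in line~\ref{line:responsetovalinqcondition1} succeed so that $\overline{M}.tagvec[X]$ is first set to $\mathbf{0}$, and then possibly overwritten with $M_s^P.tagvec[X]$ at line~\ref{line:ResponsetoValInqtagupdate2}. Your version is slightly more careful in explicitly handling the degenerate case $M_{s'}^{Q}.tagvec[X]=M_s^P.tagvec[X]$ (where the outer conditional is false and no re-encoding occurs), which the paper's proof elides.
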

 \begin{proof}
If there exists an element $(M_{s'}^{Q}.tagvec[X],val)$, then at point $Q$ node $s'$ the condition in \ref{line:responsetovalinqcondition1} in Algorithm \ref{alg:input_actions} returns true. Therefore the node execute line \ref{line:ResponsetoValInqtagupdate1} which sets $\overline{M}.tagvec[X] = \mathbf{0}$. Further, if  $M_{s}^{P}.tagvec[X]$ also exists in $L_{s'}^{Q'}[X],$ then the condition in \ref{line:responsetovalinqcondition2} returns true, and line \ref{line:ResponsetoValInqtagupdate2} is executed. This line sets $\overline{M}.tagvec[X] = M_{s}^{P}.tagvec[X]$
 \end{proof}
 
 \begin{claim}  There exists $val \in \mathcal{V}$ such that\\ (I) $(M_{s'}^{Q'}.tagvec[X],val) \in L_{s'}^{Q}[X]$ or\\ (II) $(M_{s'}^{Q'}.tagvec[X],val) \in L_{s}^{R}[X].$ 
 \end{claim}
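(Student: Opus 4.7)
The plan is to treat the $\texttt{val\_resp\_encoded}$ message as an in-transit message from sender $s'$ (sent at point $Q$) to receiver $s$ (received at point $R$) and to invoke the in-transit machinery developed above, namely Lemmas \ref{lem:objectpresentfordecoding}, \ref{lem:objectpresentfordecoding1}, \ref{lem:objectpresentfordecoding2}, \ref{lem:objectpresentfordecoding3}, \ref{lem:handlenonpresentobjects}, and \ref{lem:tag-delete-ordering}. Since the outer hypothesis of the claim supplies $X \in \mathcal{X}_{s'}$, the sender-side containment required by these lemmas is already in place. I would begin by splitting into two cases according to whether $M_{s'}^{Q}.tagvec[X] = M_{s}^{R}.tagvec[X]$.

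In the \emph{inequality} case, Lemma \ref{lem:objectpresentfordecoding} applied with sender $s'$ and receiver $s$ yields one of three options: either $(M_{s'}^Q.tagvec[X], val) \in L_{s'}^Q[X]$, which is condition (I); or $(M_{s'}^Q.tagvec[X], val) \in L_s^R[X]$, which is condition (II); or the middle option that $M_{s'}^Q.tagvec[X] < M_s^R.tagvec[X]$ together with $X \notin \mathcal{X}_s$. I would rule out this middle option by contradiction: Lemma \ref{lem:handlenonpresentobjects} forces $s$ to have received, before $R$, a $\texttt{del}$ from $s'$ with tag $t \geq M_s^R.tagvec[X] > M_{s'}^Q.tagvec[X]$; Lemma \ref{lem:tag-delete-ordering} forces $M_{s'}.tagvec[X] \geq t$ at the send time of that $\texttt{del}$; Lemma \ref{lemma:tagsalwaysincrease} then places this send time strictly after $Q$; and FIFO on $c_{s' \to s}$ forces the $\texttt{del}$ to arrive at $s$ only after $R$, the desired contradiction.

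In the \emph{equality} case with common tag $t^\star = M_{s'}^Q.tagvec[X] = M_s^R.tagvec[X]$, if $t^\star = \vec{0}$ the initial list entry $(\vec{0}, \vec{0})$ supplies either condition. Otherwise I would apply Lemma \ref{lem:pointsexec} at node $s$ at point $R$ to produce a point $R_1 \leq R$ with $(t^\star, val) \in L_s^{R_1}[X]$, and then argue that this tuple persists through $R$. Any garbage-collection removal would require $t{max}_s[X] \geq t^\star$, which in turn demands a $\texttt{del}$ with tag at least $t^\star$ received from every node -- in particular from $s'$; Lemma \ref{lem:tag-delete-ordering} applied at $s'$, combined with $M_{s'}^Q.tagvec[X] = t^\star$ and Lemma \ref{lemma:tagsalwaysincrease}, constrains any such $\texttt{del}$ to have been sent no earlier than $Q$, and then FIFO together with a case-split on which garbage-collection branch fires (lines \ref{line:GCcondition1}--\ref{line:GC2} of Algorithm \ref{alg:internal_actions}) pushes its effective arrival past $R$.

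The chief technical obstacle is the repeated, delicate interleaving of FIFO ordering on the channel $c_{s' \to s}$ with the tag-versus-$\texttt{del}$ monotonicity expressed by Lemmas \ref{lem:tag-delete-ordering} and \ref{lem:handlenonpresentobjects}: every borderline subcase ultimately reduces to the observation that a $\texttt{del}$ capable of advancing or erasing the relevant tag for $X$ at $s$ must have been sent at or after $Q$, and therefore cannot reach $s$ by point $R$. This is precisely the invariant that lets cross-object re-encoding coexist with the distributed garbage-collection protocol.
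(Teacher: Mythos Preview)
Your case split differs from the paper's. The paper splits on whether $X \in \mathcal{X}_s$: when $X \notin \mathcal{X}_s$, it applies Lemma~\ref{lem:objectpresentfordecoding3} to the \emph{outgoing} \texttt{val\_inq} message (from $s$ at $P$ to $s'$ at $Q$), whose hypothesis $M_s^P.tagvec[X] \neq M_{s'}^Q.tagvec[X]$ is already in hand from the outer assumption, and this yields condition~(I) in one step; when $X \in \mathcal{X}_s$, it applies Lemma~\ref{lem:objectpresentfordecoding} to the \texttt{val\_resp\_encoded}, where the second bullet is vacuous. Your FIFO-based exclusion of the middle bullet in the inequality case is correct, but it essentially redoes from scratch the work already packaged in Lemmas~\ref{lem:handlenonpresentobjects} and~\ref{lem:nonpresentobjectshavesmallertags} (on which Lemma~\ref{lem:objectpresentfordecoding3} rests); the paper's route is a single lemma invocation rather than a four-step contradiction.

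Your equality-case argument has a genuine gap. You assert that any \texttt{del} from $s'$ with tag at least $t^\star$ was sent ``no earlier than $Q$''. Lemma~\ref{lem:tag-delete-ordering} only bounds the tags of dels sent \emph{before} $Q$ by $M_{s'}^Q.tagvec[X]=t^\star$; a del with tag \emph{exactly} $t^\star$ can be sent strictly before $Q$ --- indeed the \texttt{Encoding} step that first raised $M_{s'}.tagvec[X]$ to $t^\star$ sends one on line~\ref{line:Delete_send1}, and Lemma~\ref{lemma:tagsalwaysincrease} places that step at or before $Q$. Hence FIFO does not prevent $s$ from holding $(t^\star,s')\in DelL_s[X]$ before $R$, and the garbage-collection branch at line~\ref{line:GC1} (which deletes tags \emph{equal to} $tmax[X]$ when $tmax[X]=M_s.tagvec[X]\in\overline{S}$) can remove $(t^\star,val)$ from $L_s[X]$ prior to $R$. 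Your promised ``case-split on which branch fires'' rescues lines~\ref{line:GC2} and~\ref{line:GC3}, but not this one.
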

 \begin{proof}[Proof of Claim:] If $X \in \mathcal{X}_{s},$ then the $\texttt{val\_resp\_encoded}$ satisfies the hypothesis of Lemma  \ref{lem:objectpresentfordecoding},  whose statement implies either (I) $(M_{s'}^{Q}.tagvec[X],val) \in L_{s'}^{Q'}[X]$ or (II) $(M_{s'}^{Q}.tagvec[X],val) \in L_{s}^{R}[X] .$ If, on the other hand, $X \notin \mathcal{X}_{s},$ then the $\texttt{val\_inq}$ message from $s$ to $s'$ satisfies the hypothesis of Lemma \ref{lem:objectpresentfordecoding3}, which implies that (I) $(M_{s'}^{Q'}.tagvec[X],val)\in L_{s'}^{Q'}[X].$
 \end{proof}

 We consider cases (I) and (II) separately.
 
 \underline{Case (I): $(M_{s'}^{Q}.tagvec[X],val) \in L_{s'}^{Q}[X]$}
From, Lemma \ref{claim:error1vals} we know that $\bar{M}.tagvec[X] \stackrel{(a)}{=} \mathbf{0},$ or $\bar{M}.tagvec[X] \stackrel{(b)}{=} M_s^{P}.tagvec[X].$ 
  In both cases (a) and (b), based on the protocol, we conclude that in line \ref{line:seterror1} of Algorithm \ref{alg:input_actions} does not return true and $Error1[X]$ is not set to $1$ .
  
  \underline{Case (II):} $$\underline{(M_{s'}^{Q}.tagvec[X],val) \notin L_{s'}^{Q}[X],}$$ $$\underline{(M_{s'}^{Q}.tagvec[X],val) \in L_{s}^{R}[X]}$$
  
  From the protocol, in this case,$\bar{M}.tagvec[X] = M_{s'}^{Q}.tagvec[X].$ On receipt of this message at server $s$ at point $R$, because \newline $(M_{s'}^{Q}.tagvec[X],val) \in L[X]^{s,R},$ we note that that the condition in line \ref{line:metaerrorcondition1} holds. Therefore $Error1[X]
  $ is not set to $1$ at point $R$. 
 \end{proof}
 
\subsection{Proof of Lemma \ref{lem:error2}}

Since the only action that modifies $Error2$ variable is the receipt of a $\texttt{val\_resp\_encoded}$ message, it suffices to consider the point of receipt an arbitrary $\texttt{val\_resp\_encoded}$ message and show the the corresponding action does not set $Error2$ to $1$. Every  \newline $\texttt{val\_resp\_encoded}$ message is sent in response to a $\texttt{val\_inquiry}$ message. Let point $P$ denote the point of sending of the $$\langle \texttt{val\_inq},\overline{M},clientid, opid,\overline{X},tvec\rangle$$ message from node $s$, let $Q$ be the point of receipt of this message at node $s'$ and let $R$ denote the point of receipt
of the $$\langle \texttt{val\_resp\_encoded},\overline{M},clientid,opid,\overline{X},reqtvec\rangle$$ message at node $s.$ From the protocol, we conclude that the point of sending of the $\texttt{val\_resp\_encoded}$ from node $s'$ is is also $Q$. Also note that $tvec = M_s^{P}.tagvec$.

It suffices to take every object $X \in \mathcal{X}_{s'}$ such that $M_{s}^{P}.tagvec[X] \neq {M}_{s'}^{Q}.tagvec[X]$ and show that the for loop in line \ref{line:seterror2} in Algorithm \ref{alg:input_actions} does not set $Error2_s^{L}[X] \leftarrow 1.$ Consider object $X \in \mathcal{X}_{s'}.$ Because, $M_{s}^{P}.tagvec[X] \neq  M_{s'}^{Q}.tagvec[X],$ Cases (I), (II) below are exhaustive.
 
 \underline{Case (I): $ M_{s}^{P}.tagvec[X]< M_{s'}^{Q}.tagvec[X].$}  Because a message $m = \texttt{val\_inquiry}$ is sent at point $P$ from node $s$ and arrives at point $Q$ at node $s'$ and because $X \in \mathcal{X}_{s'}$, from Lemma \ref{lem:objectpresentfordecoding2}, we note that there exists an element of the form $(val2,M_{s'}^{Q}.tagvec[X]) \in L_{s'}^{Q}[X].$ Therefore, from Claim \ref{claim:error1vals} in Lemma \ref{lem:error1}'s proof, we have $\overline{M}.tagvec[X] \in \{\mathbf{0},M_{s}^{P}.tagvec[X]\}.$  From Lemma  \ref{lem:objectpresentfordecoding} applied to the $\texttt{val\_inq}$ message, we know that there exists $val1$ such that either (I-A) $(M_{s}^{P}.tagvec[X],val1) \in L_{s'}^{Q}[X]$ or (I-B) $(M_{s}^{P}.tagvec[X],val1) \in L_{s}^{P}[X] .$ We consider cases (I) and (I-B) separately.
 
 \underline{Case (I-A): $(M_{s}^{P}.tagvec[X],val1) \in L_{s'}^{Q}[X]$}
 In this case, because there exists an element of the form $(val2,M_{s}^{Q}.tagvec[X]) \in L_{s}^{Q}[X],$ we conclude that $\overline{M}.tagvec[X] = M_s^{P}.tagvec[X].$ From the protocol, the condition in line \ref{line:metaerrorcondition} in Algorithm \ref{alg:input_actions} is not satisfied; therefore $Error2[X]$ is not set to 1 at point $R$.
 
  \underline{Case (I-B): }$$\underline{(M_{s}^{P}.tagvec[X],val1) \in L_{s}^{P}[X],}$$ $$\underline{(M_{s}^{P}.tagvec[X],val1) \notin L_{s'}^{Q}[X]}$$
 
 At point $R$, there are two possibilities (i) there exists a tuple of the form $$(clientid,opid,X,tvec,M_s^{P}.tagvec,\overline{v})$$ exists in $ReadL_{s}^{R}$ with $tvec[X] = M_{s}^{P}.tagvec[X],$ or (ii) there exists no such tuple. Based on the protocol, in case $(ii)$ , $Error2[X]$ is not set to $1$ at point $R.$ So we only consider case $(i)$ here. The point $R$ satisfies the hypothesis of Lemma \ref{lem:listsavespendingread}. From the lemma statement, there are only two possibilities:   $M_{s}^{P}.tagvec[X] \in L_{s}^{R}[X]$ or  $M_s^{P}.tagvec[X] = M_s^{R}.tagvec[X].$ If 
 $M_{s}^{P}.tagvec[X] \in L_{s}^{R}[X]$, coupled with Lemma \ref{lem:error1}, we conclude that the line condition in line \ref{line:metaerrorcondition2} returns true; therefore condition  $Error2[X]$ is not set to $1$. We consider the case where $M_{s}^{P}.tagvec[X] = M_{s}^{R}.tagvec[X].$
 
 \allowdisplaybreaks{
 Because we are considering case (I)-B, $(M.tagvec_s^{P}[X],val)$ does not belong to $L_{s'}^{Q}[X]$. In this case, we note that node $s'$ does not send a $\langle del, X, M_s^{P}.tagvec[X]\rangle$ at any point before $Q.$ To see this consider the contradictory hypothesis that the node sent such a message. Then combined with the fact that $M_{s}^{P}.tagvec[X] < M_{s}^{Q}.tagvec[X], X \in \mathcal{X}_{s'}$ we infer from Lemma \ref{lem:objectpresentfordecoding1} that  \newline $(M_{s}^{P}.tagvec[X],val)$ belongs to $L_{s'}^{Q}[X]$, which violates the hypothesis of Case (I-B). By the FIFO nature of the channel, we infer that a message $\langle \texttt{del}, X, M_{s}^{P}.tagvec[X]\rangle$ is not received by node $s$ from node $s'$ by point $R$. From the contrapositive of Lemma \ref{lem:currenttag-delete}, and because $M_{s}^{P}.tagvec[X] = M_{s}^{R}.tagvec[X],$ we infer that a tuple of the form $(M_{s}^{P}.tagvec[X],val)$ exists in $L_{s}^{R}[X].$
  
 Thus, we have shown that a tuple of the form $(M_{s}^{P}.tagvec[X],val)$ belongs to  $L_{s}^{R}[X]$ for every object $X \in \mathcal{X}_{s'}.$ Combined with Lemma \ref{lem:error1}, we infer that the condition in line \ref{line:metaerrorcondition2} is true, and therefore $Error2[X]$ is not set to $0.$
 }

  \underline{Case (II): $M_{s}^{P}.tagvec[X] > M_{s'}^{Q}.tagvec[X].$}
  From Lemma \ref{lem:pointsexec}, we observe that there is a point $P'$ before $P$ such that a tuple of the form $(M_s^P.tagvec[X],val)$ belongs to $L_{s}^{P'}[X].$ We show that this tuple is not removed from $L[X]$ before point $R.$   To show this, it suffices to show that for any $\texttt{Garbage\_Collection}$ action performed by node $s$ at point $R'$ which is between $P'$ and $R$, the element  $(M_{s}^{P}.tagvec[X],val)$ is not removed from the list $L_{s}^{R'}[X].$   From Lemma \ref{lem:tag-delete-ordering}, we infer that the highest tag $t$ such that node $s'$ has sent a message of the form $\langle \texttt{del},X, t\rangle$ satisfies $t \leq M_{s'}^{Q}.tagvec[X] < M_{s}^{P}.tagvec[X].$ Therefore, node $s$ does not receive a tuple of the form $\langle \texttt{del},X, t\rangle$ with $t \geq M_{s}^{P}.tagvec[X]$ from node $s$ before point $R.$ Therefore, $R$, $tmax_{s}^{R'}[X] < M_{s}^{P}.tagvec[X]$.  Therefore, from the protocol lines \ref{line:GC1}, \ref{line:GC3}, \ref{line:GC2} in Algorithm \ref{alg:internal_actions}, we infer that the element $(M_{s}^{P}.tagvec[X],val)$ is not removed from the list $L_s^{R'}[X].$ This completes the proof.

  \subsection{Proof of Lemma \ref{lem:livenessvalresp}}
 
  \begin{proof}
We begin with the following claim:
\begin{claim}
If there is a point $R$ after $P$ where a tuple $$(\overline{clientid},opid, \overline{X}, \overline{tags},\bar{v})$$ does not exist in $ReadL_s^R$, then no tuple with the same operation id exists in $ReadL_s^{R}$ for any point after $R$ in $\beta.$
\label{claim:liveness1}
\end{claim}
\begin{proof}[Sketch:]
The claim follows from noting that the protocol adds a tuple to $ReadL_s$ at most once for a given operation. Since every operation has a unique $opid,$ once a tuple with operation identifier $opid$ is removed, no tuple with the same identifier can be added to $ReadL_s.$ 
\end{proof}
Based on the protocol, on receiving a $$\langle \texttt{val\_resp\_encoded}, clientid, opid, X, tags, \overline{v}\rangle$$ message,  server $s$ checks if an entry exists in $ReadL_s$ with operation identifier $opid$ in line \ref{line:valrespencoded_readcheck} in Algorithm \ref{alg:input_actions}. From the protocol, node $j$ sends $\texttt{val\_resp\_encoded}$ message in response to a $val\_inq$ message, which in turn is sent by node $s$ when an entry with operation identifier $opid$ is made in $ReadL_s.$ Therefore, an entry $(clientid, opid, X, tags,\bar{w})$ is added to $ReadL_s$ before point $Q$. Since the lemma hypothesis indicates a tuple with operation identifier $opid$ exists at $P$ which is after point $Q$, Claim \ref{claim:liveness1} implies that the tuple is in  $ReadL_s^Q$. Therefore, line \ref{line:valrespencoded_readcheck} returns true at point $Q$. Because of Lemmas \ref{lem:error1}, \ref{lem:error2}, the condition in line \ref{line:errorcondition} also returns true. Therefore, line \ref{line:valrespencoded_addtuple} is executed at $Q$. Based on this line, if $(clientid, opid, X, tags,\bar{u})$ is the tuple in $ReadL_s^Q,$ we infer that $\Pi_j(\overline{u}) \neq \bot.$

Finally, from the protocol, we note that for any entry \newline $(clientid, opid, X, tags,\bar{z})$ in $ReadL_s^R$ at any point $R$ after $Q$, we have  $\Pi_i(\bar{u}) \neq \bot \Rightarrow \Pi_i(\bar{z}) \neq \bot.$ This is because the protocol does not replace codeword entries by null entries for a given $opid$ entry in $ReadL.$ 
\end{proof}

\subsection{Proof of Theorem \ref{lem:read_terminates_f=0}}

\begin{proof}
Let $\beta$ be a fair execution of \CausalEC~ satisfying the lemma hypotheses. Consider a read operation $\pi$ for object $X$ issued to a non-failing node $s;$ the read operation sends a $\langle \texttt{read}, opid,X\rangle$ message to the server. Based on the code in Algorithm \ref{alg:inputactions_clients}, if one of the conditions in line \ref{line:readimmediatereturncheck1} or line \ref{line:readreturnimmediatecheck2} is true, then the server responds to the client with a value and the read terminates. So, to show termination of $\pi,$ it suffices to focus on the case where the conditions in line \ref{line:readimmediatereturncheck1} and line \ref{line:readreturnimmediatecheck2} in Algorithm \ref{alg:inputactions_clients} are not true. In this case, based on line \ref{line:valinq_send}, a $\texttt{val\_inq}$ message is sent to every node. 
Based on the protocol, every non-failing node  eventually responds to a $\texttt{val\_inq}$ message via a $\texttt{val\_resp}$ or a $\texttt{val\_resp\_encoded}$ message. Specifically, in $\beta$, node $s$ receives a $\texttt{val\_resp}$ or  $\texttt{val\_resp\_encoded}$  message from every node in $S$. Let $P$ be the first point of $\beta$ such that at $P$, node $s$ has received a  $\texttt{val\_resp}$ or $\texttt{val\_resp\_encoded}$   message corresponding to $\pi$ from every node in $S.$ Note that the point $P$ itself is the point of receipt of the last such  message from a node in $S.$ We show that the node $s$ sends a value to the client in response to the read by or at point $P$. 
At point $P$, there are two cases: (i) there exists no entry in $ReadL$ corresponding to $\pi$, or (ii) there is an entry $(clientid, opid, X, tags,\bar{w})$ in $ReadL$ corresponding to $\pi$, where $opid$ is its identifier. In the former case, Lemma \ref{lem:read-liveness} implies that operation $\pi$  completes. We examine the later case here. 

By point $P$, every node in $S$ sends a $\texttt{val\_resp}$ or  $\texttt{val\_resp\_encoded}$ message. We claim that in fact, every node in $S$ sends a $\texttt{val\_resp\_encoded}$. To see this, note that if a node $n \in S$ sends a \texttt{val\_resp}, the code in line \ref{line:valresp_handle} in Algorithm \ref{alg:input_actions} responds to the read operation and removes the $(clientid, opid, X, tags,\bar{w})$  tuple from $ReadL.$ Since, for every operation, an entry is made into $ReadL$ corresponding to the operation at most once, and the removal of the tuple happens before point $P$, we conclude that the tuple does not exist in $ReadL$ at point $P$ - this contradicts our earlier hypothesis that such a tuple does exist. Therefore, we conclude that every node in $S$ sends a $\texttt{val\_resp\_encoded}$ message to node $s$ in response to $\pi.$ 

From Lemma \ref{lem:livenessvalresp}, we therefore conclude that $S \subseteq \{j: \Pi_j(\bar{w}) \neq \bot\}.$ Therefore line \ref{line:readreturncheck} returns true. From the protocol code lines \ref{line:read-send-val_resp_encoded}, we conclude that node $s$ eventually sends a response to the read operation $\pi$ and removes the tuple $(clientid, opid, X, tags,\bar{w})$  from ReadL. Therefore $\pi$ eventually completes.
\end{proof}

 \section{Eventual Consistency}
 
 \label{app:eventual_proof}
 
Theorem \ref{eventual consistent} is based on the following two lemmas.
\begin{lemma}\label{lem:eventually_apply}
Consider a fair execution $\beta$ of \CausalEC~ where every server is non-halting.
Consider an element $(s',X,v,t)$  in $InQueue_{s}$ at server $s$ at point $P$ of $\beta$.
Eventually, there is a point $Q$ in $\beta$ such that the element $(s',X,v,t)$ is not in $Inqueue_{s}$ at any point after $Q$. Additionally, for any point $Q'$ that is equal to, or after $Q$ in $\beta$, $vc_{s}^{Q'} \geq t.ts.$
\end{lemma}

The above lemma applies for most causal consistency protocols (see \cite{ahamad1995causal}).

We first prove lemma \ref{lem:eventually_apply}. We then prove Theorem \ref{eventual consistent}.
 \subsection{Proof of Lemma \ref{lem:eventually_apply}}
 
\begin{proof}
According to the algorithm, tuple $(s',X,v,t)$ is sent to every node other than $s$, eventually this tuple will be in every other node $s$'s $InQueue$. For node $s$, because $InQueue$ is a priority queue, and because there are finite number of timestamps that are smaller than $t.ts$, the number of tuples that are ever placed ahead of ahead of $(s,X,v,t)$ is finite. 

We repeat two claims that were shown in the proof of Lemma \ref{lem:uniquetag1}.
\begin{claim}
Consider any any tuple  $(\overline{s},\overline{X},\overline{v},\overline{t})$ in $Inqueue.Head.$ If we index all the writes by clients in $\mathcal{C}_{\overline{s}}$ as $\pi_{1},\pi_{2}, \ldots, $ in the order of the arrival of the corresponding $\texttt{write}$ messages at server $\overline{s}$, then:
 $\pi_{\overline{t}.ts[\overline{s}]}$ is a write to object $\overline{X}$ with value $v$. Further $\overline{t}.ts=ts(\pi_{\overline{t}.ts[\overline{s}]}).$
 \label{claim:writeordering}
\end{claim}

\begin{claim}
For any servers $s,\overline{s}$ for any point $Q$ of $\beta$, let $\pi$ denote the $vc_s^{Q}[\overline{s}]$-th write operation at server $s$ as per the ordering of Claim \ref{claim:writeordering}. Then $vc_s^{Q} \geq ts(\pi).$
\label{claim:writeordering2}
\end{claim}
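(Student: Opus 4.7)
The plan is to split into two cases based on whether $\overline{s}=s$ or $\overline{s}\neq s$, and in both cases identify the point $P'$ at which $vc_s[\overline{s}]$ was last incremented to reach its current value $vc_s^Q[\overline{s}]$. The crux of the argument will then be to show $vc_s^{P'} \geq ts(\pi)$ at that point, after which monotonicity of vector clocks (Lemma \ref{lem:vconlyincreases}) finishes the job, since $P'$ is at or before $Q$.

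First I will handle the degenerate case $vc_s^Q[\overline{s}]=0$, which is vacuous because there is no $0$-th write operation. Assume $k := vc_s^Q[\overline{s}] \geq 1$. In the case $\overline{s}=s$, the component $vc_s[s]$ is incremented only by Line \ref{line:vcincrement} upon receipt of a $\texttt{write}$ message from a client in $\mathcal{C}_s$. The $k$-th such receipt corresponds (by Claim \ref{claim:writeordering} and Definition \ref{def:ts_tag}) precisely to the write $\pi$ whose timestamp is recorded as $vc_s^{P'}$, where $P'$ is the point at which the $\texttt{write-return-ack}$ is sent. Thus $ts(\pi) = vc_s^{P'}$ directly, and since $P' \leq Q$, Lemma \ref{lem:vconlyincreases} yields $vc_s^Q \geq vc_s^{P'} = ts(\pi)$.

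In the case $\overline{s} \neq s$, the component $vc_s[\overline{s}]$ is incremented only by the $\texttt{Apply\_Inqueue}$ action (Line \ref{line:vcincrement_apply}), which fires when a tuple $(\overline{s},X,v,t)$ at $InQueue.Head$ satisfies the predicate of Line \ref{line:applycondition}, namely $t.ts[p]\leq vc_s[p]$ for all $p\neq \overline{s}$ and $t.ts[\overline{s}]=vc_s[\overline{s}]+1$. Let $P'$ be the point at which this action was executed and the value of $vc_s[\overline{s}]$ reached $k$; necessarily $t.ts[\overline{s}]=k$. By Claim \ref{claim:writeordering}, the tuple at the head is generated from the $k$-th write $\pi$ at server $\overline{s}$, and in particular $t = ts(\pi)$. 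Combining the predicate with the post-update assignment, immediately after $P'$ we have $vc_s \geq t = ts(\pi)$ componentwise. Since $P'$ occurs at or before $Q$, applying Lemma \ref{lem:vconlyincreases} once more gives $vc_s^Q \geq ts(\pi)$.

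I do not anticipate a significant obstacle; the proof is almost entirely bookkeeping atop the $\texttt{Apply\_Inqueue}$ predicate and Claim \ref{claim:writeordering}. The one subtle point worth being careful about is to ensure that the $\texttt{app}$ message delivered at server $s$ (and hence the tuple at $InQueue.Head$ when $vc_s[\overline{s}]$ is incremented to $k$) genuinely originates from the $k$-th write at server $\overline{s}$; this is where Claim \ref{claim:writeordering} is invoked, together with the FIFO property of channels to rule out reordering of the $\texttt{app}$ messages from $\overline{s}$ to $s$.
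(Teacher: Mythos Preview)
Your proposal is correct and follows essentially the same approach as the paper: identify the point at which $vc_s[\overline{s}]$ was last incremented to its current value, use the $\texttt{Apply\_Inqueue}$ predicate (together with Claim~\ref{claim:writeordering}) to conclude $vc_s \geq ts(\pi)$ at that point, and then apply Lemma~\ref{lem:vconlyincreases}. Your treatment is in fact more careful than the paper's, which omits the degenerate case $vc_s^Q[\overline{s}]=0$ and the case $\overline{s}=s$ (where the increment happens on Line~\ref{line:vcincrement} rather than in $\texttt{Apply\_Inqueue}$); one minor slip is that you write $t = ts(\pi)$ where it should be $t.ts = ts(\pi)$, since $t$ is a tag.
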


We prove the lemma via induction on $t.$ 
\underline{Base case:}
Consider a tag $(s_1,O_1,v_1,t_1)$ such that server $s$ receives adds no tuple to $Inqueue$ with a tag smaller than $t_1$ in $\beta.$ 
\begin{claim}
$t_1.ts$ is a $N$ dimensional vector such that the $s_1$th component of $t_1.ts$ is equal to $1$, and all other components are $0.$  
\end{claim}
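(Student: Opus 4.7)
My plan is to argue by contradiction in two parts, using Claims \ref{claim:writeordering} and \ref{claim:writeordering2} together with the FIFO property of channels. The tuple $(s_1,O_1,v_1,t_1)$ placed in $Inqueue_s$ arrived as an $\langle\texttt{app},O_1,v_1,t_1\rangle$ message from $s_1$. By Claim \ref{claim:writeordering}, this message corresponds to the $t_1.ts[s_1]$-th write operation $\pi_1$ at $s_1$, and $t_1.ts = ts(\pi_1)$. My goal is to show $t_1.ts[s_1]=1$ and $t_1.ts[s']=0$ for all $s'\neq s_1$; both parts will follow by exhibiting an even smaller tag added to $Inqueue_s$, contradicting the base-case hypothesis.

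For the $s_1$ coordinate, I suppose for contradiction that $t_1.ts[s_1]\geq 2$. Then there exists a prior write $\pi_0$ at $s_1$ (the $(t_1.ts[s_1]-1)$-th one), and server $s_1$ sent the corresponding $\langle\texttt{app}\rangle$ message to $s$ strictly before the one for $\pi_1$. By FIFO delivery (and reliability of channels, with fairness of $\beta$), $s$ eventually enqueues a tuple with tag $t_0 = ts(\pi_0)$. Because $vc_{s_1}$ only increases between the points where the two acks are issued (Lemma \ref{lem:vconlyincreases}), $t_0.ts[s_1] = t_1.ts[s_1]-1 < t_1.ts[s_1]$ and $t_0.ts[s']\leq t_1.ts[s']$ for $s'\neq s_1$. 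Hence $t_0.ts < t_1.ts$ componentwise, which by the tag order gives $t_0 < t_1$, contradicting the hypothesis.

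For the remaining coordinates, I suppose for contradiction that $t_1.ts[s_2] \geq 1$ for some $s_2\neq s_1$. Let $v$ denote $vc_{s_1}$ immediately before the increment triggered by $\pi_1$; then $t_1.ts = v$ with the $s_1$ coordinate incremented by one, and $v[s_2] = t_1.ts[s_2] \geq 1$. By Claim \ref{claim:writeordering2}, there is a write $\pi_2$ at $s_2$ (the $v[s_2]$-th one) with $ts(\pi_2) \leq v$. Comparing coordinates: $ts(\pi_2)[s_1] \leq v[s_1] < t_1.ts[s_1]$, and $ts(\pi_2)[s']\leq v[s']=t_1.ts[s']$ for $s'\neq s_1$, so $ts(\pi_2) < t_1.ts$ strictly and thus $\operatorname{tag}(\pi_2) < t_1$. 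Since the $\langle\texttt{app}\rangle$ for $\pi_2$ is broadcast by $s_2$ to all other servers (including $s$) and $\beta$ is fair, $s$ eventually enqueues a tuple with this strictly smaller tag, contradiction.

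The main subtlety I expect is the strictness argument in the second part — one has to be careful that the increment of $vc_{s_1}[s_1]$ at the point $\pi_1$ is processed actually produces a strict inequality in the $s_1$ coordinate relative to any previously-absorbed remote timestamp $ts(\pi_2)$; this is precisely what the $v[s_1]<v[s_1]+1=t_1.ts[s_1]$ step captures. A minor second subtlety is ensuring the $\langle\texttt{app}\rangle$ of the hypothesized earlier writes actually reaches $s$ in $\beta$; this is where I invoke fairness of the infinite fair execution and the fact that all servers are non-halting, so no send is dropped and no message sits in a channel forever.
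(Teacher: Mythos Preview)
Your Part~1 argument and the overall structure match the paper's approach. However, Part~2 has a gap in the case $s_2 = s$: a write $\pi_2$ issued by a client of $s$ sends its $\langle\texttt{app}\rangle$ message only to nodes $j \neq s$ (line~\ref{line:writesendapply}), so $\pi_2$'s tuple is never placed in $Inqueue_s$, and your contradiction does not fire for the $s$-th coordinate of $t_1.ts$. Your parenthetical ``(including $s$)'' is exactly where the argument breaks.

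The paper sidesteps this by declaring $\pi_1$ to be the write with the \emph{globally} smallest tag in $\beta$ and deriving the contradiction from global minimality of $tag(\pi_1)$ rather than from membership in $Inqueue_s$; with that framing, the existence of $\pi_2$ at any server (including $s$) with a strictly smaller timestamp is already a contradiction, and no appeal to $Inqueue_s$ is needed. To patch your version you can either adopt that global framing, or treat $s_2=s$ separately: since $vc_{s_1}[s]$ is only incremented by applying $\texttt{app}$ messages that $s$ itself sent (each preceded by a local increment of $vc_s[s]$ at line~\ref{line:vcincrement}), one gets $vc_s[s] \geq t_1.ts[s]$ at $s$ by the time the tuple is at the head---which is precisely what the subsequent use of the claim needs for the $s$-coordinate, even though the claim as literally stated can fail there.
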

\begin{proof}
Let $\pi_1$ be the write operation with the smallest tag in $\beta$, that is $tag(\pi_1) = t_1.$ From the assumption, it follows that $\pi_1$ is issued by a client in $\mathcal{C}_{s_1}.$ Note that $t_1.ts = ts(\pi_1).$ Because there is no tuple with tag smaller than $t_1$ in any servers $Inqueue,$ $\pi_1$ is the first write at $s_1$. From Claim \ref{claim:writeordering}, $t_1.ts[s_1]=1.$ Suppose there is a server $s' \neq s_1$ such that $t_1.ts[s'] \neq 0.$ Let $\pi_2$ be the $t_1.ts[s']$th write at server $s'.$ From Claim \ref{claim:writeordering2}, we have $ts(\pi_2) < ts(\pi_1),$ which violates the assumption that $\pi_1$ is the write with the smallest tag - a contradiction.
\end{proof}

Because $t_1$ has the smallest tag, the corresponding tuple \newline $(s_1,O_1,v_1,t_1)$ is in $Inqueue.Head$ for any server $s$. Note that that before this tuple is processed at server $s$, $vc_{s}[s_1]=0.$ From the above claim, we note that the first $Apply\_Inqueue$ action performed after the addition of the tuple $(s_1,O_1,v_1,t_1)$ returns true of line \ref{line:applycondition}. Therefore, line \ref{line:inqueue_remove} is executed and this tuple is removed from $Inqueue_s.$

\underline{Inductive step:}

We make the inductive hypothesis that for any tuple $(\overline{s}, \overline{X}, \overline{v}, \overline{t})$ with $\overline{t} < t,$ there is a point $\overline{Q}$ such that, at any point after $\overline{Q}$, the tuple is not in $Inqueue_s$. There are two possibilities: (i) the tuple $(s,X,v,t)$ is not in $Inqueue_s$ at any p
oint after $\overline{Q}$, or (ii) there is a point $P'$ after $\overline{Q}$ such that the tuple $(s,X,v,t)$ is in $Inqueue_s^{P'}$. In case $(i)$ the lemma readily holds. So we consider case $(ii).$ By the inductive hypothesis, we recognize that at point $P'$, and at all points following $P'$ where $(s,X,v,t) \in \textrm{Inqueue},$ the tuple is actually $Inqueue.Head.$ 

\begin{claim}
For any server $\overline{s} \neq s',$ $vc_{s}^{P'}[\overline{s}] \geq t.ts[\overline{s}]$
\label{claim:vcts1}
\end{claim}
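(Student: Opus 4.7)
My plan is to identify, for each $\overline{s} \neq s'$, the $k$-th write at $\overline{s}$, where $k := t.ts[\overline{s}]$, show that its tag is strictly less than $t$ in the tag order, and then invoke the inductive hypothesis on its $\texttt{app}$ tuple at $s$. The existence of this write is guaranteed because $t = tag(\pi)$ for some write $\pi$ at $s'$, and at the point $Q$ where $s'$ completes the transition processing $\pi$ we have $vc_{s'}^{Q}[\overline{s}] = k$, so by Claim \ref{claim:writeordering} at least $k$ writes have been issued at $\overline{s}$.

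If $k = 0$ the claim is immediate from vector clock initialization and Lemma \ref{lem:vconlyincreases}, so I assume $k \geq 1$. Let $\pi_{\overline{s}}$ be the $k$-th write at $\overline{s}$ and let $R$ be the point at $s'$ where the $\texttt{Apply\_Inqueue}$ transition processing $\pi_{\overline{s}}$'s tuple completes; such an $R$ exists and satisfies $R \leq Q$ because $s'$ only increments $vc_{s'}[\overline{s}]$ in such transitions. Since the actions at $R$ and $Q$ are distinct atomic I/O-automaton transitions, $R < Q$ strictly. Applying Claim \ref{claim:writeordering2} at $s'$ at point $R$ gives $tag(\pi_{\overline{s}}).ts \leq vc_{s'}^{R}$ coordinatewise, and Lemma \ref{lem:vconlyincreases} combined with $R < Q$ gives $vc_{s'}^{R} \leq vc_{s'}^{Q} = t.ts$. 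Moreover, because the transition at $Q$ is exactly the one that increments $vc_{s'}[s']$ by one in line \ref{line:vcincrement}, we have $vc_{s'}^{R}[s'] \leq t.ts[s'] - 1$, giving a strict coordinatewise inequality in position $s'$. Hence $tag(\pi_{\overline{s}}).vc < t.vc$ as vectors and therefore $tag(\pi_{\overline{s}}) < t$ in the tag order.

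Fairness of $\beta$ and the non-halting assumption on $\overline{s}$ then ensure that the $\texttt{app}(\pi_{\overline{s}})$ message sent from $\overline{s}$ to $s$ is eventually delivered and its tuple added to $Inqueue_s$. The inductive hypothesis applied to this smaller-tagged tuple yields a point after which the tuple is no longer in $Inqueue_s$ and after which $vc_s \geq tag(\pi_{\overline{s}}).ts$. Since the surrounding proof takes $\overline{Q}$ to dominate the removal points of all such smaller-tagged tuples, and $P' \geq \overline{Q}$, Lemma \ref{lem:vconlyincreases} yields $vc_s^{P'}[\overline{s}] \geq tag(\pi_{\overline{s}}).ts[\overline{s}] = k = t.ts[\overline{s}]$, which is the claim. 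The main subtlety will be securing the strict inequality $tag(\pi_{\overline{s}}) < t$ (rather than merely $\leq$), since the inductive hypothesis requires a strictly smaller tag; this rests on the atomicity of I/O-automaton transitions, which forces $R < Q$ and thereby $vc_{s'}^{R}[s'] < t.ts[s']$.
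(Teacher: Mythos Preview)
Your argument is correct and follows the same strategy as the paper's: identify the $t.ts[\overline{s}]$-th write $\pi$ at server $\overline{s}$ and conclude $vc_s^{P'}[\overline{s}] \geq ts(\pi)[\overline{s}] = t.ts[\overline{s}]$. You are in fact more careful than the paper, which simply cites Claim~\ref{claim:writeordering2} at $(s,P')$ for the inequality $vc_s^{P'} \geq ts(\pi)$; that citation tacitly presupposes that $\pi$'s tuple has already been applied at $s$ by $P'$, which is precisely what your strict-inequality argument $tag(\pi_{\overline{s}}) < t$ together with the inductive hypothesis supplies.
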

\begin{proof}
Let $\pi$ be the $t.ts[\overline{s}]$th write performed at node $\overline{s}$ as per the ordering in Claim \ref{claim:writeordering}. From Claim \ref{claim:writeordering2}, $vc_{s}^{P'} \stackrel{(a)}{\geq} ts(\pi)$. From Lemma \ref{claim:writeordering}, we have $ts(\pi)[\overline{s}] \stackrel{(b)}{=} t.ts[\overline{s}]$. From $(a)$ and $(b),$ the clam follows.
\end{proof}

\begin{claim}
For any server, $vc_{s}^{P'}[s'] + 1 =  t.ts[{s}']$
\label{claim:vcts2}
\end{claim}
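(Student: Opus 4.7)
The plan is to argue two inequalities: $vc_s^{P'}[s'] \geq t.ts[s']-1$ and $vc_s^{P'}[s'] \leq t.ts[s']-1$, and combine them.

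First, I will use Claim \ref{claim:writeordering} applied to the tuple $(s',X,v,t)$ that sits in $Inqueue_s^{P'}$ to identify a write operation $\pi$ issued by a client in $\mathcal{C}_{s'}$ that is the $t.ts[s']$-th write at server $s'$, with $t = tag(\pi)$. For each $i < t.ts[s']$, let $\pi_i$ be the $i$-th write at $s'$; because server $s'$ strictly increments $vc_{s'}[s']$ on each write message it receives and Lemma \ref{lem:vconlyincreases} says the vector clock never decreases, the timestamps satisfy $ts(\pi_i)[s'] = i < t.ts[s']$, and $ts(\pi_i)$ is coordinate-wise no larger than $ts(\pi_{i+1})$. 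In particular, $tag(\pi_i) < t$ for all such $i$. Since write $\pi_i$ causes $s'$ to send an $\texttt{app}$ message containing $tag(\pi_i)$ to $s$ (by line \ref{line:writesendapply} in Algorithm \ref{alg:inputactions_clients}), and reliable delivery together with FIFO implies that every such tuple eventually arrives at $s$ and is placed in $Inqueue_s$.

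Next, I will apply the inductive hypothesis of Lemma \ref{lem:eventually_apply} to each tag $tag(\pi_i)$ for $i < t.ts[s']$: there is a point after which the corresponding tuple is absent from $Inqueue_s$. Every such tuple that was in $Inqueue_s$ and disappeared must have been processed via a successful $\texttt{Apply\_InQueue}$ action at $s$ (line \ref{line:inqueue_remove}), and each such processing executes line \ref{line:vcincrement_apply}, which sets $vc_s[s'] \leftarrow tag(\pi_i).ts[s'] = i$. Since these processings happen in order (FIFO delivery from $s'$ combined with the priority ordering of $Inqueue$ and the precondition $t.ts[s']=vc[s']+1$ in line \ref{line:applycondition} forces them to be applied in increasing order of $ts[s']$), by point $P'$ every $\pi_i$ with $i<t.ts[s']$ has been applied, yielding $vc_s^{P'}[s'] \geq t.ts[s']-1$.

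For the matching upper bound, I will argue by contradiction that no tuple $(s',\overline{X},\overline{v},\overline{t})$ with $\overline{t}.ts[s']\geq t.ts[s']$ has been applied at $s$ by point $P'$. Since $s'$ sends its $\texttt{app}$ messages in the order of its writes, the FIFO channel $c_{s'\to s}$ delivers the $\texttt{app}$ message of $\pi$ before any $\texttt{app}$ message of a later write at $s'$; hence at every point where such a later tuple sits in $Inqueue_s$, the tuple $(s',X,v,t)$ also sits in $Inqueue_s$ and, by the priority ordering, ahead of it. So $(s',X,v,t)$ would have to be removed before any later tuple, contradicting the hypothesis that it is still at $Inqueue_s^{P'}$. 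A tuple with $\overline{t}.ts[s']=t.ts[s']$ other than $(s',X,v,t)$ itself is ruled out by Lemma \ref{lem:uniquetag}. Since $vc_s[s']$ changes only when a tuple from $s'$ is applied (the only writer to $vc_s[s']$ in the protocol), this gives $vc_s^{P'}[s']\leq t.ts[s']-1$, completing the equality.

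The main obstacle I anticipate is the upper-bound half: carefully justifying that no tuple from $s'$ with a larger $[s']$-coordinate has been prematurely applied. The argument requires combining the FIFO property of $c_{s'\to s}$, the priority-queue discipline of $Inqueue_s$, and the precondition $t.ts[s']=vc[s']+1$ in line \ref{line:applycondition}; each is routine in isolation but they must interlock correctly. Everything else is a direct bookkeeping exercise on the protocol and an appeal to the inductive hypothesis.
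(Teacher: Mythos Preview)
Your proposal is correct and follows essentially the same two-sided inequality structure as the paper's proof: both establish $vc_s^{P'}[s']\geq t.ts[s']-1$ via the inductive hypothesis applied to earlier writes at $s'$, and $vc_s^{P'}[s']<t.ts[s']$ by arguing $(s',X,v,t)$ itself has not yet been applied.

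The one noteworthy difference is in the upper bound. You route it through FIFO delivery plus the priority-queue discipline to conclude that no later tuple from $s'$ could be applied before $(s',X,v,t)$. The paper's argument is more direct: it simply observes that $vc_s[s']$ is only ever modified on line~\ref{line:vcincrement_apply} when a tuple from $s'$ passes the precondition $t'.ts[s']=vc_s[s']+1$ on line~\ref{line:applycondition}, so $vc_s[s']$ advances through the integers one step at a time; since the unique tuple with $[s']$-coordinate equal to $t.ts[s']$ is $(s',X,v,t)$ and it has not been processed by $P'$, the coordinate cannot yet have reached $t.ts[s']$. Your FIFO/priority-queue reasoning is valid, but the increment-by-one precondition alone already forces the ordering you work to establish, so the obstacle you flag is lighter than you anticipate.
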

\begin{proof}
Let $\phi,\pi$ respectively be the $t.ts[s']-1$th and $ t.ts[s']$ th writes performed at node $\overline{s}$ as per the ordering in Claim \ref{claim:writeordering}.

From Theorem \ref{thm:causallyconsistent}, we know that $ts(\phi) < ts(\pi).$ By the inductive hypothesis, we know that the $\texttt{app}$ message sent on behalf of $\phi$ has been emptied from the Inqueue by server $s$. From the server protocol, we have $vc_{s}^{P'} \stackrel{(a)}{\geq} t.ts[s'] - 1.$ Further, from the protocol, any $\texttt{app}$.

From the protocol, server $s$ only increments $vc_{s}[s']$ on an \newline $\texttt{Apply\_Inqueue}$ action, where a tuple from $s'$ at $Inqueue.Head$ is processed and line \ref{line:applycondition} returns true. Since the server has not processed tuple $(s,X,v,t)$ by point $P'$,  we know that  $vc_{s}^{P'} \stackrel{(b)}{<} t.ts[s'].$  From $(a)$ and $(b)$, the lemma follows.
\end{proof}

Consider the first $\texttt{Apply\_Inqueue}$ action after $P'$. Claims \ref{claim:vcts1}, \ref{claim:vcts2} imply that line \ref{line:applycondition} is true for this line. Therefore, line \ref{line:inqueue_remove} is executed. This completes the proof.
\end{proof}

\subsection{Proof of Theorem \ref{eventual consistent}}

\begin{proof}

Consider any write operation $\pi.$
From the protocol, $\pi$ sends an $\langle \texttt{app},X,val,tag(\pi_X)\rangle$ message to every server. On receipt, this message is added to a servers $Inqueue.$
Lemma \ref{lem:eventually_apply} implies that there is a point $P$ in $\beta$ after which the tuple does not exist in any server in $\beta$. Based on Lemma \ref{lem:vc_after_apply}, for every server $s$, for any point after $P$, $vc_{s}^{P} \geq ts(\pi).$ So, for any read $\phi$ that begins after $P$ and returns,  $tag(\phi) \geq tag(\pi_X)$, that is $\pi_X \leadsto \phi$. The set of reads invoked before $P$ is finite. Therefore, in any execution, the set $S=\{\phi: \pi_X \leadsto \phi\}$ contains all but a finite number of operations in $\beta$.
\end{proof}

\section{Storage Cost: Proof of Theorem \ref{thm:storagecost}}
\label{app:storage}

Theeorem \ref{thm:storagecost} relies on the following key lemma, which states that no object version remains in the list $L[X]$ indefinitely in any execution.

\begin{lemma}
Consider a fair execution $\beta$ where every server is non-halting. Let $\pi$ be a write to object $X$ that completes in $\beta$ with value $v$. Then, there is a point $P$ in $\beta$ after which, for any server $s,$ there is no entry of the form $(tag(\pi),val)$ in $L_s[X]$ for any value $val.$
\label{lem:storagecostkey}
\end{lemma}

We first prove some preliminary lemmas. Then we present a proof of Lemma \ref{lem:storagecostkey}, which is followed by a proof of Theorem \ref{thm:storagecost}.

\subsection{Preliminary Lemmas}
\begin{lemma}
Consider a fair execution $\beta$ of $CausalEC$ where every server is non-halting. Consider a point $P$ and an entry $(t,val) \in L_s[X]$ for some server $s$ and some object $X$. Then, 
\begin{enumerate}[(a)]
    \item there is a point $Q$ in $\beta$ such that, for every point  $Q'$ after $Q$ in $\beta,$ we have $M_s^{Q'}.tagvec[X] \geq t$.
    \item there is a point $\overline{Q}$ in $\beta$ such that, by $\overline{Q}$ every node $s$ has at least one entry $(t_{s'},s')$ in $DelL_s[X]$ with $t_{s'} \geq t,$ for every node $s' \in \mathcal{N}.$
\end{enumerate}
 \label{lem:Mexceedslisteventually}
\end{lemma}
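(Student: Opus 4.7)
The plan is to prove part (a) first by contradiction and then obtain part (b) as a corollary. A preliminary structural observation drives everything: if $M_s^{Q'}.tagvec[X] < t$ for every point $Q' \geq P$, then the entry $(t,val)$ cannot be removed from $L_s[X]$. Indeed, the removal rules in lines \ref{line:GC1}, \ref{line:GC3}, \ref{line:GC2} of the $\texttt{Garbage}\_\texttt{Collection}$ action require the deleted tag to lie at most $t{max}_s[X]$, and by Lemma \ref{lem:tag-delete-ordering} we have $t{max}_s[X] \leq M_s.tagvec[X]$ at all times (the defining set $S$ forces the existence of some $(\hat t, s) \in DelL_s[X]$ with $\hat t \geq t{max}_s[X]$, and $s$ only places such a tuple when its own $M_s.tagvec[X]$ has reached $\hat t$). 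Consequently $L_s[X].HighestTagged.tag \geq t$ persists after $P$, and the $\texttt{Encoding}$ action for $X$ stays fireable forever.

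For part (a) in the case $X \in \mathcal{X}_s$, each firing of the first branch of $\texttt{Encoding}$ either finds the tuple $(M_s.tagvec[X], val')$ already in $L_s[X]$ and immediately assigns $M_s.tagvec[X] \leftarrow L_s[X].HighestTagged.tag \geq t$, contradicting the assumption, or posts an internal read via line \ref{line:readlentryencoding}. Because every server is non-halting and the underlying code supplies a recovery set of non-halting servers, Theorem \ref{lem:read_terminates_f=0} guarantees this internal read terminates; the handler of the returning $\texttt{val\_resp}$ or $\texttt{val\_resp\_encoded}$ (lines \ref{line:valrespaddtolist}, \ref{line:readaddtolist}) then installs a tuple tagged $M_s.tagvec[X]$ into $L_s[X]$, unblocking the next $\texttt{Encoding}$ firing to perform the advance. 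Either subcase contradicts the contrary hypothesis.

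For part (a) in the case $X \notin \mathcal{X}_s$, the plan leverages the previous case at other servers. By Lemma \ref{lem:eventually_apply}, the $\texttt{app}$ message of the write with tag $t$ is eventually applied at every server $i$, so $(t,val) \in L_i[X]$ eventually; applying the $X \in \mathcal{X}_i$ case to each $i \in R := \{j : X \in \mathcal{X}_j\}$ yields $M_i.tagvec[X] \geq t$ eventually. Each such $i$ then sends a $\langle \texttt{del}, X, \hat t\rangle$ with $\hat t \geq t$, either to members of $R$ via line \ref{line:Delete_send1}, or, once $U$ at $i$ becomes non-empty, to \emph{all} nodes via the broadcast in line \ref{line:Delete_send3}. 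Once the messages from every $i \in R$ arrive at $s$, both $t \in U$ (every $i \in R$ has contributed a $\texttt{del}$ with tag $\geq t$) and $t \in \overline U$ (as $(t,val) \in L_s[X]$), so line \ref{line:updatetag2} fires and $M_s.tagvec[X]$ is advanced to at least $t$.

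Part (b) is essentially a corollary: part (a) applied to every $s' \in \mathcal{N}$ forces $M_{s'}.tagvec[X] \geq t$ eventually, after which $s'$ sends a $\langle\texttt{del}, X, \hat t\rangle$ with $\hat t \geq t$ to every node---via line \ref{line:Delete_send2} if $X \notin \mathcal{X}_{s'}$, and via line \ref{line:Delete_send3} if $X \in \mathcal{X}_{s'}$---after which channel reliability and line \ref{line:delmessage} deposit the required tuple in $DelL_s[X]$. The main obstacle I expect lies in the $X \in \mathcal{X}_s$ subcase, namely justifying that the internal read triggered by the first branch of $\texttt{Encoding}$ actually unblocks the next attempt: line \ref{line:readaddtolist} inserts a tuple tagged with the \emph{current} $M_s.tagvec[X]$ rather than the $requestedtags[X]$ stored in $ReadL$, so one must argue that while the internal read is outstanding the only ways $M_s.tagvec[X]$ can change are via a client write or an $\texttt{Apply}\_\texttt{InQueue}$ action, both of which themselves supply a matching list entry, ensuring that an $\texttt{Encoding}$ advance is never indefinitely blocked by a read that lands on a stale tag.
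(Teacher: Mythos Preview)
Your proposal is correct and follows essentially the same route as the paper: prove (a) by contradiction, split on whether $X\in\mathcal{X}_s$, use the internal read mechanism to unblock the $\texttt{Encoding}$ advance in the first case, and bootstrap from the first case via $\texttt{del}$ messages in the second; then derive (b) from (a) together with the $\texttt{del}$ broadcasts.

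One small correction to your closing paragraph. The ``obstacle'' you flag is a non-issue under the contradiction hypothesis: you are assuming $M_s.tagvec[X]$ stays fixed forever after $P'$, so when the internal read returns and line~\ref{line:readaddtolist} inserts $(M.tagvec[X],v)$, the current $M.tagvec[X]$ is exactly the $requestedtags[X]$ that was stored in $ReadL$. Your attempted patch is also slightly off: $M_s.tagvec[X]$ is modified only by the $\texttt{Encoding}$ action (lines~\ref{line:updatetag1} and~\ref{line:updatetag2}), not by client writes or $\texttt{Apply}\_\texttt{InQueue}$; those actions add entries to $L_s[X]$ but never touch $M_s.tagvec$. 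The paper handles this identically, simply noting that the returned value lands in $L_s[X]$ with tag $M_s^{P'}.tagvec[X]$ (because $M_s.tagvec[X]$ has not moved), and then argues via its Claims~\ref{claim:rr1} and~\ref{claim:rr2} that this entry survives until the next $\texttt{Encoding}$ fires.
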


\begin{proof}
We show $(a)$ first.

\underline{Proof of $(a)$}

Assume the hypothesis of the lemma, that is $(t,val) \in L_s^{P}[X]$  in execution $\beta$. Suppose that at some point $Q$ in $\beta$, $M_s^{Q}.tagvec[X] \geq t$. Then, because of Lemma \ref{lemma:tagsalwaysincrease}, we know that at any point $Q'$ after $Q$, $M_s^{Q'}.tagvec[X] \geq M_s^{Q}.tagvec[X] \geq t.$ Therefore the statement of the lemma holds. If possible, suppose there exists no such point $Q$. Then there is a point $P'$ such that, at any point $Q'$ after $P',$  $M_s^{Q'}.tagvec[X] = M_s^{P'}.tagvec[X] < t$. We will show a contradiction. We consider two cases: $X \in \mathcal{X}_{s}$ and $X \notin \mathcal{X}_{s}.$ 

\underline{Case 1: $X \in \mathcal{X}_{s}$}

\begin{claim}
At any point of the execution after $P$, the tuple $(t,val)$ is in $L_s[X].$ 
\label{claim:rr0}
\end{claim}
\begin{proof}

From Lemma \ref{lem:tag-delete-ordering} and the hypothesis that $t > M_s^{P'}.tagvec[X]$ implies that for any entry $(s,t')$ in $DelL_s[X]$ at any point of $\beta$, we have $t' < t$. Items are removed from $L_s[X]$ only as a part of $\texttt{Garbage\_Collection}$ actions in line \ref{line:GC1} or \ref{line:GC2}. By examining lines \ref{line:GC1}, \ref{line:GC2}, the $\texttt{Garbage\_Collection}$ action does not remove the element $(t,val)$ from $L_s[X]$. Therefore, at any point of the execution after $P$, the tuple $(t,val)$ is in $L_s[X].$ 
\end{proof}
In particular, consider a point $Q'$ after $P'$ where an $\texttt{Encoding}$ action is performed. At this point, because $(t,v)$ is in $L_s^{Q'}[X],$ we have $L_s^{Q'}[X].Highesttagged > t.$ Therefore, the condition for the for loop in line \ref{line:Forloopupdatetag1} in Algorithm \ref{alg:internal_actions} returns true for object $X$. If the condition in line \ref{line:conditionforupdate} is satisfied, then line \ref{line:updatetag1} is executed and $M_s^{Q'}.tagvec[X]$ gets updated to a tag that is at least $t$, which is a contradiction to our earlier assumption that $M_s.tagvec[X]$ is always smaller than $t$ for any point after $P'$.

If \ref{line:conditionforupdate}
returns false, then line \ref{line:readlentryencoding} is executed. By Theorem \ref{lem:read_terminates_f=0}, the read eventually responds and there is a point after $P''$ such that $(M_s^{P'}.tagvec[X], v)$ belongs to $L_s[X].$ Let $Q'$ be the first point after $P''$ where $(M_s^{P'}.tagvec[X], v)$ lies in  $L_s[X].$ Let $Q''$ be the point of the 
the first  $\texttt{Encoding}$ action performed after $Q''$ in $\beta.$ The result follows from the following claims:

We make the following claim:
\begin{claim}
A $\texttt{Garbage\_Collection}$ action performed between $Q'$  and $Q''$ does not remove the element $(M_s^{P'}.tagvec[X], v)$.
\label{claim:rr1}
\end{claim}
\begin{claim}
In the $\texttt{Encoding}$ action performed at $Q'',$ line \ref{line:conditionforupdate} is executed and returns true.
\label{claim:rr2}
\end{claim}
Once we show the above claims, the lemma statement readily follows, line \ref{line:updatetag1} is executed as a part of the $\texttt{Encoding}$ action at $Q'$, and $M_s^{P'}.tagvec[X]$ gets updated to a tag that is at least $t$, which is a contradiction.

\begin{proof}[Proof of Claim \ref{claim:rr1}]
Consider a $\texttt{Garbage\_Collection}$ action performed at point $U$ between $Q'$ and $Q''.$ Because of Lemma \ref{lem:tag-delete-ordering} and because our assumption implies that  $M_s^{P'}.tagvec[X] = M_s^{U}.tagvec[X]$, we infer that $tmax_s^{U}[X] \leq M_s^{P'}.tagvec[X].$ If the inequality is strict, then from lines \ref{line:GC1}, \ref{line:GC2}, we infer that the element  $(M_s^{P'}.tagvec[X], v)$ is not deleted from $L_s^{U}[X].$ If the inequality is not strict, we have $tmax_s^{U}[X] = M_s^{P'}.tagvec[X] = M_s^{U}.tagvec[X].$ Note that point $Q'$, $(t,val)$ is in $L_s^{U}[X]$, so $L_s^{U}[X].Highesttagged.tag > M_s^{U}.tagvec[X]$.  Therefore, the condition in line \ref{line:GCcondition1} returns false. Therefore  line \ref{line:GC2} is executed (and not line \ref{line:GC1}), and from the line, the element $(M_s^{P'}.tagvec[X], v)$ is not deleted from $L_s[X]$.
\end{proof}
\begin{proof}[Proof of Claim \ref{claim:rr2}]
 From Claim \ref{claim:rr0}, $(t,val) \in  L_s^{U}[X] $ and $t > M_s^{P'}$. Therefore, for object $X$, line \ref{line:Forloopupdatetag1} returns true. Therefore, line \ref{line:conditionforupdate} in Algorithm \ref{alg:internal_actions} is executed as a part of the $\texttt{Encoding}$ action.
Because at point $Q',$ we have $(M_s^{P'}.tagvec[X], v), \in L_s^{Q'}[X]$, and because of Claim \ref{claim:rr1}, we have  $(M_s^{P'}.tagvec[X], v), \in L_s^{U}[X].$ Therefore, line \ref{line:conditionforupdate} returns true. 
\end{proof}

\underline{Case 2: $X \notin \mathcal{X}_{s}$}

Let $R=\{i \in \mathcal{N}: X \in \mathcal{X}_{i}\}.$ Note that $s \notin R.$ 
\begin{claim}
In $\beta$, eventually, every node $s' \in R$  sends a $\langle \texttt{del},X,t_{s'}\rangle$ message to all the other nodes for some $t_{s'}\geq t$. In addition, node $s$ eventually adds a $(\overline{t}_{s'},s)$ element to $DelL_{s}[X]$ eventually with $\overline{t}_{s'} \geq t$. 
\label{claim:rr3}
\end{claim}

\begin{proof}
Consider any node $s'$ such that $X \in \mathcal{X}_{s'}$, that is $s' \in R$. Because the lemma has been shown for Case $1,$ we know that there is a point $Q'$ such that $M_{s'}^{Q'}.tagvec[X] \geq t.$ In particular, at the point where $M_{s'}.tagvec[X]$ becomes at least $t$, it executes line \ref{line:Delete_send1}. Therefore, it adds an element $(\overline{t}_{s'}, s')$ element to $DelL_{s'}[X]$ and sends a $\langle \texttt{del},X,\overline{t}_{s'}\rangle$ message to every other server $s'' \in R$ with tag $\overline{t}_{s'} \geq t.$ By a similar argument, by some point $Q'',$ it receives  $\langle \texttt{del},X,\overline{t}_{s''}\rangle$ message from every other node $s'' \in R$ with $\overline{t}_{s''} \geq t.$ 

Consider the first $\texttt{Garbage\_Collection}$ action after $Q^{''}$ performed by server $s'.$
Note that by $Q'',$ $s'$ has received a $\texttt{del}$ message from server in $R$ for object $X$ with a tag at least $t$. Therefore $t$ belongs to the set $U$ executed in Line \ref{line:Delete3_Uset}, which implies that $t_{s'} = \max(U) \geq t.$  Therefore, in line \ref{line:Delete_send3}, it sends a $\texttt{del}$ message to all nodes with a tag $t_{s'}$ which is at least $t$. 
\end{proof}

From the above claim, we conclude that in $\beta$, node $s$ receives at least one $\langle \texttt{del},X,t_{s'}\rangle$ from every node $s'$ in $R$ with $t_{s'} \geq t$. Let $\overline{P}$ be the first point no sooner than $P$ at which the node $s$ has received such messages.
Consider any $\texttt{Encoding}$ action performed by node $s$ after $\overline{P}$. At point $\overline{P}$, from our earlier assumption $L_s^{\overline{P}}[X].Highesttagged.tag \geq t > M_{s}^{\overline{P}}.tagvec[X].$ Because the $\texttt{Encoding}$ is performed after $\overline{P}$, we have $t \in U,$ where $U$ is found on executing line \ref{line:Delete2_Uset}. Further, because $(M_{s}^{P'}.tagvec[X],v) \in L_{s}^{Q''}[X]$, we have $t \in \overline{U},$ where $\overline{U}$ is found in line \ref{line:Delete2_Ubarset}. Thus, $\max(U \cap \overline{U}) \geq t > M_{s}^{\overline{P}}.tagvec[X],$ which implies that line \ref{line:Delete2_condition} returns true and line \ref{line:updatetag2} is executed. Thus, after the $\texttt{Encoding}$ action, we have $M_s.tagvec[X] = \max(U \cap \overline{U}) \geq t,$ which completes the proof.

\underline{Proof of (b)}
It suffices to show that (i) every node $s'$ sends a $\texttt{del}$ message for object $X$ with tag at least $t$ eventually to every other node $s$,  and (ii) that every node $s$ adds an element $(t_s,s)$ to $DelL_{s}[X]$ with $t_{s} \geq t$ eventually.
From Claim \ref{claim:rr3}, we know that every node $s$ with $X \in \mathcal{X}_{s}$ eventually sends  $\texttt{del}$ message with tag $t_{s} \geq t.$ Therefore, the lemma holds if $X \in \mathcal{X}_{s}$. 

For a node $s$ such that $X \notin \mathcal{X}_{s}$, because we have shown property (a) of the lemma, line \ref{line:updatetag2} is executed with updating $M_s.tagvec[X]$ with a tag at least $t$. We note that an update of the $M_{s}.tagvec[X]$ variable is also associated with the sending of a $\texttt{del}$ message in line \ref{line:Delete_send2}, and the adding of a $(t_s,s)$ element to $DelL_s[X].$  Therefore, the lemma holds.

\end{proof}

\begin{lemma}
Consider a fair execution $\beta$ where every server is non-halting. Let $\pi$ be a write to object $X$ that completes in $\beta$ with value $v$. Then,
\begin{enumerate}[(a)]
    \item there is a point $P$ in $\beta$ after which, for any server $s,$ for any point $Q$ which is after $P$, $M_s^{Q}.tagvec[X] \geq tag(\pi).$
    \item there is a point $Q$ in $\beta$ by which every server $s$ has, for every server $s' \in \mathcal{N}$,  at least one tuple $\langle t_{s'}, s'\rangle$ in $Del_{s}^{Q}[X]$ with $t_{s'} \geq tag(\pi)$
\end{enumerate} 
\label{lem:Meventuallyexceedseverywrite}
\end{lemma}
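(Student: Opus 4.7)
The plan is to reduce this lemma to Lemma~\ref{lem:Mexceedslisteventually} by first establishing that, for every server $s$, there is a point in $\beta$ at which the tuple $(tag(\pi),v)$ appears in $L_s[X]$. Once we have this, both parts (a) and (b) follow by applying Lemma~\ref{lem:Mexceedslisteventually}(a) and (b) respectively, and taking the latest among the finitely many points obtained (one per server).

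First I would handle the originating server. Let $s_\pi$ be the unique server with $\pi$'s invoking client in $\mathcal{C}_{s_\pi}$. Inspecting the protocol for $\langle\texttt{write},opid,X,v\rangle$ in Algorithm~\ref{alg:inputactions_clients}, server $s_\pi$ executes line~\ref{line:writeaddtolist}, which inserts $(t,v)$ with $t=(vc,id)=tag(\pi)$ into $L_{s_\pi}[X]$, and then in line~\ref{line:writesendapply} sends an $\langle\texttt{app},X,v,tag(\pi)\rangle$ message to every other server. Hence there is a point $P_{s_\pi}$ in $\beta$ such that $(tag(\pi),v)\in L_{s_\pi}^{P_{s_\pi}}[X]$.

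Next I would handle the remaining servers. Fix $s\neq s_\pi$. By FIFO reliable delivery and the fact that $s$ is non-halting, $s$ eventually receives the $\texttt{app}$ message, at which point the tuple $(s_\pi,X,v,tag(\pi))$ is added to $InQueue_s$. By Lemma~\ref{lem:eventually_apply}, this tuple is eventually removed from $InQueue_s$, and the only action that removes tuples from $InQueue$ is the $\texttt{Apply\_Inqueue}$ internal action whose effect (line~\ref{line:list_applyaction}) inserts $(tag(\pi),v)$ into $L_s[X]$. Thus there is a point $P_s$ in $\beta$ with $(tag(\pi),v)\in L_s^{P_s}[X]$.

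Finally I would combine these with Lemma~\ref{lem:Mexceedslisteventually}. For each server $s\in\mathcal{N}$, applying part~(a) of that lemma to the point $P_s$ yields a point $\widehat{P}_s$ such that at every point after $\widehat{P}_s$, $M_s.tagvec[X]\geq tag(\pi)$; let $P$ be the latest of the (finitely many) $\widehat{P}_s$. Similarly, applying part~(b) to $P_s$ yields a point $\widehat{Q}_s$ by which $s$ has, for every $s'\in\mathcal{N}$, a tuple $(t_{s'},s')\in DelL_s[X]$ with $t_{s'}\geq tag(\pi)$; let $Q$ be the latest of the $\widehat{Q}_s$. Since $\mathcal{N}$ is finite, $P$ and $Q$ exist in $\beta$, and they witness (a) and (b) respectively.

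I do not anticipate a significant obstacle: the heart of the work is already absorbed in Lemmas~\ref{lem:eventually_apply} and \ref{lem:Mexceedslisteventually}. The only mild subtlety is the need to treat $s_\pi$ separately (since its list entry arises from the client-side write action, not from $\texttt{Apply\_Inqueue}$); aside from that, the argument is a finite union over servers.
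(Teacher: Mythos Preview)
Your proposal is correct and follows essentially the same approach as the paper: show that $(tag(\pi),v)$ eventually lands in $L_s[X]$ for every server $s$ (via the write action at the origin and Lemma~\ref{lem:eventually_apply} elsewhere), then invoke Lemma~\ref{lem:Mexceedslisteventually}(a),(b) per server and take the maximum over the finite set $\mathcal{N}$. Your explicit separation of the originating server $s_\pi$ is a nice touch of care that the paper's proof leaves implicit.
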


\begin{proof}
We show (a) first.
Suppose $\pi$ is issued by a client in $\mathcal{C}_{s'},$ for some server $s'$. 
Based on the protocol, on receiving the value write $\pi,$ server $s$ adds element $(tag(\pi),v)$ to $L_s[X]$ and sends an $\langle \texttt{app},X,v,tag(\pi)\rangle$ message to all other servers. From Lemma \ref{lem:eventually_apply}, and from line \ref{line:list_applyaction} in Algorithm \ref{alg:internal_actions}, we infer that in $\beta,$ for any server $s$, the element $(tag(\pi),v)$ is eventually added to $L_s[X]$ in $\beta.$ Combined with Lemma \ref{lem:Mexceedslisteventually} (a), we conclude that there is a point $P'$ such that, for every point after $P_s,$ $M_s.tagvec[X] > tag(\pi).$ Therefore, property (a) of the lemma is satisfied by choosing point $P$ to be the latest of $P_1, P_2, \ldots, P_N$.

Statement (b) of Lemma \ref{lem:Meventuallyexceedseverywrite} similarly follows from statement (b) of Lemma \ref{lem:Mexceedslisteventually}.

\end{proof}

\begin{lemma}
Consider any infinite execution $\beta$ of $CausalEC$, and consider  point $P$ of $\beta$,  server $s$ and object $X$. At least one of the following statements is true:
\begin{enumerate}
    \item An element $(M_s^{P}.tagvec[X],v)$ is not added to $L_s[X]$ at any point of the execution after $P$, for any value $v$
    \item There exists a point $Q$ of $\beta$ after $P$ such that $M_s^{Q}.tagvec[X] > M_s^{P}.tagvec[X]$
\end{enumerate} 
\label{lem:listisnotaddedbeyond}
\end{lemma}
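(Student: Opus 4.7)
I would prove the lemma by deriving statement 2 from the failure of statement 1. Suppose some $(t, v)$ with $t = M_s^P.tagvec[X]$ is added to $L_s[X]$ at a point $R > P$ and, toward a contradiction, assume $M_s^Q.tagvec[X] = t$ for every $Q \geq P$. I walk through the four mechanisms that can add a tuple to $L_s[X]$ --- the write handler in line \ref{line:writeaddtolist}, apply\_inqueue in line \ref{line:list_applyaction}, val\_resp for a localhost read in line \ref{line:valrespaddtolist}, and val\_resp\_encoded for a localhost read in line \ref{line:readaddtolist} --- and show each is either impossible or forces a contradiction.

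The first two are ruled out quickly. By Lemmas \ref{lem:uniquetag1}, \ref{lem:pointsexec}, and \ref{lem:valuesarelegitimate}, the unique write of tag $t$ has already completed by some point $\leq P$ (the edge case $t = \vec{0}$ is vacuous since no write ever produces the zero tag and no mechanism ever re-inserts the initial $(\vec{0}, \vec{0})$ tuple). For apply\_inqueue, Lemma \ref{lem:vc}(b) gives $vc_s^P \geq M_s^P.tagvec[X].ts = t.ts$; letting $j$ be the identifier of the server that issued the write of tag $t$, Lemma \ref{lem:vconlyincreases} extends this to $vc_s^R[j] \geq t.ts[j]$, contradicting the precondition $t.ts[j] = vc_s^R[j] + 1$ from line \ref{line:applycondition}.

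The remaining two mechanisms fire only on behalf of localhost reads, which line \ref{line:readlentryencoding} creates only when $X \in \mathcal{X}_s$. Let $Q_{\text{init}} < R$ be the creation point of the triggering read and set $t_0 = requestedtags[X] = M_s^{Q_{\text{init}}}.tagvec[X]$. In the val\_resp case the added tag is literally $t_0$, so $t_0 = t$. In the val\_resp\_encoded case the added tag is $M_s^R.tagvec[X] = t$, and I argue $t_0 = t$ here too: the read's continued presence in $ReadL_s$ at $R$ precludes apply\_inqueue for tag $t_0$, a val\_resp for this read, or a prior successful decoding of this read from having fired in $(Q_{\text{init}}, R]$, so the only way $(t_0, \cdot)$ could be re-introduced into $L_s[X]$ without clearing the read is via the local write handler, which fires at most once; together with the creation condition $(t_0, \cdot) \notin L_s^{Q_{\text{init}}}[X]$, this forces $(t_0, \cdot) \notin L_s[X]$ throughout $(Q_{\text{init}}, R)$, which blocks the encoding action from advancing $M_s.tagvec[X]$ past $t_0$, giving $M_s^R.tagvec[X] = t_0$ and hence $t_0 = t$.

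With $t_0 = t$ in both sub-cases, the creation condition yields $L_s^{Q_{\text{init}}}[X].Highesttagged.tag = t^{**} > t$, and Lemma \ref{lem:listtag} implies that at every later point either $L_s[X].Highesttagged.tag \geq t^{**}$ or $M_s.tagvec[X] \geq t^{**}$. The second option already yields statement 2 and contradicts our assumption, so $L_s^R[X].Highesttagged.tag > t$. The invariant $t_{max}[X] \leq M_s.tagvec[X] = t$ (a consequence of Lemma \ref{lem:tag-delete-ordering} applied to $s$'s own entries in $DelL_s[X]$), together with the garbage-collection conditions (line \ref{line:GC1} is blocked because $L.Highesttagged.tag > M.tagvec[X]$ fails the condition $L.Highesttagged.tag \leq M.tagvec[X]$; line \ref{line:GC2} only removes tags strictly below $t_{max}[X] \leq t$), ensures $(t, v)$ persists in $L_s[X]$ until the next encoding action at some $R'' > R$ supplied by fairness. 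At $R''$ both $L.Highesttagged.tag > M.tagvec[X]$ and $(M.tagvec[X], v) \in L_s[X]$ hold, so line \ref{line:updatetag1} advances $M_s.tagvec[X]$ strictly past $t$, contradicting our standing assumption. The main obstacle is the case-analysis establishing $t_0 = t$ in the val\_resp\_encoded sub-case, which requires carefully tracking which mechanisms can add to $L_s[X]$ versus which must clear the pending localhost read.
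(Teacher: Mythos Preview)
Your argument is essentially correct and parallels the paper's, but you take a more self-contained route in the final step. Both proofs rule out the write and \texttt{Apply\_Inqueue} mechanisms, then focus on the localhost-read responses. The paper, however, does not carry out your detailed garbage-collection and encoding analysis at the end. Instead, once it identifies that the triggering \texttt{val\_inq} was sent when some element of tag strictly exceeding $M_s.tagvec[X]$ was present in $L_s[X]$, it simply invokes Lemma~\ref{lem:Mexceedslisteventually} to conclude that $M_s.tagvec[X]$ eventually advances, yielding statement~2 and the contradiction. Your approach effectively inlines the relevant special case of that lemma (persistence of the higher-tagged element through garbage collection, then the next \texttt{Encoding} action firing line~\ref{line:updatetag1}); this buys independence from a heavy auxiliary result, at the cost of length. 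Conversely, the paper's proof is terser but leans on machinery whose own proof already contains the ingredients you spell out.

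Two remarks. First, your careful derivation that $t_0 = t$ in the \texttt{val\_resp\_encoded} sub-case actually patches a point the paper glosses over: the paper asserts that the corresponding \texttt{val\_inq} carries a tag vector with $tvec[X] = M_s^P.tagvec[X]$, which is not immediate when $Q_{\text{init}} < P$. Second, your case analysis for ``the only way $(t_0,\cdot)$ could be re-introduced'' implicitly assumes there is no \emph{older} pending localhost read for $X$ (with $tvec[X] < t_0$) whose late \texttt{val\_resp\_encoded} response could insert $(M_s.tagvec[X],\cdot) = (t_0,\cdot)$. This is in fact true---one can show inductively that any pending localhost read for $X$ always has $tvec[X] = M_s.tagvec[X]$, so at most one such read exists at a time---but you should state this invariant explicitly rather than leave it tacit. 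Finally, note that both your use of fairness (``supplied by fairness'') and the paper's invocation of Lemma~\ref{lem:Mexceedslisteventually} go slightly beyond the lemma's literal hypothesis of ``any infinite execution''; this is a shared imprecision, not a defect unique to your proof.
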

\begin{proof}

From Lemma \ref{lem:pointsexec}, we know that there is a point before $P$ at which an element is added to $L_{s}[X]$ with tag $M_s^{P}.tagvec[X]$. From the protocol, we note that the first such point of addition is either on receiving a write in line \ref{line:writeaddtolist} in Algorithm \ref{alg:inputactions_clients}, or an apply action in line \ref{line:applycondition} in Algorithm \ref{alg:internal_actions}.
For any tag, each node receives either an apply message, or a message from a write with that tag, and such a message is received only once. Therefore, we note that an element with tag $M_s^{P}.tagvec[X]$ is never added after point $P$ to $L_s[X]$ via \ref{line:writeaddtolist} in Algorithm \ref{alg:inputactions_clients}, or an $\texttt{Apply\_Inqueue}$ action in line \ref{line:applycondition} in Algorithm \ref{alg:internal_actions}. 

To complete the proof, we consider the case where for any point $\overline{Q}$ of a fair execution $\beta$ that is after point $P$, $M_s^{\overline{Q}}.tagvec[X] = M_s^{P}.tagvec[X]$. We aim to show that   that an element with tag $M_s^{P}.tagvec[X]$ is not added after point $P$ to $L_s[X]$ on behalf of lines \ref{line:valrespaddtolist}, or line \ref{line:valrespencoded_addtuple} in Algorithm \ref{alg:input_actions}. Note that these lines are executed at server $s$ on receipt of a $\texttt{val\_resp}$ or a $\texttt{val\_resp\_encoded}$ message from some server $s',$ with parameters $clientid=\textrm{localhost},$ object $X$, and a tag vector $tvec$ with $tvec[X] = M_s^{P}.tagvec[X].$ Note that server $s'$  sent such a message on receipt of a $\texttt{val\_inq}$ message from server $s$. Based on the code, the sending of such a $\texttt{val\_inq}$ from server $s$ with $clientid=\textrm{localhost}$ occurs on line \ref{line:valinq_send2}. However, this line is executed if there exists an element $(t,v)$ in $L_s[X]$ with $t > M_s^{P}.tagvec[X].$ Lemma \ref{lem:Mexceedslisteventually}, there is a point $Q$ after $P$ where  $M_s^{Q}.tagvec[X] \geq t >  M_s^{P}.tagvec[X]$. This is a contradiction. Therefore, that an element with tag $M_s^{P}.tagvec[X]$ is not added after point $P$ to $L_s[X]$ in $\beta$. 
\end{proof}

\begin{lemma}
Consider an infinite execution $\beta$ of $CausalEC$, and consider  point $P$ of $\beta$,  server $s$ and object $X$. 
For a tag $t < M_s^{P}.tagvec[X]$, then eventually, there is a point $Q$ such that, after $Q$ no element $(t,v)$ is added to $L_s[X]$ for any value $v$.
\label{lem:listisnotaddedbeyond2}
\end{lemma}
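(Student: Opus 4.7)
The plan is to enumerate every code location at which an element can be appended to $L_s[X]$ and, for each, argue that only finitely many appends can carry tag $t$. The additions occur at exactly four lines: line \ref{line:writeaddtolist} in Algorithm \ref{alg:inputactions_clients} (on receipt of a $\texttt{write}$ message from a client), line \ref{line:list_applyaction} in Algorithm \ref{alg:internal_actions} (inside an $\texttt{Apply\_Inqueue}$ action), line \ref{line:readaddtolist} in Algorithm \ref{alg:input_actions} (handling a $\texttt{val\_resp\_encoded}$ from a localhost internal read), and line \ref{line:valrespaddtolist} in Algorithm \ref{alg:input_actions} (handling a $\texttt{val\_resp}$ from a localhost internal read). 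Since $t$ is the tag of a unique write (Lemma \ref{lem:uniquetag1}), lines \ref{line:writeaddtolist} and \ref{line:list_applyaction} can each fire with tag $t$ at most once in the entire execution, because the client's $\texttt{write}$ message carrying $t$ arrives at $s$ at most once, and similarly the unique $\texttt{app}$ message tagged $t$ is delivered to $s$ at most once.

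For line \ref{line:readaddtolist}, the appended pair is $(M.tagvec[X],v)$ evaluated at the execution point. Lemma \ref{lemma:tagsalwaysincrease} combined with the hypothesis $M_s^P.tagvec[X]>t$ gives $M_s^{P'}.tagvec[X]\geq M_s^P.tagvec[X]>t$ for every point $P'$ after $P$; hence at any point after $P$ the appended tag differs from $t$, so line \ref{line:readaddtolist} contributes zero additions of tag $t$ after $P$.

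The main obstacle is the $\texttt{val\_resp}$ path at line \ref{line:valrespaddtolist}, because the appended tag is $requestedtags[X]$ coming from an earlier outstanding $\texttt{val\_inq}$ whose response may be arbitrarily delayed. I would bound this case as follows: every localhost $\texttt{val\_inq}$ from $s$ is sent either at line \ref{line:valinq_send} or line \ref{line:valinq_send2}, and in both places the carried tag vector is exactly $M_s.tagvec$ at the point of sending. Thus a $\texttt{val\_resp}$ with $requestedtags[X]=t$ must originate from a $\texttt{val\_inq}$ sent at some point $R_0$ with $M_s^{R_0}.tagvec[X]=t$. By Lemma \ref{lemma:tagsalwaysincrease} and the hypothesis, every such $R_0$ precedes $P$. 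Since only finitely many actions occur before $P$, only finitely many such $\texttt{val\_inq}$ messages exist, and each elicits at most one response from each of the other $N-1$ servers, so only finitely many $\texttt{val\_resp}$ messages can ever trigger an append with tag $t$.

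Combining the four cases, the set of points in $\beta$ at which an element of the form $(t,v)$ is appended to $L_s[X]$ is finite. I would take $Q$ to be any point of $\beta$ strictly after all such points (which exists because $\beta$ is infinite); no addition of $(t,v)$ to $L_s[X]$ occurs after $Q$, completing the proof.
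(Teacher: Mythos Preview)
Your proposal is correct and follows essentially the same approach as the paper: enumerate the four code locations that append to $L_s[X]$, dispose of the $\texttt{write}$ and $\texttt{Apply\_Inqueue}$ paths as one-shot, and bound the $\texttt{val\_resp}$/$\texttt{val\_resp\_encoded}$ paths by observing that every triggering $\texttt{val\_inq}$ carries $M_s.tagvec$ at the send point, so those with $tvec[X]=t$ must all be sent before $P$ and are therefore finite in number. The one minor variation is that you handle line~\ref{line:readaddtolist} directly via the current value of $M_s.tagvec[X]$ (which exceeds $t$ after $P$), whereas the paper lumps both response paths together under the same $\texttt{val\_inq}$-counting argument; your shortcut is valid given the literal code and arguably cleaner for that case.
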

\begin{proof}

For tag $t$, at node $s$, an element $(t,v)$ is added to $L_s[X]$ at most one time in $\beta$ via line \ref{line:writeaddtolist} in Algorithm \ref{alg:inputactions_clients}, or line \ref{line:applycondition} in Algorithm \ref{alg:internal_actions}. We aim to show that  that an element with tag $t$ is not added eventually, after a point $Q$, to $L_s[X]$ on behalf of lines \ref{line:valrespaddtolist}, or line \ref{line:valrespencoded_addtuple} in Algorithm \ref{alg:input_actions}. From the code, any element   $\langle clientid,opid,X, tvec, \overline{w}\rangle$ that is added to $ReadL_s$ at point $Q$ has $tvec=M_s^Q.tagvec.$ If $Q$ is after $P$, then from Lemma \ref{lemma:tagsalwaysincrease}, we have $tagvec[X] = M_s^Q.tagvec[X] \geq M_s^P.tagvec[X] > t$. Therefore, an element  \newline  $\langle \texttt{localhost},opid,X, tvec, \overline{w}\rangle$ with $tvec[X] = t$ is added at most a finite number of times to $ReadL_s$ in $\beta$ in line \ref{line:readlentryencoding}. Correspondingly, a finite number of $\texttt{val\_inq}$ messages are sent on line \ref{line:valinq_send2} with a parameter tag vector $tvec$ and object $X$ satisfying $tvec[X] = t.$ Since lines \ref{line:valrespaddtolist}, or line \ref{line:valrespencoded_addtuple} in Algorithm \ref{alg:input_actions} are executed on $\texttt{val\_resp}$ and $\texttt{val\_resp\_encoded}$ messages that are responses to $\texttt{val\_inq}$ messages, these lines are executed a finite number of times with parameters $tvec$ and object $X$ satisfying $tvec[X]=t.$ Therefore, eventually, there is a point $Q$ after which an element with tag $t$ is not added to $L_s[X]$ on behalf of lines \ref{line:valrespaddtolist}, or line \ref{line:valrespencoded_addtuple} in Algorithm \ref{alg:input_actions}. This completes the proof.
\end{proof}

\remove{
\begin{lemma}
Consider a fair execution $\beta$ where every server is non-halting. Let $\pi$ be a write to object $X$ that completes in $\beta$ with  value $v$. Then, there is a point $P$ in $\beta$ after which, for any server $s,$ there is no entry of the form $(tag(\pi),val)$ in $L_s[X]$ for any value $val.$
\label{lem:finalstorage}
\end{lemma}
}

\subsection{Proof of Lemma \ref{lem:storagecostkey}}
\begin{proof}
We consider two cases (I) $\pi$ is the highest tagged write to object $X$ in $\beta,$ and (II) there is a write in $\beta$ with a tag larger than $tag(\pi)$ in $\beta$.

\underline{Case (I) $\pi$ is the highest tagged write to object $X$ in $\beta.$}

Consider any server $s$. From Lemma \ref{lem:Meventuallyexceedseverywrite}, we know that there is a point $Q$ such that: 
\begin{itemize}
    \item server $s$ has, for every server $s' \in \mathcal{N}$, an element $(t_{s'},s')$ in $DelL_{s}^{Q}[X]$ with $t_{s'} \geq tag(\pi).$ 
    \item $M_s^{Q}.tagvec[X] \geq tag(\pi).$
\end{itemize}
Furthermore, since there is no write to object $X$ with a tag larger than $tag(\pi)$, so we have $t_{s'} = tag(\pi), \forall s'$ at $Q$, and $M_s^{Q}.tagvec[X] = tag(\pi).$ In fact, because of Lemma \ref{lemma:tagsalwaysincrease}, at every point after $Q$, we have $M_s.tagvec[X] = tag(\pi).$

Consider the first $\texttt{Garbage\_Collection}$ action after ${Q}$. We claim that line \ref{line:GCcondition1} returns true. Since there is an element $(tag(\pi),s')$ for every $s' \in \mathcal{N}$ in $DelL_s[X],$ the set $S$ in line \ref{line:sset}  and the set $\overline{S}$ in line \ref{line:s1set} both contain $tag(\pi).$ Therefore, $t_{max}[X] = tag(\pi)$ in line \ref{line:tmax}. Further, if $L_s[X]$ is non-empty, since there is no tag larger than $tag(\pi)$ in $\beta$ for a write to object $X$, $L_s[X].Highesttagged.tag \leq t_{max}[X] = M_s.tagvec[X].$
Therefore, line \ref{line:GCcondition1} returns true and \ref{line:GC1} is executed. Further, because $tag(\pi) = M_s.tagvec[X],$ it does not belong to set ${T}$ identified in line \ref{line:pendingreads}. Therefore, line \ref{line:GC1} removes any element with tag $tag(\pi)$ from $L_s[X].$ Because $tag(\pi)$ is the largest tag in $\beta$ for objrct $X$, note that statement $(2)$ of Lemma \ref{lem:listisnotaddedbeyond} is not satisfied after the $\texttt{Garbage\_Collection}$. From the lemma statement, we conclude that statement (1) of Lemma \ref{lem:listisnotaddedbeyond} holds. That is, we know that any element with tag $tag(\pi)$ is not added again to $L_s[X]$ after the point of $\texttt{Garbage\_Collection}$ action. Therefore, the lemma holds with point $P$ being the point of the $\texttt{Garbage\_Collection}$ action.

\underline{Case (II) $\pi$ is not the highest tagged write to object $X$ in $\beta.$}
Let $\phi$ be a write to object with tag larger than $tag(\pi).$ Then from Lemma \ref{lem:Meventuallyexceedseverywrite}, we know that 
we know that there is a point $Q'$ such that: 
\begin{itemize}
    \item server $s$ has, for every server $s' \in \mathcal{N}$, an element $(\overline{t}_{s'},s')$ in $DelL_{s}^{Q}[X]$ with $\overline{t}_{s'} \geq tag(\phi) > tag(\pi).$ 
    \item $M_s^{Q}.tagvec[X] \geq tag(\phi) > tag(\pi).$
\end{itemize}

Since the hypothesis of Lemma \ref{lem:listisnotaddedbeyond2} is satisfied with tag $t = tag(\pi)$, we conclude that there is eventually a point $Q''$ such that, after $Q'',$ no element is added with tag $t=tag(\pi)$ to $L_s[X].$  Since entries added into $ReadL_s$ at a point $\overline{P}$ have parameter tag vector $M_s^{\overline{P}}.tagvec,$ no entry is added with parameter tag vector $tvec$ satisfying $tvec[X]=tag(\pi)$ after $Q$ to $ReadL.$  

Therefore, entries with parameter tag vector $tvec$ satisfying $tvec[X]=tag(\pi)$ are added at most a finite number of times to $ReadL_s$ in $\beta.$ From Theorem \ref{lem:read_terminates_f=0}, we note that every pending read in $ReadL_s$ is eventually returned and the entry is cleared. Let $Q'''$ be the point where every read with tag vector $tvec$ satisfying $tvec[X]=tag(\pi)$ is cleared from $ReadL_s.$ 
Let $\overline{Q}$ be the latest of $Q', Q'', Q'''.$

 Consider the first $\texttt{Garbage\_Collection}$ action after $\overline{Q}$, suppose this is performed at point ${P}$. Since $P$ is no earlier than $Q'$ there is an element $(t_{s'},s')$ for every $s' \in \mathcal{N}$ with $t_{s'} > tag(\phi)$ in $DelL_s^{P}[X],$ the set $S$ in line \ref{line:sset}  contains $tag(\pi).$ Therefore, $tmax_{s}^{P}[X] > tag(\pi)$ in line \ref{line:tmax}.  Further, since the action is performed no sooner than $Q''',$ $tag(\pi) \notin T,$ where $T$ is determined in line \ref{line:pendingreads}. Therefore, the $\texttt{Garbage\_Collection}$ action executes \ref{line:GC1}, \ref{line:GC3} or  \ref{line:GC2}, which all remove any element with tag $tag(\pi)$ from $L_s^{{P}}[X].$

Further, since ${P}$ is after $Q'',$ an element with tag $tag(\pi)$ is not added after ${P}$ to $L_s[X].$ Therefore, the lemma holds, that is,  after $P$ there is no entry with tag $tag(\pi)$ in $L_s^{Q}[X].$ 
\end{proof}
\remove{

\subsection{Proof of Theorem \ref{thm:storagecost}}
\underline{Proof of (a):}
Consider any execution $\beta$ satisfying the hypothesis of the theorem. By Lemma \ref{lem:finalstorage}, for every write $\pi$ in $\beta$, there is a point  $P_{\pi}$  after which there is no entry of the form $(tag(\pi),val)$ in $\cup_{X \in \mathcal{X}}\cup_{s \in \mathcal{N}}L_s[X].$ Consider the point $P$ which is the latest of $\{P_{\pi}: \pi \textrm{is a write in}\beta\};$ because $\beta$ has a finite number of writes, point $P$ exists. For any point $Q$ after $P$, we have $\cup_{s\in \mathcal{N},X \in \mathcal{X}}L_s^{Q}[X]=\{\}.$ 

\underline{Proof of (b):}
Because $\beta$ has a finite number of writes, for any server $s$, there are only a finite number of elements added to $Inqueue_{s}$ in $\beta$. Statement (b) readily follows from Lemma \ref{lem:eventually_apply}.

\underline{Proof of (c):}
For any server $s$, an entry $(clientid, opid, tvec,X,\overline{w})$ is added to $ReadL_s$ on receiving a read in line \ref{line:addtoreadl}, or on performing an encoding action in line \ref{line:readlentryencoding}. Since there are only a finite number of read and write operations, these lines are executed only a finite number of times - at most once for every read operation received by server $s$, and at most once for every write operation in $\beta.$ Therefore, to prove the theorem, it suffices to show that every entry in $ReadL$ is eventually removed in $\beta.$ 

For any entry made to $ReadL_s$ in line \ref{line:addtoreadl} in Algorithm \ref{alg:inputactions_clients} or line \ref{line:readlentryencoding} in Algorithm, \ref{alg:internal_actions} Theorem \ref{lem:read_terminates_f=0} ensures that a response is eventually received. Every action that receives the response eventually removes the read from $ReadL$ in line \ref{line:read-remove1}in Algorithm \ref{alg:inputactions_clients}, or in lines \ref{line:read-remove2}, \ref{line:read-remove3} in Algorithm \ref{alg:input_actions}, or in line \ref{line:read-remove4}, \ref{line:read-remove5} in Algorithm \ref{alg:internal_actions}.
}
\subsection{Proof of Theorem \ref{thm:storagecost}}
\begin{proof}

\underline{Proof of (a):} 
By the theorem hypothesis, there are a finite number of write operations in $\beta.$
From Lemma \ref{lem:storagecostkey}, for every object $X$, for every server $s$, for every write operation $\pi$ there is a point $P_{\pi,s,X}$ in $\beta$ such that $L_{s}[X]$ does not contain an element $(tag(\pi),val)$ for any val in $L_s[X].$ Let $P$ be the latest of the point $P_{\pi,s,X}$ over all write operations $\pi,$ servers $s$ and objects $X$; note that $P$ exists in $\beta$ because the number of write operations, servers and objects are all finite. At any point $\overline{P}$ after $P$, there is no element of the form $(tag(\pi),val)$ in $L_s^{\overline{P}}[X]$ for any object $X$  in $\beta$, for any server $s$. 
From Lemma \ref{lem:valuesarelegitimate} (proved in Appendix \ref{app:safety}), we know that an element $(t,v)$ exists in a list $L_s[X]$ for some server $s$ for some object $X$ only if $t$ is that of some write operation to object $X$, and $v$ is the value of that write operation. Therefore, 
we infer that $L_s[X]$ is empty after $P$.

\underline{Proof of (b):} 
It readily follows from Lemma \ref{lem:eventually_apply}

\underline{Proof of (c):} 
The hypothesis implies that there are a finite number of read operations in $\beta$. From lemma \ref{lem:read-liveness}, there is no entry $\langle clientd, opid, X, tags, \overline{w}\rangle$ in $ReadL_s$  with $clientid \neq \texttt{localhost}$, for any server $s$.

 An entry $\langle \texttt{clientid}, opid, X, tags, \overline{w}\rangle$  with $clientid \neq \texttt{localhost}$ is added to $ReadL_{s}$ only on receiving a $\texttt{read}$ message from a client at server $s$. Because the number of read operations is finite, and because lemma \ref{lem:read-liveness} implies that every read operation terminates. Further, from the protocol code, the sending of a $\texttt{read-return}$ message to operation $opid$ in line \ref{line:read-remove1} in Algorithm \ref{alg:inputactions_clients}  lines  \ref{line:read-remove2}, \ref{line:read-remove3} in Algorithm \ref{alg:input_actions} and line \ref{line:read-remove4} in Algorithm \ref{alg:internal_actions} are accompanied by removal of the corresponding entry with operation id $opid$ from $ReadL_s.$ 
Thus, eventually there is a point $Q$ after which $ReadL_s$ is does not contain any entry with $clientid \neq \texttt{localhost}.$

Because of $(a)$ there is eventually a point after which line \ref{line:Forloopupdatetag1} in Algorithm \ref{alg:internal_actions} is not satisfied for any object $X$ for any server $s$ as the list $L_s[X]$ is empty.
Since an entry $\langle \texttt{localhost}, opid, X, tags, \overline{w}\rangle$ is added to $ReadL_s$ on line \ref{line:readlentryencoding}, which is executed only if line \ref{line:Forloopupdatetag1} is true, there is a point after which such an entry is not added to $ReadL_s.$ Furthermore, such an entry is eventually cleared by executing line \ref{line:read-remove2} or \ref{line:read-remove3} in Algorithm 
\ref{alg:input_actions}. 

Therefore, there is a point $Q$ after which $ReadL_s$ is empty in $\beta.$

\end{proof}

\section{Low-Cost Variant of \CausalEC}
\label{app:variant}
We describe a low-cost variant of \CausalEC~that is used for cost analysis in Sec. \ref{sec:communication}. The low cost variant uses Lamport timestamps and an Eventual-Broadcast primitive described next.

\underline{(i) Use of Lamport timestamps}
For any write operation $\pi$, it has a Lamport timestamp defined as follows.
\begin{definition}
For any tag $t$ its Lamport-timestamp denoted by $lt(t)$ is defined as:
$$lt(t)=(s,t.ts[s])$$
where $s$ is the unique server such that client with identifier $t.id$ belongs to $\mathcal{C}_{s}$. For any write operation $\pi,$ its Lamport-timestamp denoted (with slight abuse of notation) as $lt(\pi)$ is equal to $lt(tag(\pi)).$
\end{definition}
We consider a modification of the algorithm that uses Lamport timestamps (instead of tags) for $\texttt{val\_inquiry},\texttt{val\_response},\texttt{val\_response\_encoded}$ and $\texttt{Del}$ messages. The use of Lamport timestamps is justified by Lemma \ref{lem:uniquetag1}. 

Let $\mathcal{LT}=\mathbb{N}^{2}$ denote the set of all possible Lamport timestamps. In our algorithm descriptions in Algorithms \ref{alg:input_actions_var}, \ref{alg:inputactions_clients_var},
\ref{alg:internal_actions_var}, for a tag vector $tagvec \in \mathcal{T}^{\mathcal{X}},$ we denote by $lt(tagvec) \in \mathcal{LT}^{\mathcal{X}}$ as the vector where each component of $tagvec$ is replaced by its Lamport timestamp.

\underline{(ii) Using $O(N)$ delete message}
As stated, Algorithms \ref{alg:inputactions_clients}, \ref{alg:input_actions} \ref{alg:internal_actions} translate to $O(N^{2})$ delete messages being sent per write operation. We explain an optimization that reduces this complexity to $O(N).$ Consider the primitive $\texttt{Eventual-Broadcast}(m,S)$ where $m$ is a message of the form $(\overline{m},id,t)$ where $t$ comes from a totally ordered set $\mathcal{T}$, and $S$ is set of nodes. The primitive satisfies following correctness  (liveness) properties:
\begin{enumerate}
    \item If node $i$  in $S$ issues $\texttt{Eventual-Broadcast}((\langle\overline{m}\rangle,t),S)$ at point $P$ of a fair execution $\beta$ where no node in $S$ fails, then every node in $S$ eventually outputs\newline  $\texttt{Eventual-Broadcast-Receive}((\langle\overline{m}, \rangle,i,t'))$ where $t'= t$,  or node $i$ issues \newline $\texttt{Eventual-Broadcast}((\langle\overline{m}\rangle,t),S)$ after point $P$ of $\beta.$
    \item If node $i$ outputs $\texttt{Eventual-Broadcast-Receive}((\overline{m},i,t))$, then the processor with node $i$ issued 
    $\texttt{Eventual-Broadcast}((\overline{m},t),S)$ before the receipt of the message where $S$ is a set that contains node $i$.
\end{enumerate}
Note that message duplications are permitted at the receiver and the messages can be delivered in any order. The only requirement is eventual delivery in failure-free executions.

Consider a modification to Algorithm \ref{alg:internal_actions} in lines \ref{line:Delete_send1}, \ref{line:Delete_send2}, \ref{line:Delete_send3} where we  use\newline $\texttt{Eventual-Broadcast}(\langle m\rangle,S)$, with appropriate choices of message $m$ and $S$. For example, in line \ref{line:Delete_send1}, we have   $m=(\texttt{del},X),clientid,L[X].Highesttagged.tag)$ and $S=R.$ Similarly, lines \ref{line:Delete_send2}, \ref{line:Delete_send3} use $S= \mathcal{N}$ with appropriate tags in the messages. Correspondingly, line \ref{line:delmessage} corresponds to $\texttt{Eventual-broadcast-receive}$ of the corresponding message. With the above correctness conditions for the Eventual broadcast primitive, it is instructive to observe that all our proofs apply. In particular, Lemma \ref{lem:Meventuallyexceedseverywrite} in Appendix \ref{app:storage} holds and therefore, the delete messages serve their purpose of clearing lists via garbage collection.

Importantly, $\texttt{Eventual-Broadcast}$ can be readily implemented with $O(N)$ messages. For every set $S$, identify a leader node $\ell_{S}\in\mathcal{N}.$ The Eventual broadcast primitive simply sends a the message to the leader $\ell_S,$ which waits for messages from all nodes in $S$. On receipt of messages from all nodes in $S$, leader $\ell_{S}$ sends one message with the collected information back to all the nodes in $S$.  This justifies our write communication cost. 

\subsection*{Formal Description of variant of \CausalEC}

The state variables of the variant are the same, with two differences.
First, there is an additional state variable:
$$lttotag \subset \mathcal{LT} \times \mathcal{T}.$$

Second, the delete list data type is modified as:
$$Del:\mathcal{X} \rightarrow \mathcal{LT} $$

Since the variant sends (several) messages with only Lamport-timestamps, $lttotag$ is used to map Lamport-timestamps to tags.

\begin{figure}[ht] 
\begin{algorithm*}[H]
\caption{Protocol of low-cost variant of  \CausalEC: Transitions for input actions corresponding to messages from clients at node $s$} 
\label{alg:inputactions_clients_var}

\footnotesize
 \underline{On  {$receive\langle\texttt{write}, opid, X,v\rangle$}  from client $id$}:

$vc[s]\leftarrow vc[s]+1$
\label{line:vcincrement_var}

$t\leftarrow (vc,id)$

$L[X] \leftarrow L[X] \cup \{(t,v)\}$
\label{line:writeaddtolist_var}

$send\langle\texttt{write-return-ack}, opid\rangle$ to client $id$
\label{line:writeresponse_var}

$send\langle\texttt{app},X,v,t\rangle$ to all other nodes, $j\neq i$ 
\label{line:writesendapply_var}

for every $opid$ such that there exists $clientid \neq \texttt{localhost}, \bar{v},tags$ such that $(clientid, opid, X, tags, \bar{v})\in ReadL$:

%\hspace{1cm}if $clientid \neq \texttt{localhost}$:

\hspace{1cm}$send\langle \texttt{read-return}, opid,v \rangle$ to client $clientid$
\label{line:sendtoread1_var}

\hspace{1cm}$ReadL\leftarrow ReadL - \{clientid, opid, X, tvec, \bar{v})\}$
\label{line:read-remove1_var}

\underline{On $receive\langle\texttt{read},opid, X\rangle$ from client $id$}:

if $L[X] \neq \{\}$ and $L[X].Highesttagged.tag \geq M.tagvec[X]$: \label{line:readimmediatereturncheck1_var}

\hspace{1cm}$send\langle \texttt{read-return}, opid, L[X].HighestTagged.val \rangle$ to client $id$    \hspace{10pt} \# return locally
\label{line:readimmediatereturn_var}

else if  $\{s\} \in \mathcal{R}_X$ \label{line:readreturnimmediatecheck2_var}

\hspace{1.5cm}$v\leftarrow \Psi_{\{s\}}^{ObjectIndex(X)}( M.val)$

\hspace{1.5cm} send $\langle \texttt{read-return}, opid, v \rangle$ to client $id$ 

\label{line:send-to-read1_var}
 
else:  \hspace{10pt} \# contact other nodes

\hspace{1cm}$ReadL \leftarrow ReadL\cup (id, opid, X, M.tagvec ,w_1, w_2, \ldots, w_N)$, where $w_{i} = \begin{cases} M.val & \textrm{if }i = s\\ \bot & \textrm{otherwise} \end{cases}$ 
\label{line:addtoreadl_var}

\hspace{1cm}send$\langle\texttt{val\_inq},id, opid, X, lt(M.tagvec)\rangle$ to every node $j$ where $j\neq i$
\label{line:valinq_send_var}
\end{algorithm*}

\end{figure}

\begin{figure}[htp!] 
\begin{algorithm}[H]
\footnotesize

\underline{On $\texttt{Eventual-Broadcast-receive}(\langle\texttt{del},X), j, lt)$}:
 
 $DelL[X] \leftarrow DelL[X] \cup \{(lt,j)\}$ 
\label{line:delmessage_var}

\underline{On $receive\langle\texttt{val\_inq},clientid, opid,\overline{X}, wantedlts\rangle$ from node $j$}:
 
if there exists $v$ such that $(tag,v) \in L[\bar{X}]$ and $lt(wantedlts[\overline{X}]) = lt(tag) $ :

\hspace{0.5cm}$send\langle\texttt{val\_resp},clientid, opid \overline{X},v,wantedlts[\overline{X}]\rangle$ to node $j$

else

\hspace{0.5cm} $ResponsetoValInq \leftarrow (M.val,lt(M.tagvec))$
\label{line:Responsetovalinqinit_Var}

\hspace{0.5 cm} For all $X \in \mathcal{X}_{s}$
\label{line:forloop1_var}

\hspace{1 cm} If $lt(M.tagvec[X]) \neq wantedlts[X]$: 

\hspace{1.5 cm} If there exists unique $tag,v$ such that $(tag,v) \in L[X]$ and $lt(tag)=wantedlts[X]$
\label{line:responsetovalinqcondition1_var}

 \hspace{2 cm}$ResponsetoValInq.val \leftarrow \Gamma_{s,ObjectIndex(X)}
 (ResponsetoValInq.val,v,\mathbf{0})$
\label{line:ResponsetoValInqvalupdate1_var} 

  \hspace{2 cm}$ResponsetoValInq.lts[X] \leftarrow \mathbf{0}$.
\label{line:ResponsetoValInqtagupdate1_var} 

\hspace{2 cm}If there exists unique $tag,w$ such that $(tag,w) \in L[X]$ and $lt(tag)=wantedlts[X]$
\label{line:responsetovalinqcondition2_var}

\hspace{2.5 cm}$ResponsetoValInq.val \leftarrow \Gamma_{s,ObjectIndex(X)} (ResponsetoValInq.val,\mathbf{0},w)$
\label{line:ResponsetoValInqvalupdate2_var} 
  \hspace{2.5 cm}$ResponsetoValInq.lts[X] \leftarrow wantedlts[X]$.
\label{line:ResponsetoValInqtagupdate2_var} 
\hspace{0.5cm} $send \langle \texttt{val\_resp\_encoded}, ResponsetoValInq,clientid,opid, \overline{X}, wantedlts \rangle$ to node $j$.

\underline{On $receive\langle \texttt{val\_resp\_encoded},\overline{M},clientid,opid, \overline{X}, requestedlts \rangle$ from node $j$}:

For $X \in \mathcal{X}$

\hspace{0.5 cm}
$Error1[X],Error2[X] \leftarrow 0,$

$Modified\_codeword \leftarrow \overline{M}.val$

\label{line:modifiedcodeword_init_var}

If there exists a tuple $(clientid,opid, \overline{X},requestedtags,\overline{w}) \in ReadL$ for some $\bar{w} \in \overline{\mathcal{W}}$ such that $lt(requestedtags)=requestedlts$ \label{line:valrespencoded_readcheck_var}

For all $X \in \mathcal{X}_{j}$,   \label{line:valrespencoded_objectforloop_var}

\hspace{0.5 cm} If $requestedlts[X] \neq \overline{M}.lts[X]$
\label{line:metaerrorcondition_var}

\hspace{1.0 cm} If $\overline{M}.lts[X] \neq \mathbf{0}$ and there exists unique $(tag,w)\in L[X]$ such that $lt(tag)=\overline{M}.lts[X]$
\label{line:metaerrorcondition1_var}

\hspace{1.5 cm}   $Modified\_codeword \leftarrow \Gamma_{s,ObjectIndex(X)}
 (Modified\_codeword,w, \mathbf{0})$
 \label{line:Modifiedcodeword1_var}

\hspace{1.0 cm} else if $\overline{M}.lts[X] \neq 0$:  $Error1[X] \leftarrow 1.$
\label{line:seterror1_var}

 \hspace{1.0 cm} If $Error1[X] \neq 1$ and there exists unique $v$ such that $(tag,v)\in L[X]$ and $lt(tag) = requestedlts[X]$
 \label{line:metaerrorcondition2_var}

  \hspace{1.5 cm}  $Modified\_codeword \leftarrow \Gamma_{s,ObjectIndex(X)}
 (Modified\_codeword,\mathbf{0},v)$
 \label{line:Modifiedcodeword2_var}

 \hspace{1.0 cm} else:  \label{line:seterror2_var}$Error2[X] \leftarrow 1$

    If $\bigwedge_{X \in \mathcal{X}} \bigg(Error1[X] \wedge Error2[X]=0\bigg)$
  \label{line:errorcondition_var}  
  
\hspace{0.5cm}  $ReadL\leftarrow ReadL-\{(clientid,opid,X,requestedtags,\overline{w})\} \cup \{(clientid,opid,X,requestedtags,\overline{v})\}$, where $\overline{v} \in \overline{W}$ is generated so that $\Pi_{i}(\overline{v}) \leftarrow  \begin{cases} \Pi_i(\overline{w}) & \textrm{if  }i \neq j \\ Modified\_codeword & \textrm{if } i = j \end{cases}$ 
\label{line:valrespencoded_addtuple_var}

\hspace{1cm}$S\leftarrow \{i  ~|~\Pi_i(\overline{v}) \neq \bot\}$, $\ell \leftarrow ObjectIndex[X]$

\hspace{1cm}if there exists $T \in \mathcal{R}_\ell$ such that $T \subseteq S$:
\label{line:readreturncheck_var}

\hspace{1.5cm}$v\leftarrow \Psi_{T}^{(\ell)} \left(\Pi_{T}(\overline{v})\right)$
\label{line:valresp_handle_decoding_var}

\hspace{1.5cm}if $clientid =\texttt{localhost}$:

\hspace{2cm}$L[X]\leftarrow L[X] \cup \{( requestedtags[X], v)\}$
\label{line:readaddtolist_var}

\hspace{1.5cm}else: $send\langle \texttt{read-return}, opid, v \rangle $ to client $clientid$
\label{line:read-send-val_resp_encoded_var}

\hspace{1.5cm}$ReadL\leftarrow ReadL- \{(clientid,opid, {X}, requestedtags, \bar{v}): \exists \bar{v}  \textrm{ s. t. }(clientid,opid,{X}, requestedtags, \bar{v}) \in ReadL \}$
\label{line:read-remove2_var}

\underline{On $receive\langle \texttt{val\_resp},X,v,clientid, opid, requestedlts\rangle$ from node $j$}:

if there exists $\overline{v}$ such that $(clientid, opid, X, requestedtags,\overline{v})\in ReadL$ and $lt(reqestedtags)=requestedlts)$: 
\label{line:valresp_handle_condition_var}

\hspace{0.5 cm}if $clientid = \texttt{localhost}$ 

\hspace{1cm}$L[X] \leftarrow L[X] \cup  (requestedtags[X],v)$
\label{line:valrespaddtolist_var}

\hspace{0.5cm}else: $send \langle \texttt{read-return}, opid, v \rangle$ to client $clientid$ 
\label{line:valresp_handle_var}

\hspace{0.5 cm}$ReadL\leftarrow ReadL - \{ (clientid, opid, X, requestedtags, \overline{v})\}$
\label{line:read-remove3_var}

\underline{On receipt of \emph{app$(X,v, t)$} from node $j$}:

$InQueue \leftarrow InQueue \cup \{(j,X,v,t)\}$
\caption{Server protocol of low-cost variant of \CausalEC: Transitions for input actions.}
\label{alg:input_actions_var}
\end{algorithm}

\end{figure}

\begin{figure}[htp!] 

\begin{algorithm}[H]
\footnotesize

$\texttt{Apply}\_\texttt{InQueue}$: precondition: $InQueue \neq \{\}$ 

effect:

$(j,X,v,t) \leftarrow InQueue.Head$

If $t.ts[p]\leq vc[p]$ for all $p\neq j$, and $t.ts[j]=vc[j]+1$:
\label{line:applycondition_var}

\hspace{1cm}$InQueue \leftarrow InQueue - \{ (j,X,v,t)\}$, $vc[j]\leftarrow t.ts[j]$, $L[X] \leftarrow L[X] \cup (t,v)$ 
\label{line:list_applyaction_var}
\label{line:vcincrement_apply_var}\label{line:inqueue_remove_var}

\hspace{1cm} $lttotag \leftarrow lttotag \cup (lt(t),t)$ 
\label{line:lt-to-tag}

\hspace{1cm}for every $opid$ such that $(clientid,opid, X, tvec, \bar{v})\in ReadL, clientid \neq \texttt{localhost}, tvec[X] \leq t$: 
\label{line:readreturn_apply_condition_var}

\hspace{1.5cm}$send \langle \texttt{read-return}, opid, v \rangle$ to client with id $clientid$
\label{line:readreturn_apply_var}

\hspace{1.5cm}$ReadL\leftarrow ReadL - \{ (clientid, opid, X, tvec, \bar{v})\}$
\label{line:read-remove4_var}

\hspace{1cm}for every $opid$ such that $(clientid,opid, X, tvec, \bar{v})\in ReadL, clientid = \texttt{localhost}, tvec[X] = t$: 
\label{line:readreturn_apply_condition2_var}

\hspace{1.5cm}$ReadL\leftarrow ReadL - \{ (clientid, opid, X, tvec, \bar{v})\}$
\label{line:read-remove5_var}

$\texttt{Encoding}$: precondition: none

effect:

For every $X \in \mathcal{X}_{s}$ such that $L[X] \neq \{\}$ and $L[X].Highesttagged.tag > M.tagvec[X]$
\label{line:Forloopupdatetag1_var}

\hspace{0.5 cm} If there exists a tuple $(M.tagvec[X],val) \in L[X]$
\label{line:conditionforupdate_var}

\hspace{1cm} Let $v$ be a value such that $(L[X].Highesttagged, v) \in L[X].$

\hspace{1cm} $M.val \leftarrow \Gamma_{s,k}(M.val, val,v)$, $M.tagvec[X] \leftarrow  L[X].Highesttagged.tag$
\label{line:updatetag1_var}

\hspace{1 cm} $R \leftarrow \{i \in \mathcal{N}: X \in \mathcal{X}_{i}\}$

\hspace{1 cm} $\texttt{Eventual-broadcast}(\langle \texttt{del},X \rangle,lt(L[X].Highesttagged.tag),R)$
\label{line:Delete_send1_var}

\hspace{1 cm} $DelL[X] \leftarrow DelL[X] \cup \{L[X].Highesttagged.tag, s)\}$
\label{line:Delete_local1_var}

\hspace{0.5 cm} else if there exists no tuple $(\texttt{localhost}, \overline{opid}, X, tvec, \overline{w})$ in $ReadL$ with $tvec[X]=M.tagvec[X]$

\hspace{1 cm} Generate a unique operation identifier $opid.$

\hspace{1 cm} $ReadL \leftarrow ReadL \cup \{ (\texttt{localhost}, opid, X, M.tagvec, w_1, w_2, \ldots, w_N, ) \}$, where $w_i = \begin{cases} M.val & \textrm{if }i=s \\ \bot & \textrm{otherwise}\end{cases}$
\label{line:readlentryencoding_var}

\hspace{1 cm}$send \langle \texttt{val\_inq},\texttt{localhost}, opid, X, lt(M.tagvec)\rangle$ to every node $j\neq s$
\label{line:valinq_send2_var}

For every $X \notin \mathcal{X}_{s}$ such that $L[X] \neq \{\}$ and $L[X].Highesttagged.tag > M.tagvec[X]$
\label{line:Forloopupdatetag2_var}

\hspace{0.5 cm} $R \leftarrow \{i \in \mathcal{N}: X \in \mathcal{X}_{i}\}$
\label{line:Delete2_Rset_var}

\hspace{0.5cm}  Let $U$ be the set of all tags $t$ such that $ R \subseteq \{i \in \mathcal{N}: \exists \hat{t} \textrm{such that} (\hat{t},\hat{lt}) \in lttotag, (\hat{lt},i) \in DelL[X], \hat{t} \geq t\}$
\label{line:Delete2_Uset_var}

\hspace{0.5cm}  Let $\overline{U} = \{t \in \mathcal{T}: \exists  val, (t,val) \in L[X], t > M.tagvec[X]\}$
\label{line:Delete2_Ubarset_var}

\hspace{0.5 cm} {If $U \cap \overline{U} \neq \{\}$}
\label{line:Delete2_condition_var}

\hspace{1 cm} $M.tagvec[X] \leftarrow \max(U \cap \overline{U})$, $DelL[X] \leftarrow DelL[X] \cup \{lt(\max(U\cap \overline{U}), s)\}$
\label{line:Delete_local2_var}, \label{line:updatetag2_var}

\hspace{1 cm}  $\texttt{Eventual-broadcast}(\langle \texttt{del},X\rangle,\max(U\cap \overline{U},\mathcal{N})$
\label{line:Delete_send2_var}

$\texttt{Garbage}\_\texttt{Collection}$: precondition: None

effect: 

For $X \in \mathcal{X}$

\hspace{0.5cm} {Let $S$ be the set of all tags $t$ such that $\{i: \exists \hat{t} \textrm{such that} (lt(\hat{t}),\hat{t}) \in lttotag, (lt(\hat{t}),i) \in DelL[X], \hat{t} \geq t \} = \mathcal{N}$}
\label{line:sset_var}

\hspace{.5 cm} $t{max}[X] \leftarrow \max (S)$
\label{line:tmax_var}

\hspace{0.5 cm} Let $\overline{S}$ be the set of all tags $t$ such that $\{i: \exists (lt(t),t) \in lttotag, (lt(t),i) \in DelL[X] \}  = \mathcal{N}$
\label{line:s1set_var}

\hspace{0.5cm} $T \leftarrow \{tvec[X]: \exists clientid, opid, i, \overline{w},\bar{X} \textrm{ s.t. } (clientid, opid, \bar{X}, i, tvec, \overline{w}) \in ReadL { \textrm{ and } tvec[X] < M.tagvec[X]}\}$ 
\label{line:pendingreads_var}

\hspace{0.5cm} If $ \big(t{max}[X]=M.tagvec[X]\big) \textrm{AND~} \big(M.tagvec[X] \in \overline{S}) \big) {\textrm{ AND~ } \big(L[X] = \{\} \textrm{ OR } L[X].Highesttagged.tag \leq M.tagvec[X]\big)}$
\label{line:GCcondition1_var}

\hspace{1 cm}$L[X] \leftarrow L[X] - \{(tag,v): \exists (tag,v) \in L[X] \textrm{ s.t. } tag \leq t{max}[X], tag \notin T\}$
\label{line:GC1_var}

\hspace{0.5cm} else if $\big(t{max}[X] < M.tagvec[X]\big) \textrm{AND~} \big(X \notin \mathcal{X}_{s}\big)$
\label{line:GCcondition3_var}

\hspace{1 cm}$L[X] \leftarrow L[X] - \{(tag,v): \exists (tag,v) \in L[X] \textrm{ s.t. } tag \leq t{max}[X], tag \notin T\}$
\label{line:GC3_var}

\hspace{0.5cm} else: $L[X] \leftarrow L[X] - \{(tag,v): \exists (tag,v) \in L[X] \textrm{ s.t. } tag  < t{max}[X], tag \notin T\}$
\label{line:GC2_ar}

\hspace{0.5 cm} $R \leftarrow \{i \in \mathcal{N}: X \in \mathcal{X}_{i}\}$
\label{line:Delete3_Rset_var}

\hspace{0.5 cm}  Let $U$ be the set of all tags $t$ such that $ R \subseteq \{i \in \mathcal{N}: \exists \hat{t} \geq t \textrm{ such that } (lt(\hat{t}),\hat{t}) \in lttotag, (lt(\hat{t}),i) \in DelL[X], \hat{t} \geq t\}$
\label{line:Delete3_Uset_var}

\hspace{0.5 cm} If $U \neq \{\}$ and $X \in \mathcal{X}_{s}$
\label{line:Delete_cond3_var}

\hspace{1 cm} $\texttt{Eventual-broadcast}\langle \texttt{del},X\rangle,max(U),\mathcal{N})$ 
\label{line:Delete_send3_var}

\caption{Server protocol for  low-cost variant of \CausalEC: Transitions for internal actions of node $s$}
\label{alg:internal_actions_var}

\end{algorithm}

\end{figure}

%\begin{algorithm}[h]
%    \caption{Parity node $i$ protocol of $CausalEC\_exp$: Transitions for Internal actions.} \label{alg:dest}
%    \begin{algorithmic}[1]
%    \footnotesize
%\State precondition: $InQueue \neq <>$ 
%\State effect:
%\State $(j,X,v,t) \leftarrow InQueue.Head$
%\If{$t.ts[p]\leq vc[p]$ for all $p\neq j$, and $t.ts[j]=vc[j]+1$}
%  \State $InQueue \leftarrow InQueue - \{ (j,X,v,t)\}$ 
%  \State $vc[j]\leftarrow t.ts[j]$
%  \For{every $j$ such that $(j,X,t)\in RespL$}
%    \State send $val\_resp(X,v,t)$ to node $j$
%    \State $RespL \leftarrow RespL - \{(j,X,t)\}$   
%  \EndFor
%  \If{$B[X].val \neq \bot, B[X].tag<t$ or $B[X].val = \bot, M.tag_O<t$}
%    \If{$B[X].val \neq \bot$}
%      \State send $del(X,B[X].tag)$ to node $j$, where $j\in \{1,2\},X\in \mathcal{X}_{j}$
%    \EndIf
%    \State $B[X]\leftarrow (v,t)$
%    \For{every $id$ such that $(id, X, \bar{i}, \bar{v})\in ReadL$}
%      \If{$id \neq \texttt{localhost}$}
%        \State send $v$ to client $id$
%      \EndIf
%      \State $ReadL\leftarrow ReadL - \{ (id, X, \bar{i}, \bar{v})\}$
%    \EndFor
%  \Else
%    \State send $del(X,t)$ to node $j$, where $j\in \{1,2\},X\in \mathcal{X}_{j}$
%  \EndIf
%\EndIf
%            
%    \end{algorithmic}
%\end{algorithm}

\newpage
\section{Transient Storage Overheads Analysis}
\label{app:transient}

We provide some details of our storage overhead calculations. 
Assume that for any object $X$, the arrival rate is a renewal random process with expected inter-arrival time $1/\rho_{w,X}.$ The arrival processes for each object is statistically independent of the other objects.
Suppose that that an $\texttt{Encoding}$ action is performed for every object version (i.e., a write or $app$ message) for every server, and that the write arrival rate $\rho_{w,X}$ is much smaller than $1/T$, where $T$ is the the maximum round-trip time in the network. That is network delay $T$ of propagation of $app$ and encoding messages can be neglected. 

We assume that each periodically performs a garbage collection action every $T_{gc} \gg T $ seconds. More precisely, for ease of statistical analysis, we assume that a node $i$ performs garbage collection action at times $, T_{gc},2T_{gc}, 3T_{gc}\ldots.$ Upto two garbage collection actions can be required for removing an object version at the server. To see this, observe that nodes in $R=\{i: X \in \mathcal{X}_{i}\}$ send $\texttt{Del}$ messages to each other as part of \texttt{Encoding} actions in line \ref{line:Delete_send1} in Algorithm \ref{alg:internal_actions}. Once a version arrives and these messages are sent, at a time $k T_{gc},$ the nodes in $R$ send $\texttt{Del}$ to all other nodes in line \ref{line:Delete_send3} of Algorithm \ref{alg:internal_actions}. Assuming $T_{gc} \gg T,$ all the nodes in $\mathcal{N}-\mathcal{R}$ send $\texttt{Del}$ messages at time $(k+1)T_{gc}$ in line \ref{line:Delete_send2}. These messages are processed at time $(k+2)T_{gc}$ at which time the corresponding version is removed from $L[X]$. Thus any version of the object that arrives at time $t$ stays at the server until time $\overline{t},$ where $\overline{t}$ is the smallest time such that $\overline{t} > t + 2 T_{gc}, \overline{t}=k T_{gc}$ for some integer $k \geq 2.$

Hence, for any node $i,$ for any object $X,$ the overhead of the lists at time $t$ is equal to $S_X(t)B,$ where  $S_X(t) = N_X(t)-N_X\left(\lfloor \frac{t-2T_{gc}}{ T_{gc}}\rfloor T_{gc}\right)$ where $N_X(t)$ is the number of arrivals for object $X$ (via write or app messages) by time $t$.  Note that we can write:
$$E\left[S_X(t)\right] \leq E\left[N_X(t)-N_X(t-3T_{gc})\right] \stackrel{t \to \infty}{\longrightarrow} {3T_{gc} \rho_{w,X}} $$
where we have used Blackwell's renewal theorem in the final step above\cite{gallager2013stochastic}.

We now apply this for the $YCSB$ workload parameters, with $120 \times 10^{6}$ objects. Assume for simplicity of analysis that all objects are $B$ bits each. We label the objects as $\mathcal{X}=\{X_1,X_2,\ldots,X_{120 \times 10^{6}}\}$, where we denote $X_{\ell}$ as the object with the $\ell$th largest arrival rate; thus $X_{1}$ is the most popular object with the highest arrival rate.    With a Zipf distribution with parameter $0.99,$ for an arrival rate of $10^5$ writes per second to the overall data store, the  arrival rate of $X_{\ell}$ is
$$\rho_{w,X_{\ell}}=10^{5} \times \frac{\frac{1}{\ell^{0.99}}}{\sum_{n=1}^{120\times 10^{6}}\frac{1}{n^{0.99}}}  .$$
If erasure coding is used for $95\%$ of the objects and a garbage collection performed every $T_{gc}= 120s$ seconds, then  storage overhead due to history lists \emph{per object} is:
 \begin{align*}&B \times \frac{1}{0.95 \times 120\times 10^{6}} \times   \sum_{\ell=120\times 10^{6}\times 0.05}^{120 \times 10^{6}}E[S_{X_{\ell}}(t)]
 \\&\stackrel{t \to \infty}{\longrightarrow} B \times  \frac{3 T_{gc}}{0.95 \times 120\times 10^{6}} \times   \sum_{\ell=120\times 10^{6}\times 0.05}^{120 \times 10^{6}} \left(10^{5} \times  \frac{\frac{1}{\ell^{0.99}}}{\sum_{n=1}^{120\times 10^{6}}\frac{1}{n^{0.99}}} \right)
 \\&\approx 0.05 B
 \end{align*}

\end{document}